\newcommand{\qed}{\hspace*{\fill}\rule{1ex}{1ex}}
\newtheorem{definition}{Definition}
\newtheorem{proposition}[definition]{Proposition}
\newtheorem{theorem}[definition]{Theorem}
\newtheorem{corollary}[definition]{Corollary}
\newtheorem{remark}[definition]{Remark}
\newtheorem{lemma}[definition]{Lemma}
\newtheorem{condition}[definition]{Condition}
\newtheorem{ensemble}{Code Ensemble}
\newtheorem{const}[ensemble]{Code Construction}
\newtheorem{example}[definition]{Example}
 \newenvironment{proofof}[1]{\vspace*{5mm} \par \noindent
         \quad{\it Proof of #1:\hspace{2mm}}}{\endproof
}
\def\cX{{\cal X}}
\def\cY{{\cal Y}}
\def\rE{{\mathbf{E}}}
\newcommand{\bF}{\mathbb{F}}
\newcommand{\bR}{\mathbf{R}}
\def\mix{\mathop{\rm mix}}
\def\supp{\mathop{\rm supp}}
\def\im{\mathop{\rm Im}}
\def\inn{\mathop{\rm inn}}
\def\BCC{\mathop{\rm BCC}}
\def\argmax{\mathop{\rm argmax}}
\def\argmin{\mathop{\rm argmin}}
\def\Label#1{\label{#1}\ [\ \text{#1}\ ]\ }
\def\Label{\label}
\begin{document}
\title{Secure Multiplex Coding with Dependent and
Non-Uniform Multiple Messages}
\author{
Masahito Hayashi,~\IEEEmembership{Senior Member,~IEEE,}%
\thanks{%
This research was partially supported by 
the MEXT Grant-in-Aid for Young Scientists (A) No.\ 20686026,
(B) No.\ 22760267, Grant-in-Aid for Scientific Research (A) No.\ 23246071,
and the ImPACT Program of Council for Science, Technology and
Innovation (Cabinet Office, Government of Japan),
the Villum Foundation through their VELUX Visiting Professor Programme
2011--2012.
The Centre for Quantum Technologies is funded
by the Singapore Ministry of Education and the National Research
Foundation as part of the Research Centres of Excellence programme.
This paper was
presented 
in part
at 2010 IEEE International Symposium on Information Theory, 
Austin, Texas, USA, June, 2010 \cite{hayashimatsumoto10},
in part
at 2011 IEEE International Symposium on Information Theory, Saint
Petersburg, Russia, August 2011 \cite{matsumotohayashi2011isit},
in part
at 49th Annual Allerton Conference, University of Illinois at Urbana-Champaign, IL, USA, September 2011 \cite{hayashimatsumoto2011allerton},
and
in part
at 50th Annual Allerton Conference, University of Illinois at Urbana-Champaign, IL, USA, October 2012 \cite{2012allerton}.}%
\thanks{
M. Hayashi is with Graduate School of Mathematics, Nagoya University, 
Furocho, Chikusaku, Nagoya, 464-8602, Japan, 
and
Centre for Quantum Technologies, National University of Singapore, 3 Science Drive 2, Singapore 117542.
(e-mail: masahito@math.nagoya-u.ac.jp)},
Ryutaroh Matsumoto,~\IEEEmembership{Member,~IEEE,}\thanks{
R. Matsumoto is with
Department of Communications and Computer Engineering,
Tokyo Institute of Technology, 152-8550 Japan}
}
\date{}

\maketitle
\begin{abstract}
The secure multiplex coding (SMC) is a technique to
remove rate loss in the coding for wire-tap
channels  and broadcast channels with
confidential messages
caused by the inclusion of random bits
into transmitted signals.
SMC replaces the random bits by other
meaningful secret messages, and
a collection of secret messages serves
as the random bits to hide the rest of messages.
In the previous researches, multiple secret messages
were assumed to have independent and uniform
distributions, which is difficult to be ensured
in practice. We remove this restrictive assumption
by a generalization of the channel resolvability technique.

We also give practical construction techniques for SMC by using
an arbitrary given error-correcting code as an ingredient,
and channel-universal coding of SMC.
By using the same principle as the channel-universal SMC,
we give coding for the broadcast channel with confidential
messages universal to both channel and source distributions.
\end{abstract}
\begin{IEEEkeywords}
broadcast channel with confidential messages, information theoretic security, multiuser information theory,
universal coding, the secure multiplex coding
\end{IEEEkeywords}

\section{Introduction}\Label{sec1}
\subsection{Overview}
Recently, the security of personal information is demanded much more.
The wire-tap model is a typical secure message transmission model 
with the presence of an eavesdropper.
Specially, there are the legitimate sender called Alice,
the legitimate receiver called Bob, and
the eavesdropper Eve.
There is also a noisy broadcast channel from Alice to Bob and Eve.
Alice wants to send secret messages reliably to Bob and secretly
from Eve. This problem was first formulated by Wyner \cite{wyner75}.
Csisz\'ar and K\"orner generalized Wyner's original problem to
include common messages from Alice to both Bob and Eve,
and determined the optimal 
information rate tuples of 
the secret message and the common message, and the information leakage rate
of the secret message to Eve, which is measured by the conditional
entropy of the secret message given Eve's received signal
\cite{csiszar78}.
They called their generalized problem as the broadcast channel
with confidential messages, hereafter abbreviated as BCC.
The secrecy of messages over the wire-tap channel and
the BCC is realized by including meaningless random variable, 
which is called the dummy message,
into Alice's transmitted signal. This decreases the information rate.

In order to get rid of this information rate loss,
Yamamoto et al.\ \cite{yamamoto05} proposed the secure multiplex coding, hereafter abbreviated as SMC,
as a generalization of the wire-tap channel coding.
The SMC can be used, for example, in the following case.
When a company treats a collection of personal information,
it is required to keep the secrecy of the respective personal information.
However, it 
may not be required to keep the secrecy of the relation among several personal information.
For example, when all of personal information are subject to the uniform distribution of the same length bit sequence,
the secrecy of their exclusive OR may not be required.
Consider the case when the sender Alice sends the collection 
of $T$ persons' personal information
$S_1, \ldots, S_T$ via the channel partially leaked to Eve.
It is required that the receiver Bob can decode all of $S_1, \ldots, S_T$,
and that Eve cannot obtain any information of the respective personal information.
In order to keep the secrecy of the message $S_i$ from Eve,
Yamamoto et al. \cite{yamamoto05} proposed to use 
the remaining information $S_1, \ldots, S_{i-1}, S_{i+1}, \ldots,S_T$ as 
the dummy message for the message $S_i$.
Then, they realized the secrecy of the message $S_i$ without loss of
the information rate.
This type of coding problem is called the SMC.
It is known that 
the application of the channel resolvability \cite{han93} 
yields the security of the wire-tap channel model \cite{hay-wire}.
Hence, employing this method, Yamamoto et al. \cite{yamamoto05} proved 
the security of SMC.

On the other hand,
since $S_1, \ldots, S_{i-1}, S_{i+1}, \ldots,S_T$ are personal information,
they are not necessarily uniform random bits and might be dependent,
while the existing papers \cite{matsumotohayashi2011isit,yamamoto05}
assumed their uniformity and independence.
Such assumption is difficult to be ensured in practice.
Unfortunately, the application of the original channel resolvability 
can prove the security only when 
the messages $S_1, \ldots, S_{i-1}, S_{i+1}, \ldots,S_T$
are conditionally uniform and independent of $S_i$
because it treats the approximation of the channel output distribution 
with the uniform input random variable.
One may consider that 
the compressed data satisfies that assumption
so that the removal of that assumption is not needed. 
However, as is shown in \cite{han-folklore,hayashi08},
the compressed data is not uniform in the 
sense of the variational distance nor the divergence.
That is, the uniformity assumption does not hold for such compressed data.
Hence, the removal of the assumption is essential for non-uniform information source.

The reader might also conceive that this problem could
be solved by a straightforward combination of the coding for
intrinsic randomness \cite{Vem} and that for the original
secure multiplex coding \cite{yamamoto05,matsumotohayashi2011isit}.
We emphasize that this is false.
We cannot recover the original secret messages from
a codeword generated by an intrinsic randomness encoder, and
a new technique must be deployed to remove the
independence and uniform assumption on the multiple secret messages.
One of the main contributions of this paper is to remove that assumption.
In order to treat the non-uniform and dependent case, 
we need a generalization of the channel resolvability.
Hence, this paper also studies a generalization of the channel resolvability problem
\cite{han93,hay-wire}.

Even after we solve the above problem by a generalization of the channel resolvability problem,
the security of $S_i$ depends on the randomness and the dependence of 
the remaining messages $S_1, \ldots, S_{i-1}, S_{i+1}, \ldots,S_T$
on $S_i$.
This dependence causes 
another difficulty in the asymptotic formulation of SMC. 
That is, we need to characterize the randomness and the dependence in the asymptotic setting.
For this purpose, we introduce several kinds of asymptotic conditional uniformity
conditions and study their properties.
In addition to this, for the case when the channel is unknown,
we also treat 
universal coding for the secure multiplex coding \cite{yamamoto05}.
Further, 
as a byproduct, we obtain source-channel universal coding for the broadcast channel with confidential messages
\cite{csiszar78}.
We divide the introductory section to six subsections.

Finally, we should explain the assumptions for our probability spaces.
In the main body, we assume that all of probability spaces
are finite sets.
However, our result can be extended to the case of measurable spaces
except for the contents in Sections \ref{s6-2-2}, \ref{s9}, and \ref{s10}.
This generalization contains the case of continuous sets.
In Appendix \ref{as14}, we summarize how to generalize our results to the case of measurable spaces.
As a byproduct, we show the strong security for the Gaussian channel. 

\subsection{Generalization of the Channel Resolvability}
For a given channel $W$ with input alphabet
$\mathcal{X}$ and output alphabet $\mathcal{Y}$,
and given information source $X$ on $\mathcal{X}$,
Han and Verd\'u \cite{han93} considered to find a coding
$f: \mathcal{A} \rightarrow \mathcal{X}$ and a random variable
$A$ such that the distributions of $W(f(A))$ is close to $W(X)$
with respect to the variational distance or the normalized
divergence, and evaluated the minimum resolution of $A$ to
make the variational distance or the normalized divergence
asymptotically zero.
In their problem formulation,
one can choose the randomness $A$ used to simulate the channel output distribution.

In this paper, we shall consider the situation in which
we are given a channel $W$, an information source $X$, and
randomness $A$ and asked to find coding $f: \mathcal{A} \rightarrow
\mathcal{X}$ such that $W(f(A))$ is as close as possible to $W(X)$
with respect to unnormalized divergence.
We shall study how close $W(f(A))$ can be to $W(X)$
in Theorems \ref{lem-01} and \ref{Lee3} in Section \ref{s4}.
Hence, this problem can be regarded as a generalization of
channel resolvability
because this problem contains the original channel resolvability as a special case in the above sense.

\subsection{Asymptotic Conditional Uniformity}
In Subsection \ref{s6-2-2},  
in order to characterize the randomness and the dependence of 
the messages $S_1, \ldots, S_{i-1}, S_{i+1}, \ldots,S_T$ on
the other message $S_i$ asymptotically,
we introduce three asymptotic conditional uniformity conditions.
Then, we can characterize what a conditional distribution of 
the messages $S_1, \ldots, S_{i-1}, S_{i+1}, \ldots,S_T$
has a similar performance to the conditionally uniform distribution 
when we apply SMC.
We summarize the relations among those conditions as Theorem \ref{th-12-26-1}.
In particular, in Appendix \ref{as1}, we show that two introduced asymptotic conditional uniformity conditions are equivalent. 
Hence, we essentially have two different conditional uniformity conditions, 
namely, the weaker and the stronger asymptotic conditional uniformity conditions.

In Subsection \ref{as2}, we give sufficient conditions for the Slepian-Wolf compression so that the compressed data satisfies these asymptotic conditional uniformity conditions.
For the stationary ergodic sources,
we show the existence of a sequence of Slepian-Wolf codes whose compressed data 
satisfies the weaker asymptotic conditional uniformity conditions 
(Theorem \ref{th-12-23-3} and Remark \ref{rem1}).
Also for the i.i.d. sources,
we show the existence of a sequence of Slepian-Wolf codes whose compressed data 
satisfies the stronger asymptotic conditional uniformity conditions
(Theorem \ref{th-12-23-1} and Remark \ref{rem2}).

\subsection{Secure Multiplex Coding}\Label{s0-2}
Here, we explain the detail of our contributions to SMC.
As is explained above,
we have to realize the security of $S_i$ 
when the remaining messages $S_1, \ldots, S_{i-1}, S_{i+1}, \ldots,S_T$ are not uniform and are dependent on the message $S_i$.
In order to solve this problem,
we employ our generalized channel resolvability coding in
Theorems \ref{lem-01} and \ref{Lee3}.
Then, we can construct 
coding for a wire-tap channel that can ensure the secrecy of
message against the eavesdropper Eve 
when the dummy message
used by the encoder is non-uniform and statistically dependent
on the secret message that has to be kept secret from Eve.
We apply our generalized channel resolvability coding
to the above SMC case. 
Hence, we can remove the independence and uniform assumption on the multiple secret messages
while the original paper \cite{yamamoto05} by Yamamoto et al. and 
the previous paper \cite{matsumotohayashi2011isit} by the present authors 
assumed the independence and the uniformity of the multiple secret messages.

Indeed, Yamamoto et al.\ \cite{yamamoto05} treated only the secrecy of each message $S_i$,
and did not evaluate the information leakage of
multiple messages $S_{i_1}$, \ldots, $S_{i_n}$ to Eve, and the present authors
analyzed such information leakage in \cite{matsumotohayashi2011isit}.
The present authors also generalized coding in \cite{matsumotohayashi2011isit} so that
Alice's encoder can support the common message $S_0$
to both Bob and Eve.
The present authors also characterized the achievable 
information leakage rate in \cite{matsumotohayashi2011isit}.
Those enhancements are retained in this paper.

In Section \ref{s5}, we shall give two code constructions for SMC.
The first construction given in Subsection \ref{s5-3} is a simple application of 
channel resolvability coding in Theorem \ref{lem-01}.
Although it achieves the capacity region when there is no common message,
it is insufficient to fully prove the capacity region.
In Subsection \ref{s5-2},
to overcome this defect, we propose
the second construction given in Theorem \ref{Lee3}, which is based on 
another type of the channel resolvability coding. 
By using these constructions,
we shall evaluate the decoding error probability and
the mutual information to Eve in Section \ref{s5}
in single-shot setting in the sense of \cite{single}.

In Section \ref{s6} we formulate the capacity region of SMC,
analyze the asymptotic performance of two constructions, and
prove that the second construction achieves the capacity
region of SMC.
The capacity region is defined based on 
the weaker asymptotic conditional uniformity condition
given in Definition \ref{D12-18-1}.
In Section \ref{s7}, we shall prove that the mutual information to
Eve converges to zero when the normalized mutual information to Eve
converges to zero
under the stronger asymptotic conditional uniformity
given in Definition \ref{D12-18-2}.
The convergence
is so-called the strong security \cite{maurer94}.
In Subsection \ref{s7-3},
we also derive the exponent of the mutual information to Eve. 
The relation between our results and the paper \cite{yamamoto05} is explained as \eqref{11-14-1d}.

Section \ref{s8} addresses a more practical issue.
In Theorem \ref{lem2} of Section \ref{s5},
we show that we can have an upper bound of mutual information
between multiple secret messages and Eve's received signal,
by attaching randomly chosen group homomorphisms satisfying Condition \ref{C2-b}
to \emph{any} given error-correcting code for channels with single
sender and single receiver or the broadcast channel with
degraded message sets \cite{korner77}.
However, the upper bound in Theorem \ref{lem2} becomes difficult to be
computed when the error-correcting code is not given by the standard
random coding in information theory.
In Section \ref{s8},
we shall construct more practical codes
by combining the construction of Section \ref{s5} with an arbitrary given error-correcting code.
Under these codes, we shall give two upper bounds on the leaked mutual information that can be computed easily in practice.
Section \ref{s8} gives enhancement of our earlier proceeding paper
\cite{hayashimatsumoto10}.

\subsection{Universal Coding}

Universal coding is construction of encoder and decoder that do not
use the statistical knowledge on the underlying information system
(usually channel and/or source) \cite{csiszarbook}.
In Section \ref{s9} we shall give a construction of SMC universal to
channel. The basic idea in Section \ref{s9} is to combine the construction
in Section \ref{s5} with the universal coding using constant-type codes
for the broadcast
channel with degraded messages sets (BCD) in \cite{korner80},
while in Sections \ref{s5}--\ref{s7} the superposition random coding in
\cite{korner77} is used as their error-correcting mechanism.
The exponent given in Section \ref{s9} is better than
that given in our earlier proceeding paper
\cite{hayashimatsumoto2011allerton}.

Channel-universal coding for 
BCC had not been studied before
\cite{hayashimatsumoto2011allerton}, 
and coding for BCC can be
regarded as a special case of SMC
while Muramatsu et al. \cite{Muramatsu} treat channel-universal coding for wire-tap channel independently of \cite{hayashimatsumoto2011allerton}.
In Section \ref{s9} and \cite{hayashimatsumoto2011allerton} we consider
SMC universal to channel, but its universality to the source
is not considered.
In Section \ref{s10} we give a coding for BCC universal to both
channel and source. Its channel-universality is realized by the
same principle as Section \ref{s9} and \cite{hayashimatsumoto2011allerton}.
The exponent given in Section \ref{s9} is also greater than
that given in our earlier proceeding paper
\cite{hayashimatsumoto2011allerton}.

In Section \ref{s12}, we compare the exponent of leaked information
given in Sections \ref{s9} and \ref{s10} and 
that given in Subsection \ref{s7-3}.
As a result, we show that the exponent in Sections \ref{s9} and \ref{s10}
is greater than one of exponents in Subsection \ref{s7-3},
which is the same as that in \cite{hayashimatsumoto2011allerton}.
We also derive the equality condition.

\subsection{Organization of This Paper}
The outline of this paper is given as follows.
First, we prepare notations used in this paper in Section \ref{s0}.
Second, we prepare information quantities and their properties 
used in this paper in Section \ref{s1-2}.
Then, we review the formulation and existing results of 
BCC in Subsection \ref{s2-1}.
We give its reformulation for the dependent and non-uniform messages case
in Subsection \ref{s2-2}.
This new formulation is essential in the later discussion for 
SMC with dependent and non-uniform multiple messages.
In Subsection \ref{s4-2}, we review the formulation and existing results of BCD
as a special case of BCC, which will be used for our codes of SMC.
In Subsection \ref{s4-3}, we review 
K\"orner and Sgarro \cite{korner80}'s result for universal code for BCD,
which will be used for our construction of universal codes for SMC and BCC.
In Section \ref{s4}, we proceed to 
generalization of channel resolvability, which is a key idea of the paper 
and is used for codes of SMC and universal codes for SMC and BCC.
Section \ref{s5} introduces SMC with the single-shot setting.
Section \ref{s6-2-} introduces three asymptotic conditional uniformity conditions.
Based on these conditions,
Sections \ref{s6}--\ref{s8} treats SMC with the asymptotic setting, as is explained in Subsection \ref{s0-2}.
In Section \ref{s9}, 
combining the discussion of Subsections \ref{s4-2} and \ref{s5-4},
we propose universal coding for SMC by using 
K\"orner and Sgarro \cite{korner80}'s universal coding for BCD.
In Section \ref{s10}, we propose source-channel universal coding for BCC.
Appendices are devoted for several additionally required discussions
for asymptotic conditional uniformity conditions.
This paper contains two types of descriptions for each topics, i.e.,
the single-shot description \cite{single} and the $n$-fold description. 
Formulations and many coding theorems are given with the single-shot description.
The definitions of capacity regions are given in the $n$-fold description.


\section{Notation in This Paper}\Label{s0}

$\mathcal{X}$ denotes the channel input alphabet and
$\mathcal{Y}$ (resp.\ $\mathcal{Z}$) denotes
the channel output alphabet to Bob (resp.\ Eve).
We assume that $\mathcal{X}$, $\mathcal{Y}$, and
$\mathcal{Z}$ are finite unless otherwise stated.
We denote the 
conditional probability of the channel
to Bob and Eve by $P_{YZ|X}$.
Then, taking the marginal distribution,
we denote the conditional probability of the channel
to Bob (resp.\ Eve) by $P_{Y|X}$ (resp.\ $P_{Z|X}$).
Also, we denote the distribution of the random variable
$X$ by $P_X$.

We denote the uniform distribution on $\Omega$ by $P_{\mix,\Omega}$.
When $\Omega$ is a subset of $\mathcal{X} \times \mathcal{Y}$,
$P_{\mix,\Omega}$ is a joint distribution for the random variables $X$ and $Y$.
We denote the marginal distribution of $P_{\mix,\Omega}$
for the random variable $X$ and the random variable $Y$ by
$P_{X, \mix,\Omega}$ and $P_{Y, \mix,\Omega}$, respectively.
Further, the conditional distribution on the random variable $X$ 
conditioned to the other random variable $Y$ is denoted by 
$P_{X|Y, \mix,\Omega}$, i.e., 
\begin{align}
P_{X|Y, \mix,\Omega}(x|y)
=P_{X|Y=y, \mix,\Omega}(x)
:=\frac{P_{\mix,\Omega}(x,y)}{P_{Y,\mix,\Omega}(y)} 
\Label{12-19-10}
\end{align}
for $x\in \mathcal{X}$ and $y\in \mathcal{Y}$.
We denote the support of the distribution $P_X$ by $\supp(P_X)$.
Given a joint distribution $P_{XY}$, we define the distribution 
$P_{X|Y=y}$
on $\mathcal{X}$ by $P_{X|Y=y}(x):=P_{X|Y}(x|y)$.
When we need to treat another distribution
of the same random variables $X$ and $Y$, 
we denote it by $Q_{XY}$.
This is because
it is crucial to consider several distributions on the same probability space in this paper\footnote{Recently, the meta converse theorem was introduced for the channel coding in \cite{Nag,Pol}.
In the meta converse theorem,
it is the key point to optimize the choice of the distribution on 
the output alphabet
and we usually denote the distribution different from the marginal distribution by $Q$\cite{Hsec,Pol}.
Also, another recent paper \cite{T-H} adopts this notation for optimizing the distribution.
This kind notation becomes more popular, recently.}.
In this case, we denote the marginal distribution over $\mathcal{X}$ by $Q_X$,
and the conditional distribution by $Q_{X|Y}$.
We also define the distribution 
$Q_{X|Y=y}$ on $\mathcal{X}$ by $Q_{X|Y=y}(x):=Q_{X|Y}(x|y)$.

When we have to treat more than two distributions on 
$\mathcal{X}$, $\mathcal{Y}$, and $\mathcal{Z}$,
the above notation is not useful.
In this case, we consider 
the set $\mathcal{P}(\mathcal{X})$
of probability distributions on $\mathcal{X}$
or
the set $\mathcal{W}(\mathcal{X}$, $\mathcal{Y})$
of conditional probability distributions
from $\mathcal{X}$ to $\mathcal{Y}$,
which are mathematically equivalent to probability transition matrices.
When the output alphabet of the channel 
is given as a product set $\mathcal{Y}\times \mathcal{Z}$,
the alphabet is written by $\mathcal{W}(\mathcal{X}$, $\mathcal{Y}\times \mathcal{Z})$.
For any probability transition matrix $W \in 
\mathcal{W}(\mathcal{X}$, $\mathcal{Y}\times \mathcal{Z})$,
$W_x$ expresses the output distribution when the input $X$ is $x$.
When we focus on the random variable $Y$, 
we use the notation 
$W^{Y}_{x}(y):=\sum_{z \in \mathcal{Z}}W_x(y,z)$.

In the following, we treat an arbitrary 
probability transition matrix $W \in \mathcal{W}(\mathcal{X}$, $\mathcal{Y})$.
Given a subset $\Omega \subset \mathcal{X}$,
we define the restriction $W|_{\Omega} \in \mathcal{W}(\Omega$, $\mathcal{Y})$
by $W|_{\Omega}(y|x)=W(y|x)$ for $x\in \Omega$ and $y\in \mathcal{Y}$.
We often employ another probability transition matrix 
$\Xi$ 
from $\mathcal{V}$ to $\mathcal{X}$.
We define the probability transition matrix 
from $\mathcal{V}$ to $\mathcal{Y}$
by $W \circ \Xi_v (y):=
\sum_{x \in \mathcal{X}}W_x(y) \Xi_v (x) $ for 
$v \in \mathcal{V}$ and $y\in \mathcal{Y}$.
When a probability distribution $P$ on $\mathcal{X}$ is given,
we define the distribution on $\mathcal{Y}$
by $W \circ P(y):= \sum_{x\in \mathcal{X}}W_x(y)P(x)$ 
for $y \in \mathcal{Y}$.
When we need the joint distribution on 
$\mathcal{X}\times\mathcal{Y}$,
we use the notation
$W \times P(x,y):= W_x(y)P(x)$ 
for $x\in \mathcal{X}$ and $y\in \mathcal{Y}$
as \cite{Csiszar}.
Similarly, when a distribution $P_{XV}$ on $\mathcal{X}\times \mathcal{V}$
is given,
we use the notation
$W \times P_{XV}(v,x,y):= W_x(y)P_{XV}(x,v)$
for 
$v\in \mathcal{V}$, $x\in \mathcal{X}$, and $y\in \mathcal{Y}$.

When 
a function $f:\mathcal{V} \to \mathcal{X}$ is given
and a random variable $V$ taking the values in $\mathcal{V}$
obeys the distribution $P_V$,
we can define the random variable $f(V)$ taking the values in $\mathcal{X}$.
The random variable $f(V)$ takes the value $x$ with probability
$\sum_{v \in f^{-1}(x)}P_V(v)$.
We also use the same symbol $f:\mathcal{V} \to \mathcal{X}$ 
to denote the probability transition matrix 
from $\mathcal{V}$ to $\mathcal{X}$, in which, 
the output value is deterministically determined by the input.
Then, $W \circ f$ is a 
stochastic mapping
$\mathcal{V}$ to $\mathcal{Y}$, and we have
\begin{align}
(W \circ f)(y|v)=W(y|f(v))
\Label{12-18-4}
\end{align}
for $v\in \mathcal{V}$ and $y\in \mathcal{Y}$.
Given a probability transition matrix 
$W' \in \mathcal{W}(\mathcal{U}$, $\mathcal{V})$,
we define $f \circ W' \in  \mathcal{W}(\mathcal{U}$, $\mathcal{X})$ by
\begin{align}
(f \circ W') (x|u):= \sum_{v\in f^{-1}(x) }W'(v|u)
\Label{12-18-5}
\end{align}
for $x\in \mathcal{X}$ and $u\in \mathcal{U}$.
As a special case,
given a distribution $Q$ on $\mathcal{V}$,
$f \circ Q$ is defined as a distribution on $\mathcal{X}$
in the following way.
\begin{align}
(f \circ Q) (x):= \sum_{v\in f^{-1}(x) }Q(v).
\Label{12-18-6}
\end{align}

Remember that
$W_x$ denotes the output distribution on the output alphabet 
$\mathcal{Y}$ with input $x$. 
Then, $W_X$ is the random variable taking its values on the output distributions on $\mathcal{Y}$.
Given a real valued function $g$ of distributions on $\mathcal{Y}$,
we regard $g(W_X)$ as a random variable 
taking the value 
$g(W_x)$ with the probability $P_X(x)$.
Hence, we obtain
\begin{align*}
\rE_{X} g(W_X)= \sum_x P_X(x) g(W_x),
\end{align*}
where $\rE_{X}$ denotes the expectation concerning $X$.

Given two random variables $X$ and $Y$,
for a real valued function $h$ on ${\cal X} \times {\cal Y}$,
we regard $\rE_{X|Y}h(X,Y)$ as 
a random variable taking the value $\rE_{X|Y=y}h(X,y)$ 
with the probability $P_Y(y)$.
In order to identify an information quantity, e.g., 
mutual information $I(X;Y)$ and the Shannon entropy $H(X)$, 
we sometimes need to specify the distribution $P$ of interest.
In such a case, we use the notations $I(X;Y)[P]$ and $H(X)[P]$
for identifying what distribution is considered.

Further, in this paper, 
we discuss our codes and their performances in the single-shot setting\cite{
single}
when their descriptions do not require their asymptotic discussions.
However, in several parts, we need to treat 
$n$-fold memoryless extensions when we discuss their asymptotic performances.
Hence, we need to prepare the notations for 
$n$-fold independent and identical distributions
and $n$-fold memoryless extensions of given channels.
For a given probability distributions $Q$
and $P_X$ of the random variable $X$ on ${\cal X}$, 
we denote their $n$-fold independent and identical distributions
by $Q^n$ and $P_X^n$.

When we consider the random variables on $\mathcal{X}^n$, 
even if they do not obey the independent and identical distributions, 
we denote the random variables by $X^n$ and 
denote their distributions by $P_{X^n}$.
However, when we consider a general sequence of random variables
those take values 
not in the product sets $\mathcal{X}^n$
but in general sets $\mathcal{X}_n$,
we denote the random variables by $X_n$ 
and denote their distributions by $P_{X_n}$.
Similarly, for a given probability transition matrices $W$
and $P_{Y|X}$ from ${\cal X}$ to ${\cal Y}$,
we denote their $n$-fold memoryless extensions by $W^n$
and $P_{Y|X}^n$.

We also 
denote the set of positive
real numbers by $\mathbf{R}^+$,
and denote the set of non-negative
real numbers by $\mathbf{R}_{\ge 0}$.

\section{Information Quantities}\Label{s1-2}
In this paper, to evaluate the secrecy and the decoding error probabilities,
we employ several information quantities.
For distributions 
$P_A$ on ${\cal A}$ and $P_{AB}$ on ${\cal A}\times {\cal B}$,
we define 
R\'{e}nyi entropy and conditional R\'{e}nyi entropy 
\begin{align}
H_{1+\rho}(A) & := -\frac{1}{\rho}\log \sum_a P_A(a)^{1+\rho} \nonumber\\
H_{1+\rho}(A|B) &:= -\frac{1}{\rho}\log \sum_{a,b} P_B(b) P_{A|B=b}(a)^{1+\rho}.\nonumber
\end{align}
$H_{1}(A)$ and $H_{1}(A|B)$ are defined to be $H(A)$ and $H(A|B)$.
Then, we have several important properties for R\'{e}nyi entropy and conditional R\'{e}nyi entropy. 
Since $\rho \mapsto \rho H_{1+\rho}(A)$, $\rho \mapsto \rho H_{1+\rho}(A|B)$ are concave
and $\lim_{\rho\to 0}\rho H_{1+\rho}(A)=
\lim_{\rho\to 0}\rho H_{1+\rho}(A|B)=0$, we have
\begin{align}
H_{1+\rho'}(A)  \le H_{1+\rho}(A) , \quad
H_{1+\rho'}(A|B) \le H_{1+\rho}(A|B) \Label{eq-1-13}
\end{align}
for $0 \le \rho \le \rho'$.

Similarly, as is shown in \cite{hayashi11},
we have the following proposition for the function 
\begin{align}
\psi(\rho|Q\|P) & :=\log \sum_{a}Q(a)^{1+\rho}P(a)^{-\rho} .
\end{align}

\begin{proposition}\cite{hayashi11}\Label{3-22-4L}
The function $\psi(\rho| Q \| P)$ satisfies the following properties:
\begin{description}
\item[(1)] 
$\rho \mapsto \psi(\rho| Q \| P)$ 
is convex.
\item[(2)] 
$\psi(0| Q \| P)=0$.
\item[(3)] 
$\frac{d}{d\rho}\psi(\rho| Q \| P)|_{\rho=0}=
D( Q \| P)$.
\item[(4)] 
The relations
\begin{align}
D(Q\|P):= \sum_{a} P(a) \log \frac{P(a)}{Q(a)}
=&
\lim_{\rho\to +0} \frac{\psi(\rho|Q\|P)}{\rho} 
\nonumber \\
\le &
\frac{\psi(\rho|Q\|P)}{\rho}
\Label{12-18-1}
\end{align}
hold for $0 < \rho$\footnote{Item (4) was not directly given in \cite{hayashi11}. However, it can be shown by the combination of other items.}.
\end{description}
\end{proposition}

For a given channel $W$ from ${\cal X}$ to ${\cal Y}$, 
we define the function \cite{hayashi11}:
\begin{align}
\psi(\rho|W,P_X) &:=
\log \sum_{x}P_X(x) e^{\psi(\rho|W_x\|W\circ P_X)}.
\Label{5-30-1}
\end{align}
When the channel is written as $P_{Z|L}$,
$\psi(\rho|W,P)$
can be rewritten as follows.
\begin{align}
\psi(\rho| P_{Z|L}, P_L) 
=
\log \sum_z \sum_\ell P_L(\ell) P_{Z|L}(z|\ell)^{1+\rho} P_Z(z)^{-\rho}.
\Label{eq:psid}
\end{align}
This quantity is extended as
\begin{align}
& \psi(\rho| P_{Z|V}, P_{V|U},P_U) \nonumber \\
:=& 
\log \sum_u P_U(u) \sum_v P_{V|U}(v|u) \sum_z P_{Z|V}(z|v)^{1+\rho} P_{Z|U}(z|u)^{-\rho}.
\Label{eq:psid-1}
\end{align}
for conditional distributions $P_{Z|V}$, $P_{V|U}$ 
and a distribution $P_U$.
Also, we introduce the following functions as in \cite{hayashi11}.
\begin{align}
&E_0(\rho| P_{Z|L},P_L) \nonumber \\
:=& \log \sum_z\left(
\sum_{\ell} P_{L}(\ell) (P_{Z|L}(z|\ell)^{1/(1-\rho)})\right)^{1-\rho},
\Label{phid} \\
&E_0(\rho| P_{Z|V}, P_{V|U},P_U) \nonumber \\
:=& \log \sum_u P_U(u) \sum_z\left(
\sum_{v} P_{V|U}(v|u) (P_{Z|V}(z|v)^{1/(1-\rho)})\right)^{1-\rho}.\Label{phid-2}
\end{align}
Observe that $E_0$ is essentially Gallager's function $E_0$
\cite{gallager68}.
As can be easily shown, these quantities satisfy the additivity as follows\cite{hayashi11,gallager68}.
\begin{align}
\psi(\rho| P_{Z|L}^n, P_L^n) 
&=n \psi(\rho| P_{Z|L}, P_L) \Label{3-24-4eq}\\
\psi(\rho| P_{Z|V}^n, P_{V|U}^n,P_U^n)
&=n \psi(\rho| P_{Z|V}, P_{V|U},P_U) \Label{3-24-3eq}\\
E_0(\rho| P_{Z|L}^n,P_L^n)
&= n E_0(\rho| P_{Z|L},P_L) \Label{3-24-2eq}\\
E_0(\rho| P_{Z|V}^n, P_{V|U}^n,P_U^n)
&=n E_0(\rho| P_{Z|V}, P_{V|U},P_U) \Label{3-24-1eq}
\end{align}
Then, we have the following proposition.
\begin{proposition}\cite{gallager68,hayashi11}\Label{lem12-4-1}
We have the following five items for fixed $0<\rho< 1$ and fixed conditional distribution $P_{Z|L}$.
\begin{description}
\item[(1)] 
The function $\rho\mapsto E_0(\rho| P_{Z|L}, P_L)$ is convex for a given distribution $P_L$\cite{gallager68}.

\item[(2)] 
$\exp(E_0(\rho| P_{Z|L}, P_L))$ is concave with respect to $P_L$\cite[Lemma 1]{hayashi11}.

\item[(3)] 
The relation $
\psi(\rho| P_{Z|L}, P_L) \le E_0(\rho| P_{Z|L}, P_L)$, i.e.,  
\begin{align}
\exp(\psi(\rho| P_{Z|L}, P_L))
& \le \exp(E_0(\rho| P_{Z|L}, P_L))
\Label{psileqphi} 
\end{align}
holds for any distribution $P_L$ of $L$\cite[(16)]{hayashi11}.

\item[(4)] 
The relation 
\begin{align}
\lim_{\rho \to 0}\frac{\psi(\rho| P_{Z|L}, P_L)}{\rho}
=
\lim_{\rho \to 0}\frac{E_0(\rho| P_{Z|L}, P_L)}{\rho}
=I(Z;L)\Label{3-24-20eq}
\end{align}
holds for a distribution $P_L$\cite[Section III]{hayashi11}\cite{gallager68}.
\end{description}
\end{proposition}

\begin{lemma}\Label{3-24-4L}
When two distributions $Q_L$ and $P_L$
of $L$ satisfy
$P_L(\ell) \le C_1 Q_L(\ell)$ for any $\ell$ with given constants $C_1 \ge 1$ and $0< \rho <1$,
we have
\begin{align}
\exp(E_0(\rho| P_{Z|L}, P_L))
& \le 
C_1 \exp(E_0(\rho| P_{Z|L}, Q_L)) \Label{Haya-2} .
\end{align}
\end{lemma}

\begin{proof}
\eqref{Haya-2} can be shown as follows.
\begin{align*}
& \exp(E_0(\rho| P_{Z|L}, P_L))
=
\sum_z\left(
\sum_{\ell} P_{L}(\ell) (P_{Z|L}(z|\ell)^{1/(1-\rho)})\right)^{1-\rho} \\
\le &
\sum_z\left(
\sum_{\ell} C_1 Q_{L}(\ell) (P_{Z|L}(z|\ell)^{1/(1-\rho)})\right)^{1-\rho} \\
\le &
C_1^{1-\rho}\sum_z\left(
\sum_{\ell} Q_{L}(\ell) (P_{Z|L}(z|\ell)^{1/(1-\rho)})\right)^{1-\rho} \\
=&
C_1^{1-\rho} \exp(E_0(\rho| P_{Z|L}, Q_L))
\le
C_1 \exp(E_0(\rho| P_{Z|L}, Q_L)).
\end{align*}
\end{proof}

As a generalization of Item (4) of Proposition \ref{lem12-4-1},
we have the following lemma.
\begin{lemma}\Label{3-24-3L}
The relation 
\begin{align}
\lim_{\rho \to 0}\frac{\psi(\rho| P_{Z|V}, P_{V|U},P_U)}{\rho}
=&
\lim_{\rho \to 0}\frac{E_0(\rho| P_{Z|V}, P_{V|U},P_U)}{\rho} \nonumber \\
=&I(Z;V|U)\Label{3-24-21eq}
\end{align}
holds for a distribution $P_U$,
and conditional distributions $P_{Z|V}$ and $P_{V|U}$.
\end{lemma}

\begin{proof}
Due to \eqref{3-24-20eq}, we have
\begin{align*}
e^{\psi(\rho| P_{Z|V}, P_{V|U},P_U)}
=&
\sum_{u}P_U(u) 1+ \rho I(Z;V|U=u)+ o(\rho)\nonumber \\
=&
1+ \rho I(Z;V|U)+ o(\rho).
\end{align*}
Taking the logarithm, we obtain 
$\lim_{\rho \to 0}\frac{\psi(\rho| P_{Z|V}, P_{V|U},P_U)}{\rho}
=I(Z;V|U)$.
Similarly, we can show 
$\lim_{\rho \to 0}\frac{E_0(\rho| P_{Z|V}, P_{V|U},P_U)}{\rho}
=I(Z;V|U)$.
\end{proof}

Considering the Legendre transforms, we define 
\begin{align}
\tilde{E}^{\psi}(R,P_{Z,V,U})
&:=
\max_{0 \le \rho \le 1} \rho R - \psi(\rho|P_{Z|V},P_{V|U},P_U), \Label{1-31-3b}\\
\tilde{E}^{E_0}(R,P_{Z,V,U})
& :=
\max_{0 \le \rho \le 1} \rho R -E_0(\rho|P_{Z|V},P_{V|U},P_U).\Label{1-31-3}
\end{align}

Taking the maximum, we define
\begin{align}
E_{0,\max}(\rho|P_{Z|V}):= &
\max_{P_V} E_0(\rho| P_{Z|V}, P_V) \nonumber \\
=&
\log
\max_{P_V}
\sum_{z}
(\sum_{v}P_V(v)
P_{Z|V}(z|v )^{\frac{1}{1-\rho}}  )^{1-\rho} \nonumber \\
=& \max_{P_{VU}}E_0(\rho| P_{Z|V} ,P_{V|U},P_U ) . 
\Label{eq10001}
\end{align}

\begin{lemma}\Label{3-22-1L}
The function $\rho \mapsto 
E_{0,\max}(\rho|P_{Z|V})$ is convex.
\end{lemma}

\begin{proof}
Given convex functions $x \mapsto f_i(x)$, 
the function $x \mapsto \max_{i}f_i(x)$ is also convex.
Hence, the item (1) of Proposition \ref{lem12-4-1} yields the desired argument.
\end{proof}

Next, 
for $\overline{W}^Z \in \mathcal{W}(\mathcal{V}$, $\mathcal{Z})$,
we consider a different information quantity $\tilde{E}^l$:
\begin{align}
&\tilde{E}^l
(R, \overline{W}^Z\times Q_{VU}) \nonumber \\
:=& 
\min_{{W}^Z \in \mathcal{W}(\mathcal{U}\times \mathcal{V},\mathcal{Z})}
\biggl( D({W}^Z\| \overline{W}^Z| Q_{VU})  \nonumber \\
&\hspace{10ex}+
[R-I(V;Z|U)[W^Z\times Q_{VU}] ]_+ \biggr).
\Label{1-31-1-k}
\end{align}
Due to Item (3) of Proposition \ref{lem12-4-1}, we have
\begin{align}
\tilde{E}^{\psi} (R, \overline{W}^Z\times Q_{VU})
\ge 
\tilde{E}^{E_0}(R, \overline{W}^Z\times Q_{VU}) .
\Label{3-22-3}
\end{align}
In this paper, we will derive the following relations:
\begin{align}
\tilde{E}^l (R, \overline{W}^Z\times Q_{VU}) 
\ge & 
\tilde{E}^{E_0}(R, \overline{W}^Z\times Q_{VU}) 
\Label{3-22-1}
\end{align}
and
\begin{align}
&\min_{Q_V}\tilde{E}^l (R, \overline{W}^Z\times Q_{V}) 
=\min_{Q_V}\tilde{E}^{E_0} (R, \overline{W}^Z\times Q_{V})
\nonumber \\
=&\max_{\rho \in [0,1]} \rho R-E_0(\rho|\overline{W}^Z) \Label{3-22-2}
\end{align}
as Theorems \ref{t-12-20-2} and \ref{t-12-20-1} in Section \ref{s12}, respectively.

Similar to $\tilde{E}^l$,
we introduce the following quantities
for $W^Y \in \mathcal{W}(\mathcal{V},\mathcal{Y})$
and $W^Z \in \mathcal{W}(\mathcal{V},\mathcal{Z})$
\begin{align}
&\hat{E}^b(R_{\mathrm{p}},R_{\mathrm{c}}, \tilde{W}^Y\times Q_{VU})\nonumber \\
:=&
\min \biggl( 
[I(VU;Y)[\tilde{W}^Y\times Q_{U,V}] - R_{\mathrm{p}}-R_{\mathrm{c}}]_+,\nonumber \\
&\hspace{10ex} 
[I(V;Y|U)[\tilde{W}^Y\times Q_{U,V}] - R_{\mathrm{p}}]_+\biggr), \\
&\tilde{E}^b(R_{\mathrm{p}},R_{\mathrm{c}}, W^Y\times Q_{VU}) \nonumber \\
:=& \min_{\tilde{W}^Y \in \mathcal{W}(\mathcal{U}\times \mathcal{V},\mathcal{Y})}
D(\tilde{W}^Y\| W^Y| Q_{VU})
+\hat{E}^b(R_{\mathrm{p}},R_{\mathrm{c}}, \tilde{W}^Y\times Q_{VU}),
\Label{1-31-1}
\\
& \tilde{E}^e(R_{\mathrm{c}}, W^Z\times Q_{U})  \nonumber \\
:=&\min_{\tilde{W}^Z \in \mathcal{W}(\mathcal{U}\times \mathcal{V},\mathcal{Z})}
D(\tilde{W}^Z\| W^Z| Q_{VU})+
[I(U;Z)[\tilde{W}^Z\times Q_{VU}] - R_{\mathrm{c}}]_+,
\Label{1-31-2}
\end{align}
where $D(\tilde{W}~Y\|W^Y| Q_{VU})$ is defined for 
$\tilde{W}^Y, W^Y \in \mathcal{W}(\mathcal{V},\mathcal{Y})$ as
\begin{align}
D(\tilde{W}^Y\|W^Y| Q_{VU}):=\sum_{u,v}Q_{VU}(u,v) D(\tilde{W}^{Y}_{u,v}\|W^Y_{v}). \end{align}
In the above definition, 
$W^Y$ and $W^Z$ are treated as elements of
$\mathcal{W}(\mathcal{U}\times \mathcal{V},\mathcal{Y})$
and $\mathcal{W}(\mathcal{U}\times \mathcal{V},\mathcal{Z})$,
respectively.

\section{Broadcast Channels with Confidential Messages}\Label{s2}
\subsection{Review of Existing Results}\Label{s2-1}
First, we give a formulation 
of broadcast channels with confidential messages
with single shot setting\cite{single}.
Let Alice, Bob, and Eve be as defined in Section \ref{sec1}.
$\mathcal{X}$ denotes the channel input alphabet and
$\mathcal{Y}$ (resp.\ $\mathcal{Z}$) denotes
the channel output alphabet to Bob (resp.\ Eve).
We assume that $\mathcal{X}$, $\mathcal{Y}$, and
$\mathcal{Z}$ are finite unless otherwise stated.

We denote the conditional probability of the channel
to Bob (resp.\ Eve) by $P_{Y|X}$ (resp.\ $P_{Z|X}$).
The purpose of broadcast channels with confidential messages is the following.
(1) Alice reliably sends the common message $E$ to Bob and Eve.
(2) Alice confidentially and reliably sends the secret message $S$ to Bob.
Here, we denote the sets of the common messages and the secret messages by $\mathcal{E}$ and $\mathcal{S}$.
Our code is given by
Alice's stochastic encoder $\varphi_{a}$ from $\mathcal{S} \times \mathcal{E}$ to $\mathcal{X}$,
Bob's deterministic decoder $\varphi_{b}: \mathcal{Y} \rightarrow \mathcal{S} \times \mathcal{E}$ and
Eve's deterministic decoder $\varphi_{e}: \mathcal{Z} \rightarrow \mathcal{E}$.
The triple $\varphi= (\varphi_a,\varphi_b,\varphi_e)$ is called a code for broadcast channels with confidential messages.
Then, 
when the common message $E$ and the secret message $S$ 
obey the distribution $P_{S,E}$,
the performance is evaluated by the following quantities.
(1) The sizes of the sets of the common messages and the secret messages, i.e., $|\mathcal{E}|$ and $|\mathcal{S}|$.
(2) Bob's decoding error probability
$P_b[P_{Y|X},\varphi,P_{S,E}]$, which is
the probability $\mathrm{Pr}\{(S, E) \neq \varphi_{b}(Y) \}$
under the distribution 
$(P_{Y|X}\circ \varphi_{a}) \times P_{S,E}$.
(3) Eve's decoding error probability
$P_e[P_{Y|X},\varphi,P_{S,E}]$,
which is the probability 
$\mathrm{Pr}\{ E \neq \varphi_{e}(Z) \}$ 
under the distribution 
$(P_{Z|X}\circ \varphi_{a}) \times P_{S,E}$.
(4) Eve's uncertainty $H(S|Z)[P_{Z|X},\varphi_a,P_{S,E}]$,
which is the conditional entropy $H(S|Z)$ under the distribution 
$(P_{Z|X}\circ \varphi_{a}) \times P_{S,E}$.
Since these quantities are functions of the channel and the code,
such dependencies are denoted by the symbol $[P_{Y|X},\varphi,P_{S,E}]$ in the above notation.
Instead of $H(S|Z)[P_{Z|X},\varphi_a,P_{S,E}]$, we sometimes treat 
(5) leaked information $I(S;Z)[P_{Z|X},\varphi_a,P_{S,E}]$,
which is the mutual information $I(S;Z)$ under the distribution 
$(P_{Z|X}\circ \varphi_{a}) \times P_{S,E}$.

We sometimes need to evaluate the error probability 
when $S$ and/or $E$ is fixed.
In such a case, we denote it by 
$P_b[P_{Y|X},\varphi,P_{E|S=s}]$,
$P_b[P_{Y|X},\varphi,S=s,E=e]$, and
$P_e[P_{Y|X},\varphi,P_{S|E=e}]$.

Now, we review the asymptotic formulation 
of broadcast channels with confidential messages
with 
the $n$-fold discrete memoryless extension
when both of the common messages and the secret messages are 
subject to uniform distributions.
The set $\mathcal{S}_n$ denotes the set of the confidential message
and $\mathcal{E}_n$ does the set of the common message
when the block coding of length $n$ is used.
We shall define the achievability of a rate triple
$(R_1$, $R_e$, $R_0)$,
where 
$R_0$ and $R_1$ are the rates of the common and confidential messages,
and
$R_e$ is the entropy rate conditioned with Eve's random variable
for the confidential message.
For the notational convenience, we fix the base of logarithm,
including one used in entropy and mutual information, to the
base of natural logarithm.

\begin{definition}\cite{csiszar78}
The rate triple $(R_1$, $R_e$, $R_0)$ is said
to be \emph{achievable} for the 
information leakage rate criterion
if the following condition holds.
The size of the sets of the common and confidential messages
are $|\mathcal{E}_n| =e^{n R_0}$ and $|\mathcal{S}_n| =e^{n R_1}$.
The common and confidential messages are subject to 
the uniform and independent distribution on $\mathcal{S}_n$
and $\mathcal{E}_n$.
There exists a sequence of 
the codes $\varphi_n=(\varphi_{a,n},\varphi_{b,n},\varphi_{e,n})$, i.e., 
Alice's stochastic encoder $\varphi_{a,n}$ from 
$\mathcal{S}_n \times \mathcal{E}_n$ to $\mathcal{X}^n$,
Bob's deterministic decoder $\varphi_{b,n}: \mathcal{Y}^n
\rightarrow \mathcal{S}_n \times \mathcal{E}_n$ and
Eve's deterministic decoder $\varphi_{e,n}: \mathcal{Z}^n
\rightarrow \mathcal{E}_n$ such that
\begin{align*}
\lim_{n\rightarrow \infty} 
P_b[P_{Y|X}^n,\varphi_{n},P_{\mix,\mathcal{S}_n,\mathcal{E}_n}]&=0 \\
\lim_{n\rightarrow \infty} 
P_e[P_{Z|X}^n,\varphi_{n},P_{\mix,\mathcal{S}_n,\mathcal{E}_n}]&=0 \\
\liminf_{n\rightarrow \infty} 
\frac{H(S_n|Z^n)[P_{Y|X}^n,\varphi_{a,n},P_{\mix,\mathcal{S}_n,\mathcal{E}_n}]}{n} &\geq
R_e.
\end{align*}
The capacity region with the information leakage rate criterion of the BCC is the closure of 
the achievable rate triples for the information leakage rate criterion.
\end{definition}

\begin{theorem}\label{th1}\cite{csiszar78}
The capacity region with the information leakage rate criterion of the BCC 
is given by
the set of $R_0$, $R_1$ and $R_e$ such that
there exists a Markov chain $U\rightarrow V \rightarrow X \rightarrow 
YZ$ and
\begin{eqnarray*}
R_1 + R_0 &\leq& I(V;Y|U)+\min[I(U;Y),I(U;Z)],\\
R_0 &\leq& \min[I(U;Y),I(U;Z)],\\
R_e & \leq & I(V;Y|U)-I(V;Z|U),\\
R_e & \leq &R_1.
\end{eqnarray*}
\end{theorem}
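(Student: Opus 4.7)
The plan is to prove both directions of Theorem \ref{th1} independently: achievability via two-layer superposition coding combined with random binning (the very mechanism that SMC later generalises), and the converse via Csisz\'ar's sum identity with a suitable choice of auxiliary random variables.

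For the direct part, I would fix a Markov chain $U\to V\to X\to YZ$ and build a two-layer i.i.d.\ random code. The outer layer consists of $e^{nR_0}$ sequences $U^n$ drawn from $P_U^n$, with $R_0\le \min[I(U;Y),I(U;Z)]-\delta$ so that both Bob and Eve can decode the common message by joint typicality. For each cloud centre $U^n$ I would generate $e^{n(I(V;Y|U)-\delta)}$ satellite sequences $V^n\sim P_{V|U}^n(\cdot|U^n)$, indexed by a pair (secret-message index of size $e^{nR_1}$, resolvability index of size $e^{n(I(V;Z|U)+\delta)}$); then $X^n$ is produced from $V^n$ by the memoryless channel $P_{X|V}$. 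Bob's joint-typicality decoder succeeds with high probability because the inner rate lies below $I(V;Y|U)$, while the uniform resolvability index guarantees $I(S;Z^n|E)=o(n)$ in expectation over the random code, yielding $H(S|Z^n)\ge nR_1-o(n)$. When the target $R_1$ exceeds $I(V;Y|U)-I(V;Z|U)$, the excess secret rate must be absorbed into the outer layer, which accounts for the sum-rate bound $R_1+R_0\le I(V;Y|U)+\min[I(U;Y),I(U;Z)]$.

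For the converse, let $\{\varphi_n\}$ achieve $(R_0,R_1,R_e)$. Fano's inequality on the two decoders gives $H(S,E|Y^n)\le n\epsilon_n$ and $H(E|Z^n)\le n\epsilon_n$. I would then introduce the Csisz\'ar--K\"orner auxiliaries
\[
U_i=(E,Z^{i-1},Y_{i+1}^n),\qquad V_i=(S,U_i),
\]
which satisfy $U_i\to V_i\to X_i\to (Y_i,Z_i)$. The core identity
\[
\sum_{i=1}^n I(Y_{i+1}^n;Z_i\mid Z^{i-1},E)=\sum_{i=1}^n I(Z^{i-1};Y_i\mid Y_{i+1}^n,E)
\]
converts $H(S|Z^n,E)-H(S|Y^n,E)$ into $\sum_i[I(V_i;Y_i|U_i)-I(V_i;Z_i|U_i)]$ up to $o(n)$ terms. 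A uniform time-sharing variable $T$ independent of everything, together with $U=(U_T,T)$, $V=(V_T,T)$, $X=X_T$, $Y=Y_T$, $Z=Z_T$, collapses the per-letter bounds into the stated single-letter inequalities.

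The main obstacle will be the converse, specifically ensuring that a single choice of $(U_i,V_i)$ simultaneously yields all four inequalities. Handling the $\min[I(U;Y),I(U;Z)]$ term requires using that $E$ must be reliably decoded from both $Y^n$ and $Z^n$, so Fano's inequality applies on both sides of the broadcast channel, and then splitting on whether $I(U;Y)\gtrless I(U;Z)$ when deriving the sum-rate bound. Cardinality reduction of the auxiliary alphabets $U,V$ via a Carath\'eodory-type argument (needed to make the region a closed convex set realised by finite-alphabet auxiliaries) is a routine but tedious step that I would isolate in a separate lemma rather than carrying it out inline.
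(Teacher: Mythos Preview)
The paper does not prove Theorem~\ref{th1}; it is simply cited from Csisz\'ar and K\"orner \cite{csiszar78} as an established result and used as background. So there is nothing in the paper to compare your proposal against.

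That said, your outline is the classical Csisz\'ar--K\"orner proof: two-layer superposition with a random bin (resolvability) index in the inner layer for achievability, and the auxiliary choices $U_i=(E,Z^{i-1},Y_{i+1}^n)$, $V_i=(S,U_i)$ together with the Csisz\'ar sum identity for the converse. The sketch is essentially correct. Two minor points worth tightening when you write it up: (i) in the achievability part, the case $R_e<R_1$ is handled not by ``absorbing excess secret rate into the outer layer'' but by declaring only part of the confidential message secret (equivalently, reducing the bin index rate to achieve equivocation $R_e$ rather than $R_1$); the sum-rate constraint $R_0+R_1\le I(V;Y|U)+\min[I(U;Y),I(U;Z)]$ comes directly from Bob decoding both layers, not from rate splitting into the cloud centre. (ii) In the converse, verifying $R_0\le I(U;Z)$ requires the chain $nR_0\le I(E;Z^n)+n\epsilon_n=\sum_i I(E;Z_i|Z^{i-1})+n\epsilon_n\le \sum_i I(U_i;Z_i)+n\epsilon_n$, and the analogous chain for $Y$; make sure both are written out so the $\min$ is justified.
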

As described in \cite{liang09},
$U$ can be regarded as the common message,
$V$ the combination of the common and the confidential messages,
and $X$ the transmitted signal.

In this paper, we treat the source-channel universal coding for BCC,
in which, we guarantee the security independently of the choice of 
the source distribution.
While the lower bound of the above conditional entropy 
$H(S_n|Z^n)[P_{Y|X}^n,\varphi_{a,n},P_{S_n,E_n}]$ depends on the 
the source distribution $P_{S_n,E_n}$,
we can find an upper bound of mutual information that 
does not depend on the source distribution,
as is shown in Section \ref{s10}.
As a preparation for the above source-channel universal coding for BCC,
we propose another type of capacity region
for the uniform and independent distributed case
while the non-uniform and dependent case will be treated latter.

\begin{definition}\Label{region-l}
The rate triple $(R_1$, $R_l$, $R_0)$ is said
to be \emph{achievable} for the leaked information criterion
if the following conditions hold.
In this notation, $R_1$, $R_l$, and $R_0$ denote the rates of 
the confidential message, the leaked information, and the common message, respectively.
The size of the sets of the common and confidential messages
are $|\mathcal{E}_n| =e^{n R_0}$ and $|\mathcal{S}_n| =e^{n R_1}$,
and the common and confidential messages are subject to 
the uniform and independent distribution on $\mathcal{S}_n$
and $\mathcal{E}_n$.
There exists a sequence of 
the codes $\varphi_n=(\varphi_{a,n},\varphi_{b,n},\varphi_{e,n})$, i.e., 
Alice's stochastic encoder $\varphi_{a,n}$ from 
$\mathcal{S}_n \times \mathcal{E}_n$ to $\mathcal{X}^n$,
Bob's deterministic decoder $\varphi_{b,n}: \mathcal{Y}^n
\rightarrow \mathcal{S}_n \times \mathcal{E}_n$ and
Eve's deterministic decoder $\varphi_{e,n}: \mathcal{Z}^n
\rightarrow \mathcal{E}_n$ such that
\begin{align*}
\lim_{n\rightarrow \infty} 
P_b[P_{Y|X}^n,\varphi_{n},P_{\mix,\mathcal{S}_n,\mathcal{E}_n}]&=0 \\
\lim_{n\rightarrow \infty} 
P_e[P_{Z|X}^n,\varphi_{n},P_{\mix,\mathcal{S}_n,\mathcal{E}_n}]&=0 \\
\limsup_{n\rightarrow \infty} 
\frac{I(S_n;Z^n)[P_{Y|X}^n,\varphi_{a,n},P_{\mix,\mathcal{S}_n,\mathcal{E}_n}]}{n} & \leq  R_l.
\end{align*}
The capacity region with the leaked information criterion of the BCC is the closure of 
the achievable rate triples.
\end{definition}

The capacity region with the leaked information criterion of the BCC is characterized 
as a corollary of Theorem \ref{th1}.
\begin{corollary}\Label{cor2}
The capacity region with the leaked information criterion of the BCC 
is given by the set of $R_0$, $R_1$ and $R_l$, 
such that
there exists a Markov chain $U\rightarrow V \rightarrow X \rightarrow 
YZ$ and
\begin{eqnarray*}
R_1 + R_0 &\leq& I(V;Y|U)+\min[I(U;Y),I(U;Z)],\\
R_0 &\leq& \min[I(U;Y),I(U;Z)],\\
R_l & \geq & R_1-[I(V;Y|U)-I(V;Z|U)]_+,
\end{eqnarray*}
where $[x]_+:=\max (x,0)$.
That is,
when
$R_1 + R_0 < I(V;Y|U)+\min[I(U;Y),I(U;Z)]$
and
$R_0 < \min[I(U;Y),I(U;Z)]$,
there exists a sequence of 
the codes $\varphi_n=(\varphi_{a,n},\varphi_{b,n},\varphi_{e,n})$, i.e., 
Alice's stochastic encoder $\varphi_{a,n}$ from 
$\mathcal{S}_n \times \mathcal{E}_n$ to $\mathcal{X}^n$,
Bob's deterministic decoder $\varphi_{b,n}: \mathcal{Y}^n
\rightarrow \mathcal{S}_n \times \mathcal{E}_n$ and
Eve's deterministic decoder $\varphi_{e,n}: \mathcal{Z}^n
\rightarrow \mathcal{E}_n$ such that
\begin{align*}
\lim_{n\rightarrow \infty} 
P_b[P_{Y|X}^n,\varphi_{n},P_{\mix,\mathcal{S}_n,\mathcal{E}_n}]&=0 \\
\lim_{n\rightarrow \infty} 
P_e[P_{Z|X}^n,\varphi_{n},P_{\mix,\mathcal{S}_n,\mathcal{E}_n}]&=0 
\end{align*}
and
\begin{align*}
&\limsup_{n\rightarrow \infty} 
\frac{I(S_n;Z^n)[P_{Y|X}^n,\varphi_{a,n},P_{\mix,\mathcal{S}_n,\mathcal{E}_n}]}{n} \\
 \leq & R_1-I[(V;Y|U)-I(V;Z|U)]_+.
\end{align*}
\end{corollary}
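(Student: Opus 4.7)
The plan is to deduce Corollary \ref{cor2} directly from Theorem \ref{th1} via the elementary identity
\begin{equation*}
I(S_n:Z^n) = H(S_n) - H(S_n|Z^n) = n R_1 - H(S_n|Z^n),
\end{equation*}
which holds because $S_n$ is uniform on a set of size $e^{n R_1}$. Dividing by $n$, the condition $\liminf_n H(S_n|Z^n)/n \geq R_e$ in the equivocation definition is equivalent to $\limsup_n I(S_n:Z^n)/n \leq R_1 - R_e$. Hence a code sequence witnesses leakage-achievability of $(R_1, R_l, R_0)$ if and only if the same sequence witnesses equivocation-achievability of $(R_1, R_1 - R_l, R_0)$, so the leakage capacity region is the image of the equivocation capacity region under the affine bijection $(R_1, R_e, R_0) \mapsto (R_1, R_1 - R_e, R_0)$. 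Because this map is a homeomorphism of $\mathbb{R}^3$ it commutes with the closure operations used to define the two regions, so there is no subtlety in transferring the characterization.

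Next I would substitute $R_e = R_1 - R_l$ into the inequalities of Theorem \ref{th1}. The first two inequalities do not involve $R_e$ and pass through unchanged. The bound $R_e \leq I(V;Y|U) - I(V;Z|U)$ rearranges to $R_l \geq R_1 - I(V;Y|U) + I(V;Z|U)$, and the bound $R_e \leq R_1$ (which in Theorem \ref{th1} encodes the trivial fact $H(S_n|Z^n) \leq H(S_n)$) becomes $R_l \geq 0$. Taking the tighter of the two yields $R_l \geq [R_1 - I(V;Y|U) + I(V;Z|U)]_+$, which is exactly the third inequality of the corollary. The ``that is\ldots'' restatement, which asserts existence of codes with leakage rate $[R_1 - I(V;Y|U) + I(V;Z|U)]_+$ for rate pairs strictly in the interior of the first two constraints, follows from the achievability half of Theorem \ref{th1} applied to the target equivocation $R_e = \min\bigl(R_1,\, I(V;Y|U) - I(V;Z|U)\bigr)$; the $[\,\cdot\,]_+$ is precisely an artefact of the constraint $R_e \leq R_1$.

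I do not anticipate a genuine obstacle. The only mild care concerns admissibility of substitutions in which $R_e = R_1 - R_l$ falls outside $[0, R_1]$: for $R_l > R_1$ the equivocation condition with $R_e < 0$ is vacuous, which is consistent with leakage above $R_1$ being trivially achievable by any code, and for $R_l < 0$ no code can achieve it because $I(S_n:Z^n) \geq 0$. Both edge cases are therefore respected by the equivalence, and the corollary follows with no additional coding-theoretic ingredient beyond Theorem \ref{th1}.
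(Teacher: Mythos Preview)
Your proposal is correct and matches the paper's approach: the paper simply states that Corollary~\ref{cor2} is ``characterized as a corollary of Theorem~\ref{th1}'' without spelling out the details, and your argument via the identity $I(S_n:Z^n)=nR_1-H(S_n|Z^n)$ and the affine change of variables $R_e\leftrightarrow R_1-R_l$ is exactly the intended derivation.
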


\subsection{Our Approach to BCC}\Label{s2-2}
Next, we consider the BCC 
with the single-shot setting when 
the common and confidential messages do not obey 
the uniform and independent distributions on $\mathcal{S}$
and $\mathcal{E}$, i.e., 
the confidential message $S$ may have a correlation with the common messages $E$.
When the confidential message $S$ is independent of the common messages $E$,
\begin{align*}
& I(S;Z)\le I(S;Z E) 
=  I(S;Z |E)+ I(S;E)
=  I(S;Z |E),\\
& I(S;Z) 
= H(S)-H(S|Z) \ge H(S|E)-H(S|Z) \\
= & H(S|E)-(H(S|ZE)+I(S;E|Z)) 
= I(S;Z|E)-I(S;E|Z) \\
\ge & I(S;Z|E)-H(E|Z) 
\ge I(S;Z|E)-H(E|\varphi_e(Z)).
\end{align*}
When the error probability goes to zero,
Fano's inequality guarantees that $H(E|Z)$ goes to zero.
Hence, 
$I(S;Z)$ and $I(S;Z|E)$ have the same asymptotic behaviors.
So, even if we replace $I(S;Z)$ by $I(S;Z|E)$ in Definition \ref{region-l},
we obtain the same capacity region.
However, when the confidential message $S$ 
is dependent on the common messages $E$,
$I(S;Z)$ and $I(S;Z|E)$ have the different asymptotic behavior
as follows.
Since
\begin{align*}
& I(S;Z)= I(S;ZE)-I(S;E|Z) \\
\ge & I(S;E)-H(E|Z) 
\ge  I(S;E)-H(E|\varphi_e(Z)),
\end{align*}
$I(S;Z)$ is asymptotically lower bounded by $I(S;E)$
when the error probability goes to zero.
That is, when the mutual information $I(S;E)$ is positive,
the mutual information $I(S;Z)$ cannot go to zero
because Eve can infer the secret message from the common message.
Thus, it is not suitable to treat the mutual information $I(S;Z)$ as leaked information from $Z$.
Hence, we adopt the conditional mutual information $I(S;Z|E)$ as leaked information from $Z$.

\begin{remark}
Csisz\'ar and K\"orner \cite{csiszar78} treated BCC with non-uniform information source.
However, their formulation was different from our formulation in the following point.
In their formulation, they fixed a correlated non-uniform 
distribution $P_{S,E}$
on ${\cal S}\times {\cal E}$
and 
assumed that 
the information source $S_n$ and $E_n$ obey its $n$-fold 
independent and identical distribution $P_{S,E}^n $.
In addition to this, their code depends on the distribution $P_{S,E}$.
However, in our formulation, we do not assume 
the independent and identical distributed condition for the distribution 
$P_{S_n,E_n}$ of the information source $S_n$ and $E_n$.
This is because
information source is not given as an independent and identical distribution or known, in general.
Hence, we study a universal code independent of the distribution 
$P_{S_n,E_n}$ of sources in Section \ref{s10}.
Thus, our code is useful for a realistic case.
\end{remark}

\section{Broadcast Channels with Degraded Message Sets}\label{sec:bcd}
\subsection{Capacity Region}\Label{s4-2}
Next, we review the broadcast channel with degraded message sets (abbreviated as
BCD) considered by K\"orner and Marton \cite{korner77}
in the single-shot setting.
If we set $R_e= 0$ in the BCC,
the secrecy requirement is removed from BCC, and
the coding problem is equivalent to BCD. 
In this problem, we treat 
the private message $S_{\mathrm{p}}$ taking values in ${\cal S}_{\mathrm{p}}$
and the common message $S_{\mathrm{c}}$ taking values in ${\cal S}_{\mathrm{c}}$.
\begin{corollary}\label{cor:bcd}\cite{korner77}
The capacity region of the BCD is given by
the pair of the rate $R_\mathrm{c}$ of common message 
and the rate $R_{\mathrm{p}}$ of private message such that
there exists a Markov chain $U\rightarrow V = X \rightarrow YZ$ and
\begin{eqnarray*}
R_{\mathrm{c}} &\leq& \min[I(U;Y),I(U;Z)],\\
R_{\mathrm{c}}+R_{\mathrm{p}} & \leq & I(V;Y|U)+\min[I(U;Y),I(U;Z)].
\end{eqnarray*}
\end{corollary}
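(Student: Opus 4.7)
The plan is to derive Corollary \ref{cor:bcd} directly from Theorem \ref{th1} by dropping the equivocation dimension. Since BCD is exactly BCC with the secrecy requirement removed, its capacity region is the set of $(R_0,R_1)$ such that $(R_0,R_1,R_e)$ lies in the BCC capacity region for some $R_e \geq 0$. In Theorem \ref{th1} the last two inequalities are upper bounds on $R_e$; taking $R_e=0$ (with the standard convention that a negative upper bound on a non-negative quantity collapses to $0$) makes both vacuous, and only the first two inequalities survive. Identifying $R_0 \leftrightarrow R_{\mathrm{c}}$ and $R_1 \leftrightarrow R_{\mathrm{p}}$ then matches the corollary up to replacing the auxiliary variable $V$ by $X$, which I handle at the end via data processing.

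For achievability, I fix $(R_{\mathrm{c}},R_{\mathrm{p}})$ satisfying the corollary's bounds via some Markov chain $U \to X \to YZ$, and apply Theorem \ref{th1} with the choices $V := X$, $R_0 := R_{\mathrm{c}}$, $R_1 := R_{\mathrm{p}}$, and $R_e := 0$. The first two of the four BCC inequalities reproduce exactly those of the corollary, while the last two are trivially satisfied. The resulting BCC code sequence, viewed without its unused secrecy property, furnishes a valid BCD code sequence achieving the prescribed rates.

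For the converse, I take any achievable BCD rate pair and regard the associated code sequence as a BCC code sequence with unconstrained equivocation. The converse half of Theorem \ref{th1} produces a Markov chain $U \to V \to X \to YZ$ with
\[
R_{\mathrm{c}}+R_{\mathrm{p}} \leq I(V;Y|U)+\min[I(U;Y),I(U;Z)]
\]
and $R_{\mathrm{c}} \leq \min[I(U;Y),I(U;Z)]$. Because $(U,V) \to X \to Y$ is Markov, conditioning on $U$ gives $V \to X \to Y$ given $U$, and the data-processing inequality yields $I(V;Y|U) \leq I(X;Y|U)$. Weakening the sum-rate bound accordingly, the same $U$ together with $V := X$ witnesses the corollary's inequalities.

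The argument is essentially bookkeeping, and there is no genuine obstacle. The only point requiring a little care is this final data-processing step that removes the auxiliary $V$ in favor of $X$, together with the convention that $R_e = 0$ automatically satisfies the equivocation bound of Theorem \ref{th1} so that the projection onto the $(R_0,R_1)$-plane is well-defined for every Markov chain fulfilling the first two inequalities.
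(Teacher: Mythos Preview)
The paper does not derive this corollary from the single-letter description of Theorem~\ref{th1}; it simply remarks that BCD is BCC with the secrecy requirement removed, cites K\"orner--Marton~\cite{korner77} for the region, and then establishes achievability directly via the superposition random coding of Lemma~\ref{lem0}. Your route---projecting Theorem~\ref{th1} onto $R_e=0$---is more self-contained, and your converse direction is fine: the BCC converse hands you a Markov chain satisfying all four inequalities, and data processing lets you replace $V$ by $X$ exactly as you say.

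There is, however, a real gap in your achievability direction. The third inequality of Theorem~\ref{th1}, $R_e\le I(V;Y|U)-I(V;Z|U)$, does \emph{not} become vacuous at $R_e=0$: it becomes the constraint $I(V;Y|U)\ge I(V;Z|U)$ on the auxiliary Markov chain. The ``standard convention that a negative upper bound on a non-negative quantity collapses to $0$'' you invoke is not part of the theorem as stated; a Markov chain with $I(V;Y|U)<I(V;Z|U)$ simply contributes no triples with $R_e\ge 0$ to the region. With your choice $V:=X$ this constraint reads $I(X;Y|U)\ge I(X;Z|U)$, which fails whenever Eve's channel is better than Bob's given $U$ (for instance $Z=X$ and $Y$ a genuinely noisy function of $X$). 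In that situation Theorem~\ref{th1}'s direct part does not apply to the witness you propose, and you have not produced an alternative $U'\to V'\to X'$ that meets all four inequalities while preserving the first two rate constraints. The cleanest repair is to do what the paper does and appeal to the direct BCD achievability (Lemma~\ref{lem0} or \cite{korner77}) rather than route achievability through Theorem~\ref{th1}.
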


Note that the statement of our Corollary \ref{cor:bcd} is the same as \cite[Corollary 5]{csiszar78} 
and different from \cite{korner77}. 
However, as is stated in \cite[Remark 5]{csiszar78}, 
the equivalence between the two statements
can be easily shown by some algebra.

Here, we only consider a sequence of codes that achieves 
the rate pair $(R_{\mathrm{c}},R_{\mathrm{p}})$ satisfying
\begin{eqnarray}
R_{\mathrm{c}}  <  \min[I(U;Y),I(U;Z)],~
R_{\mathrm{p}}  <  I(V;Y|U). \Label{11-9-1}
\end{eqnarray}
For a given Markov chain $U\rightarrow V = X \rightarrow YZ$,
we construct an ensemble of codes by the following random coding
with the single-shot setting,
which is mathematically equivalent to 
the construction by Kaspi and Merhav \cite{kaspi11}.

\begin{ensemble}[Kaspi and Merhav \protect{\cite[Section II]{kaspi11}}]\footnote{%
A code ensemble and a code construction play a distinguished role in this paper
because they give a procedure to make our codes.
Hence, we give them serial numbers that are separate from other environments, Theorems, Lemmas, and Remarks.
Although both of a code ensemble and a code construction 
give a procedure for our code,
the procedure by a code ensemble is less practical,
and that by a code construction is more practical.
To clarify this difference, we assigned 
one of two environments to them dependently of their properties.
Code constructions will be given in Section \ref{s8} after 
code ensembles are presented in the previous sections.}\Label{con0}
For an arbitrary element $s_{\mathrm{c}} \in {\cal S}_{\mathrm{c}}$,
$\Phi_{\mathrm{c}}(s_{\mathrm{c}})$ is the random variable taking values in ${\cal U}$ and is subject to the distribution $P_U$,
and is independent of $\Phi_{\mathrm{c}}(s_{\mathrm{c}}')$ with  $s_{\mathrm{c}}' \neq s_{\mathrm{c}} \in {\cal S}_{\mathrm{c}}$.
For an arbitrary element $s_{\mathrm{p}} \in {\cal S}_{\mathrm{p}}$,
$\Phi_{\mathrm{p}}(s_{\mathrm{c}},s_{\mathrm{p}})$ is the random variable taking values in ${\cal V}$,
is independent of $\Phi_{\mathrm{p}}(s_{\mathrm{c}}',s_{\mathrm{p}}')$ with  $s_{\mathrm{c}}' \neq s_{\mathrm{c}}$,
and depends on the random variable $\Phi_{\mathrm{c}}(s_{\mathrm{c}})$.
Under the condition $\Phi_{\mathrm{c}}(s_{\mathrm{c}})=u$,
the random variable 
$\Phi_{\mathrm{p}}(s_{\mathrm{c}},s_{\mathrm{p}})$
is subject to the distribution $P_{V|U=u}$
and is conditionally independent of
$\Phi_{\mathrm{p}}(s_{\mathrm{c}},s_{\mathrm{p}'})$
with  $s_{\mathrm{p}}' \neq s_{\mathrm{p}}$.
Bob's decoder $\Phi_{b}$ and Eve's decoder $\Phi_{e}$ are defined as
the maximum likelihood decoders.
The quartet $(\Phi_{\mathrm{p}},\Phi_{\mathrm{c}},\Phi_{b},\Phi_{e})$
is abbreviated as $\Phi$.

Here, 
the all values of 
the random variables
$\{\Phi_{\mathrm{c}}(s_{\mathrm{c}})\}_{s_{\mathrm{c}}}$
and
$\{\Phi_{\mathrm{p}}(s_{\mathrm{c}},s_{\mathrm{p}})\}_{s_{\mathrm{c}},s_{\mathrm{p}}}$
are disclosed to all players
prior to the real communication
because these random variables decides our code.
\end{ensemble}

\begin{lemma}\cite[Theorem 1 and Section IV]{kaspi11}\Label{lem0}
The above ensemble of codes $\Phi$ satisfies the following inequalities.
\begin{align}
\rE_{\Phi} P_b[P_{Y|V},\Phi]
\le &
|{\cal S}_{\mathrm{p}}|^{\rho} e^{E_0(-\rho| P_{Y|V}, P_{V|U},P_U)}
\nonumber \\
&+
(|{\cal S}_{\mathrm{c}}||{\cal S}_{\mathrm{p}}|)^{\rho} e^{E_0(-\rho| P_{Y|U,V},P_{U,V})} \Label{ineq-31}\\
\rE_{\Phi} P_e[P_{Z|V},\Phi]
\le &
|{\cal S}_{\mathrm{c}}|^{\rho} e^{E_0(-\rho| P_{Z|U},P_U)},\Label{ineq-32}
\end{align}
where $E_0(-\rho| P_{Z|U},P_U)$ and $E_0(-\rho| P_{Y|V}, P_{V|U},P_U)$
are defined in \eqref{phid} and \eqref{phid-2}.
\end{lemma}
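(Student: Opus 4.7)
The plan is to recognize both bounds as instances of Gallager's classical random coding bound once the function $\phi$ is rewritten in its $E_0$ form. Substituting $-\rho$ for $\rho$ in (\ref{phid}) gives
\[
\phi(-\rho, P_{Z|L}, P_L) = \log \sum_z \Big(\sum_\ell P_L(\ell) P_{Z|L}(z|\ell)^{1/(1+\rho)}\Big)^{1+\rho},
\]
which is precisely Gallager's $E_0(\rho, P_L)$ for the single-letter channel $P_{Z|L}$. An analogous identification holds for $\phi(-\rho, P_{Y|V}, P_{V|U}, P_U)$ and, by viewing $(U,V)$ jointly as the channel input, for $\phi(-\rho, P_{Y|U,V}, P_{U,V})$ as well.

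For (\ref{ineq-32}), the argument is essentially immediate. Eve's ML decoder operates on $|{\cal S}_{\mathrm{c}}|$ codewords $\Phi_{c}(s_{\mathrm{c}})$ drawn i.i.d.\ from $P_U$ through the effective channel $P_{Z|U}(z|u) = \sum_v P_{V|U}(v|u) P_{Z|V}(z|v)$ (since $Y$ and $Z$ depend on $(U,V)$ only through $V$). Gallager's bound directly yields $\rE_\Phi P_e \leq (|{\cal S}_{\mathrm{c}}|-1)^\rho e^{\phi(-\rho, P_{Z|U}, P_U)}$, which is dominated by the right-hand side of (\ref{ineq-32}).

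For (\ref{ineq-31}), I would split Bob's ML-decoder error event by the union bound into two sub-events: (a) the decoder picks the correct common index $s_{\mathrm{c}}$ but the wrong private index, and (b) the decoder picks a wrong common index (with any private index). In case (a), conditioning on $\Phi_{c}(s_{\mathrm{c}})=u$, the $|{\cal S}_{\mathrm{p}}|-1$ competing private codewords $\Phi_{p}(s_{\mathrm{c}}, s_{\mathrm{p}}')$ are i.i.d.\ from $P_{V|U=u}$, and the effective channel is $P_{Y|V}$; applying Gallager's bound conditionally and then averaging over $U \sim P_U$ produces the first summand $|{\cal S}_{\mathrm{p}}|^\rho e^{\phi(-\rho, P_{Y|V}, P_{V|U}, P_U)}$. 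In case (b), the $(|{\cal S}_{\mathrm{c}}|-1)|{\cal S}_{\mathrm{p}}|$ competing pairs $(s_{\mathrm{c}}',s_{\mathrm{p}}')$ each yield a freshly drawn $(U',V') \sim P_{U,V}$ independent of the transmitted pair; treating $(U,V)$ jointly as the channel input for $P_{Y|U,V}$ and applying Gallager's bound with at most $|{\cal S}_{\mathrm{c}}||{\cal S}_{\mathrm{p}}|$ codewords produces the second summand.

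The main technical obstacle is step (a), because the competing private codewords are conditionally rather than marginally distributed. One must apply Gallager's bound after conditioning on the realized $U$ and only then take the outer expectation over $P_U$; this is exactly what produces the functional form $\phi(-\rho, P_{Y|V}, P_{V|U}, P_U)$ rather than $\phi(-\rho, P_{Y|V}, P_V)$. The remaining steps — the union bound, the slack $|{\cal S}_{\mathrm{p}}|-1 \leq |{\cal S}_{\mathrm{p}}|$ and $(|{\cal S}_{\mathrm{c}}|-1)|{\cal S}_{\mathrm{p}}| \leq |{\cal S}_{\mathrm{c}}||{\cal S}_{\mathrm{p}}|$, and matching the final expressions to the $\phi$ notation — are routine.
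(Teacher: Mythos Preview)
Your proposal is correct and outlines exactly the standard Gallager-style argument for superposition random coding: Gallager's bound directly for Eve's decoder, and for Bob the union-bound split into ``wrong private, correct common'' (handled conditionally on $U$ and then averaged) versus ``wrong common'' (handled with $(U,V)$ jointly as the input). Note that the paper itself does not prove this lemma but simply cites it from \cite[Section IV.B]{kaspi11}, where precisely this two-term decomposition is carried out; your sketch matches that approach.
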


Here, we should remark that Inequalities \eqref{ineq-31} and \eqref{ineq-32}
hold for any distribution over the messages
because 
the proof by \cite{kaspi11} does not make any assumption for the distribution over the messages.

Due to Lemma \ref{lem0},
Markov inequality guarantees that
\begin{align*}
&\rm{Pr} \Omega_1 < \frac{1}{2}, \quad
\rm{Pr} \Omega_2 < \frac{1}{2}\\
&\Omega_1 :=\Biggl\{ \!\!\!
\begin{array}{ll}
P_b[P_{Y|V},\Phi,P_{\mix,\mathcal{S}_{\mathrm{p}},\mathcal{S}_{\mathrm{c}}}]
> \!\! & \!\!
2|{\cal S}_{\mathrm{p}}|^{\rho} e^{E_0(-\rho| P_{Y|V}, P_{V|U},P_U)} \\
&\!\!+
2(|{\cal S}_{\mathrm{c}}||{\cal S}_{\mathrm{p}}|)^{\rho} e^{E_0(-\rho| P_{Y|U,V},P_{U,V})} 
\end{array}\!\!\!
\Biggr\} \\
&\Omega_2 :=
\{
P_e[P_{Z|V},\Phi,P_{\mix,\mathcal{S}_{\mathrm{p}},\mathcal{S}_{\mathrm{c}}}]
> 
2|{\cal S}_{\mathrm{c}}|^{\rho} e^{E_0(-\rho| P_{Z|U},P_U)} \}.
\end{align*}
Since 
$\rm{Pr} (\Omega_1 \cup \Omega_2)<1$,
we have
$\rm{Pr} (\Omega_1^c \cap \Omega_2^c)>0$.
That is,
for an arbitrary distribution $P_{\mathcal{S}_{\mathrm{p}},\mathcal{S}_{\mathrm{c}}}$ over the messages,
there exists a code $\varphi$ such that
\begin{align}
 P_b[P_{Y|V},\varphi,
P_{\mathcal{S}_{\mathrm{p}},\mathcal{S}_{\mathrm{c}}}]
\le &
2|{\cal S}_{\mathrm{p}}|^{\rho} e^{E_0(-\rho| P_{Y|V}, P_{V|U},P_U)} \nonumber \\
&+
2(|{\cal S}_{\mathrm{c}}||{\cal S}_{\mathrm{p}}|)^{\rho} e^{E_0(-\rho| P_{Y|U,V},P_{U,V})} \Label{12-28-3}
\\
P_e[P_{Z|V},\varphi,P_{\mathcal{S}_{\mathrm{p}},\mathcal{S}_{\mathrm{c}}}]
\le &
2|{\cal S}_{\mathrm{c}}|^{\rho} e^{E_0(-\rho| P_{Z|U},P_U)}.
\Label{12-28-4}
\end{align}
Now, we apply 
the above inequalities to the $n$-fold discrete memoryless extension.
Then, 
for an arbitrary distribution $P_{\mathcal{S}_{p,n},\mathcal{S}_{c,n}}$ 
over the messages,
there exists a sequence of codes $\varphi_n$
with the rate of common message $R_{\mathrm{c}}$ and the rate of private message $R_{\mathrm{p}}$ 
of length $n$
such that
\begin{align}
 P_b[P_{Y|V}^n,\varphi_n,
P_{\mathcal{S}_{p,n},\mathcal{S}_{c,n}}
]
\le &
2 e^{n (\rho R_{\mathrm{p}}+ E_0(-\rho| P_{Y|V}, P_{V|U},P_U))}\nonumber \\
&+
2 e^{n (\rho (R_{\mathrm{p}}+R_{\mathrm{c}}) + E_0(-\rho| P_{Y|U,V},P_{U,V}))}
\Label{ineq-6-b--}
\\
 P_e[P_{Z|V}^n,\varphi_n,P_{\mathcal{S}_{p,n},\mathcal{S}_{c,n}}]
\le &
2 e^{n (\rho R_{\mathrm{c}}+ E_0(-\rho| P_{Z|U},P_U))}.
\Label{ineq-6-c--}
\end{align}
The above values go to zero 
under the condition (\ref{11-9-1}),
because the condition (\ref{11-9-1}) guarantees that both exponents are positive
with sufficiently small $\rho>0$.

Indeed, Kaspi and Merhav \cite{kaspi11}
derived a better bound than (\ref{ineq-32}) by employing four parameters
even in the single-shot setting.
The bound (\ref{ineq-32}) can be seen as a special case of 
Kaspi and Merhav \cite{kaspi11}'s bound.
Since the bound (\ref{ineq-32}) can derive the capacity region of SMC,
we only use the bound (\ref{ineq-32}) for simplicity.

\subsection{Universal Code for BCD}\Label{s4-3}
K\"orner and Sgarro \cite{korner80} provided the code 
that attains the above rate region universally for source and channel in the following sense.

\begin{theorem}\cite{korner80}\Label{lem-11-25-3-b}
For an arbitrary real number $\epsilon>0$, there exists an integer $N$ satisfying the following.
For an arbitrary integer $n \ge N$, a given joint type $Q_{VU}$ 
of length $n$ on the sets $\mathcal{V}\times \mathcal{U}$, 
and rates $R_{\mathrm{p}}$ and $R_{\mathrm{c}}$,
there exists a code $\varphi_n$ with the rates $R_{\mathrm{p}}$ and $R_{\mathrm{c}}$
such that
\begin{align}
 &P_b[W^n,\varphi_n,{S}_{p,n}=s_{p,n},{S}_{c,n}=s_{c,n}] \nonumber \\
 \leq & \exp(-n[ \tilde{E}^{b}(R_{\mathrm{p}}, R_{\mathrm{c}}, W^Y \times Q_{U,V})-\epsilon ]),
\Label{Haya-51-d}\\
 &P_e[W^n,\varphi_n,{S}_{p,n}=s_{p,n},{S}_{c,n}=s_{c,n}] \nonumber \\
 \leq & \exp(-n[ \tilde{E}^{e}(R_{\mathrm{c}},W^Z  \times Q_{U,V})-\epsilon ])
\Label{Haya-52-d}
\end{align}
for 
any $s_{p,n}\in \mathcal{S}_{p,n}$, $s_{c,n}\in \mathcal{S}_{c,n}$
and any $W \in  \mathcal{W}(\mathcal{V}$, $\mathcal{Y}\times$ 
$\mathcal{Z})$,
where the exponents
$\tilde{E}^{b}(R_{\mathrm{p}}, R_{\mathrm{c}}, W^Y \times Q_{U,V})$
and
$\tilde{E}^{e}(R_{\mathrm{c}},W^Z  \times Q_{U,V})$ 
are defined in \eqref{1-31-1} and \eqref{1-31-2}, respectively.
\end{theorem}

\section{General Channel Resolvability}\Label{s4}
In the wire-tap channel model,
when the dummy message obeys the uniform distribution,
channel resolvability \cite{han93} can be used for guaranteeing the security \cite{hay-wire}.
In this paper, we consider the security of SMC with non-uniform and dependent secret messages.
For the analysis of this case,
we have to consider 
the secrecy when 
the dummy message does not necessarily obey the uniform distribution.
Hence,
the security evaluation \cite{hay-wire} based on the original channel resolvability 
cannot be extended to 
the security of SMC with non-uniform and dependent secret messages.
Thus, we need a generalization of 
channel resolvability.
In this section, we propose a generalization of channel resolvability
in the single-shot setting.

First, we fix a channel $W$ from the alphabet ${\cal X}$ 
to the alphabet ${\cal Y}$.
For a fixed distribution $P_X$ on ${\cal X}$,
we focus on an encoder $\Lambda$ from the message set 
${\cal A}$ to the alphabet ${\cal X}$.
The purpose of the encoder $\Lambda$
is approximation of the average output distribution $W \circ P_X$
by the output distribution with input $\Lambda(A)$.
The original channel resolvability \cite{han93} treats
the minimum asymptotic rate of 
$|{\cal A}|$
such that the output distribution $W \circ \Lambda \circ P_{\mix,\mathcal{A}}$
can approximate the average output distribution $W \circ P_X$
with a suitable choice of $\Lambda$
in the sense that the variational distance goes to zero.
In the single-shot setting,
the problem can be converted to the following way:
How well the given average output distribution 
$W \circ P_X$ can be approximated 
by the output distribution $W \circ \Lambda \circ P_{\mix,\mathcal{A}}$
when the cardinality $|\mathcal{A}|$ is less than a given amount.
In this paper, we consider this approximation problem
when the message $A$ does not obey the uniform distribution $P_{\mix,\mathcal{A}}$.
Since our problem can be regarded as a generalization of channel resolvability, 
it is called general channel resolvability,
which is essential for the secure multiplex coding with common messages with dependent and non-uniform secret messages.

Now, we apply the random coding on the alphabet $A$ with the probability distribution $P_A$.
For an arbitrary $a \in {\cal A}$, $\Lambda(a)$ is the random variable subject to the distribution $P_X$ on ${\cal X}$.
For $a \neq a' \in {\cal A}$, 
$\Lambda(a)$ is independent of $\Lambda(a')$.
Then, 
the random encoder $\Lambda:=\{\Lambda(a)\}_{a \in {\cal A}}$ 
gives the map from ${\cal A}$ to ${\cal X}$ as $a\mapsto \Lambda(a)$.

Then, we have the following theorem:
\begin{theorem}[General channel resolvability]\Label{lem-01}
For $\rho \in (0,1]$, we have
\begin{align*}
& \rE_{\Lambda} e^{\rho D( W \circ \Lambda \circ P_A \| W \circ P_X )}
\le
\rE_{\Lambda} e^{\psi(\rho| W \circ \Lambda \circ P_A \| W\circ P_X )} \\
\le &
1+ e^{-\rho H_{1+\rho}(A)}e^{\psi(\rho|W,P_X)}.
\end{align*}
\end{theorem}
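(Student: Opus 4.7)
My plan is to handle the two inequalities separately. The first is an immediate pointwise Jensen estimate, while the second requires an isolation-plus-independence trick that singles out one codeword at a time.

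For the first inequality, I would fix the realization of $\Lambda$ and abbreviate $Q := W_{P^A\circ \Lambda^{-1}}$. Rewriting
\begin{equation*}
\rho D(Q\|W_p) = \sum_y Q(y) \log\bigl(Q(y)^{\rho} W_p(y)^{-\rho}\bigr),
\end{equation*}
Jensen's inequality applied to the concave function $\log$ under the probability measure $Q$ gives
\begin{equation*}
\rho D(Q\|W_p) \le \log \sum_y Q(y)^{1+\rho} W_p(y)^{-\rho} = \psi(\rho|Q\|W_p).
\end{equation*}
Monotonicity of the exponential and averaging over $\Lambda$ then yield the first inequality.

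For the second inequality I would expand $e^{\psi(\rho|Q\|W_p)} = \sum_y W_p(y)^{-\rho} Q(y)^{1+\rho}$ and bound $\rE_{\Lambda}[Q(y)^{1+\rho}]$ for each $y$. The key move is to peel off one factor from $Q^{1+\rho}=Q\cdot Q^{\rho}$ in order to single out an index $a$:
\begin{equation*}
Q(y)^{1+\rho} = \sum_a P^A(a)\, W(y|\Lambda(a)) \cdot Q(y)^{\rho},
\end{equation*}
and, for the same index $a$, write $Q(y) = P^A(a) W(y|\Lambda(a)) + R_a(y)$ where $R_a(y) := \sum_{b\neq a} P^A(b) W(y|\Lambda(b))$ is independent of $\Lambda(a)$. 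Subadditivity $(u+v)^{\rho}\le u^{\rho}+v^{\rho}$, which holds because $\rho\in(0,1]$, then gives $Q(y)^{\rho}\le P^A(a)^{\rho} W(y|\Lambda(a))^{\rho} + R_a(y)^{\rho}$. Taking expectation, the diagonal term produces $P^A(a)^{\rho}\sum_x p(x) W(y|x)^{1+\rho}$, while the cross term factors by independence, and concavity of $t\mapsto t^{\rho}$ gives $\rE[R_a(y)^{\rho}]\le (\rE R_a(y))^{\rho}=((1-P^A(a))W_p(y))^{\rho}\le W_p(y)^{\rho}$. Weighting by $P^A(a)$ and summing over $a$ collapses the two contributions to
\begin{equation*}
\rE_{\Lambda}\bigl[Q(y)^{1+\rho}\bigr] \le \Bigl(\sum_a P^A(a)^{1+\rho}\Bigr)\sum_x p(x) W(y|x)^{1+\rho} + W_p(y)^{1+\rho}.
\end{equation*}
Multiplying by $W_p(y)^{-\rho}$, summing over $y$, and using $\sum_y W_p(y)=1$ together with the definitions of $\psi(\rho|W,p)$ and $H_{1+\rho}(A)$ produces the desired bound.

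The delicate point I expect will be the symmetrization, not the Jensen or concavity steps. A naive application of subadditivity directly to $(\sum_a P^A(a) W(y|\Lambda(a)))^{\rho}$ produces a factor $\sum_a P^A(a)^{\rho}$, which grows like $|\mathcal{A}|^{1-\rho}$ for nearly uniform $P^A$ and completely ruins the resolvability gain. Pulling out a single linear factor of $Q$ \emph{first}, and only then applying subadditivity on the residual $Q^{\rho}$, is the sleight of hand that yields the correct weight $\sum_a P^A(a)^{1+\rho}=e^{-\rho H_{1+\rho}(A)}$ and separates it cleanly from the constant ``$1$'' coming from the mean $\rE Q = W_p$.
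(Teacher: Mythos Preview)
Your proof is correct and follows essentially the same route as the paper: both arguments peel off one linear factor of $Q$, isolate a single index $a$, exploit the independence of $\Lambda(a)$ from the remaining codewords, and combine concavity of $t\mapsto t^{\rho}$ with the subadditivity $(u+v)^{\rho}\le u^{\rho}+v^{\rho}$ to produce the ``$1+\dots$'' bound. The only cosmetic difference is the order in which Jensen and subadditivity are applied (the paper averages the off-diagonal part first and then splits, you split first and then average), but the two orderings lead to identical estimates.
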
   
By applying Jensen inequality to the function $x \mapsto e^x$,
Theorem \ref{lem-01} yields 
\begin{align*}
& \rE_{\Lambda} D( W \circ \Lambda \circ P_A \| W\circ P_X )
\le 
\frac{1}{\rho}\log 
\rE_{\Lambda} e^{\rho D( W \circ \Lambda \circ P_A \| W\circ P_X)}\nonumber \\
\le &
\frac{1}{\rho}\log (1+ e^{-\rho H_{1+\rho}(A)}e^{\psi(\rho|W,P_X)}),
\end{align*}
which is non-uniform generalization of \cite[Lemma 2]{hay-wire}.
This theorem will be used for the proof of Theorem \ref{lem1}.

\begin{IEEEproof}
Due to (\ref{12-18-1}), we have
\begin{align*}
\rho D( W \circ \Lambda \circ P_A \| W\circ P_X)
\le
\psi(\rho| W \circ \Lambda \circ P_A \| W\circ P_X).
\end{align*}
The average of $e^{\psi(\rho| W \circ \Lambda \circ P_A \| W\circ P_X)}$ is evaluated as
\begin{align}
& \rE_{\Lambda} 
e^{\psi(\rho| W \circ \Lambda \circ P_A \| W\circ P_X)}\nonumber \\
=&
\rE_{\Lambda} 
\sum_{y}
\Big(\sum_{a} P_A(a) W_{\Lambda(a)}(y)\Big)^{1+\rho} (W\circ P_X)(y)^{-\rho} \nonumber \\
=&
\rE_{\Lambda} 
\sum_{y}
\Big(\sum_{a} P_A(a) W_{\Lambda(a)}(y)\Big)
\Big(\sum_{a'} P_A(a') W_{\Lambda(a')}(y)\Big)^\rho
(W\circ P_X)(y)^{-\rho} \nonumber \\
=&
\sum_{y}
\sum_{a} 
\Bigl( \rE_{\Lambda(a)} 
P_A(a) W_{\Lambda(a)}(y) 
\rE_{\Lambda|\Lambda(a)} 
\Big(P_A(a) W_{\Lambda(a)}(y)\nonumber \\
& \quad +
\sum_{a'\neq a } P_A(a') W_{\Lambda(a')}(y)\Big)^\rho
(W\circ P_X)(y)^{-\rho} \Bigr) \nonumber \\
\le &
\sum_{y}
\sum_{a} 
\Bigl( \rE_{\Lambda(a)} 
P_A(a) W_{\Lambda(a)}(y)
\Big(P_A(a) W_{\Lambda(a)}(y) \nonumber \\
&\quad +
\rE_{\Lambda|\Lambda(a)} 
\sum_{a'\neq a } P_A(a') W_{\Lambda(a')}(y)\Big)^\rho
(W\circ P_X)(y)^{-\rho} \Bigr) \Label{1}\\
= &
\sum_{y}
\sum_{a} 
\Bigl( \rE_{\Lambda(a)} 
P_A(a) W_{\Lambda(a)}(y)
\Big(P_A(a) W_{\Lambda(a)}(y) \nonumber \\
&\quad +
\sum_{a'\neq a } P_A(a') (W\circ P_X)(y)\Big)^\rho
(W\circ P_X)(y)^{-\rho} \Bigr) \nonumber \\
\le &
\sum_{y}
\sum_{a} 
\Bigl( \rE_{\Lambda(a)} 
P_A(a) W_{\Lambda(a)}(y)
\bigl(P_A(a) W_{\Lambda(a)}(y)+(W\circ P_X)(y)\bigr)^\rho \nonumber \\
&\quad \cdot (W\circ P_X)(y)^{-\rho} \Bigr) 
\Label{2} \\
\le &
\sum_{y}
\sum_{a} 
\rE_{\Lambda(a)} 
P_A(a) W_{\Lambda(a)}(y)
\nonumber \\
& \quad
(P_A(a)^\rho W_{\Lambda(a)}(y)^\rho 
+(W\circ P_X)(y)^\rho)
(W\circ P_X)(y)^{-\rho} \Label{3-b} \\
= &
\sum_{y}
\sum_{a} 
\rE_{\Lambda(a)} 
P_A(a) W_{\Lambda(a)}(y)
(1+ P_A(a)^\rho W_{\Lambda(a)}(y)^\rho (W\circ P_X)(y)^{-\rho} )\nonumber \\
= &
1+ \sum_{y}
\sum_{a} 
\rE_{\Lambda(a)} 
P_A(a)^{1+\rho} W_{\Lambda(a)}(y)^{1+\rho} (W\circ P_X)(y)^{-\rho} \nonumber \\
= &
1+ 
\sum_{a} 
P_A(a)^{1+\rho} 
\sum_{y}
\sum_{x}
P_X(x) W_{x}(y)^{1+\rho} (W\circ P_X)(y)^{-\rho}  \nonumber \\
= &
1+ 
(\sum_{a} P_A(a)^{1+\rho} )e^{\psi(\rho|W,P_X)}.\nonumber
\end{align}
In the above derivation, 
(\ref{1}) follows from the concavity of $x \mapsto x^\rho$,
(\ref{2}) follows from $\sum_{a'\neq a } P_A(a') \le 1$,
(\ref{3-b}) follows from the inequality $(x+y)^\rho \le x^\rho+y^\rho$.
\end{IEEEproof}

Next, in order to reduce the complexity of encoding,
we consider the case when ${\cal X}$ and ${\cal A}$ are
Abelian groups.
We introduce the following condition for the ensemble for injective homomorphisms
$F$ from ${\cal A}$ to ${\cal X}$.
\begin{condition}\Label{C2-b}
Let $F$ be a random variable that takes its values on
injective\footnote{The condition of injectivity is not 
necessarily for Theorem \ref{Lee3}.
However, the injectivity for $F$ will needed in the discussion in Subsection \ref{s8-2}.
Hence, to avoid to make so many conditions, we assume the injectivity, here.} homomorphisms from ${\cal A}$ to ${\cal X}$.
For arbitrary elements 
$x \neq 0 \in {\cal X}$ and $a \neq 0 \in {\cal A}$,
the relation $F(a)=x$ holds with probability at most 
$\frac{1}{|{\cal X}|-1}$.
\end{condition}

When ${\cal X}$ and ${\cal A}$ are vector spaces over a finite field $\bF_q$,
the set of all injective homomorphisms from ${\cal A}$ to ${\cal X}$ satisfies Condition \ref{C2-b}.

\begin{remark}\Label{3-25R}
When ${\cal X}$ and ${\cal A}$ have the same Abelian group structure as
the vector space over a finite field $\bF_2$ with the the same dimension $k$,
these can be regarded as the finite filed $\bF_{2^k}$.
For $y \in \bF_{2^k}$,
the homomorphism $f_y$ from ${\cal A}$ to ${\cal X}$
from ${\cal A}$ to ${\cal X}$
is defined by the multiplication as $f_y:x\to xy $.
Then, as mentioned in \cite[Remark 9]{MH11}, 
when the random variable $Y$ chosen in $\bF_{2^k}$ subject to the uniform distribution,
the function-valued random variable $f_Y$ satisfies Condition \ref{C2-b}.
To realize the function-valued random variable $f_Y$,
we need to choose a finite filed $\bF_{2^k}$ with efficient multiplication.
Constructions of such a finite filed $\bF_{2^k}$ are given in 
\cite[Appendix D]{HT13},
\cite[Section 7.3.1]{Assche}.
\end{remark}

We choose another random variable $G$ in ${\cal X}$ 
that obeys the uniform distribution on ${\cal X}$ 
and is independent of the choice of $F$.
Then, we define a map $\Lambda_{F,G}(a):=F(a)+G$
and have the following theorem:

\begin{theorem}[Algebraic channel resolvability]\Label{Lee3}
Under the above choice,
we obtain
\begin{align}
& 
\rE_{F,G} e^{\rho {D}( W \circ \Lambda_{F,G} \circ P_A  \| W \circ P_{\mix,\mathcal{X}})} 
\le 
\rE_{F,G} e^{{\psi}(\rho| W \circ \Lambda_{F,G} \circ P_A  \| W \circ P_{\mix,\mathcal{X}})} \nonumber \\
\le &
1+ e^{-\rho H_{1+\rho}(A)}e^{{\psi}(\rho|W, P_{\mix,\mathcal{X}})}.
\Label{9-19-2-b}
\end{align}
\end{theorem}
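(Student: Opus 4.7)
The plan is to adapt the proof of Theorem \ref{lem-01} nearly verbatim, with the independent random codebook $\Lambda(a)\sim p$ replaced by the affine map $\Lambda_{F,G}(a)=F(a)+G$. The first inequality, $\rho D\le \psi$, follows from the convexity of $s\mapsto \psi(s|\cdot\|\cdot)$ together with $\psi(0|\cdot\|\cdot)=0$, exactly as in the proof of Theorem \ref{lem-01}, so the work lies in bounding $\rE_{F,G} e^{\psi(\rho| W_{P^A\circ \Lambda_{F,G}^{-1}} \| W_{P_{\mix}})}$ by the affine analogue of that argument.

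The two structural facts I would use are: (i) since $G$ is uniform on $\mathcal{X}$ and independent of $F$, the random variable $\Lambda_{F,G}(a)=F(a)+G$ is marginally uniform, so $\rE[W_{\Lambda_{F,G}(a)}(y)]=W_{P_{\mix}}(y)$ and $\rE[W_{\Lambda_{F,G}(a)}(y)^{1+\rho}]=\frac{1}{|\mathcal{X}|}\sum_{x}W_{x}(y)^{1+\rho}$; and (ii) for $a'\ne a$, using that $F$ is a homomorphism we can write $F(a')+G=F(a)+G+F(a'-a)$ with $a'-a\ne 0$, so Condition \ref{C2-b} gives $\Pr[F(a'-a)=x\mid F(a)]\le 1/(|\mathcal{X}|-1)$ for every $x\ne 0$, whence
\[
\rE_{F\mid F(a)} W_{F(a')+G}(y) \;\le\; \frac{|\mathcal{X}|}{|\mathcal{X}|-1}\,W_{P_{\mix}}(y).
\]
Thus conditionally on $F(a)$, each $F(a')+G$ behaves, up to the harmless constant $|\mathcal{X}|/(|\mathcal{X}|-1)$, exactly like an independent draw from $P_{\mix}$, which is precisely the input used in Theorem \ref{lem-01}.

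With these in hand I would run the computation of the earlier proof: expand the moment as $\sum_y(\sum_a P^A(a)W_{F(a)+G}(y))^{1+\rho}W_{P_{\mix}}^{-\rho}(y)$, peel off one factor of $\sum_a P^A(a)W_{F(a)+G}(y)$, split $\sum_{a'}=\{a'=a\}+\sum_{a'\ne a}$, apply Jensen's inequality to $\rE_{F\mid F(a),G}(\cdot)^{\rho}$ as in line (\ref{1}) of the previous proof, substitute the bound (ii) into the $a'\ne a$ contribution, bound $(\cdot+\cdot)^{\rho}\le(\cdot)^{\rho}+(\cdot)^{\rho}$ as in line (\ref{3-b}), and finally compute the two resulting terms using (i). The $a'=a$ term reproduces the dominant factor $e^{-\rho H_{1+\rho}(A|Q)}e^{\psi(\rho|W,P_{\mix})}$ after invoking the definitions of $\psi$ and of conditional Rényi entropy, while the $a'\ne a$ term collapses to a constant that is absorbed into the ``$1$'' on the right-hand side of (\ref{9-19-2-b}).

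The main obstacle will be the bookkeeping in step (ii): unlike the fully independent codewords of Theorem \ref{lem-01}, the homomorphism constraint ties $F(a')$ to $F(a)$ through $F(a'-a)$, and one must verify that Condition \ref{C2-b} really does suffice to imitate independence after conditioning on $F(a)$ and $G$, including a careful check that the two correction factors $|\mathcal{X}|/(|\mathcal{X}|-1)$ produced this way really can be absorbed into the summand ``$1$'' rather than into the exponential term. Handling the conditional Rényi entropy $H_{1+\rho}(A|Q)$ (as opposed to the unconditional $H_{1+\rho}(A)$ of Theorem \ref{lem-01}) should require only conditioning the whole chain of inequalities on the auxiliary variable $Q$ before randomizing over $F$ and $G$, which is legitimate because Condition \ref{C2-b} is a uniform distributional hypothesis on $F$ independent of the source law $P^A$.
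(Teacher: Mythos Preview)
Your plan is essentially the paper's own proof, but one step as you have written it would not go through: you propose to apply Jensen after conditioning on $(F(a),G)$ and then invoke Condition \ref{C2-b} for $\Pr[F(a'-a)=x\mid F(a)]$. Condition \ref{C2-b} is only an \emph{unconditional} bound on $\Pr[F(b)=x]$, and for a general ensemble of injective homomorphisms the conditional law of $F(a'-a)$ given $F(a)$ need not satisfy it. The paper avoids this by conditioning instead on $Z_a:=F(a)+G$: because $G$ is uniform and independent of $F$, the variable $Z_a$ is independent of $F$, so $\rE_{F\mid Z_a}=\rE_F$ and Condition \ref{C2-b} applies verbatim to $F(a'-a)$; the diagonal term $P^A(a)W_{Z_a}(y)$ is already measurable with respect to $Z_a$, so the Jensen step (line (\ref{1})) carries over unchanged. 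The absorption of the $|\mathcal{X}|/(|\mathcal{X}|-1)$ factor that you flagged is then handled by the single inequality
\[
\sum_a P^A(a)\Bigl(\frac{1-P^A(a)}{1-1/|\mathcal{X}|}\Bigr)^{\rho}\;\le\;\Bigl(\frac{1-\sum_a P^A(a)^2}{1-1/|\mathcal{X}|}\Bigr)^{\rho}\;\le\;1,
\]
using concavity of $x\mapsto x^\rho$ together with $\sum_a P^A(a)^2\ge 1/|\mathcal{A}|\ge 1/|\mathcal{X}|$ (the latter because $F$ is injective); this is exactly what lets the off-diagonal contribution collapse to the constant $1$. With these two adjustments your outline reproduces the paper's argument line for line.
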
   
This theorem will be used for the proof of Lemma \ref{lem2-1},
which is essential for the proof of Theorem \ref{lem2}.

\begin{IEEEproof}
We introduce the random variable $Z_a:=\Lambda_{F,G}(a) =F(a)+G$.
The random variable $Z_a$ is independent of the choice of $F$.
For $a'\in {\cal A}$, $\Lambda_{F,G}(a')= F(a'-a)+Z_a$.
Since 
$(|{\cal X}|-1)\rE_{F|Z_a} W_{\Lambda_{F,G}(a)}(y)
=(|{\cal X}|-1)\rE_{F} W_{F(a'-a)+Z_a}(y)
\le \sum_x W_x (y)
=|{\cal X}| W \circ P_{\mix,\mathcal{X}} (y)$ for $a\in \mathcal{A}$ and $y \in {\cal Y}$,
we obtain $\rE_{F|Z_a} W_{\Lambda_{F,G}(a)}(y)
\le \frac{|{\cal X}|}{|{\cal X}|-1} W \circ P_{\mix,\mathcal{X}}(y)$ 
for $a\in {\cal A}$ and $y \in {\cal Y}$.
Further, 
since $F$ is injective, we have $|{\cal A}|\le |{\cal X}| $,
which implies
$\sum_{a} P_A(a)^2 \ge \frac{1}{|{\cal A}|}
\ge \frac{1}{|{\cal X}|}$.
Hence, since $x \mapsto x^\rho$ is concave,
we obtain
\begin{align}
\sum_{a} P_A(a) (\frac{1-P_A(a)}{1-1/|{\cal X}|})^\rho
\le
(\frac{1-\sum_{a} P_A(a)^2 }{1-1/|{\cal X}|})^\rho
\le
(\frac{1-1/|{\cal X}| }{1-1/|{\cal X}|})^\rho
=1 .\Label{ineq-202}
\end{align}
Our proof of Theorem \ref{lem-01} can be applied to our proof of 
Theorem \ref{Lee3}
by replacing $\Lambda(a)$, $\Lambda|\Lambda(a)$, and $P_X$ 
by $Z_a$, $F|Z_a$ and $P_{\mix,\mathcal{X}}$.
Then, we obtain
\begin{align}
& \rE_{F,G} e^{\psi(\rho| W \circ \Lambda_{F,G} \circ P_A \| W \circ P_{\mix,\mathcal{X}})} \nonumber \\
\le &
\sum_{y}
\sum_{a} 
\Bigl(
\rE_{Z_a} 
P_A(a) W_{\Lambda_{F,G}(a)}(y)
\Big(P_A(a) W_{\Lambda_{F,G}(a)}(y)\nonumber \\
& +
\rE_{F|Z_a} 
\sum_{a'\neq a } P_A(a') W_{\Lambda_{F,G}(a')}(y)\Big)^\rho
W \circ P_{\mix,\mathcal{X}}(y) ^{-\rho} \Bigr) \Label{11-20-1eq}\\
\le &
\sum_{y}
\sum_{a} 
\Bigl(
\rE_{Z_a} 
P_A(a) W_{Z_a}(y)
\Bigl(P_A(a) W_{Z_a}(y)\nonumber \\
& +
\frac{|{\cal X}|}{|{\cal X}|-1} \sum_{a'\neq a } P_A(a') 
W \circ P_{\mix,\mathcal{X}}(y) \Bigr)^\rho
W \circ P_{\mix,\mathcal{X}}(y)^{-\rho} \Bigr)\Label{11-22-1b} \\
\le &
\sum_{y}
\sum_{a} 
\Bigl(
\rE_{Z_a} 
P_A(a) W_{Z_a}(y)
\Bigl(P_A(a) W_{Z_a}(y)\nonumber\\
&+
\frac{1-P_A(a)}{1-1/|{\cal X}|}W \circ P_{\mix,\mathcal{X}} (y)
\Bigr)^\rho
W \circ P_{\mix,\mathcal{X}}(y)^{-\rho} \Bigr)\Label{11-22-1}\\
\le &
\sum_{y}
\sum_{a}
\Bigl( 
\rE_{Z_a} 
P_A(a) W_{Z_a}(y)
\Bigl(P_A(a)^\rho W_{Z_a}(y)^\rho\nonumber\\
&+
(\frac{1-P_A(a)}{1-1/|{\cal X}|})^\rho W \circ P_{\mix,\mathcal{X}}(y)^\rho \Bigr)
W \circ P_{\mix,\mathcal{X}}(y)^{-\rho} \Bigr)\Label{3} \\
= &
\sum_{y}
\sum_{a} 
\Bigl( 
\rE_{Z_a} 
P_A(a) W_{Z_a}(y)
((\frac{1-P_A(a)}{1-1/|{\cal X}|})^\rho \nonumber\\
&+ P_A(a)^\rho W_{Z_a}(y)^\rho W \circ P_{\mix,\mathcal{X}}(y)^{-\rho}) \Bigr)\nonumber\\
= &
\sum_{a} P_A(a) (\frac{1-P_A(a)}{1-1/|{\cal X}|})^\rho \nonumber\\
&+ \sum_{y}
\sum_{a} 
\rE_{Z_a} 
P_A(a)^{1+\rho} W_{Z_a}(y)^{1+\rho} W \circ P_{\mix,\mathcal{X}}(y)^{-\rho})\nonumber  \\
= &
\sum_{a} P_A(a) (\frac{1-P_A(a)}{1-1/|{\cal X}|})^\rho \nonumber\\
&+ 
\sum_{a} 
P_A(a)^{1+\rho} 
\sum_{y}
\sum_{x}
P_X(x) W_{x}(y)^{1+\rho} W \circ P_{\mix,\mathcal{X}}(y)^{-\rho}) \nonumber\\
\le &
1
+ 
(\sum_{a} P_A(a)^{1+\rho} )e^{\psi(\rho|W,P_{\mix,\mathcal{X}})}.
\end{align}
In the above derivation, 
\eqref{11-20-1eq} follows in the same way as \eqref{1},
(\ref{11-22-1b}) follows from Condition \ref{C2-b}, 
(\ref{11-22-1}) follows from $\sum_{a'\neq a } P_A(a') \le 1$,
(\ref{3}) follows from the inequality $(x+y)^\rho \le x^\rho+y^\rho$.
The final inequality follows from (\ref{ineq-202}).
\end{IEEEproof}


\begin{figure*}
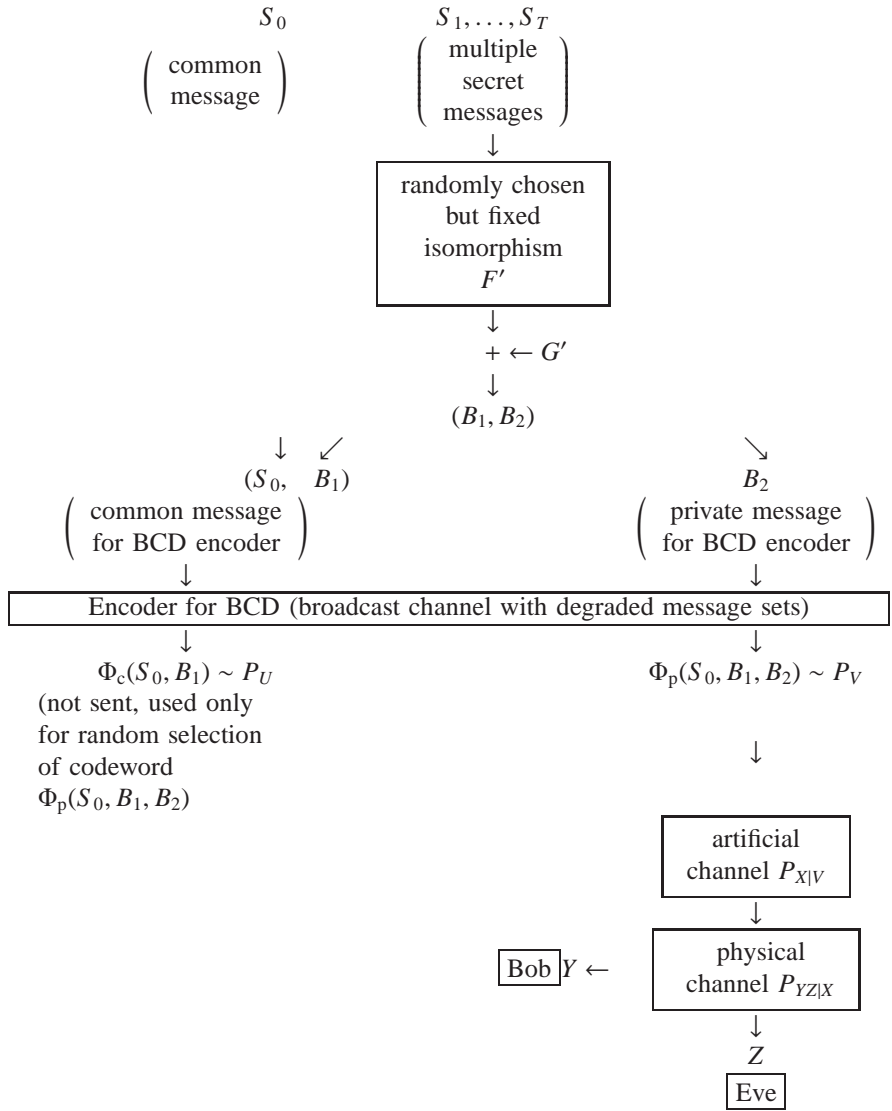

\[
\begin{array}{cccc}
\multicolumn{1}{r}{S_0}&&S_1,\ldots,S_T&\\
\multicolumn{1}{r}{\left(
\begin{tabular}{c}common\\
message
\end{tabular}\right)}&&\left(
\begin{tabular}{c}multiple\\
secret\\
messages
\end{tabular}\right)&
\\
&&\downarrow&\\
&&\fbox{\begin{tabular}{c}randomly chosen\\
but fixed\\
isomorphism\\
$F'$
\end{tabular}}&\\
&&\downarrow&\\
&&\hspace*{2.5eM}+\leftarrow  G'   &\\
&&\downarrow&\\
&&(B_1,B_2)&\\
\multicolumn{1}{r}{\downarrow}&\swarrow&&\searrow\\
\multicolumn{1}{r}{(S_0,}&\multicolumn{1}{l}{B_1)}&&B_2\\
\multicolumn{2}{c}{%
\left(
\begin{tabular}{c}common message\\
for BCD encoder
\end{tabular}\right)}&&\left(
\begin{tabular}{c}private message\\
for BCD encoder
\end{tabular}\right)\\
\multicolumn{2}{c}{\downarrow} &&\downarrow\\\hline
\multicolumn{4}{|c|}{\mbox{Encoder for BCD (broadcast channel with degraded
message sets)}}\\\hline
\multicolumn{2}{c}{\downarrow}&&\downarrow\\
\multicolumn{2}{c}{\Phi_{\mathrm{c}}(S_0,B_1)\sim P_U}&&
\Phi_{\mathrm{p}} (S_0,B_1,B_2)\sim P_V\\
\multicolumn{1}{l}{\hbox{\begin{tabular}{l}(not sent, used only\\
for random selection\\
of codeword \\
$\Phi_{\mathrm{p}}(S_0,B_1,B_2)$\end{tabular}
}}&&&\downarrow\\
&&&\fbox{\begin{tabular}{c}artificial\\
channel $P_{X|V}$\end{tabular}}\\
&&&\downarrow\\
&&\multicolumn{1}{r}{\fbox{Bob}Y\leftarrow}&\fbox{
\begin{tabular}{c}physical\\
channel $P_{YZ|X}$\end{tabular}}\\
&&&\begin{array}{c}\downarrow\\
Z\\
\fbox{Eve}
\end{array}
\end{array}
\]
\caption{Communication structure
used in Sections \ref{s5}--\ref{s9}}\label{fig:encoder}
\end{figure*}

In the following, we assume that
the input alphabet $\cX$ is an Abelian group,
and 
an action of $\cX$ on the output alphabet $\cY$
is given as $x\cdot y $ for $x \in \cX$ and $y \in \cY$.
A channel $W$ from $\cX$ to $\cY$ is regular
in the sense of Delsarte-Piret \cite{delsarte82},
if there is a probability distribution $P_Y$ such that
\begin{align*}
W_x(y)=P_Y(x\cdot y).
\end{align*}
Since a regular channel $W$ satisfies 
\begin{align*}
D( W \circ \Lambda_{F,g} \circ P_A \| W \circ P_{\mix,\mathcal{X}})
=
D( W \circ \Lambda_{F,g'}\circ P_A \| W \circ P_{\mix,\mathcal{X}})
\end{align*}
for any $g,g' \in \cX$,
we obtain the following corollary.
This corollary implies that
we do not need the additional random variable $G$
in the regular channel case.
\begin{corollary}\Label{cor1}
When the channel $W$ is a regular channel given by a distribution 
$P_Y$
on $\cY$,
we obtain
\begin{align}
&
\rE_{F} e^{\rho {D}( W\circ \Lambda_{F,g} \circ P_A \| W \circ P_{\mix,\mathcal{X}})} \le 
\rE_{F} e^{{\psi}(\rho| W\circ \Lambda_{F,g} \circ P_A \| W \circ P_{\mix,\mathcal{X}})} \nonumber \\
\le &
1+ e^{-\rho H_{1+\rho}(A)}e^{{\psi}(\rho|W,P_{\mix,\mathcal{X}})}
= 
1+ e^{-\rho H_{1+\rho}(A)}e^{{\psi}(\rho|P_Y \| \overline{P}_Y)}
\Label{9-19-2-c}
\end{align}
for any $g \in \cX$,
where $\overline{P}_Y(y):=\sum_{x} P_{\mix,\mathcal{X}}(x) P_Y(x\cdot y) $.
\end{corollary}   

\begin{proof}
Due to Theorem \ref{lem-01},
it is enough to show 
${\psi}(\rho|W,P_{\mix,\mathcal{X}})=
{\psi}(\rho|P_Y \| \overline{P}_Y)$.
Since $\overline{P}_Y(y)=
W \circ P_{\mix,\mathcal{X}}(y)
=
W \circ P_{\mix,\mathcal{X}}(x\cdot y)$,
we have
\begin{align*}
&e^{{\psi}(\rho|W,P_{\mix,\mathcal{X}})}
=
\sum_{x} 
P_{\mix,\mathcal{X}}(x)
\sum_{y} 
P_Y(x\cdot y )^{1+\rho}
\overline{P}_Y(y)^{-\rho} \\
=&
\sum_{x} 
P_{\mix,\mathcal{X}}(x)
\sum_{y} 
P_Y( y )^{1+\rho}
\overline{P}_Y(x^{-1}\cdot y)^{-\rho} \\
=&
\sum_{x} 
P_{\mix,\mathcal{X}}(x)
\sum_{y} 
P_Y( y )^{1+\rho}
\overline{P}_Y( y)^{-\rho} \\
=&
\sum_{y} 
P_Y( y )^{1+\rho}
\overline{P}_Y( y)^{-\rho} 
=
e^{{\psi}(\rho|P_Y \| \overline{P}_Y)}.
\end{align*}
\end{proof}

\section{Secure Multiplex Coding with Common Messages:
Single-Shot Setting}\Label{s5}
In this section,
we give the formulation of 
the secure multiplex coding with common messages.
After the formulation,
we give 
two kinds of random construction of codes for the secure multiplex coding with common messages
and evaluate their performance in the single-shot setting.

\subsection{Formulation and Preparation}\Label{s5-1}
In the secure multiplex coding with common messages, 
Alice sends the common message $S_0$ to Bob and Eve, and $T$ secret messages $S_1,\ldots, S_T$ to Bob.
We do not necessarily assume the uniformity 
nor independence
for the distributions of messages $S_0,S_1,\ldots, S_T$.
Hence, there might exist statistical correlations among messages $S_0,S_1,\ldots, S_T$.
Even in this scenario, 
Alice and Bob can use $S_1,\ldots,S_{i-1}$, $S_{i+1},\ldots,S_{T}$ as
random bits making $S_i$ ambiguous to Eve.
When we focus on $S_{\mathcal{I}}:= (S_i ;i\in \mathcal{I})$ for 
a non-empty proper subset
$\mathcal{I}(\neq \emptyset) \subsetneq \{1,\ldots, T\}$, 
the remaining information $S_{\mathcal{I}^c}$ 
serves as random bits making $S_{\mathcal{I}}$ ambiguous to Eve.
The messages $S_0,S_1,\ldots, S_T$ are assumed to belong to the sets 
$\mathcal{S}_0,\mathcal{S}_1,\ldots,\mathcal{S}_T$.
The set $\mathcal{S}_1 \times \ldots\times \mathcal{S}_T$ of all secret messages 
is denoted by $\mathcal{S}$.
In order to explain the SMC model without $\mathcal{S}_0$, 
we consider the following example.
Consider the case when 
$S_1,\ldots, S_T$ are personal information for $T$ persons.
That is, $S_i$ corresponds to the personal information of the $i$-th person. 
Assume that it is required only to keep 
the secrecy of the respective personal information
$S_1,\ldots, S_T$ from the third party.
The secrecy of the relation among respective personal informations
is not required.
For example,
when $S_1,\ldots,S_{T}$ are the uniform random bits with the same size,
the secrecy of the sum $S_1\oplus\ldots\oplus S_{T}$ is not required,
where $\oplus$ is exclusive OR.
In order to treat this secrecy problem,
we give a formulation of the SMC model as follows.

The purpose of the coding in the SMC model is
to reliably send the messages $S_0,S_1,\ldots, S_T$ to Bob,
and
to make $S_{\mathcal{I}}$ ambiguous to Eve by
using the remaining information $S_{\mathcal{I}^c}$
for several non-empty proper subsets $\mathcal{I} \subsetneq  \{
1, \ldots, T\}$.
Our code is given by
Alice's stochastic encoder $\varphi_{a}$ from $\mathcal{S} \times \mathcal{S}_0$ to $\mathcal{X}$,
Bob's deterministic decoder $\varphi_{b}: \mathcal{Y} \rightarrow \mathcal{S} \times \mathcal{S}_0$ and
Eve's deterministic decoder $\varphi_{e}: \mathcal{Z} \rightarrow \mathcal{S}_0$.
The triple $\varphi= (\varphi_a,\varphi_b,\varphi_e)$ is called a code 
for the secure multiplex coding with common messages.
Then, the performance is evaluated by the following quantities:
(1) The sizes of the sets of the common messages and all of the secret messages, i.e., 
$|\mathcal{S}_0|,|\mathcal{S}_1|,\ldots,|\mathcal{S}_T|$.
(2) Bob's decoding error probability
$P_b[P_{Y|X},\varphi,P_{S_{\cal T}}]$, which is the probability
$\mathrm{Pr}\{(S_0,S_1,\ldots, S_T) \neq \varphi_{b}(Y) \}$
under the distribution 
$(P_{Y|X} \circ \varphi_a) \times P_{S_{\cal T}} $
with ${\cal T}:=\{0,\ldots, T\}$.
(3) Eve's decoding error probability
$P_e[P_{Z|X},\varphi,P_{S_{\cal T}}]$,
which is the probability
$\mathrm{Pr}\{ S_0 \neq \varphi_{e}(Z) \}$
under the distribution 
$(P_{Z|X} \circ \varphi_a )\times P_{S_{\cal T}} $.
(4) Leaked information $I(S_{\mathcal{I}};Z|S_0)[P_{Z|X},\varphi_a,P_{S_{\cal T}}]$
for non-empty proper subset $\mathcal{I} \subsetneq  \{
1, \ldots, T\}$,
which is the mutual information
$I(S_{\mathcal{I}};Z|S_0)$ 
under the distribution 
$(P_{Z|X} \circ \varphi_a )\times P_{S_{\cal T}} $.
Instead of $I(S_{\mathcal{I}};Z|S_0)[P_{Z|X},\varphi_a, P_{S_{\cal T}}]$, 
other researchers sometimes treat 
(5) Eve's uncertainty $H(S_{\mathcal{I}}|Z,S_0)[P_{Z|X},\varphi_a,P_{S_{\cal T}}]$,
which is the conditional entropy
$H(S_{\mathcal{I}}|Z,S_0)$
under the distribution 
$(P_{Z|X} \circ \varphi_a )\times P_{S_{\cal T}} $.
However, when we treat the universality of our code, 
leaked information $I(S_{\mathcal{I}};Z|S_0)[P_{Z|X},\varphi_a,P_{S_{\cal T}}]$ 
is used as criterion for performance of our code.
That is, we adopt leaked information $I(S_{\mathcal{I}};Z|S_0)[P_{Z|X},\varphi_a,P_{S_{\cal T}}]$ rather than Eve's uncertainty $H(S_{\mathcal{I}}|Z,S_0)[P_{Z|X},\varphi_a,P_{S_{\cal T}}]$.

In the above formulation, we treat the leaked information $I(S_{\mathcal{I}};Z|S_0)[P_{Z|X},\varphi_a, P_{S_{\cal T}}]$
for several non-empty proper subsets $\mathcal{I}\subsetneq \{1,\ldots, T\}$.
Depending on the situation, 
we decide which non-empty proper subset $\mathcal{I}$ is considered.
Hence, 
in that case,
we can fix a family ${\bf J}$ of non-empty proper subsets $\mathcal{I}$ of $\{1, \ldots, T\}$
for which we discuss the leaked information $I(S_{\mathcal{I}};Z|S_0)[P_{Z|X},\varphi_a, P_{S_{\cal T}}]$. 
For example, in the case of the above personal information,
we consider the subsets $\{1\},\{2\}, \ldots, \{T\}$.
Hence, we choose ${\bf J}$ as ${\bf J}:= \{ \{1\},\{2\}, \ldots, \{T\} \}$.
When we do not specify the family ${\bf J}$, we treat the leaked information $I(S_{\mathcal{I}};Z|S_0)[P_{Z|X},\varphi_a, P_{S_{\cal T}}]$ 
for all non-empty proper subsets $\mathcal{I} $ of $\{1,\ldots, T\}$.


This model can be regarded as a generalization of the wire-tap model 
in the following way. 
When there is no common messages and $T=2$,
there exist only two messages $\mathcal{S}_1$ and $\mathcal{S}_2$
in the secure multiplex coding.
In the wire-tap channel model,
${S}_1$ corresponds to the message to be secretly sent to Bob,
and ${S}_2$ does to the dummy message making $S_1$ ambiguous to Eve.
As a special case of our code,
a wire-tap code is given by
Alice's stochastic encoder $\varphi_{a}$ from $\mathcal{S}_1 \times \mathcal{S}_2$ to $\mathcal{X}$ and
Bob's deterministic decoder $\varphi_{b}: \mathcal{Y} \rightarrow \mathcal{S}_1$.
Then, the performance is evaluated by the following quantities.
(1) The size of the secret message $|\mathcal{S}_1|$.
(2) Bob's decoding error probability
$P_b[P_{Y|X},\varphi,P_{S_{1,2}}]$.
(4) Leaked information $I(S_{1};Z)[P_{Z|X},\varphi_a,P_{S_{1,2}}]$.

In order to guarantee that the leaked information is small,
we employ the method of generalized channel resolvability given in Section \ref{s4}.
In order to employ this method,
we have to use the random coding method to construct a code $\varphi$.
In this section, we propose two kinds of random construction for our code.
For a simple application of Theorem \ref{lem-01}, 
which is a simple generalization of channel resolvability,
we propose the first construction in Subsection \ref{s5-3}.
When there is no common message, 
this construction achieves the capacity region, as is mentioned in Remark \ref{R-11-29}.
However, it cannot fully achieve the capacity region that will be defined in Section \ref{s6-2}
when there exists a common message $S_0$.

To resolve this defect, in Subsection \ref{s5-2},
we propose the second construction,
which attains the capacity region.
This construction has two steps.
In the first step, 
similar to the BCD encoder,
we use the superposition random coding.
In the second step, as illustrated in Fig. \ref{fig:encoder},
we split the confidential message into the private message
$B_2$ and a part $B_1$ of the common message encoded by the BCD encoder.
The coding scheme for BCC in \cite{csiszar78} uses this kind
of message splitting. 
The average leaked information under this kind of construction 
is evaluated by Theorem \ref{Lee3}, which is an algebraic version of
channel resolvability.
However,
when there is no common message, 
the first construction  
realizes a better exponential decreasing rate 
for leaked information than 
the second construction.  


When we fix a code $\varphi$, we obtain the following observations.
Any distribution $\tilde{P}_Z$ on $\mathcal{Z}$ 
and any non-empty proper subset
$\mathcal{I}\subsetneq \{1,\ldots, T\}$
satisfy
\begin{align}
&\rho 
I(S_{\mathcal{I}}; Z |S_{0}) [P_{Z|V}, \varphi,P_{S_{\cal T}}]\nonumber \\
=&
\rho \sum_{s_{0}} P_{S_{0}}(s_{0}) I(S_{\mathcal{I}}; Z|S_{0}=s_{0}) [P_{Z|V}, \varphi,P_{S_{\cal T}}] \nonumber \\
= &
\rho \sum_{s_{0}} P_{S_{0}}(s_{0}) 
D(P_{Z,S_{\mathcal{I}}|S_{0}=s_{0}, \varphi }\|
P_{Z|S_{0}=s_{0}, \varphi }\times
P_{S_{\mathcal{I}}|S_{0}=s_{0}, \varphi }
) \nonumber \\
\le &
\rho \sum_{s_{0}} P_{S_{0}}(s_{0}) 
D(P_{Z,S_{\mathcal{I}}|S_{0}=s_{0}, \varphi }\|
\tilde{P}_Z  \times
P_{S_{\mathcal{I}}|S_{0}=s_{0}, \varphi }
) \Label{12-18-2} \\
= &
\sum_{s_{0}} P_{S_{0}}(s_{0}) 
\sum_{s_{\mathcal{I}}} P_{S_{\mathcal{I}}|S_{0} }(s_{\mathcal{I}}|s_{0})
\rho 
D(P_{Z|S_{\mathcal{I}}=s_{\mathcal{I}},S_{0}=s_{0} ,\varphi} \| \tilde{P}_Z ) ,
\end{align}
where
(\ref{12-18-2}) follows from the following general inequality
\begin{align}
D(P_{X,Y} \| P_{X} \times P_{Y})
\le
D(P_{X,Y} \| Q_{X} \times P_{Y})\Label{2-26-1}
\end{align}
for any distribution $Q_X$ over $\mathcal{X}$. 
Due to (\ref{12-18-1}), we have
\begin{align}
\rho 
D(P_{Z|S_{\mathcal{I}}=s_{\mathcal{I}},S_{0}=s_{0} ,\varphi} \| \tilde{P}_Z ) 
\le 
\psi (\rho |P_{Z|S_{\mathcal{I}}=s_{\mathcal{I}},S_{0}=s_{0}, \varphi }\| \tilde{P}_Z ) .
\end{align}
Thus, combining Jensen inequality and the above observations, 
we obtain the following lemma.
\begin{lemma}
Any distribution $\tilde{P}_Z$ on $\mathcal{Z}$ and any
non-empty proper subset
$\mathcal{I}\subsetneq \{1,\ldots, T\}$
satisfy
\begin{align}
& e^{\rho I(S_{\mathcal{I}}; Z |S_{0})[P_{Z|V}, \varphi,P_{S_{\cal T}}]}
\le
e^{\sum_{s_{0}} P_{S_{0}}(s_{0}) 
\sum_{s_{\mathcal{I}}} P_{S_{\mathcal{I}}|S_{0} }(s_{\mathcal{I}}|s_{0})
\rho D(P_{Z|S_{\mathcal{I}}=s_{\mathcal{I}},S_{0}=s_{0},\varphi }\| \tilde{P}_Z ) } \nonumber \\
\le &
\sum_{s_{0}} P_{S_{0}}(s_{0}) 
\sum_{s_{\mathcal{I}}} P_{S_{\mathcal{I}}|S_{0} }(s_{\mathcal{I}}|s_{0})
e^{\rho D(P_{Z|S_{\mathcal{I}}=s_{\mathcal{I}},S_{0}=s_{0},\varphi }\| \tilde{P}_Z ) } \Label{1-13-1} \\
\le &
\sum_{s_{0}} P_{S_{0}}(s_{0}) 
\sum_{s_{\mathcal{I}}} P_{S_{\mathcal{I}}|S_{0} }(s_{\mathcal{I}}|s_{0})
e^{\psi (\rho |P_{Z|S_{\mathcal{I}}=s_{\mathcal{I}},S_{0}=s_{0},\varphi }\| \tilde{P}_Z ) } .
\Label{ineq-12}
\end{align}
\end{lemma}

\subsection{First Construction}\Label{s5-3}
Now, we introduce the first kind of random coding for SMC.
\begin{ensemble}\Label{con2}
For a given Markov chain $U\rightarrow V \rightarrow X \rightarrow YZ$,
we give the random coding 
$\Phi_{\mathrm{c}}$ and $\Phi_{\mathrm{p}}$
in the same way as Code Ensemble \ref{con0} 
with 
$\mathcal{S}_{\mathrm{c}}= \mathcal{S}_0$
and
$\mathcal{S}_{\mathrm{p}}= \mathcal{S}_1 \times \cdots \times \mathcal{S}_T$.
Similar to the case of BCD,
Bob's decoder $\Phi_{b}$ and Eve's decoder $\Phi_{e}$ are defined as
the maximum likelihood decoders.
Hence, our code is written by the quartet$(\Phi_{\mathrm{c}},\Phi_{\mathrm{p}},\Phi_{b},\Phi_{e})$. 
\end{ensemble}

As a special case of Code Ensemble \ref{con2},
a wire-tap code is given as the case when 
$T=2$ and we do not have the random variables $S_{0}$.
The averaged performance of 
the above code is evaluated by the following theorem.
Indeed, we cannot derive the capacity region from the following theorem.
However, the following theorem has an advantage when 
the conditional mutual information goes to zero.
As is explained in Section \ref{s7},
the following theorem yields a better bound for the exponential decreasing rate
of the conditional mutual information
than Theorem \ref{lem2} in a specific case.

\begin{theorem}\Label{lem1}
The above ensemble of codes 
$\Phi=(\Phi_{\mathrm{c}},\Phi_{\mathrm{p}},\Phi_{b},\Phi_{e})$ satisfies the following inequalities.
\begin{align}
& \rE_{\Phi}
\exp (\rho I(S_{\mathcal{I}}; Z |S_{0}) [P_{Z|V}, \Phi,P_{S_{\cal T}}]  ) \nonumber \\
\le &
1+ e^{- \rho H_{1+\rho}(S_{\mathcal{I}^c}|S_{\mathcal{I}},S_{0})+\psi(\rho| P_{Z|V}, P_{V|U},P_U)} ,\Label{ineq-1} \\
& \rE_{\Phi}
P_b [P_{Y|V}, \Phi,P_{S_{\cal T}}]\nonumber \\
\le &
|{\cal S}|^{\rho} e^{E_0(-\rho| P_{Y|V}, P_{V|U},P_U)}
+
(|{\cal S}_0||{\cal S}|)^{\rho} e^{E_0(-\rho| P_{Y|U,V},P_{U,V})}, \Label{ineq-31-a}\\
& \rE_{\Phi}
P_e [P_{Z|V}, \Phi,P_{S_{\cal T}}]
\le 
|{\cal S}_0|^{\rho} e^{E_0(-\rho| P_{Z|U},P_U)}.\Label{ineq-32-a}
\end{align}
\end{theorem}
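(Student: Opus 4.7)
The plan is to prove the three inequalities separately. Inequalities (\ref{ineq-31-a}) and (\ref{ineq-32-a}) will be obtained by the same Gallager-type random coding argument that produced Lemma \ref{lem0}, since the construction has the same superposition structure as BCD with inner message set of size $|{\cal S}|=|{\cal S}_1|\cdots|{\cal S}_T|$ and outer message set $\mathcal{S}_0$; the proofs are obtained from (\ref{ineq-31}) and (\ref{ineq-32}) by relabeling ${\cal S}_{\mathrm p}\to{\cal S}$ and ${\cal S}_{\mathrm c}\to{\cal S}_0$. Thus the real content lies in (\ref{ineq-1}).

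For (\ref{ineq-1}) my plan is to combine the general bound (\ref{ineq-12}) with Theorem \ref{lem-01}. First I choose in (\ref{ineq-12}) the reference distribution
$$\tilde{P}_Z(z)=P_{Z|U=\Phi_c(s_0)}(z)=\sum_v P_{V|U}(v|\Phi_c(s_0))\,P_{Z|V}(z|v),$$
which is the natural candidate because it is precisely the output distribution that the resolvability code should mimic. This reduces the problem, for each fixed $(s_0,s_\mathcal{I})$, to bounding $\rE_{\Phi}\,e^{\psi(\rho\,|\,P_{Z|S_\mathcal{I}=s_\mathcal{I},S_0=s_0,\Phi}\,\|\,\tilde{P}_Z)}$.

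Next I recognize the latter as exactly the object bounded in Theorem \ref{lem-01} with the following dictionary: the resolvability index set is $\mathcal{A}=\mathcal{S}_{\mathcal{I}^c}$ with distribution $P^A=P_{S_{\mathcal{I}^c}\mid S_\mathcal{I}=s_\mathcal{I},S_0=s_0}$; the alphabet is $\mathcal{X}=\mathcal{V}$; the channel is $W=P_{Z|V}$; the i.i.d.\ input distribution is $p=P_{V|U=\Phi_c(s_0)}$; and the encoder is $\Lambda(s_{\mathcal{I}^c})=\Phi_p(s_0,s_\mathcal{I},s_{\mathcal{I}^c})$. The construction guarantees that, conditional on $\Phi_c$, the codewords $\{\Phi_p(s_0,s_\mathcal{I},s_{\mathcal{I}^c})\}_{s_{\mathcal{I}^c}}$ are independent with the required marginal, so Theorem \ref{lem-01} applies and gives the conditional bound
$$\rE_{\Phi_p|\Phi_c}\,e^{\psi(\cdot)}\le 1+e^{-\rho H_{1+\rho}(S_{\mathcal{I}^c}\mid S_\mathcal{I}=s_\mathcal{I},S_0=s_0)}\,e^{\psi(\rho\,|\,P_{Z|V},P_{V|U=\Phi_c(s_0)})}.$$
I then take the expectation over $\Phi_c(s_0)\sim P_U$; from the definitions (\ref{eq:psid-1}) and the formula for $\psi(\rho|W,p)$, a direct computation shows
$$\rE_{\Phi_c(s_0)}\,e^{\psi(\rho\,|\,P_{Z|V},P_{V|U=\Phi_c(s_0)})}=e^{\psi(\rho,P_{Z|V},P_{V|U},P_U)}.$$
Finally, averaging over $(s_0,s_\mathcal{I})$ collapses the $\sum_{s_0,s_\mathcal{I}} P_{S_0,S_\mathcal{I}}(s_0,s_\mathcal{I})\,e^{-\rho H_{1+\rho}(S_{\mathcal{I}^c}\mid S_\mathcal{I}=s_\mathcal{I},S_0=s_0)}$ term to $e^{-\rho H_{1+\rho}(S_{\mathcal{I}^c}\mid S_\mathcal{I},S_0)}$ by a one-line manipulation of the definition of $H_{1+\rho}(\cdot|\cdot)$, producing exactly (\ref{ineq-1}).

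The main obstacle is the setup step: one must choose $\tilde P_Z$ so that the resulting $\psi$-expression lines up with Theorem \ref{lem-01}, and one must verify that the i.i.d.\ input-codeword hypothesis of Theorem \ref{lem-01} is satisfied \emph{conditionally on $\Phi_c$} (rather than marginally). Once this identification is in place the remainder is bookkeeping: an application of Theorem \ref{lem-01}, an expectation over $\Phi_c(s_0)$ that folds the conditional $\psi$ into the unconditional $\psi(\rho,P_{Z|V},P_{V|U},P_U)$, and the elementary Rényi-entropy identity that converts the average of conditional Rényi entropies into a single conditional Rényi entropy.
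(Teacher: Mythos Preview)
Your proposal is correct and follows essentially the same route as the paper: reduce (\ref{ineq-31-a}) and (\ref{ineq-32-a}) to Lemma~\ref{lem0} by relabeling, and for (\ref{ineq-1}) apply (\ref{ineq-12}) with $\tilde P_Z=P_{Z|U=\Phi_c(s_0)}$, invoke Theorem~\ref{lem-01} conditionally on $\Phi_c$ with the dictionary you describe, average over $\Phi_c(s_0)\sim P_U$ to obtain $e^{\psi(\rho,P_{Z|V},P_{V|U},P_U)}$, and finally use the identity $\sum_{s_0,s_{\mathcal I}}P_{S_0,S_{\mathcal I}}(s_0,s_{\mathcal I})e^{-\rho H_{1+\rho}(S_{\mathcal I^c}|s_{\mathcal I},s_0)}=e^{-\rho H_{1+\rho}(S_{\mathcal I^c}|S_{\mathcal I},S_0)}$. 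The only points worth making explicit in a write-up are that Theorem~\ref{lem-01} is applied \emph{after} conditioning on $\Phi_c$ (so that the codewords $\{\Phi_p(s_0,s_{\mathcal I},s_{\mathcal I^c})\}_{s_{\mathcal I^c}}$ are genuinely i.i.d.\ with law $P_{V|U=\Phi_c(s_0)}$), and that the reference $\tilde P_Z$ depends on $\Phi_c(s_0)$, which is permitted in (\ref{ineq-12}).
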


Theorem \ref{lem1} yields the following observation.
Applying Jensen's inequality to the convex function $x \mapsto e^x$, we obtain
\begin{align}
& 
\rE_{\Phi}
\rho I(S_{\mathcal{I}}; Z |S_{0}) [P_{Z|V}, \Phi,P_{S_{\cal T}}]  \nonumber \\
\le &
\log (1+ e^{- \rho H_{1+\rho}(S_{\mathcal{I}^c}|S_{\mathcal{I}},S_{0})+ \psi(\rho| P_{Z|V}, P_{V|U},P_U)}) \nonumber \\
\le &
e^{- \rho H_{1+\rho}(S_{\mathcal{I}^c}|S_{\mathcal{I}},S_{0})+\psi(\rho| P_{Z|V}, P_{V|U},P_U)}
. \Label{ineq-2-k}
\end{align}
The number of non-empty proper subsets $\mathcal{I}\subsetneq \{1,\ldots, T\}$
is $2^{T}-2$.
Similar to (\ref{12-28-3}) and (\ref{12-28-4}),
since $2(2^{T}-2)+2=2^{T+1}-2 <2^{T+1}$, 
Markov inequality guarantees that
there exists a code $\varphi$ such that
\begin{align}
& \exp (\rho I(S_{\mathcal{I}}; Z |S_{0}) [P_{Z|V}, \varphi,P_{S_{\cal T}}]  )\nonumber \\
\le & 
2^{T+1}(1+ e^{- \rho H_{1+\rho}(S_{\mathcal{I}^c}|S_{\mathcal{I}},S_{0})+\psi(\rho| P_{Z|V}, P_{V|U},P_U)}) \nonumber \\
\le &
2^{T+2} e^{[- \rho H_{1+\rho}(S_{\mathcal{I}^c}|S_{\mathcal{I}},S_{0})+\psi(\rho| P_{Z|V}, P_{V|U},P_U)]_+ } ,
\Label{ineq-6}\\
& \rho 
I(S_{\mathcal{I}}; Z |S_{0}) [P_{Z|V}, \varphi,P_{S_{\cal T}}]  \nonumber \\
\le &
2^{T+1} e^{- \rho H_{1+\rho}(S_{\mathcal{I}^c}|S_{\mathcal{I}},S_{0})+\psi(\rho| P_{Z|V}, P_{V|U},P_U)},
\Label{ineq-6-a-}\\
& P_b [P_{Y|V}, \varphi,P_{S_{\cal T}}]  \nonumber \\
\le &
2^{T+1}|{\cal S}|^{\rho} e^{E_0(-\rho| P_{Y|V}, P_{V|U},P_U)}
+
2^{T+1}|{\cal S}_0|^{\rho}e^{E_0(-\rho| P_{Y|U},P_U)} ,
\Label{ineq-6-b-}
\\
& P_e [P_{Z|V}, \varphi,P_{S_{\cal T}}] \nonumber \\ 
\le &
2^{T+1}|{\cal S}_0|^{\rho} e^{E_0(-\rho| P_{Z|U},P_U)}.
\Label{ineq-6-c-}
\end{align}
Taking the logarithm in (\ref{ineq-6}),
we obtain
\begin{align}
& 
I(S_{\mathcal{I}}; Z |S_{0}) [P_{Z|V}, \Phi,P_{S_{\cal T}}]  \nonumber \\
\le &
(T+2)\frac{\log 2 }{\rho}
+[\frac{1}{\rho}\psi(\rho| P_{Z|V}, P_{V|U},P_U)
-H_{1+\rho}(S_{\mathcal{I}^c}|S_{\mathcal{I}},S_{0})]_+
.\Label{ineq-2}
\end{align}

\quad {\it Proof of Theorem \ref{lem1}:}\quad

Inequalities (\ref{ineq-31-a}) and (\ref{ineq-32-a}) can be shown by Lemma \ref{lem0}.
The remaining inequality (\ref{ineq-1}) can be shown as follows.
\begin{align*}
& \rE_{\Phi} e^{\rho I(S_{\mathcal{I}}; Z |S_{0},\Phi)} \\
\stackrel{(a)}{\le} &
\rE_{\Phi} \sum_{s_{0}} P_{S_{0}}(s_{0}) 
\sum_{s_{\mathcal{I}}} P_{S_{\mathcal{I}}|S_{0} }(s_{\mathcal{I}}|s_{0})
e^{\psi (\rho |P_{Z|S_{\mathcal{I}}=s_{\mathcal{I}},S_{0}=s_{0},\Phi }\| 
P_{Z|U=\Phi_{\mathrm{c}}(s_{0})} ) } \\
=&
\sum_{s_{0}} P_{S_{0}}(s_{0}) 
\sum_{s_{\mathcal{I}}} P_{S_{\mathcal{I}}|S_{0} }(s_{\mathcal{I}}|s_{0})
\nonumber \\
&\cdot \rE_{\Phi_{\mathrm{c}} } 
\rE_{\Phi_{\mathrm{p}}|\Phi_{\mathrm{c}} } 
e^{\psi (\rho |P_{Z|S_{\mathcal{I}}=s_{\mathcal{I}},S_{0}=s_{0},\Phi }\| 
P_{Z|U=\Phi_{\mathrm{c}}(s_{0})} ) } \\
\stackrel{(b)}{\le} & 
\sum_{s_{0}} P_{S_{0}}(s_{0}) 
\sum_{s_{\mathcal{I}}} P_{S_{\mathcal{I}}|S_{0} }(s_{\mathcal{I}}|s_{0})
\nonumber \\
&\cdot \rE_{\Phi_{\mathrm{c}}} 
(1+
e^{- \rho H_{1+\rho}(S_{\mathcal{I}^c}|S_{\mathcal{I}}=s_{\mathcal{I}},S_{0}=s_{0})}
e^{\psi(\rho| P_{Z|V}, P_{V|U=\Phi_{\mathrm{c}}(s_{0})})}) \\
= &
\sum_{s_{0}} P_{S_{0}}(s_{0}) 
\sum_{s_{\mathcal{I}}} P_{S_{\mathcal{I}}|S_{0} }(s_{\mathcal{I}}|s_{0})
\nonumber \\
&\cdot (1+
e^{- \rho H_{1+\rho}(S_{\mathcal{I}^c}|S_{\mathcal{I}}=s_{\mathcal{I}},S_{0}=s_{0})}
e^{\psi(\rho| P_{Z|V}, P_{V|U},P_U)} )\\
= &
1+
e^{- \rho H_{1+\rho}(S_{\mathcal{I}^c}|S_{\mathcal{I}},S_{0})}
e^{\psi(\rho| P_{Z|V}, P_{V|U},P_U)} ,
\end{align*}
$(a)$ follows from application of 
(\ref{ineq-12}) to the case with $\tilde{P}_Z=P_{Z|U=\Phi_{\mathrm{c}}(s_{0})} $,
and 
$(b)$ follows from Theorem \ref{lem-01}.
\endIEEEproof

\subsection{Second Construction}\Label{s5-2}
Next, we give the second kind of random coding for SMC as follows.
\begin{ensemble}\Label{con1}
{\em First Step:}\quad
For a given Markov chain 
$U\rightarrow V \rightarrow X \rightarrow YZ$,
we introduce two random variables $B_1$ and $B_2$ that 
take values in Abelian groups ${\cal B}_1$ and ${\cal B}_2$
and are subject to the uniform distributions.
The pair of random variables $(B_1,B_2)$ is 
used for sending the all of secret messages in
${\cal S}_{1}\times \cdots \times {\cal S}_{T}$.
Assuming that 
${\cal S}_1\times \ldots\times {\cal S}_T$ has an Abelian group structure,
we give the random coding 
$\Phi_{\mathrm{c}}$ and $\Phi_{\mathrm{p}}$
in the same way as Code Ensemble \ref{con0} 
with 
$\mathcal{S}_{\mathrm{c}}= \mathcal{S}_0 \times \mathcal{B}_1$
and
$\mathcal{S}_{\mathrm{p}}= \mathcal{B}_2$.



{\em Second Step:}\quad
We choose an ensemble satisfying Condition \ref{C2-b} of isomorphisms
$F'$ from ${\cal S}_{1}\times \cdots \times {\cal S}_{T}$ to ${\cal B}_1\times {\cal B}_2$ as Abelian groups.
We choose the random variable $G'\in {\cal B}_1\times {\cal B}_2$ that obeys the uniform distribution on ${\cal B}_1\times {\cal B}_2$ 
and is independent of the choice of $F'$ and anything else.
Then, we define a map $\Lambda_{F',G'}(s):=F'(s)+G'$.
Combining the above codes, we construct the code 
$\Phi_{a}=\Phi_{\mathrm{p}} \circ \Lambda_{F',G'}:
{\cal S}_0 \times {\cal S}_1\times \cdots \times {\cal S}_{T}
\to 
 {\cal V}$ as $(s_0,s_1,\ldots,s_T) \mapsto \Phi_{\mathrm{p}} (s_0,\Lambda_{F',G'}(s_1,\ldots,s_T))$.
Similar to the case of BCD,
Bob's decoder $\Phi_{b}$ and Eve's decoder $\Phi_{e}$ are defined as
the maximum likelihood decoders.
Hence, our code is written by the triple
$(\Phi_a,\Phi_b,\Phi_e)$. 
The structure of encoder is illustrated in Fig.\ \ref{fig:encoder}.
\end{ensemble}

As a special case of Code Ensemble \ref{con1},
a wire-tap code is given as the case when 
$T=2$ and we do not have the random variables $S_{0}$.
For a fixed code $\varphi_{\mathrm{p}}$, 
$P_{Z|S_{0}=s_{0},\Phi_{\mathrm{p}}=\varphi_{\mathrm{p}}}$
denotes the average output distribution of the channel of
the transmitted codeword $\varphi_{\mathrm{p}}(s_0,B_1,B_2)$
averaged over $B_1,B_2$.
In order to evaluate the averaged performance of 
the above code $(\Phi_a,\Phi_b,\Phi_e)$,
we prepare the following lemma.

\begin{lemma}\Label{lem2-1}
When the code $\Phi_{\mathrm{p}}$ is fixed to $\varphi_{\mathrm{p}}$ in the BCD part,
we have the following average performance.
\begin{align}
& 
\rE_{F',G'}
\exp (\rho I(S_{\mathcal{I}}; Z |S_{0})[P_{Z|V},
\varphi_{\mathrm{p}} \circ \Lambda_{F',G'},P_{S_{\cal T}}])\nonumber  \\
\le &
\rE_{F',G'}
\sum_{s_{0}} P_{S_{0}}(s_{0}) 
\sum_{s_{\mathcal{I}}} P_{S_{\mathcal{I}}|S_{0} }(s_{\mathcal{I}}|s_{0})
\nonumber  \\
&\qquad \qquad \qquad \qquad \cdot e^{\rho D( P_{Z|S_{\mathcal{I}}=s_{\mathcal{I}},S_0=s_0,\Phi_{\mathrm{p}}=\varphi_{\mathrm{p}}}\| 
P_{Z|S_{0}=s_{0},\Phi_{\mathrm{p}}=\varphi_{\mathrm{p}}} )} 
\nonumber \\
\le &
1+
\sum_{s_{0}} P_{S_{0}}(s_{0}) 
\sum_{s_{\mathcal{I}}} P_{S_{\mathcal{I}}|S_{0} }(s_{\mathcal{I}}|s_{0})
e^{- \rho H_{1+\rho}(S_{\mathcal{I}^c}|S_{\mathcal{I}}=s_{\mathcal{I}},S_{0}=s_{0})} \nonumber \\
&\qquad \qquad \qquad \qquad
 \cdot e^{\psi ( \rho| P_{Z|B_1,B_2,S_0=s_0,\Phi_{\mathrm{p}}=\varphi_{\mathrm{p}}}, P_{\mix, \mathcal{B}_1,\mathcal{B}_2} )} 
\Label{ineq-11-1} .
\end{align}
Further, 
when $P_{Z|V}$ is a regular channel
and the map $\varphi_{\mathrm{p}}|_{S_0=s_0}:
(b_1,b_2) \mapsto \varphi_{\mathrm{p}}(b_1,b_2 ,s_0)$ 
is a homomorphism from an Abelian group $\mathcal{B}_1\times \mathcal{B}_2 $ to an Abelian group $\mathcal{V}$ for any $s_0\in \mathcal{S}_0$,
the inequalities (\ref{ineq-11-1}) hold 
even when $G'$ is a constant $g'$.
\end{lemma}

Lemma \ref{lem2-1} will be applied 
for the evaluation of the performance of Code Ensemble \ref{con1}.
However, it will be also used 
for the evaluation of the performance of another type of codes 
without common messages based on 
a specific error correcting code in Section \ref{s8}.
Hence, Lemma \ref{lem2-1} addresses the case when 
the map $\varphi_{\mathrm{p}}|_{S_0=s_0}$ is a homomorphism.

Lemma \ref{lem2-1} yields the following observation.
Applying Jensen's inequality for the convex function $x \mapsto e^x$
and the inequality $\log(1+x) \le x$, we obtain
\begin{align}
& 
\rE_{F',G'}
\rho I(S_{\mathcal{I}}; Z |S_{0})[P_{Z|V},\varphi_{\mathrm{p}} \circ \Lambda_{F',G'},P_{S_{\cal T}}] \nonumber \\
\le &
\log \Bigl(
1+
\sum_{s_{0}} P_{S_{0}}(s_{0}) 
\sum_{s_{\mathcal{I}}} P_{S_{\mathcal{I}}|S_{0} }(s_{\mathcal{I}}|s_{0})
e^{- \rho H_{1+\rho}(S_{\mathcal{I}^c}|S_{\mathcal{I}}=s_{\mathcal{I}},S_{0}=s_{0})}
\nonumber \\
&\qquad \qquad \qquad \qquad
 \cdot 
e^{\psi ( \rho| P_{Z|B_1,B_2,S_0=s_0,\Phi_{\mathrm{p}}=\varphi_{\mathrm{p}}}, P_{\mix, \mathcal{B}_1,\mathcal{B}_2} )} 
\Bigr)
\nonumber \\
\le &
\sum_{s_{0}} P_{S_{0}}(s_{0}) 
\sum_{s_{\mathcal{I}}} P_{S_{\mathcal{I}}|S_{0} }(s_{\mathcal{I}}|s_{0})
e^{- \rho H_{1+\rho}(S_{\mathcal{I}^c}|S_{\mathcal{I}}=s_{\mathcal{I}},S_{0}=s_{0})}
\nonumber \\
&\qquad \qquad \qquad \qquad
 \cdot 
e^{\psi ( \rho| P_{Z|B_1,B_2,S_0=s_0,\Phi_{\mathrm{p}}=\varphi_{\mathrm{p}}}, P_{\mix, \mathcal{B}_1,\mathcal{B}_2} )} 
. \Label{ineq-2-c}
\end{align}

\begin{IEEEproof}
Applying (\ref{1-13-1}) and \eqref{ineq-12} to the case when 
$\tilde{P}_Z=\tilde{P}_{Z|S_{0}=s_{0},\Phi_{\mathrm{p}}=\varphi_{\mathrm{p}}}  $,
we obtain
\begin{align}
& \rE_{F',G'}
 e^{\rho I(S_{\mathcal{I}}; Z |S_{0}) [P_{Z|V},\varphi_{\mathrm{p}} \circ \Lambda_{F',G'},P_{S_{\cal T}}] } \nonumber\\
\le &
\rE_{F',G'}
\sum_{s_{0}} P_{S_{0}}(s_{0}) 
\sum_{s_{\mathcal{I}}} P_{S_{\mathcal{I}}|S_{0} }(s_{\mathcal{I}}|s_{0})
\nonumber \\
&\qquad \qquad \qquad \qquad
 \cdot 
e^{\rho D(P_{Z|S_{\mathcal{I}}=s_{\mathcal{I}},S_{0}=s_{0},\Phi_{\mathrm{p}}=\varphi_{\mathrm{p}} }\| {P}_{Z|S_{0}=s_{0},\Phi_{\mathrm{p}}=\varphi_{\mathrm{p}}} ) } 
\nonumber \\
\le &
\rE_{F',G'|\Phi_{\mathrm{p}}=\varphi_{\mathrm{p}}} 
\sum_{s_{0}} P_{S_{0}}(s_{0}) 
\sum_{s_{\mathcal{I}}} P_{S_{\mathcal{I}}|S_{0} }(s_{\mathcal{I}}|s_{0})
\nonumber \\
&\qquad \qquad \qquad \qquad
 \cdot 
e^{\psi (\rho |P_{Z|S_{\mathcal{I}}=s_{\mathcal{I}},S_{0}=s_{0},\Phi_{\mathrm{p}}=\varphi_{\mathrm{p}} }\| {P}_{Z|S_{0}=s_{0},\Phi_{\mathrm{p}}=\varphi_{\mathrm{p}}} ) } 
\Label{ineq-21}.
\end{align}
For a fixed $s_{\mathcal{I}}$,
we apply Theorem \ref{Lee3} to the case when 
${\cal A}$ is $ 
{\cal S}_{\mathcal{I}^c} $,
${\cal X}$ is ${\cal B}_1\times {\cal B}_2$, 
$G$ is $
G'+F'(s_{\mathcal{I}},0)$, which is independent of $F'$, and
$F$ is the map $s_{\mathcal{I}^c} \mapsto F'(0,s_{\mathcal{I}^c})$
that satisfies Condition \ref{C2-b}.
Then, 
$\Lambda_{F',G'}(s_{\mathcal{I}},s_{\mathcal{I}^c})
=F'(s_{\mathcal{I}},s_{\mathcal{I}^c})+G'=F'(0,s_{\mathcal{I}^c})+ Z_{s_{\mathcal{I}}}$.
Thus, we obtain
\begin{align}
& \rE_{F',G'}
e^{\psi (\rho |P_{Z|S_{\mathcal{I}}=s_{\mathcal{I}},S_{0}=s_{0},\Phi_{\mathrm{p}}=\varphi_{\mathrm{p}} }\| \tilde{P}_{Z|S_{0}=s_{0},\Phi_{\mathrm{p}}=\varphi_{\mathrm{p}}} ) } 
\nonumber \\
\le &
1+
e^{- \rho H_{1+\rho}(S_{\mathcal{I}^c}|S_{\mathcal{I}}=s_{\mathcal{I}},S_{0}=s_{0})}
e^{\psi ( \rho| P_{Z|B_1,B_2,S_0,\Phi_{\mathrm{p}}=\varphi_{\mathrm{p}}}, P_{\mix, \mathcal{B}_1,\mathcal{B}_2} )} 
\Label{ineq-22}  .
\end{align}
Thus, we obtain (\ref{ineq-11-1}).

Further, 
when $P_{Z|V}$ is a regular channel
and the map $\varphi_{\mathrm{p}}|_{S_0=s_0} :(b_1,b_2) \mapsto \varphi_{\mathrm{p}}(b_1,b_2 ,s_0)$ 
is a homomorphism from an Abelian group $\mathcal{B}_1 \times \mathcal{B}_2 $ to an Abelian group $\mathcal{V}$ for any $s_0\in \mathcal{S}_0$,
the channel $P_{Z|V} \circ \varphi_{\mathrm{p}}|_{S_0=s_0}$ is a regular channel
from $\mathcal{B}_1 \times \mathcal{B}_2 $ to $\mathcal{V}$.
Hence, due to Corollary \ref{cor1},
the inequalities (\ref{ineq-11-1}) hold
even when $G'$ is a constant $g'$.
\end{IEEEproof}

Using the above lemma, we obtain the following theorem,
which gives the averaged performance of 
the above code $(\Phi_a,\Phi_b,\Phi_e)$.
By using this theorem, we will give the capacity region in 
Subsection \ref{s6-2}.

\begin{theorem}\Label{lem2}
Assume that the code $\Phi=(\Phi_a,\Phi_b,\Phi_e)$ is
the ensemble given in Code Ensemble \ref{con1}.
Then,
the inequalities
\begin{align}
& \rE_{\Phi_a} 
\exp (\rho I(S_{\mathcal{I}}; Z |S_{0})[P_{Z|V},\Phi_a,P_{S_{\cal T}}] )\nonumber \\
\le &
\rE_{\Phi_a} 
\sum_{s_{0}} P_{S_{0}}(s_{0}) 
\sum_{s_{\mathcal{I}}} P_{S_{\mathcal{I}}|S_{0} }(s_{\mathcal{I}}|s_{0})
e^{\rho D( P_{Z|S_{\mathcal{I}}=s_{\mathcal{I}},S_0=s_0,\Phi_a}\| 
P_{Z|S_{0}=s_{0},\Phi_{\mathrm{p}}} )} \nonumber \\
\le &
1+ |{\cal B}_1|^{\rho} e^{- \rho H_{1+\rho}(S_{\mathcal{I}^c}|S_{\mathcal{I}},S_{0})+E_0(\rho| P_{Z|V}, P_{V|U},P_U)},
\Label{ineq-11} 
\end{align}
and
\begin{align}
\rE_{\Phi}
P_b [P_{Y|V},\Phi,P_{S_{\cal T}}] 
\le &
|{\cal B}_2|^{\rho} e^{E_0(-\rho| P_{Y|V}, P_{V|U},P_U)}\nonumber \\
&+
(|{\cal S}_0| |{\cal S}|)^{\rho}
 e^{E_0(-\rho| P_{Y|U,V},P_{U,V})} 
\Label{ineq-33} \\
\rE_{\Phi}
P_e [P_{Z|V},\Phi,P_{S_{\cal T}}] 
\le &
|{\cal S}_0|^{\rho} e^{E_0(-\rho| P_{Z|U},P_U)}.
\Label{ineq-34} 
\end{align}
hold. 
\end{theorem}

Theorem \ref{lem2} yields the following observation.
Applying Jensen's inequality to the convex function $x \mapsto e^x$, we obtain
\begin{align}
& 
\rE_{\Phi_a} 
\rho I(S_{\mathcal{I}}; Z |S_{0}) [P_{Z|V},\Phi_a,P_{S_{\cal T}}] \nonumber\\
\le &
\log (1+ |{\cal B}_1|^{\rho} e^{- \rho H_{1+\rho}(S_{\mathcal{I}^c}|S_{\mathcal{I}},S_{0})+ E_0(\rho| P_{Z|V}, P_{V|U},P_U)}) \nonumber\\
\le &
|{\cal B}_1|^{\rho} e^{- \rho H_{1+\rho}(S_{\mathcal{I}^c}|S_{\mathcal{I}},S_{0})+E_0(\rho| P_{Z|V}, P_{V|U},P_U)}
. \Label{ineq-2-}
\end{align}
Here, 
we choose $\rho_0$ as
\begin{align}
\rho_0:= 
\argmin_{\rho \in [0,1]}
\Bigl [ &
\log |{\cal B}_1|+\frac{1}{\rho}E_0(\rho| P_{Z|V}, P_{V|U},P_U)
\nonumber\\ 
& \qquad -H_{1+\rho}(S_{\mathcal{I}^c}|S_{\mathcal{I}},S_{0}) \Bigr]_+
+
(T+2)\frac{\log 2 }{\rho}. 
\end{align}
Then,
Similar to (\ref{12-28-3}) and (\ref{12-28-4}),
since $2(2^{T}-2)+2=2^{T+1}-2 <2^{T+1}$, 
Markov inequality guarantees that
there exists a code $\varphi=(\varphi_a,\varphi_b,\varphi_e)$ such that
\begin{align}
& \exp (\rho_0 I(S_{\mathcal{I}}; Z |S_{0}) [P_{Z|V},\varphi_a,P_{S_{\cal T}}] )
\nonumber\\
\le &
2^{T+1}
(1+ |{\cal B}_1|^{\rho_0} e^{- \rho_0 H_{1+\rho}(S_{\mathcal{I}^c}|S_{\mathcal{I}},S_{0})+E_0(\rho_0| P_{Z|V}, P_{V|U},P_U)}) \nonumber\\
\le &
2^{T+2} e^{[\rho_0 \log |{\cal B}_1|- \rho_0 H_{1+\rho_0}(S_{\mathcal{I}^c}|S_{\mathcal{I}},S_{0})+E_0(\rho_0| P_{Z|V}, P_{V|U},P_U),P_{S_{\cal T}}]_+ } ,
\Label{ineq-6-}\\
&  I(S_{\mathcal{I}}; Z |S_{0}) [P_{Z|V},\varphi_a,P_{S_{\cal T}}] 
\nonumber\\
\le &
\min_{0 \le \rho\le 1}
\frac{2^{T+1}}{\rho} |{\cal B}_1|^{\rho} e^{- \rho H_{1+\rho}(S_{\mathcal{I}^c}|S_{\mathcal{I}},S_{0})+E_0(\rho| P_{Z|V}, P_{V|U},P_U)},
\Label{ineq-6-a}\\
& 
P_b [P_{Y|V},\varphi,P_{S_{\cal T}}] \nonumber\\
\le &
2^{T+1} 
\min_{0 \le \rho\le 1}
(|{\cal B}_2|^{\rho} e^{E_0(-\rho| P_{Y|V}, P_{V|U},P_U)}
+
(|{\cal S}_0||{\cal S}|)^{\rho}e^{E_0(-\rho| P_{Y|UV},P_{UV})} ),
\Label{ineq-6-b}\\
& 
P_e [P_{Z|V},\varphi,P_{S_{\cal T}}] \nonumber\\
\le &
2^{T+1} 
\min_{0 \le \rho\le 1}
|{\cal S}_0|^{\rho} e^{E_0(-\rho| P_{Z|U},P_U)}
\Label{ineq-6-c}
\end{align}
for any non-empty proper subset $\mathcal{I}\subsetneq \{1,\ldots, T\}$.
Taking the logarithm in (\ref{ineq-6-}),
we obtain
\begin{align}
& 
I(S_{\mathcal{I}}; Z |S_{0}) [P_{Z|V},\Phi_a,P_{S_{\cal T}}] 
\nonumber\\
\le &
\Bigl [\log |{\cal B}_1|+\frac{1}{\rho_0}E_0(\rho_0| P_{Z|V}, P_{V|U},P_U)
-H_{1+\rho_0}(S_{\mathcal{I}^c}|S_{\mathcal{I}},S_{0}) \Bigr]_+
\nonumber\\
& +
(T+2)\frac{\log 2 }{\rho_0}  \nonumber \\
=& \min_{\rho \in [0,1]}
\Bigl [\log |{\cal B}_1|+\frac{1}{\rho}E_0(\rho| P_{Z|V}, P_{V|U},P_U)
-H_{1+\rho}(S_{\mathcal{I}^c}|S_{\mathcal{I}},S_{0}) \Bigr]_+
\nonumber\\
& +
(T+2)\frac{\log 2 }{\rho}.\Label{ineq-2-e}
\end{align}

\quad {\it Proof of Theorem \ref{lem2}:}\quad
We show (\ref{ineq-11}). Using (\ref{psileqphi}), we obtain
\begin{align}
& 
\rE_{\Phi_{\mathrm{p}},\Phi_{\mathrm{c}} } 
e^{\psi ( \rho| P_{Z|B_1,B_2,S_0=s_0,\Phi_{\mathrm{p}}}, 
P_{\mix, \mathcal{B}_1,\mathcal{B}_2} )}  \nonumber\\
\le &
\rE_{\Phi_{\mathrm{p}},\Phi_{\mathrm{c}} } 
e^{E_0 ( \rho| P_{Z|B_1,B_2,S_0=s_0,\Phi_{\mathrm{p}}}, 
P_{\mix, \mathcal{B}_1,\mathcal{B}_2} )}  \Label{ineq-43}\\
= &
\rE_{\Phi_{\mathrm{p}},\Phi_{\mathrm{c}} } 
\sum_{z}(\sum_{b_1,b_2}P_{B_1,B_2}(b_1,b_2)P_{Z|B_1,B_2,S_0=s_0,\Phi_{\mathrm{p}}}(z|b_1,b_2)^{\frac{1}{1-\rho}}  )^{1-\rho} \nonumber\\
= &
\rE_{\Phi_{\mathrm{p}},\Phi_{\mathrm{c}} } 
\sum_{z}
(\sum_{b_1,b_2}
\frac{1}{|{\cal B}_1||{\cal B}_2|}
P_{Z|V}(z|\Phi_{\mathrm{p}}(s_0,b_1,b_2) )^{\frac{1}{1-\rho}}  )^{1-\rho} \nonumber\\
\le &
\rE_{\Phi_{\mathrm{p}},\Phi_{\mathrm{c}} } 
\sum_{z}
\sum_{b_1}(\sum_{b_2}
\frac{1}{|{\cal B}_1||{\cal B}_2|}
P_{Z|V}(z|\Phi_{\mathrm{p}}(s_0,b_1,b_2) )^{\frac{1}{1-\rho}}  )^{1-\rho} 
\Label{ineq-26}\\
= &
\rE_{\Phi_{\mathrm{p}},\Phi_{\mathrm{c}} } 
\sum_{z}
\sum_{b_1}
\frac{|{\cal B}_1|^\rho}{|{\cal B}_1|}
(\sum_{b_2}
\frac{1}{|{\cal B}_2|}
P_{Z|V}(z|\Phi_{\mathrm{p}}(s_0,b_1,b_2) )^{\frac{1}{1-\rho}}  )^{1-\rho} 
\Label{12-26-5}\\
\le &
\rE_{\Phi_{\mathrm{c}} } 
\sum_{z}
\sum_{b_1}
\frac{|{\cal B}_1|^\rho}{|{\cal B}_1|}
(
\sum_{b_2}
\frac{1}{|{\cal B}_2|}
\rE_{\Phi_{\mathrm{p}}|\Phi_{\mathrm{c}} } 
P_{Z|V}(z|\Phi_{\mathrm{p}}(s_0,b_1,b_2) )^{\frac{1}{1-\rho}}  )^{1-\rho} 
\Label{ineq-44}\\
= &
\sum_{z}
\sum_{b_1}
\frac{|{\cal B}_1|^\rho}{|{\cal B}_1|}
\rE_{\Phi_{\mathrm{c}} } 
(
\sum_{b_2}
\frac{1}{|{\cal B}_2|}
\sum_{v}
P_{V|U}(v|\Phi_{\mathrm{c}}(s_0,b_1))
P_{Z|V}(z|v )^{\frac{1}{1-\rho}}  )^{1-\rho} 
\Label{5-30-2}\\
= &
\sum_{z}
\sum_{b_1}
\frac{|{\cal B}_1|^\rho}{|{\cal B}_1|}
\rE_{\Phi_{\mathrm{c}} } 
(
\sum_{v}
P_{V|U}(v|\Phi_{\mathrm{c}}(s_0,b_1))
P_{Z|V}(z|v )^{\frac{1}{1-\rho}}  )^{1-\rho} 
\nonumber\\
= &
\sum_{z}
\sum_{b_1}
\frac{|{\cal B}_1|^\rho}{|{\cal B}_1|}
\sum_u P_U(u)
(
\sum_{v}
P_{V|U}(v|u)
P_{Z|V}(z|v )^{\frac{1}{1-\rho}}  )^{1-\rho} 
\nonumber\\
= &
\sum_{z}
|{\cal B}_1|^\rho
\sum_u P_U(u)
(
\sum_{v}
P_{V|U}(v|u)
P_{Z|V}(z|v )^{\frac{1}{1-\rho}}  )^{1-\rho} 
\nonumber\\
=&
|{\cal B}_1|^{\rho}
e^{E_0(\rho| P_{Z|V}, P_{V|U},P_U)},\Label{ineq-13}
\end{align}
where
(\ref{ineq-43}),  
(\ref{ineq-26}) 
(\ref{ineq-44}), 
and
(\ref{5-30-2})
follow from 
(\ref{psileqphi}), 
the inequality $(x+y)^{1-\rho} \le x^{1-\rho}+y^{1-\rho}$, 
the concavity of $x \mapsto x^{1-\rho}$,
and
the definition of the ensemble of the code $\Phi_{\mathrm{p}}$,
respectively.

Summarizing the above discussion,
we obtain
\begin{align}
& \rE_{\Phi_a} e^{\rho I(S_{\mathcal{I}}; Z |S_{0}) [P_{Z|V},\Phi_a,P_{S_{\cal T}}] } \nonumber\\
\le &
\rE_{\Phi_a} 
\sum_{s_{0}} P_{S_{0}}(s_{0}) 
\sum_{s_{\mathcal{I}}} P_{S_{\mathcal{I}}|S_{0} }(s_{\mathcal{I}}|s_{0})
e^{\rho D( P_{Z|B_1,B_2,S_0=s_0,\Phi_{\mathrm{p}}}\| \tilde{P}_{Z|S_{0}=s_{0},\Phi_{\mathrm{p}}} )} \Label{1-13-20} \\
= &
\rE_{\Phi_{\mathrm{p}} } \rE_{F',G'|\Phi_{\mathrm{p}}} 
\sum_{s_{0}} P_{S_{0}}(s_{0}) 
\sum_{s_{\mathcal{I}}} P_{S_{\mathcal{I}}|S_{0} }(s_{\mathcal{I}}|s_{0})
\nonumber \\
& \qquad \qquad \qquad \cdot e^{\rho D( P_{Z|B_1,B_2,S_0=s_0,\Phi_{\mathrm{p}}}\| \tilde{P}_{Z|S_{0}=s_{0},\Phi_{\mathrm{p}}} )} \nonumber \\
\le &
\sum_{s_{0}} P_{S_{0}}(s_{0}) 
\sum_{s_{\mathcal{I}}} P_{S_{\mathcal{I}}|S_{0} }(s_{\mathcal{I}}|s_{0})
\nonumber \\
& \cdot
\rE_{\Phi_{\mathrm{p}} } 
(1+
e^{- \rho H_{1+\rho}(S_{\mathcal{I}^c}|S_{\mathcal{I}}=s_{\mathcal{I}},S_{0}=s_{0})}
e^{\psi ( \rho| P_{Z|B_1,B_2,S_0,\Phi_{\mathrm{p}}}, P_{B_1,B_2} )} )
\Label{ineq-22-a} \\ 
\le &
\sum_{s_{0}} P_{S_{0}}(s_{0}) 
\sum_{s_{\mathcal{I}}} P_{S_{\mathcal{I}}|S_{0} }(s_{\mathcal{I}}|s_{0})
\nonumber \\
& \cdot
(1+
e^{- \rho H_{1+\rho}(S_{\mathcal{I}^c}|S_{\mathcal{I}}=s_{\mathcal{I}},S_{0}=s_{0})}
|{\cal B}_1|^{\rho} 
e^{E_0(\rho| P_{Z|V}, P_{V|U},P_U)}
) \Label{ineq-23}\\
= &
1+
e^{- \rho H_{1+\rho}(S_{\mathcal{I}^c}|S_{\mathcal{I}},S_{0})}
|{\cal B}_1|^{\rho} 
e^{E_0 (\rho| P_{Z|V}, P_{V|U},P_U)} ,
\nonumber
\end{align}
where 
(\ref{1-13-20}), (\ref{ineq-22-a}), and (\ref{ineq-23}) 
follow from
(\ref{1-13-1}),
the second inequality in Lemma \ref{lem2-1}, and (\ref{ineq-13}), respectively.
Then, we obtain (\ref{ineq-11}).

Further, (\ref{ineq-33}) and (\ref{ineq-34}) 
follow from Lemma \ref{lem0}.
\endIEEEproof

\subsection{Group Symmetry}\Label{s5-4}
Next, when the channel has a nice property with respect to group action,
we treat the upper bound of the leaked information with a fixed BCD code $\varphi_{\mathrm{p}}$.
That is, we discuss the upper bound given in Lemma \ref{lem2-1}
under an assumption for group action,
which will be given latter.
The following analysis is required for evaluation of universal coding
in Sections \ref{s9} and \ref{s10}
and a practical code construction in Subsection \ref{s8-1-2}.

For simplicity, we first discuss the case with no common message, i.e., 
$|\mathcal{S}_0|=1$ and $|\mathcal{B}_1|=1$.
Assume that a group $\mathcal{G}$ acts on $\mathcal{V}$ and $\mathcal{Z}$.
The action of $g \in \mathcal{G}$ is written as 
$g \cdot v$ and $g \cdot z$ for $v \in \mathcal{V}$ and $z \in \mathcal{Z}$.
Then, due to Eqs.~(\ref{12-18-4}), (\ref{12-18-5}), and (\ref{12-18-6}),
we have
\begin{align*}
(g^{-1} \circ P_{Z|V}\circ g)  (z| v)&=P_{Z|V}(g\cdot z|g\cdot v) \\
(g^{-1} \circ P_{V}) (v)     &=P_{V}(g\cdot v) .
\end{align*}
Then, the set $\mathcal{V}$ can be divided to orbits $\{\mathcal{V}_o \}_{o\in O}$ by the action of $\mathcal{G}$.
The set $O$ of indexes of the orbits is called the orbit space.
Given a code $\varphi_{\mathrm{p}}$ as an injective map from $\mathcal{B}_2$ to $\mathcal{V}$, 
Recall that we denote the uniform distribution on the image $\im \varphi_{\mathrm{p}}$
by $P_{\mix, \im \varphi_{\mathrm{p}}}$,
and
we define the distribution $P_{\varphi_{\mathrm{p}}}(o)
:= |\im \varphi_{\mathrm{p}}\cap \mathcal{V}_o |/|\im \varphi_{\mathrm{p}}|$
on the orbit space $O$
and the distribution 
$\overline{P}_{\varphi_{\mathrm{p}}}$
on $\mathcal{V}$ 
by $\overline{P}_{\varphi_{\mathrm{p}}}(v)
:=\frac{P_{\varphi_{\mathrm{p}}}(o)}{|\mathcal{V}_o|}$ 
when the element $v$ belongs to the subset $\mathcal{V}_o$.  
Then, we obtain the following lemma.
\begin{lemma}\Label{l12-3-1}
When the relation 
$g^{-1} \circ P_{Z|V}\circ g=P_{Z|V}$ holds for any $g \in \mathcal{G}$, $v \in \mathcal{Z}$, and $v \in \mathcal{V}$,
\begin{align}
&\psi ( \rho| P_{Z|B_2,\Phi_{\mathrm{p}}=\varphi_{\mathrm{p}}}, 
P_{\mix, \mathcal{B}_2} 
) 
=
\psi ( \rho| P_{Z|V}, P_{\mix, \im \varphi_{\mathrm{p}}} ) \nonumber \\
\le &
E_0 ( \rho| P_{Z|V}, P_{\mix, \im \varphi_{\mathrm{p}}} )
\le
E_0(\rho|P_{Z|V},\overline{P}_{\varphi_{\mathrm{p}}} )
\Label{12-2-1} .
\end{align}
In particular,
when the image $\im \varphi_{\mathrm{p}}$ is included in one orbit $\mathcal{V}_o$,
$\overline{P}_{\varphi_{\mathrm{p}}}$ is the uniform distribution on the orbit $\mathcal{V}_o$.
\end{lemma}

\begin{proof}
Since 
$e^{E_0 ( \rho| g^{-1} \circ P_{Z|V}\circ g, g^{-1} \circ P_{\mix, \varphi_{\mathrm{p}}} )}
=
e^{E_0 ( \rho| P_{Z|V}, g^{-1} \circ P_{\mix, \varphi_{\mathrm{p}}} )}$, 
we have
\begin{align}
& 
e^{\psi ( \rho| P_{Z|V}, P_{\mix, \im \varphi_{\mathrm{p}}} )}
\le
e^{E_0 ( \rho| P_{Z|V}, P_{\mix, \im \varphi_{\mathrm{p}}} )} \nonumber \\
=&
\sum_{g\in \mathcal{G}} 
\frac{1}{|\mathcal{G}|}
e^{E_0 ( \rho| g^{-1} \circ P_{Z|V}\circ g, g^{-1} \circ P_{\mix, \im \varphi_{\mathrm{p}}} )} \nonumber \\
=&
\sum_{g\in \mathcal{G}} 
\frac{1}{|\mathcal{G}|}
e^{E_0 ( \rho| P_{Z|V}, g^{-1} \circ P_{\mix, \im \varphi_{\mathrm{p}}} )} \nonumber \\
\le &
e^{E_0 ( \rho| P_{Z|V}, \sum_{g\in \mathcal{G}} 
\frac{1}{|\mathcal{G}|}
g^{-1} \circ P_{\mix, \im \varphi_{\mathrm{p}}} )} 
= 
e^{E_0 ( \rho| P_{Z|V}, \overline{P}_{\varphi_{\mathrm{p}}})} 
\Label{ineq-11-1d} .
\end{align}
\end{proof}

Next, we consider the general case.
Assume that a group $\mathcal{G}$ acts on $\mathcal{U}$, $\mathcal{V}$, and $\mathcal{Z}$.
The code pair code $(\varphi_{\mathrm{c}},\varphi_{\mathrm{p}})$ is a map 
from $\mathcal{S}_0 \times \mathcal{B}_1\times \mathcal{B}_2$
to $\mathcal{U} \times \mathcal{V}$.
For a given $s_0 \in \mathcal{S}_0$, we define 
the maps 
$\varphi_{\mathrm{c}}|_{S_0=s_0} $
and
$(\varphi_{\mathrm{c}},\varphi_{\mathrm{p}})|_{S_0=s_0}$
by 
\begin{align*}
\varphi_{\mathrm{c}}|_{S_0=s_0}(b_1)
&:=
\varphi_{\mathrm{c}}(s_0,b_1)
\in \mathcal{U}
\\
(\varphi_{\mathrm{c}},\varphi_{\mathrm{p}})|_{S_0=s_0}(b_1,b_2)
&:=
(\varphi_{\mathrm{c}}(s_0,b_1),\varphi_{\mathrm{p}}(s_0,b_1,b_2))
\in \mathcal{U} \times \mathcal{V}.
\end{align*}
For simplicity, we assume that
the image of $(\varphi_{\mathrm{c}},\varphi_{\mathrm{p}})|_{S_0=s_0}$ 
is included in one orbit in $\mathcal{U} \times \mathcal{V}$, 
which is denoted by $(\mathcal{V}\times \mathcal{U})_o$.
Hence,
the image of $\varphi_{\mathrm{c}}|_{S_0=s_0}$ is included in one orbit in $\mathcal{U}$, which is denoted by $\mathcal{U}_o$.
\begin{lemma}\Label{l12-3-2}
Assume that
the image of $(\varphi_{\mathrm{c}},\varphi_{\mathrm{p}})|_{S_0=s_0}$ 
is included in a orbit $(\mathcal{V}\times \mathcal{U})_o$
in $\mathcal{U} \times \mathcal{V}$.
When the relation 
$g^{-1} \circ P_{Z|V}\circ g=P_{Z|V}$ holds for any $g \in \mathcal{G}$, 
the relation
\begin{align}
&e^{\psi ( \rho| P_{Z|B_1,B_2,S_0=s_0,\Phi_{\mathrm{p}}=\varphi_{\mathrm{p}}}, P_{\mix, \mathcal{B}_1,\mathcal{B}_2} ) }\nonumber \\
\le &
|{\cal B}_1|^\rho
e^{E_0(\rho|P_{Z|V},P_{V|U, \mix, (\mathcal{V}\times \mathcal{U})_o}, P_{\mix, \mathcal{U}_o} )}
\Label{12-2-2} 
\end{align}
holds for any $s_0\in  \mathcal{S}_0$.
\end{lemma}

\begin{proof}
For a given $u \in \mathcal{U}_o$, 
we define the stabilizer of $u$ by
$\mathcal{H}_u:=
\{ g \in \mathcal{G}| g \cdot u=u\}$,
which is a subgroup of $\mathcal{G}$.
For arbitrary $u \in \mathcal{U}_o$,
we define the two subsets 
$\mathcal{V}_u', \mathcal{V}_u \subset \mathcal{V}$ by 
$\{u\} \times\mathcal{V}_u'
=\im (\varphi_{\mathrm{c}},\varphi_{\mathrm{p}})|_{S_0=s_0}
\cap (\{u\} \times \mathcal{V})$
and
$\{u\} \times\mathcal{V}_u
=(\mathcal{V}\times \mathcal{U})_o
\cap (\{u\} \times \mathcal{V})$.
Then, we obtain the relations
\begin{align}
P_{V|U=u, \mix, \im (\varphi_{\mathrm{c}},\varphi_{\mathrm{p}})|_{S_0=s_0}}
&=P_{V|\mix, \mathcal{V}_u'}\Label{12-3-15b}\\
P_{V|U=u, \mix, (\mathcal{V}\times \mathcal{U})_o}
&=P_{V|\mix, \mathcal{V}_u} .
\Label{12-3-15}
\end{align}
For the definitions of the left hand  sides,
see (\ref{12-19-10}).
We can also show that
\begin{align*}
\cup_{g\in \mathcal{H}_u}
\{g \cdot v |v \in  \mathcal{V}_u'\}
=\mathcal{V}_u.
\end{align*}
Since 
$g^{-1} \circ P_{V|U=g \cdot u, \mix, (\mathcal{V}\times \mathcal{U})_o}
=P_{V|U=u, \mix, (\mathcal{V}\times \mathcal{U})_o}$,
the condition $g^{-1} \circ P_{Z|V}\circ g=P_{Z|V}$  implies that 
\begin{align}
&
e^{E_0 ( \rho| g^{-1} \circ P_{Z|V} \circ g,g^{-1} \circ P_{V|U=g \cdot u, \mix, (\mathcal{V}\times \mathcal{U})_o}
)} \nonumber \\
=&
e^{E_0 ( \rho| P_{Z|V} ,P_{V|U=u, \mix, (\mathcal{V}\times \mathcal{U})_o}
)}.\Label{12-3-16}
\end{align}
We obtain the following relations.
In the following derivation,
(\ref{12-26-5a})
and
(\ref{12-3-14}) 
follow from 
(\ref{12-26-5}) and
(\ref{12-3-16}), respectively.
Applying Lemma \ref{l12-3-1} to the case of $\mathcal{G}=\mathcal{H}_u$,
we obtain the inequality (\ref{12-3-12}) from (\ref{12-3-15b}) and (\ref{12-3-15}).
\begin{align}
& e^{\psi ( \rho| P_{Z|B_1,B_2,S_0=s_0,\Phi_{\mathrm{p}}=\varphi_{\mathrm{p}}}, P_{\mix, \mathcal{B}_1,\mathcal{B}_2} )} \nonumber \\
\le &
\sum_{z}
\sum_{b_1}
\frac{|{\cal B}_1|^\rho}{|{\cal B}_1|}
(\sum_{b_2}
\frac{1}{|{\cal B}_2|}
P_{Z|V}(z|\varphi_{\mathrm{p}}(s_0,b_1,b_2) )^{\frac{1}{1-\rho}}  )^{1-\rho} 
\Label{12-26-5a}\\
= &
|{\cal B}_1|^\rho
\sum_{z}
\sum_{u}P_{U, \mix, \im \varphi_{\mathrm{c}}|_{S_0=s_0}
} (u)
\nonumber \\
& \hspace{13ex}
\cdot\Bigl[
\sum_{v}
P_{V|U=u, \mix, 
\im (\varphi_{\mathrm{c}},\varphi_{\mathrm{p}})|_{S_0=s_0}} (v)
P_{Z|V}(z|v )^{\frac{1}{1-\rho}}  \Bigr]^{1-\rho} 
\nonumber \\
= & 
|{\cal B}_1|^\rho
\sum_{u}P_{U, \mix, 
\im \varphi_{\mathrm{c}}|_{S_0=s_0}
} (u)
e^{E_0 ( \rho| P_{Z|V}, P_{V|U=u, \mix, 
\im (\varphi_{\mathrm{c}},\varphi_{\mathrm{p}})|_{S_0=s_0}}
)} \nonumber \\
\le & 
|{\cal B}_1|^\rho
\sum_{u}P_{U, \mix, 
\im \varphi_{\mathrm{c}}|_{S_0=s_0}
} (u)
e^{E_0 ( \rho| P_{Z|V}, P_{V|U=u, \mix, (\mathcal{V}\times \mathcal{U})_o}
)} \Label{12-3-12}\\
= & 
|{\cal B}_1|^\rho
\sum_{g\in \mathcal{G}} 
\frac{1}{|\mathcal{G}|}
\sum_{u}P_{U, \mix, \im \varphi_{\mathrm{c}}|_{S_0=s_0}} (g \cdot u)
e^{E_0 ( \rho| P_{Z|V} ,P_{V|U= u, \mix, (\mathcal{V}\times \mathcal{U})_o}
)} \Label{12-3-14}\\
= & 
|{\cal B}_1|^\rho
\sum_{u}P_{U, \mix, (\mathcal{V}\times \mathcal{U})_o} (u)
e^{E_0 ( \rho| P_{Z|V} ,P_{V|U= u, \mix, (\mathcal{V}\times \mathcal{U})_o}
)} \nonumber \\
= & 
|{\cal B}_1|^\rho
e^{E_0 ( \rho| P_{Z|V} ,P_{V|U= u, \mix, (\mathcal{V}\times \mathcal{U})_o},P_{U, \mix, \mathcal{U}_o} )}.
\nonumber 
\end{align}
\end{proof}

\begin{remark}
\Label{s5-5}
Section \ref{s5} deals with the security when
a channel $P_{Z|V}$ from $\mathcal{V}$
to $\mathcal{Z}$ is given.
The discussion of Section \ref{s5} can be extended to 
the case with a channel $P_{Z|VU}$
from $\mathcal{V}\times \mathcal{U}$
to $\mathcal{Z}$.
In this case, 
$\psi(\rho|P_{Z|V},P_{V|U},P_U)$
and 
$E_0(\rho|P_{Z|V},P_{V|U},P_U)$
are modified to 
\begin{align}
&\psi(\rho|P_{Z|V,U},P_{V|U},P_U) \nonumber \\
:=& 
\log \sum_u P_U(u) \sum_v P_{V|U}(v|u) \sum_z P_{Z|V,U}(z|v,u)^{1+\rho} P_{Z|U}(z|u)^{-\rho}\nonumber \\
&E_0(\rho|P_{Z|V,U},P_{V|U},P_U) \nonumber \\
:=&
\log \sum_u P_U(u) \sum_z\left(
\sum_{v} P_{V|U}(v|u) P_{Z|V,U}(z|v,u)^{1/(1-\rho )} \right)^{1-\rho}.\nonumber 
\end{align}
All of the discussions in this section are still valid 
even if we replace $P_{Z|V}(z|v)$ by $P_{Z|V,U}(z|v,u)$ with the above modification.
These extensions to the channel
$P_{Z|VU}$ will be used in Section \ref{s9}
as a mathematical tool for our proof.
\end{remark}

\section{Asymptotic Conditional Uniformity}\Label{s6-2-}
\subsection{Three Kinds of Asymptotic Conditional Uniformity Conditions}\Label{s6-2-2}
In SMC, we use the message $S_{\mathcal{I}^c}$ as a dummy message.
The secrecy of the message $S_{\mathcal{I}}$
depends on the conditional entropy of the dummy message $S_{\mathcal{I}^c}$
given $S_{\mathcal{I}}$.
Then, it is not easy to treat the asymptotic performance 
without fixing the conditional entropy rate of the dummy message $S_{\mathcal{I}^c}$.
Hence, we need to characterize the randomness of the dummy message $S_{\mathcal{I}^c}$
under the condition with respect to $S_{\mathcal{I}}$
in the asymptotic setting.
In order to treat the capacity region and the strong security,
we introduce several kinds of asymptotic conditional uniformity conditions
for a general sequence of
source distributions $P_{S_{{\cal T},n}}$
on the message sets ${\cal S}_{i,n}$ for $i=0,1,\ldots,T$
satisfying
the relations
$|{\cal S}_{i,n}|:=e^{n R_{i}}$ for $i=0,1,\ldots,T$.

\begin{definition}
The sequence of distributions $P_{S_{{\cal T},n}}$
of the dummy message $S_{\mathcal{I}^c,n}$ 
is called {\it weak asymptotically conditionally uniform} (WACU)
for a non-empty proper subset $\mathcal{I}(\neq \emptyset) \subsetneq \{1,\ldots, T\}$
when 
\begin{align}
\lim_{n\to \infty}
\frac{1}{n}H(S_{\mathcal{I}^c,n}|S_{\mathcal{I},n},S_{0,n})
=
\sum_{i\in \mathcal{I}^c} R_i \Label{11-9-3}.
\end{align}
\end{definition}

\begin{definition}
The sequence of distributions $P_{S_{{\cal T},n}}$
of the dummy message $S_{\mathcal{I}^c,n}$ 
is called {\it semi-weak asymptotically conditionally uniform} (SWACU)
for a non-empty proper subset $\mathcal{I}(\neq \emptyset) \subsetneq \{1,\ldots, T\}$
when
the relation
\begin{align}
\lim_{n\to \infty}
\frac{1}{n}H_{1+\frac{\delta}{n}}(S_{\mathcal{I}^c,n}|S_{\mathcal{I},n},S_{0,n})
=
\sum_{i\in \mathcal{I}^c} R_i \Label{11-9-3-d}
\end{align}
holds for any $\delta>0$.
\end{definition}

\begin{definition}\Label{D12-18-2}
Fix an arbitrary fixed real number $\epsilon \ge 0$.
The sequence of distributions $P_{S_{{\cal T},n}}$
of the dummy message $S_{\mathcal{I}^c,n}$
is called $\epsilon$-{\it strong asymptotically conditionally uniform} 
($\epsilon$-SACU) for 
for a non-empty proper subset $\mathcal{I}(\neq \emptyset) \subsetneq \{1,\ldots, T\}$
when the relation
\begin{align}
\underline{H}_{\log}(\mathcal{I}^c)
\ge
\sum_{i\in \mathcal{I}^c} (R_i -\epsilon),
\Label{11-9-3-b}
\end{align}
where
\begin{align}
\underline{H}_{\log}(\mathcal{I}^c):=
\lim_{\delta \to \infty} \liminf_{n\to \infty}
\frac{1}{n}H_{1+\frac{\delta \log n}{n}}(S_{\mathcal{I}^c,n}|S_{\mathcal{I},n},S_{0,n}).
\Label{11-9-3-bb}
\end{align}
Since 
$\rho-1$ behaves as $\delta\frac{\log n}{n}$ 
in (\ref{11-9-3-bb}),
we use the subscript $\log$ in (\ref{11-9-3-bb}).
In the case of $\epsilon=0$, 
it is simply called {\it strong asymptotically conditionally uniform}
 (SACU)
for a non-empty proper subset $\mathcal{I}(\neq \emptyset) \subsetneq \{1,\ldots, T\}$.
In this case, the condition (\ref{11-9-3-b}) is equivalent with
\begin{align}
\underline{H}_{\log}(\mathcal{I}^c)
=
\sum_{i\in \mathcal{I}^c} R_i 
\Label{11-9-3-b2}
\end{align}
because the opposite inequality holds due to the cardinalities of respective 
message sets.
\end{definition}

In particular, 
when 
the sequence of distributions $P_{S_{{\cal T},n}}$
of the dummy message $S_{\mathcal{I}^c,n}$ 
is WACU
for any non-empty proper subset $\mathcal{I}\subsetneq \{1,\ldots, T\}$,
it is simply called WACU.
We sometimes fix a family ${\bf J}$ of non-empty proper subsets $\mathcal{I}$ of $\{1, \ldots, T\}$,
and treat only non-empty proper subsets $\mathcal{I} \in {\bf J}$.
In this case, we call 
the sequence of distributions $P_{S_{{\cal T},n}}$
WACU for a family ${\bf J}$
when it is WACU
for any non-empty proper subset $\mathcal{I} \in {\bf J}$.
We also apply these conventions 
to SWACU, SACU, and $\epsilon$-SACU.
The relations among the above conditions are summarized as follows.

\begin{theorem}\Label{th-12-26-1}
The following relations hold.
\[
\begin{array}{ccccc}
 \hbox{SACU} & \Rightarrow & \hbox{SWACU} & \Leftrightarrow & \hbox{WACU} \\
 \Downarrow   &     & &&\\
 \epsilon\hbox{-SACU} & & &&
\end{array}
\]
\end{theorem}

\begin{proof}
The equivalence between SWACU and WACU will be shown as Lemma \ref{lem22} in Appendix \ref{as1}.
Other relations are trivial from their definitions.
\end{proof}

In fact, as is shown in Subsection  \ref{as2},
even if the original information does not satisfy
the WACU condition (\ref{11-9-3}) or the SACU condition (\ref{11-9-3-b2}) with $\epsilon=0$,
if we apply Slepian-Wolf data compression \cite{Slepian} to the original sources
so that the total compressed rate of the whole data attains the entropy rate of the whole sources,
the compressed data satisfies the WACU condition (\ref{11-9-3}) and/or the SACU condition (\ref{11-9-3-b2}).
Similarly,
as is shown in Subsection \ref{as2},
even if the original information does not satisfy
the $\epsilon$-SACU condition (\ref{11-9-3-b}),
if we apply Slepian-Wolf data compression \cite{Slepian} to the original sources
so that the error probability goes to zero exponentially
and the difference between the entropy rate of the whole system and 
the total compressed rate is less than $\epsilon$,
the compressed data satisfies 
the $\epsilon$-SACU condition (\ref{11-9-3-b}).


\subsection{Asymptotic Conditional Uniformity Conditions and Slepian-Wolf Data Compression}\Label{as2}
In Subsection \ref{s7-2}, 
we have introduced several 
asymptotic conditional uniformity conditions.
In this subsection, 
we clarify which kind of data compressed by Slepian-Wolf compression satisfies
asymptotic conditional uniformity conditions.
For this purpose, we assume that
the random variables $S_{\cal T}^n=(S_0^n, S_1^n,\ldots S_T^n)$ are subject to the 
$n$-fold stationary ergodic joint distribution $P_{S_{\cal T}}^n$ 
over $\mathcal{S}_{0}^n\times \mathcal{S}_{1}^n\times \cdots \times \mathcal{S}_{T}^n$.
The symbols $H(S_0,\ldots, S_T)$, $H(S_{\cal I})$, and 
$H(S_0,S_{\cal I})$ describe the entropy rates of the respective random variables for any 
non-empty proper subset $\mathcal{I}\subsetneq \{1,\ldots, T\}$.
The following theorem treats the WACU condition for the compressed data.

\begin{theorem}\Label{th-12-23-3}
We choose the asymptotic compression rates $R_0,\ldots, R_T$
such that $\sum_{i=0}^T R_i = H(S_0,\ldots, S_T)$
and 
$\sum_{i \in {\cal I}} R_i \le H(S_{\cal I})$, 
$R_0+\sum_{i \in {\cal I}} R_i \le H(S_0,S_{\cal I})$ 
for any 
non-empty proper subset
$\mathcal{I}\subsetneq \{1,\ldots, T\}$.
Choose a sequence $m_n$ such that $\frac{m_n}{n} \to 1$. 

Let 
$\varphi_i^n: \mathcal{S}_{i}^{m_n} \to \{1,\ldots, \lceil e^{n R_i} \rceil\}$
be Slepian-Wolf encoders
and
$\hat{\varphi}^n: \{1,\ldots, \lceil e^{n R_0} \rceil\}\times \cdots \times \{1,\ldots, \lceil e^{n R_T} \rceil\} 
\to \mathcal{S}_{0}^{m_n} \times \cdots \times \mathcal{S}_{T}^{m_n}$ be its Slepian-Wolf decoder 
for any positive integer $n$
such that
\begin{align}
\varepsilon(\varphi^n,\hat{\varphi}^n)
:=\mathrm{Pr} \{  
(S_0^{m_n}, \ldots S_T^{m_n})
\neq
\hat{\varphi}^n (\varphi_0^n(S_0^{m_n}),\ldots, \varphi_T^n(S_T^{m_n}))
\} 
&\to 0,\Label{12-23-4}
\end{align}
where $\varphi^n=(\varphi_0^n,\ldots, \varphi_T^n) $.
Then, we have
\begin{align}
\lim_{n \to \infty}
\frac{1}{n}H(
(\varphi_i^n(S_i^{m_n}))_{i \in {\cal I}^c}|
(\varphi_i^n(S_i^{m_n}))_{i \in {\cal I}},\varphi_0^n(S_0^{m_n}))
&= \sum_{i \in {\cal I}^c} R_i \Label{12-23-3}
\end{align} 
for any non-empty proper subset
$\mathcal{I}\subsetneq \{1,\ldots, T\}$.
That is, the compressed data satisfies the WACU condition (\ref{11-9-3}).
\end{theorem}

\begin{remark}\Label{rem1}
Theorem \ref{th-12-23-3} gives only a sufficient condition (\ref{12-23-4})
for the compressed data satisfying the WACU condition. 
For construction of the compressed data satisfying the WACU condition,
it is needed to clarify the existence of a code whose 
the compressed data satisfying the condition (\ref{12-23-4}).

In the single terminal 
Markovian case, 
under the condition $\frac{m_n}{n} \to 1$, 
the second order asymptotic analysis in \cite[Section VII]{hayashi08} guarantees that there exists 
sequence of the pairs of an encoder and a decoder satisfying (\ref{12-23-4}) if and only if $\frac{n-m_n}{\sqrt{n}} \to \infty$.
The extension to the Slepian-Wolf coding has been done with the i.i.d. case \cite{TK12}.
For the boundary of the attainable rate region of Slepian-Wolf data compression in the stationary ergodic case \cite{Cover},
we can show the existence of 
the pair of an encoder and a decoder satisfying (\ref{12-23-4})
with a suitable choice of the sequence $m_n$ under the condition $\frac{m_n}{n} \to 1$ in the following way\footnote{The following discussion does not require any property for source distribution.
That is, 
it can be extended to 
Slepian-Wolf data compression for the general information source 
\cite{M-K} in the sense of Han-Verd\'{u}\cite{han93}.}.

Choose the rates $R_i+\delta$ for any $\delta>0$.
Let 
$\varphi_{i,\delta}^n: \mathcal{S}_{i}^{n} \to \{1,\ldots, \lceil e^{n R_i(1+\delta)} \rceil\}$
be Slepian-Wolf encoders
and
$\hat{\varphi}_{\delta}^n: \{1,\ldots, \lceil e^{n R_0(1+\delta)} \rceil\}\times \cdots \times \{1,\ldots, \lceil e^{n R_T(1+\delta)} \rceil\} 
\to \mathcal{S}_{0}^{n} \times \cdots \times \mathcal{S}_{T}^{n}$
be its Slepian-Wolf decoder 
such that $\varepsilon(\varphi_{\delta}^n,\hat{\varphi}_{\delta}^n) \to 0$
with $\varphi_{\delta}^n:=(\varphi_{0,\delta}^n,\ldots,
\varphi_{T,\delta}^n) $.
For an arbitrary integer $l$, we choose an integer $n_l$ such that 
the inequality $\varepsilon(\varphi_{1/l}^{n},\hat{\varphi}_{1/l}^{n}) \le \frac{1}{l}$ holds for any $n \ge n_l$. 
We define $m_n$ to be $m_n := \lfloor \frac{n}{1+1/l} \rfloor$, where we choose $l$ such that $n_l \le n < n_{l+1} $.
Here, we can choose the integer $l$ for any positive integer $n$.
The construction guarantees that $R_i (1+1/l) (m_n+1) \ge  R_i n \ge R_i (1+1/l) m_n$.
We define the pair of an encoder and a decoder $(\varphi^{n},\hat{\varphi}^{n})$
to be $(\varphi_{1/l}^{m_n},\hat{\varphi}_{1/l}^{m_n})$.
That is, $\varphi^{n}_i$ is chosen to be $\varphi_{i,1/l}^{m_n}$.
Our choices guarantee that
$\frac{m_n}{n} \cong \frac{1}{1+1/l} \to 1$,
and 
$\varepsilon(\varphi^{n},\hat{\varphi}^{n})
=\varepsilon(\varphi_{1/l}^{m_n},\hat{\varphi}_{1/l}^{m_n})
\le 1/l \to 0$.
In this construction,
the encoder $\varphi^{n}_i$ is a map from
$\mathcal{S}_{i}^{m_n}$
to
$\{1,\ldots, \lceil e^{m_n R_i(1+1/l)} \rceil\}
\subset \{1,\ldots, \lceil e^{n R_i} \rceil\}$
because $R_i n \ge  m_n R_i(1+1/l)$.
Hence, 
the pair of an encoder and a decoder $(\varphi^{n},\hat{\varphi}^{n})$
satisfies the assumption of Theorem \ref{th-12-23-3}.
\end{remark}

\begin{proofof}{Theorem \ref{th-12-23-3}}
Assume that
the code $\varphi^n=(\varphi_0^n,\ldots, \varphi_T^n)$ 
satisfies (\ref{12-23-4}).
Since the stationary ergodic source satisfies the strong converse property 
for the data compression, 
due to 
folklore source coding theorem
\cite[Theorem 3.1]{han-folklore},
the code $\varphi^n$ satisfies 
\begin{align*}
\lim_{n \to \infty}
\frac{1}{n}H(\varphi_0^n(S_0^{m_n}),\ldots, \varphi_T^n(S_T^{m_n}))
=\sum_{i=0}^T R_i.
\end{align*} 
Since 
$\frac{1}{n}H(
(\varphi_i^n(S_i^{m_n}))_{i \in {\cal I}^c}|
(\varphi_i^n(S_i^{m_n}))_{i \in {\cal I}},\varphi_0^n(S_0^{m_n}))
\le \sum_{i \in {\cal I}^c} R_i$
and
$\frac{1}{n}H(( \varphi_i^n(S_i^{m_n} ))_{i \in {\cal I}},\varphi_0^n (S_0^{m_n}))
\le R_0+\sum_{i \in {\cal I}} R_i$,
we obtain (\ref{12-23-3}).
\end{proofof}


In Subsection \ref{s7-2}, 
we have introduced 
the $\epsilon$-strong asymptotic conditional uniformity (\ref{11-9-3-b})
as another kind of asymptotic conditional uniformity.
The following theorem shows the $\epsilon$-strong asymptotic conditional uniformity for the compressed data.

\begin{theorem}\Label{th-12-23-1}
We fix a sequence $m_n$ such that $\frac{m_n}{n} \to 1$. 
We also fix an arbitrary $\epsilon \ge 0$
and an arbitrary non-empty proper subset
$\mathcal{I}\subsetneq \{1,\ldots, T\}$.
Then,
we choose the asymptotic compression rates $R_0,\ldots, R_T$ such that 
$\sum_{i=0}^T R_i = H(S_0,\ldots, S_T)+\epsilon$
and 
\begin{align}
\sum_{i \in {\cal I}} R_i \le H(S_{\cal I}), \quad
R_0+\sum_{i \in {\cal I}} R_i \le H(S_0,S_{\cal I}).
\Label{1-22-1}
\end{align}
We choose 
a Slepian-Wolf encoder
$\varphi^n=(\varphi^n_0, \ldots, \varphi^n_T)$ and a Slepian-Wolf decoder $\hat{\varphi}^n$ 
as a map 
$\varphi_i^n: \mathcal{S}_{i}^{m_n} \to \{1,\ldots, \lceil e^{n R_i} \rceil\}$
and a map
$\hat{\varphi}^n: \{1,\ldots, \lceil e^{n R_0} \rceil\}\times \cdots \times \{1,\ldots, \lceil e^{n R_T} \rceil\} 
\to \mathcal{S}_{0}^{m_n} \times \cdots \times \mathcal{S}_{T}^{m_n}$.
When the decoding error probability $\varepsilon(\varphi^n,\hat{\varphi}^n)$ 
satisfies that
\begin{align}
\varepsilon(\varphi^n,\hat{\varphi}^n) p(n)\to 0
\Label{12-26-1}
\end{align}
for any polynomial $p(n)$,
the relation
\begin{align}
&\liminf_{n \to \infty}
\frac{1}{n}H_{1+\rho_n}
((\varphi_i^n (S_i^n ) )_{i \in {\cal I}^c}|
(\varphi_i^n (S_i^n ) )_{i \in {\cal I}},\varphi_0^n (S_0^n ) ) \nonumber\\
\ge & (\sum_{i \in {\cal I}^c} R_i) -\epsilon \ge \sum_{i \in {\cal I}^c} (R_i-\epsilon) 
\Label{12-23-1}
\end{align} 
holds with
$\rho_n=\frac{\delta \log n}{n}$ for any $\delta>0$.
That is,
the compressed data $(\varphi_0^n(S_0^n),\ldots, \varphi_T^n(S_T^n))$
satisfies the $\epsilon$-SACU condition (\ref{11-9-3-b})
for the non-empty proper subset $\mathcal{I}\subsetneq \{1,\ldots, T\}$.
In particular, in the case of $\epsilon=0$,
the compressed data $(\varphi_0^n(S_0^n),\ldots, \varphi_T^n(S_T^n))$
satisfies the SACU condition
for the non-empty proper subset $\mathcal{I}\subsetneq \{1,\ldots, T\}$.

Hence, if the relation (\ref{1-22-1}) holds for any non-empty proper subset $\mathcal{I}\subsetneq \{1,\ldots, T\}$,
the compressed data $(\varphi_0^n(S_0^n),\ldots, \varphi_T^n(S_T^n))$
satisfies the $\epsilon$-SACU condition (\ref{11-9-3-b}).
\end{theorem}

\begin{remark}\Label{rem2}
Theorem \ref{th-12-23-1} gives only a sufficient condition 
(\ref{12-26-1}) for 
the compressed data satisfying the $\epsilon$-SACU condition (\ref{11-9-3-b}).
Hence, it is necessary to clarify the existence of a code whose compressed data satisfying the condition (\ref{12-26-1}).

In the i.i.d. case, 
for an arbitrary $\epsilon>0$ and 
an arbitrary sequence $m_n$ satisfying $\lim_{n\to \infty}\frac{m_n}{n}=1$,
there exists 
a sequence of Slepian-Wolf codes
$(\varphi^n,\hat{\varphi}^n)$
with any rate tuples given in Theorem \ref{th-12-23-1}
such that the decoding error probability $\varepsilon(\varphi^n,\hat{\varphi}^n)$ goes to zero exponentially with respect to $n$\cite{e4}.
That is, there exists a Slepian-Wolf code satisfying the condition 
(\ref{12-26-1}) in Theorem \ref{th-12-23-1}.
However, it is not so easy to give a required code in the case of $\epsilon=0$.
In Appendix \ref{as5}, we give such a code
when $m_n:=\frac{n}{1+\frac{c}{n^t}}$ with $t>1/2$ and $\infty> c>0$.
\end{remark}

\subsection{Proof of Theorem \ref{th-12-23-1}}
For the proof of Theorem \ref{th-12-23-1}, we prepare the following lemma for treating the relation between 
the conditional R\'{e}nyi entropy of the compressed data and the decoding error probability. 
The following lemma treats 
the single terminal data compression for a random variable $S$ 
on a set $\mathcal{S}$ in the single-shot setting.
\begin{lemma}\Label{lem-11-24-1}
Any encoder $\varphi: \mathcal{S} \to \{1, \ldots, M\}$ 
and any decoder $\hat{\varphi}:
\{1, \ldots, M\} \to \mathcal{S}$ for a random variable $S$
satisfy 
\begin{align}
e^{-\rho H_{1+\rho}(S)}
\le 
e^{-\rho H_{1+\rho}(\varphi(S) )}
\le
2^{\rho} e^{-\rho H_{1+\rho}(S)}
+2^{\rho}\varepsilon(\varphi,\hat{\varphi})^{1+\rho},\Label{11-24-5}
\end{align}
where 
$\varepsilon(\varphi,\hat{\varphi})$ is the
decoding error probability 
$\mathrm{Pr} \{ S \neq \hat{\varphi} (\varphi(S))\}$.
\end{lemma}

\begin{proof} 
First, we show the first inequality.
Using the inequality $x^{1+\rho}+y^{1+\rho} \le (x+y)^{1+\rho}$ for $x,y\ge 0$,
we obtain
\begin{align*}
\Bigl(\sum_{s \in \varphi^{-1}(i)} P_S(s) \Bigr)^{1+\rho} 
\ge
\sum_{s \in \varphi^{-1}(i)} P_S(s)^{1+\rho} 
\end{align*}
for any $i=1, \ldots, M$.
Hence,
\begin{align*}
& e^{-\rho H_{1+\rho}(\varphi(S) )}
=
\sum_{i=1}^M \Bigl(\sum_{s \in \varphi^{-1}(i)} P_S(s) \Bigr)^{1+\rho}\nonumber \\
\ge &
\sum_{i=1}^M \sum_{s \in \varphi^{-1}(i)} P_S(s)^{1+\rho} 
= 
\sum_{s } P_S(s)^{1+\rho} 
= e^{-\rho H_{1+\rho}(S)},
\end{align*}
which implies the first inequality of (\ref{11-24-5}).

Next, we show the second inequality of (\ref{11-24-5}).
Given an arbitrary element $i$ in the codebook,
we have two cases:
(1) 
The element $s_i:= \hat{\varphi}(i)$ belongs to $\varphi^{-1}(i)$, i.e., 
there 
exists exact one element $s_i \in \varphi^{-1}(i)$ such that
$\hat{\varphi}(\varphi(s_i))=s_i $.
(2) 
There exists no element $s_i \in \varphi^{-1}(i)$ such that
$\hat{\varphi}(\varphi(s_i))=s_i $.
In case (1),
\begin{align*}
& \Bigl(\sum_{s \in \varphi^{-1}(i)} P_S(s)\Bigr)^{1+\rho}
=\Bigl(P_S(s_i)+ \sum_{s \in \varphi^{-1}(i): \hat{\varphi}(\varphi(s)) \neq s} P_S(s)\Bigr)^{1+\rho}\\
=&2^{1+\rho}
\Bigl(\frac{1}{2}P_S(s_i)+ \frac{1}{2}\sum_{s \in \varphi^{-1}(i): \hat{\varphi}(\varphi(s)) \neq s} P_S(s) \Bigr)^{1+\rho} \\
\le &
2^{1+\rho}
\Bigl(\frac{1}{2} P_S(s_i)^{1+\rho} 
+ \frac{1}{2}\Bigl(\sum_{s \in \varphi^{-1}(i): \hat{\varphi}(\varphi(s)) \neq s} P_S(s)\Bigr)^{1+\rho} \Bigr)\\
= &
2^{\rho} P_S(s_i)^{1+\rho} + 2^{\rho}\Bigl(\sum_{s \in \varphi^{-1}(i): \hat{\varphi}(\varphi(s)) \neq s} P_S(s)\Bigr)^{1+\rho}.
\end{align*}
In case (2),
\begin{align*}
\Bigl(\sum_{s \in \varphi^{-1}(i)} P_S(s)\Bigr)^{1+\rho}
=\Bigl(\sum_{s \in \varphi^{-1}(i): \hat{\varphi}(\varphi(s)) \neq s} P_S(s)
\Bigr)^{1+\rho}.
\end{align*}
Hence, we obtain
\begin{align}
& e^{-\rho H_{1+\rho}(\varphi(S) )}
=
\sum_{i} \Bigl(\sum_{s \in \varphi^{-1}(i)} P_S(s)\Bigr)^{1+\rho}\nonumber \\
\le &
2^{\rho} \sum_{i} P_S(s_i)^{1+\rho} 
+ 2^{\rho}\sum_{i}
\Bigl(\sum_{s \in \varphi^{-1}(i): \hat{\varphi}(\varphi(s)) \neq s} P_S(s)
\Bigr)^{1+\rho}\nonumber \\
\le &
2^{\rho} \sum_{s} P_S(s)^{1+\rho} 
+ 2^{\rho}
\Bigl(\sum_{i}\sum_{s \in \varphi^{-1}(i): \hat{\varphi}(\varphi(s)) \neq s} P_S(s)\Bigr)^{1+\rho}\Label{11-24-1} \\
= &
2^{\rho} \sum_{s} P_S(s)^{1+\rho} 
+ 2^{\rho}
\Bigl(\sum_{s : \hat{\varphi}(\varphi(s)) \neq s} P_S(s)\Bigr)^{1+\rho} \nonumber \\
= &
2^{\rho} e^{-\rho H_{1+\rho}(S)}
+2^{\rho}\varepsilon(\varphi,\hat{\varphi})^{1+\rho},\nonumber 
\end{align}
where (\ref{11-24-1}) follow from the inequality
$x^{1+\rho}+y^{1+\rho} \le (x+y)^{1+\rho}$ for $x,y\ge 0$.
Hence, we obtain the second inequality.
\end{proof}

Then, we obtain the following corollary of Lemma \ref{lem-11-24-1}.
The following corollary treats 
the single terminal data compression for 
a general sequence of random variables $S_n$. 
\begin{corollary}\Label{lem-11-23-1}
Let $\varphi^n$ be an encoder and $\hat{\varphi}^n$ be a decoder
for a general sequence of random variables $S_n$.
When the decoding error probabilities 
$\varepsilon(\varphi^n,\hat{\varphi}^n)$
and the sequence $\{\rho_n\}$ of positive real numbers
satisfy
\begin{align}
\lim_{n \to \infty}
\varepsilon(\varphi^n,\hat{\varphi}^n)^{1+\rho_n}
e^{\rho_n H_{1+\rho_n}(S_n)}
=0,\Label{12-19-1}
\end{align}
we have
\begin{align}
\lim_{n \to \infty} \frac{1}{n}H_{1+\rho_n}(\varphi^n( S_n) )
=
\lim_{n \to \infty} \frac{1}{n}H_{1+\rho_n}(S_n).
\end{align}
\end{corollary}

\begin{proofof}{Corollary \ref{lem-11-23-1}}
The inequality $
\lim_{n \to \infty} \frac{1}{n}H_{1+\rho_n}(\varphi^n( S_n) )
\le
\lim_{n \to \infty} \frac{1}{n}H_{1+\rho_n}(S_n)$ 
follows from the first inequality (\ref{11-24-5}).
We show only the inequality 
$\lim_{n \to \infty} \frac{1}{n}H_{1+\rho_n}(\varphi^n( S_n) )
\ge
\lim_{n \to \infty} \frac{1}{n}H_{1+\rho_n}(S_n)$. 
Using the second inequality in (\ref{11-24-5}), we have
\begin{align}
& \lim_{n \to \infty} \frac{1}{n} H_{1+\rho_n}(\varphi^n( S_n) ) 
= \lim_{n \to \infty} \frac{-1}{n\rho_n}
\log e^{- \rho_n H_{1+\rho_n}(\varphi^n( S_n) ) }
\nonumber\\
\ge &
\lim_{n \to \infty} \frac{-1}{n\rho_n}
\log (2^{\rho_n} e^{-\rho_n H_{1+\rho_n}(S_n)}
+2^{\rho_n}\varepsilon (\varphi^n,\hat{\varphi}^n)^{1+\rho_n}) \nonumber \\
= &
\lim_{n \to \infty} \frac{-1}{n\rho_n}
\log (2^{\rho_n} e^{-\rho_n H_{1+\rho_n}(S_n)}) \Label{11-24-3} \\
=&
\lim_{n \to \infty} 
\frac{1}{n} (H_{1+\rho_n}(S_n) -\log 2 )
=
\lim_{n \to \infty} \frac{1}{n}H_{1+\rho_n}(S_n) ,\nonumber 
\end{align}
where (\ref{11-24-3}) follows from the assumption (\ref{12-19-1}).
\end{proofof}

Now, we show Theorem \ref{th-12-23-1}.

\begin{proofof}{Theorem \ref{th-12-23-1}}
For the proof of Theorem \ref{th-12-23-1},
we choose $\rho_n'$ so that $\rho_n'(1-\rho_n')=\rho_n$. 
Since $\lim_{n \to \infty} \frac{m_n}{n}=1$
and
$\rho \ge \rho_n'$ for all $n$,
we have
\begin{align*}
& H_{1+\rho}(S_0,\ldots, S_T)
\le
\liminf_{n \to \infty} \frac{1}{n}H_{1+\rho_n'}
(S_0^{m_n},\ldots, S_T^{m_n}) \\
\le &
\limsup_{n \to \infty} \frac{1}{n}H_{1+\rho_n'}
(S_0^{m_n},\ldots, S_T^{m_n})
\le
H(S_0,\ldots, S_T).
\end{align*}
Since $\rho_n' \to 0$
and 
$\lim_{\rho \to +0}H_{1+\rho}(S_0,\ldots, S_T)=H(S_0,\ldots, S_T)$,
\begin{align}
\lim_{n \to \infty} \frac{1}{n}H_{1+\rho_n'}
(S_0^{m_n},\ldots, S_T^{m_n})
=
H(S_0,\ldots, S_T).
\Label{2-25-1}
\end{align}
Since $\rho_n'$ behaves as $\frac{\delta \log n}{n}$,
due to the relation (\ref{2-25-1}),
the quantity $e^{\rho_n' H_{1+\rho_n'}(S_0^{m_n}, \ldots S_T^{m_n})}$ behaves as
$e^{\delta (\log n)  H(S_0,\ldots, S_T)} =n^{\delta H(S_0,\ldots, S_T)}$.
Since
$\varepsilon(\varphi^n,\hat{\varphi}^n)^{1+\rho_n'} \le \varepsilon(\varphi^n,\hat{\varphi}^n)$,
the condition (\ref{12-26-1})
guarantees the condition (\ref{12-19-1}).
Hence, Corollary \ref{lem-11-23-1} guarantees that
\begin{align*}
\lim_{n \to \infty}
\frac{1}{n}H_{1+\rho_n'}(\varphi_0^n(S_0^{m_n}),\ldots, \varphi_T^n(S_T^{m_n}))
=(\sum_{i=0}^T R_i)-\epsilon.
\end{align*} 
Since
$\log |\varphi_0^n (\mathcal{S}_0^{m_n}) 
\times \prod_{i \in {\cal I}} \varphi_i^n (\mathcal{S}_i^{m_n}) |=
n (R_0+\sum_{i \in {\cal I}} R_i)$,
Corollary \ref{lem-11-23-2} in Appendix \ref{as4} implies 
(\ref{12-23-1}).
\end{proofof}

\section{Secure Multiplex Coding with Common Messages:
Asymptotic Performance}\Label{s6}
In this section,
we treat the asymptotic performance
for the secure multiplex coding with common messages
when the channel is given as the $n$-fold 
discrete memoryless channel of a given broadcast channel $P_{YZ|X}$.
First, 
we treat what performance can be achieved 
by using Code Ensemble \ref{con1} and Theorem \ref{lem2} in Subsection \ref{s5-2}
without any assumption for the distribution of sources.
In the next step,
we define the capacity region under the asymptotic uniformity of information sources.
In SMC, this restriction for the sources is essential for our definition of the capacity region.
After this definition, we concretely give the capacity region.

\subsection{General Sequence of Information Sources}\Label{s6-1}
First, we treat the secure multiplex coding with common messages
with general sequence of information sources.
For a given set of rates $(R_i)_{i=0}^T$,
we give a general sequence of
source distributions $P_{S_{{\cal T},n}}$
on the message sets ${\cal S}_{i,n}$ for $i=0,1,\ldots,T$
satisfying
the relations
$|{\cal S}_{i,n}|:=e^{n R_{i}}$ for $i=0,1,\ldots,T$.
For a given Markov chains $U \to V\to X\to Y Z$,
we give an asymptotic code construction in the following way.



\begin{const}\Label{as-con2}
Let $\varphi_n$ be a code given in Code Ensemble \ref{con2} in Subsection \ref{s5-3}
satisfying
(\ref{ineq-2}),
(\ref{ineq-6-a-}),
(\ref{ineq-6-b-}), and
(\ref{ineq-6-c-}) 
of length $n$ with 
$|{\cal S}_{i,n}|:=e^{n R_i}$ for $i=0,1,\ldots,T$
and a given Markov chain $U \to V \to X$.
\end{const}
The performance of the code $\varphi_n$ of Code Construction \ref{as-con2}
is characterized as follows.
The conditions (\ref{ineq-6-b-}) and (\ref{ineq-6-c-})
guarantee (\ref{Haya-51-w}) and (\ref{Haya-52-w}) given as follows.
\begin{align}
& \liminf_{n\to \infty}\frac{-1}{n}\log 
P_b [P_{Y|V}^n,\Phi_n,P_{S_{{\cal T},n}}] \nonumber \\
\ge &
-\rho \sum_{i=1}^T R_i- 
\max[ 
E_0 (-\rho| P_{Y|V}, P_{V|U},P_U)  , 
E_0 (-\rho| P_{Y|U,V},P_{V,U})] 
,\Label{Haya-51-w}\\
& \liminf_{n\to \infty}\frac{-1}{n}\log 
P_e [P_{Z|V}^n,\Phi_n,P_{S_{{\cal T},n}}]
\ge 
-\rho R_0- E_0(-\rho| P_{Z|U},P_U) 
\Label{Haya-52-w} 
\end{align}
with any $\rho\in (0,1]$.
Further, due to (\ref{ineq-2}), the leaked information for $S_{\mathcal{I},n}$ can be evaluated as
\begin{align}
& \frac{1}{n}
I(S_{\mathcal{I},n}; Z^n| S_{0,n})[P_{Z|V}^n ,\varphi_{a,n},P_{S_{{\cal T},n}}]
\nonumber 
\\
\le &
\Bigl[ \frac{1}{\rho}\psi(\rho| P_{Z|V}, P_{V|U},P_U)
-\frac{1}{n}H_{1+\rho}(S_{\mathcal{I}^c,n}|S_{\mathcal{I},n},S_{0,n})
\Bigr]_+ 
\nonumber 
\\
&+
(T+2)\frac{\log 2 }{n \rho}.
\nonumber 
\end{align}
We substitute $\rho=a/n$ with an arbitrary real $a>0$
and take the limits $n\to \infty$.
Then, \eqref{3-24-21eq} of Lemma \ref{3-24-3L} leads the inequality
\begin{align}
& 
\limsup_{n\to \infty} \frac{1}{n}I(S_{\mathcal{I},n}; Z^n| S_{0,n})[P_{Z|V}^n ,\varphi_{a,n},P_{S_{{\cal T},n}}] \nonumber \\
\le &
\Bigl[ I(V;Z|U) - 
\liminf_{n\to \infty}
\frac{1}{n}H_{1+a/n}(S_{\mathcal{I}^c,n}|S_{\mathcal{I},n},S_{0,n})
\Bigr]_+ \!
+\! (T+2)\frac{\log 2 }{a}.\nonumber
\end{align}
Taking the limits $a\to \infty$,
we obtain
\begin{align}
& 
\limsup_{n\to \infty} \frac{1}{n}I(S_{\mathcal{I},n}; Z^n| S_{0,n})[P_{Z|V}^n ,\varphi_{a,n},P_{S_{{\cal T},n}}] \nonumber\\
\le &
\Bigl[ I(V;Z|U) 
- 
\lim_{a \to \infty}
\liminf_{n\to \infty}
\frac{1}{n}H_{1+a/n}(S_{\mathcal{I}^c,n}|S_{\mathcal{I},n},S_{0,n})
\Bigr]_+ .\Label{11-27-1b}
\end{align}
So, the asymptotic performance of our code 
given in Code Construction \ref{as-con2}
is characterized in 
\eqref{Haya-51-w},
\eqref{Haya-52-w}, and \eqref{11-27-1b}.

In Code Construction \ref{as-con2}, the parameter $R_0$ is chosen to be 
$R_{\mathrm{c}}$ in BCD.
However, to realize the capacity region of SMC, 
we need to choose the parameter $R_0$ to be
a smaller value than $R_{\mathrm{c}}$ in BCD in general. 
To realize such a choice, we introduce another 
code construction by using Code Ensemble \ref{con1} in Subsection \ref{s5-2}.
As is explained in Remark \ref{R-11-29}, 
such a construction is crucial for achieving the capacity region in general
although Code Construction \ref{as-con2} achieves the capacity region with no common message.

\begin{const}\Label{as-con}
For a given set of rates $(R_i)_{i=0}^T$,
we introduce other parameters $R_{\mathrm{p}}$ and $R_{\mathrm{c}}$ satisfying
\begin{align}
R_{\mathrm{c}} +R_{\mathrm{p}}= \sum_{i=0}^T R_i, \quad
R_{\mathrm{c}} \ge R_0.
\end{align}
In the following, we denote the set of 
$((R_i)_{i=0}^T,R_{\mathrm{p}},R_{\mathrm{c}})$ satisfying the above condition
by ${\cal R}_T$.
In order to apply Code Ensemble \ref{con1} in Subsection \ref{s5-2},
we fix Abelian groups
${\cal B}_{1,n}$ and ${\cal B}_{2,n}$ satisfying 
$|{\cal B}_{1,n}| := e^{n (R_{\mathrm{c}}-R_0)}$
and
$|{\cal B}_{2,n}| := e^{n R_{\mathrm{p}}}$.
Applying Code Ensemble \ref{con1} and Theorem \ref{lem2}
to the $n$-fold discrete memoryless extension 
$U^n \to V^n\to X^n\to Y^n Z^n$ of the above Markov chain
and the Abelian groups ${\cal B}_{1,n}$ and ${\cal B}_{2,n}$,
we find the code $\varphi_n=(\varphi_{a,n},\varphi_{b,n},\varphi_{e,n})$ 
with the message sets ${\cal S}_{i,n}$ for $i=0,1,\ldots,T$
satisfying (\ref{ineq-6-}), (\ref{ineq-6-a}), (\ref{ineq-6-b}), and (\ref{ineq-6-c}). 
\end{const}

The performance of the code $\varphi_n$ of Code Construction \ref{as-con}
is characterized as follows.
The relations (\ref{ineq-6-b}) and (\ref{ineq-6-c}) guarantee that
\begin{align}
&\liminf_{n\to \infty}\frac{-1}{n}\log P_b [P_{Y|V}^n ,\varphi_n,P_{S_{{\cal T},n}}]\nonumber \\
 \ge & 
\min \Bigl[ 
-\rho R_{\mathrm{p}} -E_0 (-\rho| P_{Y|V}, P_{V|U},P_U)  , 
\nonumber \\
&\qquad 
-\rho (R_{\mathrm{p}}+R_{\mathrm{c}}) - E_0 (-\rho| P_{Y|U,V},P_{V,U})\Bigr] 
,\Label{Haya-51-v}\\
&\liminf_{n\to \infty}\frac{-1}{n}\log P_e [P_{Z|V}^n ,\varphi_n,P_{S_{{\cal T},n}}]
 \ge 
-\rho R_{\mathrm{c}}- E_0(-\rho| P_{Z|U},P_U) 
 \Label{Haya-52-v} 
\end{align}
for any $\rho \in (0,1]$.
Hence, due to \eqref{3-24-20eq} and \eqref{3-24-21eq}, 
above both exponents 
(\ref{Haya-51-v}) and (\ref{Haya-52-v})
are positive, i.e., 
both error probabilities go to zero exponentially
when
\begin{align*}
&R_{\mathrm{p}} < I(Y;V|U), \quad
R_{\mathrm{p}}+R_{\mathrm{c}} < I(Y;VU)=I(Y;U)+I(Y;V|U), 
\\
&R_{\mathrm{c}} < I(Z;U) ,
\end{align*}
which are satisfied when
\begin{align}
R_{\mathrm{c}} < \min[I(Y;U),I(Z;U) ], \quad R_{\mathrm{p}} < I(Y;V|U)
\Label{11-9-8}.
\end{align}
Further, due to (\ref{ineq-2-e}), the leaked information for $S_{\mathcal{I},n}$ can be evaluated as
\begin{align}
& \frac{1}{n}
I(S_{\mathcal{I},n}; Z^n| S_{0,n})[P_{Z|V}^n ,\varphi_{a,n},P_{S_{{\cal T},n}}]
\nonumber 
\\
\le &
\Bigl[ [R_{\mathrm{c}}-R_0]_+ 
+\frac{1}{\rho}E_0(\rho| P_{Z|V}, P_{V|U},P_U)
-\frac{1}{n}H_{1+\rho}(S_{\mathcal{I}^c,n}|S_{\mathcal{I},n},S_{0,n})
\Bigr]_+ 
\nonumber \\
& + (T+2)\frac{\log 2 }{n \rho}.
\nonumber 
\end{align}
Similar to \eqref{11-27-1b},
we obtain
\begin{align}
& 
\limsup_{n\to \infty} \frac{1}{n}I(S_{\mathcal{I},n}; Z^n| S_{0,n})[P_{Z|V}^n ,\varphi_{a,n},P_{S_{{\cal T},n}}] \nonumber\\
\le &
\Bigl[ (R_{\mathrm{c}}-R_0) +I(V;Z|U) \nonumber \\
&- 
\lim_{a \to \infty}
\liminf_{n\to \infty}
\frac{1}{n}H_{1+a/n}(S_{\mathcal{I}^c,n}|S_{\mathcal{I},n},S_{0,n})
\Bigr]_+ .\Label{11-27-1}
\end{align}
So, the asymptotic performance of our code in 
Code Construction \ref{as-con}
is characterized in (\ref{Haya-51-v}), (\ref{Haya-52-v}), and (\ref{11-27-1}).

\subsection{Capacity Region}\Label{s6-2}
Next, in order to characterize the limit of the asymptotic performance of
the secure multiplex coding with common messages, 
we define the capacity region 
based on the WACU condition (\ref{11-9-3}).
For this purpose, 
we treat the transmission rate tuple 
$(R_i)_{i=0, \ldots,T}=
(R_0$, $R_1$, \ldots, $R_T)$
and the information leakage rate tuple $( R_{l,\mathcal{I}})_{ \emptyset \neq \mathcal{I} \subsetneq \{1, \ldots, T\}}$,
where $\mathcal{I}$ takes every non-empty proper subset of $\{1, \ldots, T\}$.
The latter describes the rates of the leaked information for the message $S_{\mathcal{I},n}$.
Combining both tuples, we call  
$((R_i)_{i=0, \ldots,T}, ( R_{l,\mathcal{I}})_{ \emptyset \neq \mathcal{I} \subsetneq \{1, \ldots, T\}})$ 
the rate tuple.

\begin{definition}\Label{D12-18-1}
The rate tuple
$((R_i)_{i=0, \ldots,T}, ( R_{l,\mathcal{I}})_{ \emptyset \neq \mathcal{I} \subsetneq \{1, \ldots, T\}})$ 
is said to be \emph{achievable} for the secure multiplex coding with $T$ secret messages for the channel $P_{YZ|X}$
if 
there exist a sequence of
codes $\varphi_n=(\varphi_{a,n},\varphi_{b,n},\varphi_{e,n}) $, i.e., 
Alice's stochastic encoder $\varphi_{a,n}$ from 
$\mathcal{S}_{0,n} \times \mathcal{S}_{1,n} \times \cdots \times \mathcal{S}_{T,n}$ to $\mathcal{X}^n$,
Bob's deterministic decoder $\varphi_{b,n}: \mathcal{Y}^n
\rightarrow \mathcal{S}_{0,n}\times \mathcal{S}_{1,n} \times \cdots \times \mathcal{S}_{T,n}$ 
and
Eve's deterministic decoder $\varphi_{e,n}: \mathcal{Z}^n \rightarrow \mathcal{S}_{0,n}$ satisfying the following conditions:
(1) The $i$-th secret message set $\mathcal{S}_{i,n}$ 
has cardinality $e^{n R_i}$ for $i=1$, \ldots, $T$,
and the common message set $\mathcal{S}_{0,n}$ 
has cardinality $e^{n R_0}$. 
(2)
When a sequence of joint distributions 
$P_{S_{{\cal T},n}}$
on the message sets $\mathcal{S}_{i,n}$ for $T=0,1, \ldots,T$
satisfies 
the WACU condition (\ref{11-9-3})
for a non-empty proper subset $\mathcal{I}(\neq \emptyset) \subsetneq \{1,\ldots, T\}$,
the relations
\begin{align}
\lim_{n\rightarrow \infty} P_b [P_{Y|X}^n,\varphi_n,P_{S_{{\cal T},n}}] &=0
\Label{11-9-4-a}
\\
\lim_{n\rightarrow \infty} P_e [P_{Z|X}^n,\varphi_n,P_{S_{{\cal T},n}}] &=0
\Label{11-9-5-a}
\\
\limsup_{n\rightarrow \infty} I(S_{\mathcal{I},n};Z^n|S_0) [P_{Z|X}^n,\varphi_{a,n},P_{S_{{\cal T},n}}] 
& \le R_{l,\mathcal{I}}
\end{align}
hold.
The capacity region 
$\mathcal{C}$
of the secure multiplex coding is the closure of the achievable
 rate tuples
$((R_i)_{i=0, \ldots,T}, ( R_{l,\mathcal{I}})_{ \emptyset \neq \mathcal{I} \subsetneq \{1, \ldots, T\}})$. 
\end{definition}

\begin{theorem}\Label{th:smc}
The capacity region of the secure multiplex coding with
common messages is given by
the set of rate tuples
$((R_i)_{i=0, \ldots,T}, ( R_{l,\mathcal{I}})_{ \emptyset \neq \mathcal{I} \subsetneq \{1, \ldots, T\}})$
such that
there exist a Markov chain $U\rightarrow V \rightarrow X \rightarrow YZ$ and
\begin{eqnarray}
R_0 &\leq & \min[I(U;Y),I(U;Z)],\nonumber \\
\sum_{i=0}^T R_i &\leq & I(V;Y|U)+\min[I(U;Y),I(U;Z)]\nonumber \\
R_{l,\mathcal{I}} &\geq & 
 \sum_{i\in \mathcal{I}} R_i -[I(V;Y|U)- I(V;Z|U)  ]_+ 
\Label{11-14-1}
\end{eqnarray}
for any
non-empty proper subset
$\mathcal{I}\subsetneq \{1,\ldots, T\}$.
\end{theorem}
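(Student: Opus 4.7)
The proof splits into a direct (achievability) part and a converse. For the direct part, my plan is to invoke the first construction of Section \ref{s5-2}, whose averaged performance is controlled by Theorem \ref{lem2}, and to combine it with the asymptotic calculation already carried out in Section \ref{s6-1}. Given any point in the claimed region, realised by a Markov chain $U\to V\to X\to YZ$, I would choose the auxiliary rates as $R_{\mathrm{p}} = \min(I(V;Y|U),\sum_{i=1}^T R_i)-\epsilon$ and $R_{\mathrm{c}} = \sum_{i=0}^T R_i - R_{\mathrm{p}}$, which respects both $R_{\mathrm{c}}\ge R_0$ and $R_{\mathrm{c}}\le \min[I(U;Y),I(U;Z)]$ by the sum-rate hypothesis. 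When $\sum_{i=1}^T R_i \ge I(V;Y|U)$ this yields $R_{\mathrm{c}}-R_0 = \sum_{i=1}^T R_i - I(V;Y|U)+\epsilon$, and substitution into the leakage bound (\ref{11-27-1}) produces exactly the claimed $[\sum_{i\in \mathcal{I}} R_i + I(Z;V|U) - I(Y;V|U)]_+$. The complementary case, in which $I(V;Y|U)$ exceeds $\sum_{i=1}^T R_i$, is reduced to the first one by shrinking $V$ via a convex-combination/support-reduction argument to produce an auxiliary $V'$ with $I(V';Y|U)=\sum_{i=1}^T R_i$ while preserving the Markov property and the other constraints.

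The remaining ingredient for the direct part is the identification of the R\'enyi-entropy term appearing in (\ref{11-27-1}) with the Shannon entropy rate. Concretely, I need to show that, under the asymptotic-uniformity hypothesis (\ref{11-9-3}), $\lim_{a\to\infty}\liminf_{n\to\infty}\frac{1}{n}H_{1+a/n}(S_{\mathcal{I}^c,n}|S_{\mathcal{I},n},S_{0,n})=\sum_{i\in \mathcal{I}^c}R_i$. This is the conditional version of the ``folklore source coding theorem'' of Han cited in Section \ref{s6-2}: once a sequence of distributions on a set of cardinality $e^{nR}$ has Shannon entropy rate $R$, the R\'enyi entropies of order $1+o(1)$ also converge to $R$. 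With that identification the upper bound in (\ref{11-27-1}) matches the claimed equivocation rate uniformly in $\mathcal{I}$, so a single code meeting all constraints is extracted by a Markov-inequality/union-bound argument over the finitely many non-empty subsets $\mathcal{I}\subseteq\{1,\dots,T\}$.

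For the converse I would follow the standard Csisz\'ar--K\"orner auxiliary-variable technique used in Theorem \ref{th1}, setting $U_i := (S_{0,n},Z^n_{i+1},Y^{i-1})$ and $V_i := (U_i,S_{1,n},\dots,S_{T,n})$ and single-letterising with a uniform time-sharing index. Fano's inequality applied to Bob's and Eve's decoders together with the Csisz\'ar sum identity deliver the bounds $R_0\le \min[I(U;Y),I(U;Z)]$ and $\sum_{i=0}^T R_i\le I(V;Y|U)+\min[I(U;Y),I(U;Z)]$ in the usual way. For the equivocation bound I expand
\begin{align*}
I(S_{\mathcal{I},n};Z^n|S_{0,n}) = H(S_{\mathcal{I},n}|S_{0,n}) - H(S_{\mathcal{I},n}|Z^n,S_{0,n}),
\end{align*}
use Fano on Bob's decoder to replace $H(S_{\mathcal{I},n}|Y^n,S_{0,n})$ by $o(n)$, and convert the difference into $I(V;Z|U)-I(V;Y|U)+o(1)$ per letter exactly as in \cite{csiszar78}. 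The main obstacle is ensuring that one single Markov chain $(U,V)$ serves every equivocation constraint at once: this forces $V$ to bundle all of the secret messages, and the non-$\mathcal{I}$ components must contribute additively to $H(S_{\mathcal{I},n}|S_{0,n})$ rather than leaving correction terms, which is precisely where the asymptotic-uniformity hypothesis (\ref{11-9-3}) is invoked a second time.
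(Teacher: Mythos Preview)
Your direct part is essentially the paper's argument: invoke the first construction, quote the asymptotic leakage bound (\ref{11-27-1}) from Section~\ref{s6-1}, and identify the R\'enyi term with $\sum_{i\in\mathcal{I}^c}R_i$ under asymptotic uniformity. Two small divergences are worth flagging. First, the paper simply sets $R_{\mathrm p}=I(V;Y|U)-\epsilon$ without your case split; you are right that this silently assumes $\sum_{i=1}^T R_i\ge I(V;Y|U)-\epsilon$ so that $R_{\mathrm c}\ge R_0$, and your proposal to handle the complementary case is additional care the paper omits. That said, ``shrinking $V$'' is delicate because you must also control $I(V';Z|U)-I(V';Y|U)$; a cleaner and more direct fix is to let the (stochastic) encoder append a uniform dummy message $S_{T+1}$ of rate $I(V;Y|U)-\sum_{i=1}^T R_i$, which reduces to your Case~1 and gives exactly the desired leakage expression for every $\mathcal{I}\subseteq\{1,\dots,T\}$. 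Second, the entropy-rate identification is not the ``folklore source coding theorem'' of Han (which the paper cites only to motivate asymptotic uniformity) but the dedicated Lemma~\ref{lem22} in the appendix, proved via the information-spectrum bound of Lemma~\ref{lem11}.

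For the converse the paper takes a much shorter route than you: since achievability in Definition~\ref{s6-2} quantifies over \emph{all} asymptotically uniform sources, one may specialise to the genuinely uniform distribution and then read off, for each fixed $\mathcal{I}$, the BCC converse of Theorem~\ref{th1}/Corollary~\ref{cor2} with secret message $S_{\mathcal I}$. Because the Csisz\'ar--K\"orner auxiliary variables $U_i,V_i$ are determined by the code and by the \emph{full} message tuple $(S_{0,n},S_{1,n},\dots,S_{T,n})$, the same single-letter pair $(U,V)$ automatically serves all $\mathcal{I}$ simultaneously, so the ``obstacle'' you anticipate does not arise and asymptotic uniformity need not be invoked a second time. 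Your from-scratch converse would also work, but it duplicates the existing BCC argument.
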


Now, we define the capacity region $\mathcal{C}_{\rm nc}$ of the secure multiplex coding with
no common messages as
the set of rate tuples
$((R_i)_{i=1, \ldots,T}, ( R_{l,\mathcal{I}})_{ \emptyset \neq \mathcal{I} \subsetneq \{1, \ldots, T\}})$ satisfying 
$(0,(R_i)_{i=1, \ldots,T}, ( R_{l,\mathcal{I}})_{ \emptyset \neq \mathcal{I} \subsetneq \{1, \ldots, T\}}) \in \mathcal{C}$.
As a corollary, the case with no common message 
is characterized as follows.

\begin{corollary}\Label{C-11-29}
$\mathcal{C}_{\rm nc}$ is given as
the set of rate tuples
$((R_i)_{i=1, \ldots,T}, ( R_{l,\mathcal{I}})_{ \emptyset \neq \mathcal{I} \subsetneq \{1, \ldots, T\}})$
such that
there exist a Markov chain $V \rightarrow X \rightarrow YZ$ and
\begin{eqnarray}
\sum_{i=1}^T R_i &\leq & I(V;Y) \nonumber \\
R_{l,\mathcal{I}} &\geq & 
 \sum_{i\in \mathcal{I}} R_i -[I(V;Y)- I(V;Z)  ]_+ 
\Label{11-14-1b}
\end{eqnarray}
for any
non-empty proper subset
$\mathcal{I}\subsetneq \{1,\ldots, T\}$.
\end{corollary}

\begin{proofof}{Theorem \ref{th:smc}}
The converse part of this coding theorem follows from
that for Corollary \ref{cor2} with the uniform distribution on the whole message sets.
The direct part can be shown by Lemma \ref{L-11-29}.
That is, 
for a rate tuple 
$((R_i)_{i=1, \ldots,T}, ( R_{l,\mathcal{I}})_{ \emptyset \neq \mathcal{I} \subsetneq \{1, \ldots, T\}})$
given in \eqref{11-14-1} and an arbitrary small real number $\varepsilon>0$,
the rate tuple 
$((R_i-\frac{\epsilon}{T})_{i=1, \ldots,T}, ( R_{l,\mathcal{I}})_{ \emptyset \neq \mathcal{I} \subsetneq \{1, \ldots, T\}})$
can be achieved by Lemma \ref{L-11-29}
when the $T+1$-th message $S_{T+1}$ is used as the dummy message subject to the uniform distribution
and its rate $R_{T+1}$ is chosen to be $\max (I(V;Y|U)-\sum_{i=0}^{T} R_i-\frac{\epsilon}{T}, 0)$.
\end{proofof}

\begin{remark}\Label{R-11-29}
As is mentioned in Proof of Theorem \ref{th:smc},
to derive the capacity region, we employ Lemma \ref{L-11-29}, 
which is based on Code Construction \ref{as-con} instead of Code Construction \ref{as-con2}
because the case $\sum_{i=1}^T R_i  > I(V;Y|U)$ requires Code Construction \ref{as-con}.
This is the reason why we introduce 
Code Construction \ref{as-con} as well as
Code Construction \ref{as-con2}.
When $\sum_{i=1}^T R_i  \le I(V;Y|U)$,
the rate tuple 
$((R_i)_{i=1, \ldots,T}, ( R_{l,\mathcal{I}})_{ \emptyset \neq \mathcal{I} \subsetneq \{1, \ldots, T\}})$
given in \eqref{11-14-1} 
can be approximately achieved by Lemma \ref{L-11-29b},
which is based on Code Construction \ref{as-con2}.
That is,
the rate tuple 
$((R_i-\frac{\epsilon}{T} )_{i=1, \ldots,T}, ( R_{l,\mathcal{I}})_{ \emptyset \neq \mathcal{I} \subsetneq \{1, \ldots, T\}})$
can be achieved by Lemma \ref{L-11-29b}
when the $T+1$-th message $S_{T+1}$ is used as the dummy message subject to the uniform distribution
and its rate $R_{T+1}$ is chosen to be 
$\max (I(V;Y|U)-\sum_{i=0}^{T} (R_i-\frac{\epsilon}{T} )-\epsilon, 0)$.
Then, Code Construction \ref{as-con2} gives only the special rate tuple in the capacity region.


When there is no common message, 
it is enough to attain the region given in Corollary \ref{C-11-29}.
Hence, it is sufficient to consider the case with $R_0=0$, which 
implies that $\sum_{i=1}^T R_i \le I(V;Y|U)$. 
That is, if we need to show only Corollary \ref{C-11-29}, 
it is enough to use Lemma \ref{L-11-29b}, which is based on Code Construction \ref{as-con2} instead of Code Construction \ref{as-con}.
\end{remark}

\begin{lemma}\Label{L-11-29b}
Choose a sufficiently small real number $\epsilon >0$ and $(R_i)_{i=0}^{T+1}$ for $i=0,1,\ldots,T,T+1$ satisfying 
\begin{eqnarray}
R_0 &< & \min[I(U;Y),I(U;Z)],  \Label{11-9-6b}\\
\sum_{i=1}^{T+1} R_i  &< & I(V;Y|U) \le (\sum_{i=1}^{T+1} R_i) +\epsilon. \Label{11-9-7b}
\end{eqnarray}
Then, 
the code $\varphi_n$ given by Code Construction \ref{as-con2} 
satisfies
\begin{align}
\lim_{n\to \infty} P_b [P_{Y|V}^n,\varphi_n,P_{S_{{\cal T},n}}
\times P_{S_{T+1,n}}
] &= 0 \Label{11-9-4}\\
\lim_{n\to \infty} P_e [P_{Z|V}^n,\varphi_n,P_{S_{{\cal T},n}}
\times P_{S_{T+1,n}}
] &= 0 \Label{11-9-5} 
\end{align}
and
\begin{align}
& \limsup_{n\to \infty} \frac{1}{n}I(S_{\mathcal{I},n}; Z_n| S_{0,n})[P_{Z|V}^n,\varphi_n,P_{S_{{\cal T},n}}\times P_{S_{T+1,n}}
] \nonumber \\
\le & \sum_{i\in \mathcal{I}} R_i - [ I(V;Y|U) - I(V;Z|U) ]_+ +\epsilon
\Label{11-9-9}
\end{align}
when the sequence of the joint distributions
$P_{S_{{\cal T},n}}$ of information source satisfies 
the WACU condition (\ref{11-9-3}) for any non-empty proper subset $\mathcal{I}\subsetneq \{1,\ldots, T\}$
and
$P_{S_{T+1,n}}$ is the uniform distribution.
\end{lemma}

\begin{lemma}\Label{L-11-29}
Choose a sufficiently small real number $\epsilon >0$ and $(R_i)_{i=0}^{T+1}$ 
for $i=0,1,\ldots,T,T+1$ satisfying 
\begin{align}
R_0 < & \min[I(U;Y),I(U;Z)], \Label{11-9-6}\\
I(V;Y|U) \le (\sum_{i=0}^{T+1} R_i) +\epsilon < & I(V;Y|U)+\min[I(U;Y),I(U;Z)]. \Label{11-9-7}
\end{align}
Then, 
the code $\varphi_n$ given by Code Construction \ref{as-con} with
the choices 
\begin{align}
R_{\mathrm{p}}:= I(V;Y|U)
-\epsilon \hbox{ and }
R_{\mathrm{c}}:= \sum_{i=0}^{T+1} R_i-R_{\mathrm{p}}
\Label{3-22-A}
\end{align}
satisfies
\eqref{11-9-4}, 
\eqref{11-9-5}, 
and
\eqref{11-9-9}
when the sequence of the joint distributions
$P_{S_{{\cal T},n}}$ of information source satisfies 
the WACU condition (\ref{11-9-3}) for any non-empty proper subset $\mathcal{I}\subsetneq \{1,\ldots, T\}$
and
$P_{S_{T+1,n}}$ is the uniform distribution.
\end{lemma}

\begin{proofof}{Lemma \ref{L-11-29b}}
Since the conditions (\ref{11-9-6b}) and (\ref{11-9-7b}) guarantee the conditions (\ref{11-9-8}), 
we obtain (\ref{11-9-4}) and (\ref{11-9-5}).
We need to show only (\ref{11-9-9}).
Assume that $I(V;Y|U) \le I(V;Z|U)$.
Since $|{\cal S}_{\mathcal{I},n}| =e^{n \sum_{i\in \mathcal{I}} R_i}$,
we obtain
$\frac{1}{n}I(S_{\mathcal{I},n}; Z_n| S_{0,n})[P_{Z|V}^n,\varphi_n,
P_{S_{{\cal T},n}} \times P_{S_{T+1,n}}
] 
\le \sum_{i\in \mathcal{I}} R_i $, which implies 
(\ref{11-9-9}).
Hence, it is enough to consider the case
$I(V;Y|U) > I(V;Z|U)$.
Since, as is shown in Lemma \ref{lem22} in Appendix \ref{as1},
the equivalence between 
the SWACU condition (\ref{11-9-3-d}) and 
the WACU condition (\ref{11-9-3}) holds,
we obtain
\begin{align}
\lim_{a \to \infty}
\lim_{n\to \infty}
\frac{1}{n}H_{1+a/n}(S_{\mathcal{I}^c,n}|S_{\mathcal{I},n},S_{0,n})
=\sum_{i\in \mathcal{I}^c} R_i \Label{11-27-2}.
\end{align}
The relations (\ref{11-27-1b}) and (\ref{11-27-2}) yield
\begin{align}
& 
\limsup_{n\to \infty} \frac{1}{n}I(S_{\mathcal{I},n}; Z_n| S_{0,n})[P_{Z|V}^n ,\varphi_{a,n},P_{S_{{\cal T},n}}
\times P_{S_{T+1,n}}
] \nonumber \\
\le &
I(V;Z|U) - \sum_{i\in \mathcal{I}^c} R_i \nonumber \\
= &
  - \sum_{i=1}^{T+1} R_i + I(V;Z|U) + \sum_{i\in \mathcal{I}} R_i \nonumber \\
\le &
\epsilon
 - I(V;Y|U) + I(V;Z|U) + \sum_{i\in \mathcal{I}} R_i ,\Label{11-9-10b}
\end{align}
which implies 
(\ref{11-9-9})
\end{proofof}

\begin{proofof}{Lemma \ref{L-11-29}}
Since the conditions (\ref{11-9-6}), (\ref{11-9-7}), and \eqref{3-22-A}
guarantee the conditions (\ref{11-9-8}), 
we obtain (\ref{11-9-4}) and (\ref{11-9-5}).
We need to show only (\ref{11-9-9}).
When $I(V;Y|U) \le I(V;Z|U)$, we can show (\ref{11-9-9}) by the same way as Lemma \ref{L-11-29b}.
Hence, it is enough to consider the case
$I(V;Y|U) > I(V;Z|U)$.
By the same way as Lemma \ref{L-11-29b},
the relations (\ref{11-27-1}) and (\ref{11-27-2}) yield
\begin{align}
& 
\limsup_{n\to \infty} \frac{1}{n}I(S_{\mathcal{I},n}; Z_n| S_{0,n})[P_{Z|V}^n ,\varphi_{a,n},P_{S_{{\cal T},n}}
\times P_{S_{T+1,n}}
] \nonumber \\
\le &
 (R_{\mathrm{c}}-R_0) +I(V;Z|U) - \sum_{i\in \mathcal{I}^c} R_i \nonumber \\
= &
 R_{\mathrm{c}} - \sum_{i=0}^{T+1} R_i + I(V;Z|U) + \sum_{i\in \mathcal{I}} R_i \nonumber \\
= &
 - R_{\mathrm{p}} + I(V;Z|U) + \sum_{i\in \mathcal{I}} R_i .\Label{11-9-10}
\end{align}
Therefore,
since $R_{\mathrm{p}}= 
I(V;Y|U)-\epsilon$, 
(\ref{11-9-10}) implies (\ref{11-9-9})
when $I(V;Y|U)>I(V;Z|U)$.
\end{proofof}

\section{Secure Multiplex Coding with Common Messages:
Strong Security}\Label{s7}
\subsection{Strong Security}\Label{s7-2}
In this section, we treat the strong security.
A sequence of codes $\varphi_n$ is called {\it strongly secure} 
for a subset $\mathcal{I}\subsetneq \{1,\ldots, T\}$
and a sequence of distributions $P_{S_{{\cal T},n}}$
when the relation 
\begin{align}
\lim_{n\to \infty} I(S_{\mathcal{I},n}; Z_n| S_{0,n})[P_{Z|X}^n,\varphi_n,P_{S_{{\cal T},n}}] =0
\Label{11-13-2}
\end{align}
holds.
Now, we fix a family ${\bf J}$ of non-empty proper subsets $\mathcal{I}$ of $\{1, \ldots, T\}$,
and consider only the security of the messages $S_{\mathcal{I},n}$ for all $\mathcal{I} \in {\bf J}$.


\begin{theorem}\Label{thm-11-22-b}
Assume that the transmission rate tuple $(R_i)_{i=0, \ldots,T}=(R_0,R_1, \ldots, R_T)$
belongs to the inner of the capacity region with $R_{l,\mathcal{I}}=0$
for any subset $\mathcal{I}\in {\bf J}$, 
i.e., there exist an information leakage rate tuple
$( R_{l,\mathcal{I}})_{ \emptyset \neq \mathcal{I} \in {\bf J}^c }$
such that
\begin{align}
((R_i)_{i=0, \ldots,T}, 
( 0 )_{ \mathcal{I} \in {\bf J} },
( R_{l,\mathcal{I}})_{ \emptyset \neq \mathcal{I} \in {\bf J}^c }
)
\in \inn (\mathcal{C}),
\end{align}
where $\inn(\mathcal{C})$ denotes the inner of the set $\mathcal{C}$.
Then, there exists a Markov chain $U \to V \to X$ such that
\begin{align}
\epsilon :=& \min_{\mathcal{I}\in {\bf J}} \frac{I(V;Y|U) - I(V;Z|U)- \sum_{i \in \mathcal{I}} R_i}{|\mathcal{I}^c| } >0, \\
R_0 < & \min[I(U;Y),I(U;Z)],\nonumber \\
\sum_{i=0}^T R_i < & I(V;Y|U)+\min[I(U;Y),I(U;Z)] .\nonumber 
\end{align}
Next, we choose $R_{T+1}:= \max (I(V;Y|U)-\sum_{i=0}^{T} R_i, 0)$
and a small real $\epsilon'>0$ 
such that
$\epsilon' < \frac{\epsilon}{2}$,
$\epsilon' <  I(V;Y|U)+\min[I(U;Y),I(U;Z)]-\sum_{i=0}^{T+1} R_i $.
The code $\varphi_n$ given by Code Construction \ref{as-con} with
the choices $R_{\mathrm{p}}:= I(V;Y|U)
-\epsilon'$ and $R_{\mathrm{c}}:= \sum_{i=0}^{T+1} R_i-R_{\mathrm{p}}$
satisfies
(\ref{11-9-4}), (\ref{11-9-5}), and
the strong security
\begin{align}
\lim_{n\to \infty} I(S_{\mathcal{I},n}; Z_n| S_{0,n})[P_{Z|V}^n,\varphi_n,P_{S_{{\cal T},n}}] =0
\Label{11-13-2b}
\end{align}
for any subset $\mathcal{I}\in {\bf J}$
when the sequence of distributions $P_{S_{{\cal T},n}}$
satisfies
the $(\epsilon-2\epsilon')$-SACU condition (\ref{11-9-3-b}) for the subset $\mathcal{I}$.
\end{theorem}

Thanks to Theorem \ref{thm-11-22-b},
the strong security holds at all inner points of the capacity region $\mathcal{C}$
with $R_{l,\mathcal{I}}=0$
for any subset $\mathcal{I}\in {\bf J}$
under the $\epsilon$-SACU condition (\ref{11-9-3-b})
for any subset $\mathcal{I}\in {\bf J}$.

Here, we address the relation with the paper \cite{yamamoto05}.
When there is no common message,
the paper \cite{yamamoto05} defined the region ${\cal R}_{\rm sto}^I$ as follows.
\begin{definition}\Label{D12-18-b}
The region ${\cal R}_{\rm sto}^I$ is the closure of the set of the rate tuples
$(R_i)_{i=1, \ldots,T}$ satisfying the following.
There exist a sequence of
codes $\varphi_n=(\varphi_{a,n},\varphi_{b,n},\varphi_{e,n}) $, i.e., 
Alice's stochastic encoder $\varphi_{a,n}$ from 
$\mathcal{S}_{1,n} \times \cdots \times \mathcal{S}_{T,n}$ to $\mathcal{X}^n$,
Bob's deterministic decoder $\varphi_{b,n}: \mathcal{Y}^n
\rightarrow \mathcal{S}_{1,n}\times \mathcal{S}_{1,n} \times \cdots \times \mathcal{S}_{T,n}$ 
satisfying the following conditions:
(1) The $i$-th secret message set $\mathcal{S}_{i,n}$ 
has cardinality $e^{n R_i}$ for $i=1$, \ldots, $T$,
(2)
When the message obeys the uniform distribution,
the relations \eqref{11-9-4-a} and
\begin{align}
\limsup_{n\rightarrow \infty} I(S_{t,n};Z^n|S_0) [P_{Z|X}^n,\varphi_{a,n},P_{S_{{\cal T},n}}
\times P_{S_{T+1},n}
] =0
\end{align}
hold for $t=1, \ldots, T$.
\end{definition}

On the other hand, we define the region 
$\tilde{\cal R}_{\rm sto}^I$ as the set of rate tuples
$(R_i)_{i=1, \ldots,T}$
such that
there exists a Markov chain $V \rightarrow X \rightarrow YZ$ and
\begin{eqnarray}
\sum_{i=1}^T R_i \leq  I(V;Y), \quad
R_t  \le  [I(V;Y)- I(V;Z)  ]_+ 
\Label{11-14-1c}
\end{eqnarray}
for $t=1, \ldots, T$.
Then, Theorem \ref{thm-11-22-b} and Corollary \ref{C-11-29}
guarantee the relation
\begin{align}
{\cal R}_{\rm sto}^I
=\tilde{\cal R}_{\rm sto}^I,
\Label{11-14-1d}
\end{align}
which is the same as the result by the paper \cite[(138)]{yamamoto05}.
Here,
Corollary \ref{C-11-29} implies 
${\cal R}_{\rm sto}^I \subset \tilde{\cal R}_{\rm sto}^I$ and 
Theorem \ref{thm-11-22-b} does 
${\cal R}_{\rm sto}^I \supset \inn(\tilde{\cal R}_{\rm sto}^I)$.
Since ${\cal R}_{\rm sto}^I$ and $\tilde{\cal R}_{\rm sto}^I$
are the closed sets, we obtain \eqref{11-14-1d}.

In order to show Theorem \ref{thm-11-22-b}, we prepare the following lemma.

\begin{lemma}\Label{thm-11-22}
We fix a subset $\mathcal{I}\subsetneq \{1,\ldots, T\}$.
Assume that 
the transmission rate tuple $(R_i)_{i=0, \ldots,T}$, the sequence of distributions $P_{S_{{\cal T},n}}$,
and a Markov chain $U \to V \to X$
satisfy that
\begin{align}
\delta':= &\frac{1}{2}
\Bigl(
\underline{H}_{\log}(\mathcal{I}^c)
\nonumber \\
&\qquad -(\sum_{i=1}^T R_i - I(V;Y|U) + I(V;Z|U))\Bigr) >0,
\Label{11-13-1} \\
R_0 < & \min[I(U;Y),I(U;Z)],\nonumber \\
\sum_{i=0}^T R_i < & I(V;Y|U)+\min[I(U;Y),I(U;Z)].\nonumber 
\end{align}
When we choose 
$R_{T+1}:= \max (I(V;Y|U)-\sum_{i=0}^{T} R_i, 0)$
and a small real $\epsilon'>0$ such that
$\epsilon' \le \delta'$ and 
$\epsilon' <  I(V;Y|U)+\min[I(U;Y),I(U;Z)]-\sum_{i=0}^{T+1} R_i $,
the code $\varphi_n$ given by Code Construction \ref{as-con} with
the choices $R_{\mathrm{p}}:= I(V;Y|U)
-\epsilon'$ and $R_{\mathrm{c}}:= \sum_{i=0}^{T+1} R_i-R_{\mathrm{p}}$
satisfies
(\ref{11-9-4}), (\ref{11-9-5}), and
the strong security
\begin{align}
\lim_{n\to \infty} I(S_{\mathcal{I},n}; Z_n| S_{0,n})[P_{Z|V}^n,\varphi_n,P_{S_{{\cal T},n}}
\times P_{S_{T+1},n}
] =0
\Label{11-13-2c}.
\end{align}
\end{lemma}

\begin{proofof}{Theorem \ref{thm-11-22-b}}
First, we fix an arbitrary subset $\mathcal{I}\in {\bf J}$.
Hence, 
\begin{align*}
& \sum_{i\in \mathcal{I}^c} (R_i -(\epsilon-2\epsilon'))
-(\sum_{i=1}^{T+1} R_i - I(V;Y|U) + I(V;Z|U)) \\
\ge &
(\sum_{i\in \mathcal{I}^c} R_i) -|\mathcal{I}^c|(\epsilon-2\epsilon')
-(\sum_{i=1}^{T+1} R_i - I(V;Y|U) + I(V;Z|U)) \\
= &
I(V;Y|U) - I(V;Z|U) -\sum_{i\in \mathcal{I}} R_i
-|\mathcal{I}^c|(\epsilon-2\epsilon') \\
\ge &
|\mathcal{I}^c|\epsilon
-|\mathcal{I}^c|(\epsilon-2\epsilon') 
=2|\mathcal{I}^c| \epsilon'\ge 2 \epsilon'.
\end{align*}
Thus, since the sequence of distributions $P_{S_{{\cal T},n}}$
satisfies
the $\epsilon-2\epsilon'$-SACU condition (\ref{11-9-3-b}) for the subset $\mathcal{I}$,
\begin{align}
\delta':=
&\frac{1}{2}
\Bigl(
\underline{H}_{\log}(\mathcal{I}^c)
\nonumber \\
&\qquad -(\sum_{i=1}^{T+1} R_i - I(V;Y|U) + I(V;Z|U))\Bigr) \nonumber \\
\ge &
\frac{1}{2}
\Bigl(
\sum_{i\in \mathcal{I}^c} (R_i -(\epsilon-2\epsilon') )
-(\sum_{i=1}^{T+1} R_i - I(V;Y|U) + I(V;Z|U)) \Bigr)
\nonumber \\
\ge & \epsilon'. \nonumber 
\end{align}
Hence, any real number $\epsilon'>0$ given in Theorem \ref{thm-11-22-b}
satisfies the condition for $\epsilon'>0$ in Lemma \ref{thm-11-22}.
Thus, applying Lemma \ref{thm-11-22}, 
we obtain (\ref{11-13-2b}) for the subset $\mathcal{I}$.
Since the subset $\mathcal{I}$ is an arbitrary element of ${\bf J}$,
we obtain Theorem \ref{thm-11-22-b}.
\end{proofof}

\begin{proofof}{Lemma \ref{thm-11-22}}
Since $\epsilon'>0$, we have the second condition of (\ref{11-9-8}).
Due to the choice of $\epsilon'>0$,
\begin{align*}
0=& I(V;Y|U) - \epsilon' -R_{\mathrm{p}}\\
> &I(V;Y|U)-\Bigl( I(V;Y|U)+\min[I(U;Y),I(U;Z)]-\sum_{i=0}^{T+1} R_i
\Bigr)\\
& -R_{\mathrm{p}}\\
= &\sum_{i=0}^{T+1} R_i- \min[I(U;Y),I(U;Z)] -R_{\mathrm{p}}\\
= &R_{\mathrm{c}}- \min[I(U;Y),I(U;Z)] ,
\end{align*}
which implies the first condition of (\ref{11-9-8}).
Hence, we obtain (\ref{11-9-4}) and (\ref{11-9-5}).

Next, 
we define 
\begin{align*}
\rho_n:=& \frac{2 \log n}{n\delta'}, \\
C_n:=&
\Bigl(-\rho_n n (R_{\mathrm{c}}-R_0) +\rho_n H_{1+\rho_n}(S_{\mathcal{I}^c,n}|S_{\mathcal{I},n},S_{0,n}) \\
&\hspace{17ex} -n E_0(\rho_n| P_{Z|V}, P_{V|U},P_U) \Bigr).
\end{align*}

The condition (\ref{11-13-1}) and $\epsilon' \le \delta'$ 
imply that
\begin{align}
& \liminf_{n \to \infty}
\frac{C_n}{n \rho_n} \nonumber \\
=&
\liminf_{n\to \infty}
\frac{1}{n}H_{1+\rho_n}(S_{\mathcal{I}^c,n}|S_{\mathcal{I},n},S_{0,n})
-\sum_{i=1}^{T+1} R_i +R_{\mathrm{p}} - I(V;Z|U) \nonumber \\
\ge &
\underline{H}_{\log}(\mathcal{I}^c)
-\sum_{i=1}^{T+1} R_i +I(V;Y|U)-\delta' - I(V;Z|U) \nonumber \\
= &
\frac{1}{2}
\Bigl( 
\underline{H}_{\log}(\mathcal{I}^c)
-\sum_{i=1}^{T+1} R_i +I(V;Y|U) - I(V;Z|U)
\Bigr) \nonumber \\
=& \delta'> 0.
\end{align}
That is, we can choose a sufficiently large integer $N$ such that
\begin{align}
\frac{C_n}{n \rho_n} \ge \frac{\delta'}{2}
\Label{11-23-1}
\end{align}
for $n \ge N$.
Due to (\ref{ineq-6-a}), the leaked information for $S_{\mathcal{I},n}$ can be evaluated as
\begin{align}
I(S_{\mathcal{I},n}; Z_n| S_{0,n})[P_{Z|V}^n,\varphi_n ,P_{S_{{\cal T},n}}] 
\le 
\frac{2^{T+2}}{\rho_n} e^{-C_n}.\nonumber 
\end{align}
Since (\ref{11-23-1}) implies that
\begin{align*}
&-\log (\frac{2^{T+2}}{\rho_n} e^{-C_n})
=-(T+2)\log 2 + C_n +\log \rho_n \\
\ge &
-(T+2)\log 2 + \frac{\delta'}{2} n \rho_n +\log \rho_n \\
=&
-(T+2)\log 2 + \log \log n - \log \frac{\delta'}{2}
\to \infty,
\end{align*}
we obtain (\ref{11-13-2c}).
\end{proofof}

\subsection{Exponential Decreasing Rate}\Label{s7-3}
In this subsection,
we treat the exponential decreasing rate of leaked information.
In this subsection, 
we assume that the $T+1$-th message $S_{T+1,n}$ is subject to the uniform distribution.
We simplify $P_{S_{{\cal T},n}} \times P_{S_{T+1,n}}$ by $P_{S_{{\cal T},n}}$. 
For a subset $\mathcal{I}\subsetneq \{1,\ldots, T\}$, 
we denote the complementary set in $\{1,\ldots, T\}$ by $\mathcal{I}^c$
and simplify the set $\mathcal{I}^c\cup \{T+1\}$ to $\mathcal{I}^{c,*}$.
Unfortunately, 
the $\epsilon$-SACU condition (\ref{11-9-3-b}) 
is not sufficient for deriving a good exponential decreasing rate of leaked information.
Hence, in this subsection, 
given a sequence of distributions $P_{S_{{\cal T},n}}$,
we introduce the following quantity
\begin{align}
\underline{H}_{1+\rho}(\mathcal{I}^{c,*})
:=
\liminf_{n \to \infty}
\frac{1}{n}
H_{1+\rho}(S_{\mathcal{I}^{c,*},n}|S_{\mathcal{I},n},S_{0,n}) 
\end{align}
for any subset $\mathcal{I} \subset \{1, \ldots, T\}$
and any $\rho \in (0,1]$.

\begin{theorem}\Label{thm-1-15-1}
For given $(R_i)_{i=0}^T$,
we choose $R_{\mathrm{p}}$ and $R_{\mathrm{c}}$
as follows.
\begin{align*}
R_{\mathrm{c}} \ge R_0, \quad
R_{\mathrm{c}} +R_{\mathrm{p}} = \sum_{i=0}^{T+1} R_i.
\end{align*}
We fix a real number $\epsilon > 0$.
We choose a code $\varphi_n$ given by Code Construction \ref{as-con} with
the above choices $R_{\mathrm{p}}$ and $R_{\mathrm{c}}$ and a given 
Markov chain $U \to V \to X$.
When the sequence of distributions $P_{S_{{\cal T},n}}$
satisfies
the $\epsilon$-SACU condition (\ref{11-9-3-b}) for a non-empty proper subset 
$\mathcal{I}(\neq \emptyset) \subsetneq \{1,\ldots, T\}$,
the sequence of codes $\varphi_n$ satisfies (\ref{Haya-51-v}), (\ref{Haya-52-v}), and 
\begin{align}
& \liminf_{n\to \infty}\frac{-1}{n}\log I(S_{\mathcal{I},n}; Z_n| S_{0,n})[P_{Z|V}^n,\Phi_n,P_{S_{{\cal T},n}}] \nonumber \\
\ge &
\sup_{0 < \rho < 1} 
\rho (\underline{H}_{1+\rho}(\mathcal{I}^{c,*})- R_{\mathrm{c}}+R_0 )
 -E_0(\rho|P_{Z|V},P_{V|U},P_U).
\Label{bound-2}
\end{align}
In particular, when the distribution $P_{S_{{\cal T},n}}$ is uniform,
we obtain
\begin{align}
& \liminf_{n\to \infty}\frac{-1}{n}\log I(S_{\mathcal{I},n}; Z_n| S_{0,n})
[P_{Z|V}^n,\Phi_n,P_{S_{{\cal T+1},n}}] \nonumber \\
\ge &
\tilde{E}^{E_0}(R_{\mathrm{p}} -\sum_{i\in \mathcal{I}}R_i ,P_{Z,V,U}),
\Label{bound-2b}
\end{align}
where $\tilde{E}^{E_0}(R ,P_{Z,V,U})$ is defined in \eqref{1-31-3}.
\end{theorem}

Theorem \ref{thm-1-15-1} yields the following observation.
When $R_{\mathrm{p}} - \epsilon- \sum_{i\in \mathcal{I}}R_i >  I(V;Z|U)$
and $\underline{H}_{1+\rho}(\mathcal{I}^c) \ge (\sum_{i\in \mathcal{I}^c}R_i)-\epsilon$ holds with a small $\rho>0$,
the exponent (\ref{bound-2}) is positive, i.e., 
the leaked information goes to zero exponentially.
In particular, when 
\begin{align}
\sum_{i=1}^{T+1} R_i < I(V;Y|U),~
R_0 < \min[I(U;Y),I(U;Z)],
\Label{eq-223}
\end{align}
we can choose $R_{\mathrm{p}}$ and $R_{\mathrm{c}}$ by 
\begin{align}
R_{\mathrm{p}}:= \sum_{i=1}^{T+1} R_i, \quad
R_{\mathrm{c}}:= R_0. \Label{1-18-1}
\end{align}
Then, the inequalities 
(\ref{Haya-51-v}) and (\ref{Haya-52-v})
can be simplified to 
(\ref{Haya-51-w}) and (\ref{Haya-52-w}).
Then, the both decoding error probabilities goes zero exponentially.
Further, the inequality
(\ref{bound-2}) can be simplified to 
\begin{align}
& \liminf_{n\to \infty}\frac{-1}{n}\log I(S_{\mathcal{I},n}; Z_n| S_{0,n})[P_{Z|V}^n,\Phi_n,P_{S_{{\cal T+1},n}}]\nonumber \\
\ge &
\sup_{0 < \rho < 1} 
\rho \underline{H}_{1+\rho}(\mathcal{I}^{c,*})
 -E_0(\rho|P_{Z|V},P_{V|U},P_U).
\Label{bound-b}
\end{align}
Further, in the case of (\ref{eq-223}) and (\ref{1-18-1}),
when the WACU condition holds for $\mathcal{I}$,
the inequality (\ref{11-27-1}) can be simplified to 
\begin{align}
& \limsup_{n\to \infty}\frac{1}{n} I(S_{\mathcal{I},n}; Z_n| S_{0,n})[P_{Z|V}^n,\Phi_n,P_{S_{{\cal T+1},n}}]\nonumber \\
\le &
R_{\mathrm{c}}-R_0+I(V;Z|U) - \sum_{i \in \mathcal{I}^{c,*}} R_i
=
I(V;Z|U) - \sum_{i \in \mathcal{I}^{c,*}}R_i.\Label{12-28-2}
\end{align}

\begin{proofof}{Theorem \ref{thm-1-15-1}}
In Subsection \ref{s6-1},
we have already shown (\ref{Haya-51-v}) and (\ref{Haya-52-v}).
Hence, we need to only show (\ref{bound-2}).
Due to (\ref{ineq-6-a}), the leaked information for $S_{\mathcal{I},n}$ can be evaluated as
\begin{align}
& I(S_{\mathcal{I},n}; Z_n| S_{0,n})[P_{Z|V}^n,\varphi_n ,P_{S_{{\cal T+1},n}}] \nonumber\\ 
\le & 
\frac{2^{T+2}}{\rho} e^{
\rho n (R_{\mathrm{c}}-R_0) 
-\rho H_{1+\rho}(S_{\mathcal{I}^{c,*},n}|S_{\mathcal{I},n},S_{0,n}) 
+n E_0(\rho| P_{Z|V}, P_{V|U},P_U)
}.\nonumber 
\end{align}
Hence,
\begin{align}
& \liminf_{n\to \infty}\frac{-1}{n}\log I(S_{\mathcal{I},n}; Z_n| S_{0,n})
\nonumber \\
\ge &
\rho 
\liminf_{n \to \infty}
\frac{1}{n}H_{1+\rho}(S_{\mathcal{I}^{c,*},n}|S_{\mathcal{I},n},S_{0,n})
\nonumber \\
& 
-\rho (R_{\mathrm{c}}-R_0) -E_0 (\rho| P_{Z|V}, P_{V|U},P_U) \nonumber\\
\ge & 
\rho (\underline{H}_{1+\rho}(\mathcal{I}^{c,*})- R_{\mathrm{c}}+R_0 )
 -E_0(\rho|P_{Z|V},P_{V|U},P_U).
\nonumber
\end{align}
Taking the supremum for $\rho \in [0,1]$,
we obtain (\ref{bound-2}).
\end{proofof}

When the condition (\ref{eq-223}) holds,
the exponent (\ref{bound-b}) can be improved by using 
Theorem \ref{lem1} with Code Construction \ref{as-con2} in the following way.

\begin{theorem}\Label{thm-1-15-1-2}
We fix a real number $\epsilon \ge 0$.
Let $\varphi_n$ be a code given in Code Construction \ref{as-con2}
in Subsection \ref{s6-1}. 
The sequence of codes $\varphi_n$ satisfies
(\ref{Haya-51-w}), (\ref{Haya-52-w}), (\ref{12-28-2}), and
\begin{align}
& \liminf_{n\to \infty}\frac{-1}{n}\log I(S_{\mathcal{I},n}; Z_n| S_{0,n})[P_{Z|V}^n,\Phi_n,P_{S_{{\cal T+1},n}}] \nonumber \\
\ge &
\max_{0 \le \rho \le 1} 
\rho \underline{H}_{1+\rho}(\mathcal{I}^{c,*}) - \psi(\rho|P_{Z|V},P_{V|U},P_U).
\Label{bound-1}
\end{align}
In particular, when the distribution $P_{S_{{\cal T},n}}$ is uniform,
we obtain
\begin{align*}
& \liminf_{n\to \infty}\frac{-1}{n}\log I(S_{\mathcal{I},n}; Z_n| S_{0,n})[P_{Z|V}^n,\Phi_n,P_{S_{{\cal T+1},n}}]
\nonumber \\
\ge & \tilde{E}^{\psi}(\sum_{i\in \mathcal{I}^{c,*}}R_i ,P_{Z,V,U}),
\end{align*}
where $\tilde{E}^{\psi}(R ,P_{Z,V,U})$ is defined in \eqref{1-31-3b}.
\end{theorem}

Now, we compare Theorems \ref{thm-1-15-1} and \ref{thm-1-15-1-2}.
Since the RHS of (\ref{bound-1}) is larger than the RHS of (\ref{bound-b}) due to (\ref{psileqphi}),  
Theorem \ref{thm-1-15-1-2} is better than Theorem \ref{thm-1-15-1}
when the relation (\ref{eq-223}) holds.
Otherwise, the error exponent of (\ref{Haya-51-w}) and/or (\ref{Haya-52-w})
is not positive.
That is,
Theorem \ref{thm-1-15-1-2} cannot yield a reliable communication.
In summary, 
Theorem \ref{thm-1-15-1} has a wider applicability than 
Theorem \ref{thm-1-15-1-2}.
In the special case (\ref{eq-223}),
Theorem \ref{thm-1-15-1-2} is better than Theorem \ref{thm-1-15-1}.

\begin{proof}
Relations (\ref{Haya-51-w}) and (\ref{Haya-52-w}) have been shown in Subsection \ref{s6-1}.
Due to the $\epsilon$-SACU condition,
(\ref{11-27-1b}) guarantees (\ref{12-28-2}).
Using (\ref{ineq-6-a-}) and the $\epsilon$-SACU condition,
we obtain
\begin{align*}
&I(S_{\mathcal{I},n}; Z_n| S_{0,n})[P_{Z|V}^n,\Phi_n,P_{S_{{\cal T+1},n}}] \\
\le &
\frac{2^{T+2}}{\rho} e^{
-\rho 
H_{1+\rho}(S_{\mathcal{I}^{c,*},n}|S_{\mathcal{I},n},S_{0,n})
+n \psi (\rho| P_{Z|V}, P_{V|U},P_U)}.
\end{align*}
Then,
\begin{align}
& \liminf_{n\to \infty}\frac{-1}{n}\log I(S_{\mathcal{I},n}; Z_n| S_{0,n})[P_{Z|V}^n,\Phi_n,P_{S_{{\cal T},n}}] \nonumber \\
\ge &
\rho \underline{H}_{1+\rho}(\mathcal{I}^{c,*}) - \psi(\rho|P_{Z|V},P_{V|U},P_U).
\end{align}
Hence, we obtain (\ref{bound-1}).
\end{proof}

When the above discussion is applied to the wire-tap channel model,
we obtain an extension of existing results to the case of the asymptotic uniform dummy message.
That is, we consider the case with no common messages and $T=2$ when
${S}_1$ corresponds to the message to be secretly sent to Bob,
and ${S}_2$ does to the dummy message making $S_1$ ambiguous to Eve.
For a given rate $R_1$ of secret message
and a given rate $R_2$ of dummy message,
the RHS of (\ref{Haya-51-w}) coincides with the Gallager exponents,
the RHS of (\ref{bound-b}) coincides with the RHS of (59) in
\cite{hay-wire},
and 
the RHS of (\ref{bound-1}) coincides with the exponents of the RHS of (15) in
\cite{hayashi11}.

\section{Practical Code Construction}\Label{s8}
In Section \ref{s8},
we consider how we can construct practically usable 
encoder and decoder for the secure multiplex coding.
When the channel has additive structure, 
the paper \cite[Section V]{hayashi11} constructed a code for wire-tap channel code
from an ordinary linear error correcting code,
and the paper \cite[Section VI]{yamamoto05} did a secure multiple code without common message from an ordinary linear error correcting code.
Here, we construct a secure multiple code with/without common message
when the channel does not necessarily have additive structure
and the message does not necessarily obey the uniform distribution.
We shall show how to convert an ordinary
error correcting code without secrecy consideration
to a code for the secure multiplex coding.
In this section, we treat practical code construction in the single-shot setting
unless otherwise stated.

It is a common practice to assume the uniform distribution
of messages when one evaluates the decoding error probability,
and decoding error probabilities with non-uniform message
distributions are rarely considered in practice.
Thus, we always assume the uniform message distribution
because this assumption is necessary
for the analysis of the decoding error probability.
However, this assumption is unnecessary for
that of the leaked information to Eve.
The analysis of this section holds for general channels with finite alphabets except for Lemma \ref{l-12-21-2}.
Only Lemma \ref{l-12-21-2} assumes the regularity of the channel.


\subsection{First Practical Code Construction: First Type Evaluation}\Label{s8-1}
We construct a code for the secure multiplex coding based on a
given code $\varphi_{\mathrm{p}}$ for BCD
with the common message in $\mathcal{S}_{\mathrm{c}}$ 
and the private message in $\mathcal{S}_{\mathrm{p}}$.
We assume that encoding and decoding of $\varphi_{\mathrm{p}}$
can be efficiently executed.
We shall attach $F'$ and $G'$ in the second step of Code Ensemble \ref{con1}
to $\varphi_{\mathrm{p}}$ so that the resulting code for SMC
enables efficient encoding and decoding.
This type of construction is much more practical than 
Code Ensemble \ref{con1}
because Code Ensemble \ref{con1} uses the random coding for 
the error correcting code $\varphi_{\mathrm{p}}$,
which does not enable efficient encoding nor decoding.
To use the code with $F'$ and $G'$ attached,
we have to evaluate decoding error probability and the amount of information leaked to Eve.
The former is less than or equal to that of the underlying error correcting 
code $\varphi_{\mathrm{p}}$, and
the average of the latter over the ensemble of $F'$ and $G'$  can be evaluated by Lemma \ref{lem2-1} 
with a fixed error correcting code $\varphi_{\mathrm{p}}$.
In our code, we employ a dummy message to realize 
the secrecy of message when the leaked information is very close to the mutual information with the normal receiver and the number of $T$ is fixed.
Now, we present a code construction.


\begin{const}\Label{con3}
First, in order to apply Lemma \ref{lem2-1},
we divide the common message set $\mathcal{S}_{\mathrm{c}}$ of the BCD code $\varphi_{\mathrm{p}}$ 
to $\mathcal{S}_0 \times \mathcal{B}_1$,
and denote the private message set $\mathcal{S}_{\mathrm{p}}$ 
of $\varphi_{\mathrm{p}}$ by $\mathcal{B}_2$.
That is, the code $\varphi_{\mathrm{p}}$
is regarded as a map from $\mathcal{S}_0 \times \mathcal{B}_1\times \mathcal{B}_2$ 
to $\mathcal{X}$.
Then, based on the code $\varphi_{\mathrm{p}}$,
assuming the Abelian group structures in ${\cal B}_1$ and $\mathcal{B}_2$,
we choose an ensemble of isomorphisms\footnote{Remark \ref{3-25R} discusses an efficient realization of an ensemble of isomorphisms $F$ satisfying Condition \ref{C2-b}.}
$F'$ from ${\cal S}_{1}\times \cdots \times {\cal S}_{T+1}$ to ${\cal B}_1\times \mathcal{B}_2$ as Abelian groups
satisfying Condition \ref{C2-b}
while we do not assume any algebraic assumption for the code $\varphi_{\mathrm{p}}$.
In this scenario, 
$S_0$ is common message, 
$S_1$, \ldots, $S_{T}$ are secret messages,
and $S_{T+1}$ is the dummy randomness whose secrecy is not
required. 
We choose the random variable $G'\in {\cal B}_1\times {\cal B}_2$ that obeys the uniform distribution on ${\cal B}_1\times {\cal B}_2$ 
and is independent of the choice of $F'$ and anything else.
Then, by defining a map $\Lambda_{F',G'}(s):=F'(s)+G'$,
we obtain our encoder 
$\varphi_{\mathrm{p}} \circ \Lambda_{F',G'}(s_0,s_1,\ldots,s_{T+1})=
\varphi_{\mathrm{p}} (s_0,\Lambda_{F',G'}(s_1,\ldots,s_{T+1}))$.
The decoder is constructed by applying the inverse 
$\Lambda_{F',G'}^{-1}(b_1,b_2)= {F'}^{-1}((b_1,b_2)-G')$
to the decoded message of the code $\varphi_{\mathrm{p}}$.
\end{const}

The average of the leaked information of the above constructed code
is evaluated as follows.
\begin{lemma}\label{lem:practical1}
For a subset $\mathcal{I}\subsetneq \{1,\ldots, T\}$, 
the quantity $E_{0,\max}(\rho|P_{Z|V})$ defined in \eqref{eq10001} satisfies 
\begin{align}
& 
\rE_{F',G'}
I(S_{\mathcal{I}}; Z |S_{0})[P_{Z|V},
\varphi_{\mathrm{p}} \circ \Lambda_{F',G'},P_{S_{\cal T}}] \nonumber  \\
\le &
\frac{e^{E_{0,\max}(\rho|P_{Z|V})- \rho H_{1+\rho}(S_{\mathcal{I}^{c,*}}|S_{\mathcal{I}},S_{0})}}{\rho}
\Label{eq10000}.
\end{align}
\end{lemma}

\begin{proof}
Applying Lemma \ref{lem2-1}, we obtain
\begin{align}
& 
\rE_{F',G'}
\exp (\rho I(S_{\mathcal{I}}; Z |S_{0})[P_{Z|V},
\varphi_{\mathrm{p}} \circ \Lambda_{F',G'},P_{S_{\cal T}}])\nonumber  \\
\le &
1+
\sum_{s_{0}} P_{S_{0}}(s_{0}) 
\sum_{s_{\mathcal{I}}} P_{S_{\mathcal{I}}|S_{0} }(s_{\mathcal{I}}|s_{0})
e^{- \rho H_{1+\rho}(S_{\mathcal{I}^{c,*}}|S_{\mathcal{I}}=s_{\mathcal{I}},S_{0}=s_{0})} \nonumber \\
&\qquad \qquad \qquad \qquad
 \cdot e^{\psi ( \rho| P_{Z|B_1,B_2,S_0=s_0,\varphi_{\mathrm{p}}}, P_{\mix, \mathcal{B}_1,\mathcal{B}_2} )} 
\Label{ineq-11-1-b} .
\end{align}
Since
\begin{align*}
& e^{\psi ( \rho| P_{Z|B_1,B_2,\varphi_{\mathrm{p}}, S_0}, P_{\mix, \mathcal{B}_1,\mathcal{B}_2} )} 
\le
e^{E_0 ( \rho| P_{Z|B_1,B_2,\varphi_{\mathrm{p}}, S_0}, P_{\mix, \mathcal{B}_1,\mathcal{B}_2} )}  \\
= &
\sum_{z}
(\sum_{b_1,b_2}
\frac{1}{|{\cal B}_1||{\cal B}_2|}
P_{Z|V}(z|\varphi_{\mathrm{p}}(s_0,b_1,b_2) )^{\frac{1}{1-\rho}}  )^{1-\rho},  \\
&\sum_{s_{\mathcal{I}}} P_{S_{\mathcal{I}}|S_{0} }(s_{\mathcal{I}}|s_{0})
e^{- \rho H_{1+\rho}(S_{\mathcal{I}^{c,*}}|S_{\mathcal{I}}=s_{\mathcal{I}},S_{0}=s_{0})}
=
e^{- \rho H_{1+\rho}(S_{\mathcal{I}^{c,*}}|S_{\mathcal{I}},S_{0}=s_{0})} ,
\end{align*}
we obtain
\begin{align}
& 
\rE_{F',G'}
\exp (\rho I(S_{\mathcal{I}}; Z |S_{0})[P_{Z|V},
\varphi_{\mathrm{p}} \circ \Lambda_{F',G'},P_{S_{\cal T}}])\nonumber  \\
\le &
1+
\sum_{s_{0}} P_{S_{0}}(s_{0}) 
e^{- \rho H_{1+\rho}(S_{\mathcal{I}^{c,*}}|S_{\mathcal{I}},S_{0}=s_{0})} \nonumber \\
&\qquad 
 \cdot 
\sum_{z}
(\sum_{b_1,b_2}
\frac{1}{|{\cal B}_1||{\cal B}_2|}
P_{Z|V}(z|\varphi_{\mathrm{p}}(s_0,b_1,b_2) )^{\frac{1}{1-\rho}}  )^{1-\rho} 
\Label{ineq-11-1-d} .
\end{align}
It can be simplified as follows.
\begin{align}
&
\sum_{z}
(\sum_{b_1,b_2}
\frac{1}{|{\cal B}_1||{\cal B}_2|}
P_{Z|V}(z|\varphi_{\mathrm{p}}(s_0,b_1,b_2) )^{\frac{1}{1-\rho}}  )^{1-\rho} 
\nonumber \\
\le &
\max_{P_V}
\sum_{z}
(\sum_{v}P_V(v)
P_{Z|V}(z|v )^{\frac{1}{1-\rho}}  )^{1-\rho} \nonumber \\
=& \max_{P_V} e^{E_{0}(\rho|P_{Z|V},P_V)}
=e^{E_{0,\max}(\rho|P_{Z|V})}.\nonumber 
\end{align}
That is, using the relation
$\sum_{s_{0}} P_{S_{0}}(s_{0}) 
e^{- \rho H_{1+\rho}(S_{\mathcal{I}^{c,*}}|S_{\mathcal{I}},S_{0}=s_{0})} =
e^{- \rho H_{1+\rho}(S_{\mathcal{I}^{c,*}}|S_{\mathcal{I}},S_{0})}$,
we have
\begin{align}
& 
\rE_{F',G'}
\exp (\rho I(S_{\mathcal{I}}; Z |S_{0})[P_{Z|V},
\varphi_{\mathrm{p}} \circ \Lambda_{F',G'},P_{S_{\cal T}}])\nonumber  \\
\le &
1+
e^{- \rho H_{1+\rho}(S_{\mathcal{I}^{c,*}}|S_{\mathcal{I}},S_{0})} 
e^{E_{0,\max}(\rho|P_{Z|V})}.
\Label{ineq-11-1-e} 
\end{align}
Combining the Jensen inequality for $x \mapsto e^x$,
we obtain the desired upper bound (\ref{eq10000}).
\end{proof}

The logarithm of the RHS of (\ref{eq10000}) has the following property.
\begin{lemma}\Label{l-12-21-1}
The functions 
$\rho \mapsto 
E_{0}(\rho|P_{Z|V})- \rho H_{1+\rho}(S_{\mathcal{I}^{c,*}}|S_{\mathcal{I}},S_{0})
-\log \rho$
and
$\rho \mapsto 
E_{0,\max}(\rho|P_{Z|V})- \rho H_{1+\rho}(S_{\mathcal{I}^{c,*}}|S_{\mathcal{I}},S_{0})
-\log \rho$
are convex.
\end{lemma}

\begin{proof}
The function $\rho \mapsto E_0(\rho| \overline{W}^Z ,Q_{V})$
is convex \cite{gallager68}.
Also the function $\rho \mapsto \rho H_{1+\rho}(S_{\mathcal{I}^{c,*}}|S_{\mathcal{I}},S_{0})$
is concave.
Hence, 
$E_{0}(\rho|P_{Z|V},Q_{V})- \rho H_{1+\rho}(S_{\mathcal{I}^{c,*}}|S_{\mathcal{I}},S_{0})
-\log \rho$
is convex.
Similarly, due to Lemma \ref{3-22-1L}, 
the function $\rho \mapsto 
E_{0,\max}(\rho|P_{Z|V})- \rho H_{1+\rho}(S_{\mathcal{I}^{c,*}}|S_{\mathcal{I}},S_{0})
-\log \rho$
is convex.
\end{proof}

As is explained latter,
the bound $e^{E_{0,\max}(\rho|P_{Z|V})}$
is computable in the discrete memoryless case.
On the other hand, the error probabilities 
can be upper bounded by the average
error probabilities of the code $\varphi_{\mathrm{p}}$.

Next, we determine the necessary amount of
dummy randomness so that the amounts of leaked information is
below specified levels.
Suppose that we are given arbitrary error-correcting code $\varphi_{\mathrm{p}}$
for the broadcast channel $P_{YZ|V}$.
The code $\varphi_{\mathrm{p}}$ can be, for example, an LDPC code
\cite{ldpcbook} or a Turbo code \cite{turbobook}
when there is no common message.
Then, we assume that $S_{T+1}$ obeys the uniform distribution on
its alphabet $\mathcal{S}_{T+1}$ and is statistically independent of
all other random variables.
As a corollary to Lemma \ref{lem:practical1},
we have:
\begin{lemma}\label{lem:practical2}
For $\mathcal{I} \subset \{1, \ldots, T\}$, we have
\begin{align}
& \rE_{F',G'}
I(S_{\mathcal{I}}; Z |S_{0})[P_{Z|V},
\varphi_{\mathrm{p}} \circ \Lambda_{F',G'},P_{S_{\cal T}}] \nonumber  \\
\le &
\frac{e^{E_{0,\max}(\rho|P_{Z|V})- \rho (\log |\mathcal{S}_{T+1}|
+H_{1+\rho}(S_{\mathcal{I}^{c}}|S_{\mathcal{I}},S_{0}))}}{\rho}
\Label{eq10002}.
\end{align}
\end{lemma}

By using Eq.\ (\ref{eq10002}), from $\varphi_{\mathrm{p}}$ we can construct
a code for the secure multiplex coding as follows.
For each proper nonempty set $\mathcal{I} \subsetneq
\{1$, \ldots, $T\}$, $\epsilon_{\mathcal{I}}$
denotes the maximum acceptable information leakage for $I(S_{\mathcal{I}}; Z)$.
Denote by $\epsilon_2$
the maximum acceptable probability  for
a chosen $F',G'$ not making $I(S_{\mathcal{I}}; Z |S_{0})$ below
$\epsilon_{\mathcal{I}}$
for some $\mathcal{I}$.

Adjust the size $|\mathcal{S}_{T+1}|$ 
of the dummy randomness so that
\[
\epsilon_{\mathcal{I}} := \frac{2^{T} }{\epsilon_2}
\left(
\inf_{\rho\in (0,1)}\frac{e^{E_{0,\max}(\rho|P_{Z|V})- \rho (
\log|\mathcal{S}_{T+1}|+H_{1+\rho}(S_{\mathcal{I}^c}|S_{\mathcal{I}}, S_{0}))}
}{\rho  }
\right) .
\]
Then, due to (\ref{eq10002}), 
we obtain 
\[
\rE_{F',G'} 
 I(S_{\mathcal{I}}; Z |S_{0})[P_{Z|V},\varphi_{\mathrm{p}} \circ \Lambda_{F',G'},P_{S_{{\cal T}}}]
\leq \epsilon_2 \epsilon_{\mathcal{I}} / 2^{T}
\]

Then,
by the Markov inequality
the probability of choosing $F'$ and $G'$
making $I(S_{\mathcal{I}}; Z |S_{0}) \leq \epsilon_{\mathcal{I}}$
simultaneously for all $\mathcal{I}
\subsetneq  \{1$, \ldots, $T\}$
is $\geq 1 - \epsilon_2$.

When the channel is a regular channel
in the sense of Delsarte-Piret \cite{delsarte82},
the value $E_{0,\max}(\rho|P_{Z|V})$
can be calculated as follows:
\begin{lemma}\Label{l-12-21-2}
When the channel $P_{Z|V}$ is regular 
in the sense of Delsarte-Piret \cite{delsarte82},
\begin{align}
E_{0,\max} ( \rho| P_{Z|V})=E_0 ( \rho| P_{Z|V},P_{\mix,\mathcal{V}}).
\Label{eq-233}
\end{align}
Further, 
when the code $\varphi_{\mathrm{p}}$ is a homomorphism as Abelian group, 
the inequality
\begin{align}
& 
\rE_{F'|G'=g'}
I(S_{\mathcal{I}}; Z |S_{0})[P_{Z|V},
\varphi_{\mathrm{p}} \circ \Lambda_{F',g'},P_{S_{\cal T}}] \nonumber  \\
\le &
\frac{e^{E_{0}(\rho|P_{Z|V},P_{\mix,\mathcal{V}})- 
\rho (\log|\mathcal{S}_{T+1}|+H_{1+\rho}(S_{\mathcal{I}^c}|S_{\mathcal{I}},S_{0}))}}{\rho}
\Label{eq10000-b}
\end{align}
holds for any $g'\in G'$.
\end{lemma}

Thanks to Lemma \ref{l-12-21-2}, 
in the regular case,
when the code $\varphi_{\mathrm{p}}$ is a homomorphism as Abelian group, 
the above procedure for the construction of our code (Code Construction \ref{con3})
can be simplified 
to the following way.
It is enough to choose $F'$ and to fix $G'$ to be $0$,
and  we can replace 
$E_{0,\max}(\rho|P_{Z|V})$ by $E_{0}(\rho|P_{Z|V},P_{\mix,\mathcal{V}})$.
That is, it is enough to calculate 
$\inf_{\rho \in (0,1)}
E_{0}(\rho|P_{Z|V},P_{\mix,\mathcal{V}})
- \rho (\log |\mathcal{S}_{T+1}| + H_{1+\rho}(S_{\mathcal{I}^{c}}|S_{\mathcal{I}},S_{0}))
-\log \rho$.
Due to Lemma \ref{l-12-21-1},
$E_{0}(\rho|P_{Z|V},P_{\mix,\mathcal{V}})
- \rho (\log |\mathcal{S}_{T+1}| + H_{1+\rho}(S_{\mathcal{I}^{c}}|S_{\mathcal{I}},S_{0}))
-\log \rho$ is convex with respect to $\rho$, 
and the infimum is computable 
by the bisection method \cite[Algorithm 4.1]{Boyd}.

\begin{proofof}{Lemma \ref{l-12-21-2}}
First, we choose $P_{V}'$ such that
\begin{align}
E_{0,\max} ( \rho| P_{Z|V})=E_0 ( \rho| P_{Z|V},P_{V}').
\Label{3-22-7eq}
\end{align}
Define $P_{V,v_0}'$ for $v_0 \in {\cal V}$ by 
\begin{align*}
P_{V,v_0}'(v)=P_{V}'(v+v_0).
\end{align*}
Then,
\begin{align}
e^{E_0 ( \rho| P_{Z|V},P_{V}')}
= e^{E_0 ( \rho| P_{Z|V},P_{V,v_0}')}.
\Label{3-22-8eq}
\end{align}
Hence, we obtain
\begin{align*}
& e^{E_{0,\max} ( \rho| P_{Z|V})}
\stackrel{(a)}{=}
e^{E_0 ( \rho| P_{Z|V},P_{V}')}
\stackrel{(b)}{=}
\sum_{v_0\in {\cal V}} \frac{1}{|{\cal V}|}e^{E_0 ( \rho| P_{Z|V},P_{V,v_0}')} \\
\stackrel{(c)}{\le} &
e^{E_0 ( \rho| P_{Z|V},\sum_{v_0\in {\cal V}} \frac{1}{|{\cal V}|} P_{V,v_0}')}
=
e^{E_0 ( \rho| P_{Z|V},P_{\mix,\mathcal{V}})}
\stackrel{(d)}{\le}  e^{E_{0,\max} ( \rho| P_{Z|V})},
\end{align*}
where 
$(a)$, $(b)$, $(c)$, and $(d)$ follow from
\eqref{3-22-7eq},
\eqref{3-22-8eq},
the concavity of $P_{V}\mapsto e^{E_0 ( \rho| P_{Z|V},P_{V})}$ 
(Item (2) of Proposition \ref{lem12-4-1}), and
the definition \eqref{eq10001} of $E_{0,\max} ( \rho| P_{Z|V})$,
respectively.
Thus, we have (\ref{eq-233}).

Next, we show (\ref{eq10000-b}).
When the code $\varphi_{\mathrm{p}}$ is a homomorphism as Abelian group, 
as is mentioned in Lemma \ref{lem2-1},
we have
$\rE_{F'|G'=g'}
I(S_{\mathcal{I}}; Z |S_{0})[P_{Z|V},
\varphi_{\mathrm{p}} \circ \Lambda_{F',g'},P_{S_{\cal T}}] 
=
\rE_{F',G'}
I(S_{\mathcal{I}}; Z |S_{0})[P_{Z|V},
\varphi_{\mathrm{p}} \circ \Lambda_{F',g'},P_{S_{\cal T}}]$.
Hence, 
combining (\ref{eq10002}), we obtain (\ref{eq10000-b}).
\end{proofof}

When the channel 
is given as the 
$n$-fold discrete memoryless extension $P_{Z|V}^n$ of
$P_{Z|V}$,
$E_{0,\max}(\rho|P_{Z|V}^n)$ 
has the following characterization.
Using \cite{arimoto73}, we obtain
\begin{align}
\max_{P_{V^n}}
\sum_{z^n}
(\sum_{v^n}P_{V^n}(v^n)
P_{Z^n|V^n}(z^n|v^n )^{\frac{1}{1-\rho}}  )^{1-\rho} 
=
e^{nE_{0,\max}(\rho|P_{Z|V})}.\nonumber 
\end{align}
Thus, 
we can apply the above discussion to 
the $n$-fold memoryless case
by replacing 
$E_{0,\max}(\rho|P_{Z|V})$ and
$P_{Z|V}$ 
by 
$n E_{0,\max}(\rho|P_{Z|V})$
and $P_{Z|V}^n$.
That is, it is enough to calculate 
$\inf_{\rho \in (0,1)}
n E_{0,\max}(\rho|P_{Z|V})
- \rho (\log |\mathcal{S}_{T+1}| + H_{1+\rho}(S_{\mathcal{I}^{c}}|S_{\mathcal{I}},S_{0}))
-\log \rho$.
Since, 
as is mentioned in Proposition \ref{lem12-4-1},
$Q_V \mapsto e^{E_0(\rho| \overline{W}^Z ,Q_{V})}$ is concave
and $x \mapsto \log x$ is monotone increasing and concave,
$Q_V \mapsto E_0(\rho| \overline{W}^Z ,Q_{V})$ is concave.
Hence, $E_{0,\max}(\rho|P_{Z|V},Q_V)=\max_{Q_V} E_{0}(\rho|P_{Z|V},Q_V)$ can be easily computed. 
Due to Lemma \ref{l-12-21-1},
$n E_{0,\max}(\rho|P_{Z|V})
- \rho (\log |\mathcal{S}_{T+1}| + H_{1+\rho}(S_{\mathcal{I}^{c}}|S_{\mathcal{I}},S_{0}))
-\log \rho$ is convex concerning
with respect to
$\rho$, the infimum is computable 
by the bisection method \cite[Algorithm 4.1]{Boyd}.
Therefore, we can calculate
the minimum size $|\mathcal{S}_{T+1}|$
satisfying that
$n E_{0,\max}(\rho|P_{Z|V})
- \rho (\log |\mathcal{S}_{T+1}| + H_{1+\rho}(S_{\mathcal{I}^{c}}|S_{\mathcal{I}},S_{0}))
-\log \rho$ is smaller than a specified level for all of $\mathcal{I} \subsetneq \{1, \ldots, T\}$.

\subsection{First Practical Construction: Second Type Evaluation}\Label{s8-1-2}
In the above discussion, 
we have to consider the maximum value $E_{0,\max}(\rho|P_{Z|V})$.
However, when there is no common message and the channel $P_{Z|V}$ is not regular,  
one can
improve the bound (\ref{eq10000}) in the $n$-fold memoryless case 
under the same code construction (Code Construction \ref{con3})
as the following way.
In the following, we treat the $n$-fold memoryless extension $P_{Z|V}^n$.
Given an encoder $\varphi_{\mathrm{p}}: {\cal B}_2 \to {\cal V}^n$,
we define the weight distribution $P_{\varphi_{\mathrm{p}}}$
over the set $T_n({\cal V})$ 
of types of length $n$ of the set $\mathcal{V}$ by 
\begin{align}
P_{\varphi_{\mathrm{p}}} (Q_V):=
\frac{
|\{ v^n \in \im  \varphi_{\mathrm{p}} |  \hbox{the type of }v^n \hbox{ is } Q_V. \}|
}{|\im  \varphi_{\mathrm{p}}|}
\end{align}
for $Q_V \in T_n({\cal V})$.
Using the above weight distribution $P_{\varphi_{\mathrm{p}}} $,
we define the distribution
\begin{align*}
\overline{P}_{\varphi_{\mathrm{p}}}(v^n)
:= \frac{P_{\varphi_{\mathrm{p}}} (Q_V)}{|T_n(Q_V)|}
\end{align*}
for $v^n \in \mathcal{V}^n$, where $Q_V$ is the type of $v^n$
and 
\begin{align*}
T_n(Q_V):= 
\{v^n \in \mathcal{U}^n |
\hbox{the type of }v^n
\hbox{ is }Q_V. \} .
\end{align*}

We construct our code by the same way as Subsection \ref{s8-1}.
We apply Lemma \ref{l12-3-1} to the case when 
$\mathcal{G}$ is the $n$-th permutation group,
$\mathcal{V}$ is $\mathcal{V}^n$,
and $P_{Z|V}$ is $P_{Z|V}^n$.
Then,
\begin{align*}
& e^{\psi ( \rho| P_{Z^n|B_1}, P_{\mix, \mathcal{B}_2} )} 
\le
e^{E_0 ( \rho| P_{Z|V}^n,\overline{P}_{\varphi_{\mathrm{p}}})}.
\end{align*}
Hence, combining (\ref{ineq-11-1-b}), we obtain
\begin{align*}
& \rE_{F',G'}
\exp (\rho I(S_{\mathcal{I}}; Z )[P_{Z|V}^n,
\varphi_{\mathrm{p}} \circ \Lambda_{F',G'},P_{S_{\cal T}}])\nonumber  \\
\le &
1+
e^{E_0 ( \rho| P_{Z|V}^n,\overline{P}_{\varphi_{\mathrm{p}}})- 
\rho 
(\log |\mathcal{S}_{T+1}|+
H_{1+\rho}(S_{\mathcal{I}^c}|S_{\mathcal{I}}))} .
\end{align*}
Since $e^x$ is convex, we obtain
\begin{align*}
& \rE_{F',G'}
I(S_{\mathcal{I}}; Z )[P_{Z|V}^n,
\varphi_{\mathrm{p}} \circ \Lambda_{F',G'},P_{S_{\cal T}}]\nonumber  \\
\le &
\frac{
e^{E_0 ( \rho| P_{Z|V}^n,\overline{P}_{\varphi_{\mathrm{p}}})- 
\rho (\log |\mathcal{S}_{T+1}|+
H_{1+\rho}(S_{\mathcal{I}^c}|S_{\mathcal{I}}))}}{\rho} .
\end{align*}

However, it is not easy to calculate the weight distribution $P_{\varphi_{\mathrm{p}}}$
for a given code $\varphi_{\mathrm{p}}$, 
but it is possible to give an upper bound 
for each $P_{\varphi_{\mathrm{p}}} (Q_V)$ in some special cases.
For example, the upper bound in the case of binary BCH codes is discussed in \cite{Kasami}.
We assume that another distribution 
$Q_{\varphi_{\mathrm{p}}}$
over the set $T_n({\cal V})$ 
and a constant $C_1$ satisfy
\begin{align*}
C_1 Q_{\varphi_{\mathrm{p}}} (Q_V)
\ge P_{\varphi_{\mathrm{p}}} (Q_V)
\end{align*}
for any $Q_V \in T_n({\cal V})$.
Similar to $\overline{P}_{\varphi_{\mathrm{p}}}$,
we define the distribution $\overline{Q}_{\varphi_{\mathrm{p}}}$ by
\begin{align*}
\overline{Q}_{\varphi_{\mathrm{p}}}(v^n)
:= \frac{Q_{\varphi_{\mathrm{p}}} (Q_V)}{|T_n(Q_V)|}
\end{align*}
for $v^n \in \mathcal{V}^n$, where $Q_V$ is the type of $v^n$.
Hence, Proposition \ref{lem12-4-1} yields 
\begin{align*}
e^{E_0 ( \rho| P_{Z|V}^n,\overline{P}_{\varphi_{\mathrm{p}}})}
\le
C_1 e^{E_0 ( \rho| P_{Z|V}^n,\overline{Q}_{\varphi_{\mathrm{p}}})}.
\end{align*}
Therefore, we obtain
\begin{align}
& \rE_{F',G'}
I(S_{\mathcal{I}}; Z )[P_{Z|V}^n,
\varphi_{\mathrm{p}} \circ \Lambda_{F',G'},P_{S_{\cal T}}]\nonumber  \\
\le &
C_1 \frac{
e^{E_0 ( \rho| P_{Z|V}^n,\overline{Q}_{\varphi_{\mathrm{p}}})- 
\rho (\log |\mathcal{S}_{T+1}|+H_{1+\rho}(S_{\mathcal{I}^c}|S_{\mathcal{I}}))}}{\rho} .
\Label{12-21-10}
\end{align}
When $C_1$ is sufficiently small and
$\overline{Q}_{\varphi_{\mathrm{p}}}$ does not give the maximum $E_{0,\max} ( \rho| P_{Z|V}^n)$, 
the RHS of (\ref{12-21-10}) is smaller than the RHS of (\ref{eq10000}).
Similar to the regular case of Subsection \ref{s8-1}, 
we can calculate $
\inf_{\rho \in (0,1)}
E_0 ( \rho| P_{Z|V}^n,\overline{Q}_{\varphi_{\mathrm{p}}}) - \rho (\log |\mathcal{S}_{T+1}| + H_{1+\rho}(S_{\mathcal{I}^{c}}|S_{\mathcal{I}},S_{0}))
-\log \rho + \log C_1$
by the bisection method \cite[Algorithm 4.1]{Boyd}.
Therefore, 
in the above case, 
the method in this subsection improves that in Subsection \ref{s8-1}.


\subsection{Second Practical Construction}\Label{s8-2}
In the previous construction, 
when the channel is not a regular channel,
we have to use an upper bound (\ref{eq10000}), 
which is larger than 
$\frac{e^{E_{0}(\rho|P_{Z|V},P_{\mix,\mathcal{V}})- \rho H_{1+\rho}(S_{\mathcal{I}^{c,*}}|S_{\mathcal{I}},S_{0})}}{\rho}$.
In order to use a smaller upper bound
$\frac{e^{E_{0}(\rho|P_{Z|V},P_{\mix,\mathcal{V}})- \rho H_{1+\rho}(S_{\mathcal{I}^{c,*}}|S_{\mathcal{I}},S_{0})}}{\rho}$ even for a non-regular channel,
we introduce another practical construction
when there is no common message.

Assume that ${\cal V}$ has an Abelian group structure.
Now, we give a code ensemble from 
an arbitrary Abelian group ${\cal B}$
and 
an arbitrary encoder $\varphi:{\cal B}_2 \to {\cal V}$
satisfying that
the map $\varphi$ is an injective homomorphism.
In particular, 
when ${\cal B}_2$ and ${\cal V}$ are vector spaces over the finite field $\bF_2$,
the map $\varphi$ can be given as a linear code, such as an LDPC code 
\cite{ldpcbook} or a Turbo code \cite{turbobook}.
However, we do not necessarily need to assume any algebraic structure in the channel $P_{Z,Y|V}$, for now.
We stress that in Code Ensemble \ref{con4}
we use single encoder $\varphi$,
while in Code Construction \ref{const3}
we use multiple encoders with the same code length and
different information rates.

\begin{ensemble}\Label{con4}
We modify the random code given in Lemma \ref{lem2-1} as follows.
We choose an ensemble of isomorphisms
$F'$ from ${\cal S}_{1}\times \cdots \times {\cal S}_{T+1}$ to ${\cal B}_2$
satisfying Condition \ref{C2-b}.
We choose the random variable $G''\in {\cal V}$ 
that obeys the uniform distribution on ${\cal V}$ 
statistically independent of the choice of $F'$.
Then, we define the encoder  
$\tilde{\Lambda}_{F',G''}(s):=(\varphi \circ F')(s)+G''$. 
The decoder is given by 
$\hat{\tilde{\Lambda}}_{F',G''}(v)= {F'}^{-1}(\hat{\varphi}(v-G''))$
by using the decoder $\hat{\varphi}$ of $\varphi$.

This code ensemble can be understood in the following way.
We define the random variable 
$H$ in the quotient group ${\cal V}/\varphi ({\cal B}_2)$ 
that obeys the uniform distribution.
Let $\{y_{h}\}$ be the set of coset representatives.
Let $G'$ be the random variable subject to the uniform distribution on ${\cal B}_2$.
Then, $G''$ is given as $\varphi(G')+y_{H}$.
That is,
the encoder and the decoder can be given as follows.
$\tilde{\Lambda}_{F',G',H}(s):=(\varphi \circ F')(s)+G'+y_{H}$
and
$\hat{\tilde{\Lambda}}_{F',G',H}(v):= {F'}^{-1}(\hat{\varphi}(v-G'-y_{H}))$.
\end{ensemble}

In Code Ensemble \ref{con4},
the random variable 
$H$ corresponds to the choice of the codebook
for error correction.
Let
$\varepsilon_H$
be the decoding error probability 
when we use $H$ as the codebook
and the message obeys the uniform distribution.
Hence, we consider that
$\varepsilon_H$ expresses 
the decoding error probability 
when we use $H$ as the codebook in the following code construction.

For Code Ensemble \ref{con4}, we have the following lemma:
\begin{lemma}
The inequality
\begin{align}
& \rE_{F',G',H} 
e^{\rho I(S_{\mathcal{I}}; Z )[P_{Z|V},
\tilde{\Lambda}_{F',G',H},P_{S_{{\cal T}}}] } \nonumber \\
\le &
1+e^{- \rho H_{1+\rho}(S_{\mathcal{I}^{c,*}}|S_{\mathcal{I}})}
e^{E_0 ( \rho| P_{Z|V},P_{\mix,\mathcal{V}})}\Label{11-24-30}
\end{align}
\end{lemma}
holds for each subset $\mathcal{I} \subsetneq 
\{1$, \ldots, $T\}$.
Thus, applying Jensen inequality to $x\mapsto e^x$,
we have
\begin{align}
& \rE_{F',G',H} 
I(S_{\mathcal{I}}; Z )[P_{Z|V},\tilde{\Lambda}_{F',G',H},P_{S_{{\cal T}}}]  \nonumber \\
\le & 
\frac{e^{E_{0}(\rho|P_{Z|V},P_{\mix,\mathcal{V}})- \rho H_{1+\rho}(S_{\mathcal{I}^{c,*}}|S_{\mathcal{I}})}}{\rho}.
\Label{eq10003}
\end{align}

\begin{IEEEproof}
We apply (\ref{ineq-11-1-d}) to the case 
when $|\mathcal{S}_0|=1$, 
$\mathcal{S}_0=\{s_0\}$,
$|\mathcal{B}_1|=1$,
$\mathcal{B}_1=\{b_1\}$,
and the map $\varphi_{\mathrm{p}}$ is given as
$\varphi_{\mathrm{p}}(s_0,b_1,b_2)=\varphi(b_2)+y_h$
for any $b_2 \in \mathcal{B}_2$. 
Then, we obtain
\begin{align*}
& 
\rE_{F',G'} 
e^{\rho I(S_{\mathcal{I}}; Z)  [P_{Z|V},\tilde{\Lambda}_{F',G',h},P_{S_{{\cal T}}}] }
\nonumber \\
\le &
1+
e^{- \rho H_{1+\rho}(S_{\mathcal{I}^{c,*}}|S_{\mathcal{I}})}
\sum_{z}
(\sum_{b_2}
\frac{1}{|{\cal B}_2|}
P_{Z|V}(z|\varphi(b_2)+y_h )^{\frac{1}{1-\rho}}  )^{1-\rho} .
\end{align*}
Hence, we obtain
\begin{align*}
& \rE_{F',G',H} 
e^{\rho I(S_{\mathcal{I}}; Z)  [P_{Z|V},\tilde{\Lambda}_{F',G',H},P_{S_{{\cal T}}}] }
\nonumber \\
=& \rE_{H} 
\rE_{F',G'|H} 
e^{\rho I(S_{\mathcal{I}}; Z)  [P_{Z|V},\tilde{\Lambda}_{F',G',H},P_{S_{{\cal T}}}] }
\nonumber \\
\le &
1+
e^{- \rho H_{1+\rho}(S_{\mathcal{I}^{c,*}}|S_{\mathcal{I}})}
\rE_{H} 
\sum_{z}
(\sum_{b_2}
\frac{1}{|{\cal B}_2|}
P_{Z|V}(z|\varphi(b_2)+y_H )^{\frac{1}{1-\rho}}  )^{1-\rho} \\
\le &
1+e^{- \rho H_{1+\rho}(S_{\mathcal{I}^{c,*}}|S_{\mathcal{I}})}
\sum_{z}
(
\rE_{H} 
\sum_{b_2}
\frac{1}{|{\cal B}_2|}
P_{Z|V}(z|\varphi(b_2)+y_H )^{\frac{1}{1-\rho}}  )^{1-\rho} \\
=&
1+e^{- \rho H_{1+\rho}(S_{\mathcal{I}^{c,*}}|S_{\mathcal{I}})}
e^{E_0 ( \rho| P_{Z|V},P_{\mix,\mathcal{V}})},
\end{align*}
which implies (\ref{11-24-30}).
\end{IEEEproof}



In order to construct 
a code for the secure multiplex coding (with no common message),
we define the notations as follows.
Let $\epsilon_{\mathcal{I}}$ be the 
maximum acceptable information
leakage for $I(S_{\mathcal{I}}; Z)$ for 
each $\mathcal{I} \subsetneq
\{1$, \ldots, $T\}$.
Let $\epsilon_{b}$ be the maximum acceptable error probability.
Let $\epsilon_2$ be the 
the maximum acceptable probability 
a chosen $F',G''$ 
not making $I(S_{\mathcal{I}}; Z )$ below $\epsilon_{\mathcal{I}}$.
These parameters $\epsilon_{b}$, $\epsilon_{\mathcal{I}}$, and  
$\epsilon_2$ 
are the requirements for our code construction.

\begin{const}\label{const3}
In this construction,
in contrast to Subsections \ref{s8-1} and \ref{s8-1-2}
we assume that we are given multiple error-correcting
codes with the same code length $n$ and different information rates.
Using (\ref{eq10003}),
we construct a code for the secure multiplex coding (with no common message)
as follows:
\begin{enumerate}
\item
We choose 
a suitable Abelian group $\mathcal{B}_2$,
a suitable code $\varphi$,
a suitable sacrifice bit length (the size of $T$-th message), 
and a suitable real value $\epsilon_1 \in (0,1)$
satisfying that
\begin{align}
&\epsilon_{b} \ge \frac{\rE_H \varepsilon_H}{\epsilon_1 } 
\Label{1-16-1} \\
&\epsilon_{\mathcal{I}} \ge 2^{T} \min_{\rho\in (0,1)}\frac{e^{E_{0}(\rho|P_{Z|V},P_{\mix,\mathcal{V}})- \rho H_{1+\rho}(S_{\mathcal{I}^{c,*}}|S_{\mathcal{I}})}}{\rho \epsilon_2 (1-\epsilon_1)} .
\Label{1-21-11} 
\end{align}

\item
We choose $H$ randomly.
Then, we check that $\varepsilon_H$ is less than $\epsilon_{b}$.
If not, we choose another $H$.
We repeat this process until it is successful.
We denote the final choice of $H$ by $H'$.
Thanks to Markov inequality and (\ref{1-16-1}),
the successful probability for one trial is 
at least  $1-\epsilon_1$.

\item
We choose $F'$ and $G'$ randomly.
Then, 
we obtain the pair of 
the encoder 
$\tilde{\Lambda}_{F',G',H'}(s):=(\varphi \circ F')(s)+G'+y_{H'}$
and
the decoder
$\hat{\tilde{\Lambda}}_{F',G',H'}(v):= {F'}^{-1}(\hat{\varphi}(v-G'-y_{H'}))$.
\end{enumerate}
\end{const}

\begin{theorem}
Under the above construction,
the inequality
\begin{align}
 I(S_{\mathcal{I}}; Z )[P_{Z|V},\tilde{\Lambda}_{F',G',H'},
P_{S_{{\cal T}}}] 
\le  \epsilon_{\mathcal{I}} \Label{1-17-1}
\end{align}
holds for all subsets $\mathcal{I} \subsetneq \{1, \ldots, T\}$
with at least with probability $1-\epsilon_2$.
\end{theorem}

\begin{proof}
Markov inequality guarantees that
${\rm Pr} \{ \varepsilon_H \le \epsilon_{b} \}
\ge 1-\epsilon_1$.
Hence, we obtain
\begin{align*}
&\rE_{F',G',H'} 
 I(S_{\mathcal{I}}; Z )[P_{Z|V},\tilde{\Lambda}_{F',G',H},
P_{S_{{\cal T}}}] \\
= &
\rE_{F',G',H| \varepsilon_H \le \epsilon_{b} } 
 I(S_{\mathcal{I}}; Z )[P_{Z|V},\tilde{\Lambda}_{F',G',H},
P_{S_{{\cal T}}}] \\
\le &
\frac{{\rm Pr} \{ \varepsilon_H \le \epsilon_{b} \} }
{{\rm Pr} \{ \varepsilon_H \le \epsilon_{b} \} }
\rE_{F',G',H| \varepsilon_H \le \epsilon_{b} } 
 I(S_{\mathcal{I}}; Z )[P_{Z|V},\tilde{\Lambda}_{F',G',H},
P_{S_{{\cal T}}}] 
\nonumber \\
&+
\frac{{\rm Pr} \{ \varepsilon_H > \epsilon_{b} \} }
{{\rm Pr} \{ \varepsilon_H \le \epsilon_{b} \} }
\rE_{F',G',H| \varepsilon_H > \epsilon_{b} } 
 I(S_{\mathcal{I}}; Z )[P_{Z|V},\tilde{\Lambda}_{F',G',H},
P_{S_{{\cal T}}}] 
\\
=& 
\frac{1}{{\rm Pr} \{ \varepsilon_H \le \epsilon_{b} \} }
\rE_{F',G',H} 
 I(S_{\mathcal{I}}; Z )[P_{Z|V},\tilde{\Lambda}_{F',G',H},
P_{S_{{\cal T}}}] \\
\le &
\frac{1}{1-\epsilon_1}
\rE_{F',G',H} 
 I(S_{\mathcal{I}}; Z )[P_{Z|V},\tilde{\Lambda}_{F',G',H},
P_{S_{{\cal T}}}] \\
\le & \epsilon_2 \epsilon_{\mathcal{I}} / 2^{T} 
\end{align*}
for every $\mathcal{I}$,
where
$\rE_{F',G',H| \varepsilon_H \le \epsilon_{b} } $ denotes
the expectation under the condition $\varepsilon_H \le \epsilon_{b}$.
The final inequality follows from (\ref{eq10003}).
Since the above choice of $F'$, $G'$ and $H'$ is restricted to 
the set $\{(f',g',h') |\varepsilon_h \le \epsilon_{b}\}$,
due to Markov inequality,
the probability of choosing $F'$, $G'$ and $H'$ making 
(\ref{1-17-1}) simultaneously for all
 $\mathcal{I} \subsetneq \{1, \ldots, T\}$
is not less than $1-\epsilon_2$.
\end{proof}

Further, when the channel 
is given as the 
$n$-fold discrete memoryless extension $P_{Z|V}^n$ of
$P_{Z|V}$,
the quantity $E_{0}(\rho|P_{Z|V}^n,P_{\mix,{\cal V}^n})$ 
is simplified to
$n E_{0}(\rho|P_{Z|V},P_{\mix,{\cal V}})$. 
Hence, similar to the regular case of Subsection \ref{s8-1}, 
we can calculate the 
right hand side of (\ref{1-21-11})
by the bisection method \cite[Algorithm 4.1]{Boyd}.

\section{Channel-Universal Coding for Secure Multiplex Coding with Common Messages}\Label{s9}
In order to treat universal coding for the multiplex coding with common messages,
we introduce the universally attainable exponents 
of the multiplex coding with common messages
in the $n$-fold discrete memoryless setting
by adjusting the original definition for the BCD given by
K\"orner and Sgarro \cite{korner80}.
Similar to Subsection \ref{s7-3},
in this section, we employ $T+1$-th message $S_{T+1}$ as a dummy message subject to the 
uniform distribution, and
assume that the $T+1$-th message $S_{T+1,n}$ is subject to the uniform distribution.
We simplify $P_{S_{{\cal T},n}} \times P_{S_{T+1,n}}$ by $P_{S_{{\cal T},n}}$. 
For a subset $\mathcal{I}\subsetneq \{1,\ldots, T\}$, 
we denote the complementary set in $\{1,\ldots, T\}$ by $\mathcal{I}^c$
and simplify the set $\mathcal{I}^c\cup \{T+1\}$ to $\mathcal{I}^{c,*}$.

In order to treat universal coding for secure multiplex coding with common messages,
we focus on $2^{T+1}-2$ functions to express 
the evaluations of the exponential decreasing rates of decoding error probabilities 
and the asymptotic evaluations of leaked information.
For describing bounds of the exponential decreasing rates of both decoding error probabilities, we need two functions.
For treating the asymptotic evaluations of leaked information,
we need $2^{T+1}-4$ functions
because the number of non-empty proper subsets $\mathcal{I}(\neq \emptyset)\subsetneq \{1,\ldots, T\}$ is $2^T-2$
and we treat the exponential decreasing rates and the information leakage rates of leaked information for respective 
non-empty proper subsets $\mathcal{I}(\neq \emptyset)\subsetneq \{1,\ldots, T\}$.
Then, we need to treat $2^{T+1}-2$ functions.
Since we do not assume the uniformity, we cannot describe 
our bounds of the exponential decreasing rate and the information leakage rate of leaked information
as functions of the rate tuples $(R_{\mathrm{p}}$, $R_{\mathrm{c}}$, $(R_i)_{i=0,1,\ldots,T,T+1} )$.
In the following discussion, we treat our bound of the exponential decreasing rate of leaked information for 
a non-empty proper subset $\mathcal{I}(\neq \emptyset) \subsetneq \{1,\ldots, T\}$
as a function of $\underline{H}_{2}(\mathcal{I}^{c,*})$, $R_{\mathrm{c}}$, and $R_0$ as well as the channel $W$.
Similarly, we treat our bound of the information leakage rate of leaked information for 
a non-empty proper subset $\mathcal{I}(\neq \emptyset)  \subsetneq \{1,\ldots, T\}$
as a function of $\underline{H}_{\log}(\mathcal{I}^{c,*})$, $R_{\mathrm{c}}$, and $R_0$ as well as the channel $W$.
Our bounds of the exponential decreasing rates of both decoding error probabilities
are described as functions of $R_{\mathrm{p}}$, $R_{\mathrm{c}}$, and the channel $W$.
Hence, the outcomes of the above $2^{T+1}-2$ functions are decided by $2^{T+1}-1$ real numbers 
$R_{\mathrm{p}}$, $R_{\mathrm{c}}$, $R_0$, 
and $(\underline{H}_{2}(\mathcal{I}^{c,*}),\underline{H}_{\log}(\mathcal{I}^{c,*}) )_{\mathcal{I}(\neq \emptyset) \subsetneq \{1,\ldots, T\}}$
as well as the channel $W$.

\begin{definition}\Label{def:univexp}
A set of functions $({E}^{b}$, ${E}^{e}$,
$({E}_+^{\mathcal{I}}$, 
${E}_-^{\mathcal{I}})_{\mathcal{I}\subsetneq \{1,\ldots,T\} } )$ 
from 
$\mathbf{R}_{\ge 0}^{2^{T+1}-1} \times 
\mathcal{W}(\mathcal{X}$, $\mathcal{Y}\times \mathcal{Z})$
to $\mathbf{R}_{\ge 0}^{2^{T+1}-2}$ is said to be
a universally attainable set of exponents and information leakage rate
for the
family $\mathcal{W}(\mathcal{X}$, $\mathcal{Y}\times \mathcal{Z})$
if for any $\epsilon>0$ and any rate tuples
$(R_{\mathrm{p}}$, $R_{\mathrm{c}}$, $(R_i)_{i=0,1,\ldots,T} )$,
there exist 
a sufficiently large integer $N$
and
a sequence of codes $\varphi_n$ of length $n$ satisfying the following conditions:
(1)
The $i$-th secret message set $\mathcal{S}_{i,n}$ of the code $\varphi_n$
has cardinality $e^{n R_i}$ for $i=1$, \ldots, $T$,
and the common message sets $\mathcal{S}_{0,n}$ 
has cardinality $e^{n R_0}$. 
(2)
Any sequence of joint distributions $P_{S_{{\cal T},n}}$ for all of the $i$-th secret $S_{i,n}$ on $\mathcal{S}_{i,n}$ and the common message $S_{0,n}$ on $\mathcal{S}_{0,n}$ satisfies 
the inequalities
\begin{align}
P_b[W^n,\varphi_n,P_{S_{{\cal T+1},n}}]
 \leq & \exp(-n
[ 
{E}^{b}
(R_{\mathrm{p}}, R_{\mathrm{c}},R_0,
W)
-\epsilon]),\Label{Haya-51-f}\\
P_e[W^n,\varphi_n,P_{S_{{\cal T+1},n}}]
\leq & \exp(-n
[ {E}^{e}
(R_{\mathrm{p}}, R_{\mathrm{c}},R_0,
W)
-\epsilon]),\Label{Haya-52-f}
\end{align}
and
\begin{align}
&
\liminf_{n\to \infty} \frac{-1}{n} \log I(S_{\mathcal{I},n}; Z^n| S_{0,n})[W^n,\varphi_n,P_{S_{{\cal T+1},n}}] \nonumber \\
\ge &
{E}_+^{\mathcal{I}}
(R_{\mathrm{p}}, R_{\mathrm{c}},R_0,
(\underline{H}_{2}(\mathcal{I'}^{c,*}),\underline{H}_{\log}(\mathcal{I'}^{c,*}) )_{\mathcal{I'}(\neq \emptyset) \subsetneq \{1,\ldots, T\}},W)
,
\Label{eq:logexp} \\
&
\limsup_{n\to \infty} \frac{1}{n} I(S_{\mathcal{I},n}; Z^n| S_{0,n})[W^n,\varphi_n,P_{S_{{\cal T+1},n}}] \nonumber \\
\leq & 
{E}_-^{\mathcal{I}}
(R_{\mathrm{p}}, R_{\mathrm{c}},R_0,
(\underline{H}_{2}(\mathcal{I'}^{c,*}),\underline{H}_{\log}(\mathcal{I'}^{c,*}) )_{\mathcal{I'}(\neq \emptyset) \subsetneq \{1,\ldots, T\}},W)
,\Label{eq:logexp2}
\end{align}
hold for 
any channel 
$W \in \mathcal{W}(\mathcal{X}$, $\mathcal{Y}\times \mathcal{Z})$,
any non-empty proper subset $\mathcal{I}(\neq \emptyset)\subsetneq \{1,\ldots, T\}$,
and any $n \ge N$.
Here, 
${E}^{b}
(R_{\mathrm{p}}, R_{\mathrm{c}},R_0,
(\underline{H}_{2}(\mathcal{I'}^{c,*}),\underline{H}_{\log}(\mathcal{I'}^{c,*}) )_{\mathcal{I'}(\neq \emptyset) \subsetneq \{1,\ldots, T\}},W)
$
and
${E}^{e}
(R_{\mathrm{p}}, R_{\mathrm{c}},R_0,
(\underline{H}_{2}(\mathcal{I'}^{c,*}),\underline{H}_{\log}(\mathcal{I'}^{c,*}) )_{\mathcal{I'}(\neq \emptyset) \subsetneq \{1,\ldots, T\}},W)
$
are abbreviated to
${E}^{b}
(R_{\mathrm{p}}, R_{\mathrm{c}},R_0,
,W)$
and
${E}^{e}
(R_{\mathrm{p}}, R_{\mathrm{c}},R_0,
W)$
because
they do not depend on\par
\noindent $(\underline{H}_{2}(\mathcal{I'}^{c,*}),\underline{H}_{\log}(\mathcal{I'}^{c,*}) )_{\mathcal{I'}(\neq \emptyset) \subsetneq \{1,\ldots, T\}}$.
\end{definition}

For the reason why we employ the limiting forms in (\ref{eq:logexp}) and (\ref{eq:logexp2}), see Remark \ref{R1-2-1}.
Note that we do not consider here the universality for source while 
K\"orner and Sgarro \cite{korner80} show the universality for source as well as that for channel, as reviewed in Theorem \ref{lem-11-25-3-b} of this paper.
In order to guarantee the secrecy for $\mathcal{S}_{\mathcal{I},n}$, we need sufficient randomness of $\mathcal{S}_{\mathcal{I}^c,n}$.
That is, the secrecy of $\mathcal{S}_{\mathcal{I},n}$ depends on 
$\underline{H}_{2}(\mathcal{I}^c)$ and $\underline{H}_{\log}(\mathcal{I}^c)$,
which depends on the source distribution.
Hence, it is impossible to show the universality for source in SMC.

We fix a distribution $Q_{VU}$ on $\mathcal{U}\times \mathcal{V}$ and
a channel $\Xi: \mathcal{V}\rightarrow \mathcal{X}$. 
Then, we present a universally attainable set of exponents and leaked information rate
in terms of $Q_{VU}$ and $\Xi$ in the following way.
Given 
a broadcast $W:\mathcal{X}\rightarrow \mathcal{Y}\times \mathcal{Z}$
and  
the real numbers
$(R_{\mathrm{p}}, R_{\mathrm{c}},R_0,
(\underline{H}_{2}(\mathcal{I'}^{c,*}),\underline{H}_{\log}(\mathcal{I'}^{c,*}) )_{\mathcal{I'}(\neq \emptyset) \subsetneq \{1,\ldots, T\}})$,
the tuple of exponents and information leakage rate
are given as
\begin{align}
E^{b} =& {E}^{b}
(R_{\mathrm{p}}, R_{\mathrm{c}},R_0,
W)
\nonumber \\
:= &
\tilde{E}^{b} (R_{\mathrm{p}}, R_{\mathrm{c}}, (W^Y\circ \Xi) \times Q_{VU}),
\Label{eq:univs1}\\
E^{e} =& {E}^{e}
(R_{\mathrm{p}}, R_{\mathrm{c}},R_0,
W)
\nonumber \\
:= &
\tilde{E}^{e} ( R_{\mathrm{c}}, (W^Z\circ \Xi) \times Q_{VU}), 
\Label{eq:univs2}\\
E_+^{\mathcal{I}}=& 
{E}_+^{\mathcal{I}}
(R_{\mathrm{p}}, R_{\mathrm{c}},R_0,
(\underline{H}_{2}(\mathcal{I'}^{c,*}),\underline{H}_{\log}(\mathcal{I'}^{c,*}) )_{\mathcal{I'}(\neq \emptyset) \subsetneq \{1,\ldots, T\}},W)
\nonumber \\
:=& \tilde{E}^{l}(\underline{H}_{2}(\mathcal{I}^{c,*})-R_{\mathrm{c}}+R_0 ,(W^Z\circ \Xi ) \times Q_{VU} ) ,
\Label{eq:univs3} \\
E_-^{\mathcal{I}}=& 
{E}_-^{\mathcal{I}}
(R_{\mathrm{p}}, R_{\mathrm{c}},R_0,
(\underline{H}_{2}(\mathcal{I'}^{c,*}),\underline{H}_{\log}(\mathcal{I'}^{c,*}) )_{\mathcal{I'}(\neq \emptyset) \subsetneq \{1,\ldots, T\}},W)
\nonumber \\
:=& 
I(V;Z|U)[(W^Z\circ \Xi) \times Q_{VU}]
-\underline{H}_{\log}(\mathcal{I}^{c,*}) +R_{\mathrm{c}}-R_0
\Label{eq:univs4}
\end{align}
for a non-empty proper subset $\mathcal{I}(\neq \emptyset) \subsetneq \{1,\ldots, T\}$,
where $\tilde{E}^{b}$, $\tilde{E}^{e}$, $\tilde{E}^{E_0}$,
and $\tilde{E}^l$ are given by
(\ref{1-31-1}), (\ref{1-31-2}), (\ref{1-31-3}), and \eqref{1-31-1-k}, respectively.

Hence, our quadruple of exponents and information leakage rate
depends on $Q_{VU}$ and $\Xi$.

\begin{theorem}[Extension of {\cite[Theorem 1, part (a)]{korner80}}]
\Label{univ:thm}
Eqs.\ (\ref{eq:univs1})--(\ref{eq:univs4}) are 
universally attainable rates of exponents and information leakage rate
in the sense of Definition \ref{def:univexp}.
\end{theorem}


\begin{proof}
In the proof, since we treat the channel $W^{Z} \circ \Xi:{\cal V}\to {\cal Z}$,
we abbreviate it as $\overline{W}^{Z}$.
First, we give the outline of our proof.
We shall modify the constant composition code used by K\"orner and Sgarro \cite{korner80}.
We do not evaluate the decoding error probability,
because that of our code is not larger than that given in \cite{korner80}.
Observe that our exponents in
Eqs.\ (\ref{eq:univs1}) and (\ref{eq:univs2}) are the same
as \cite{korner80} with the channel $\overline{W}^{Z}=W^Z \circ \Xi$.
We shall evaluate only the mutual information.
For this purpose, we prepare general notations and properties of type and conditional type in Step (1).
Next, in Steps (2) and (3),
we prepare several notations and properties 
of type and conditional type that are specific to our proof.
In Step (4), we apply the random coding and evaluate the leaked information when the channel is given by the conditional types.
Then, we choose a code whose leaked information is evaluated for all conditional types
and whose error is evaluated for all discrete memoryless channels.
In Step (5), we evaluate the leaked information under the above chosen code for all discrete memoryless channels.

\noindent{\it Step (1): Preparation of general notations and properties of type and conditional type:\quad}

For the following construction of our code,
we prepare general notations for types.
These notations will be used also in the next section.
For a given type $Q_U$ of length $n$ on a set $\mathcal{U}$,
we define the set $T_n(Q_{U})$ as
\begin{align*}
T_n(Q_{U}) :=&
\{u^n \in \mathcal{U}^n |
\hbox{the type of }u^n
\hbox{ is }Q_U \} .
\end{align*}
Hence, for a given type $Q_{VU}$ of length $n$ on a set $\mathcal{V}\times \mathcal{U}$,
the set $T_n(Q_{VU})$ is written as
\begin{align*}
T_n(Q_{VU}) =&
\{(u^n,v^n) \in \mathcal{V}^n\times \mathcal{U}^n |
\hbox{the type of }(v^n,u^n)
\hbox{ is }Q_{VU} \} .
\end{align*}
The marginal distribution $Q_U$ over $\mathcal{U}$ of the type $Q_{VU}$ of length $n$ 
on the set $\mathcal{V}\times \mathcal{U}$
is a type of length $n$ on the set $\mathcal{U}$.
Given a type $Q_V$ of length $n$ on the set $\mathcal{U}$, 
we define the set of conditional types on the set $\mathcal{V}$
with respect to $Q_V$ as
\begin{align*}
&{\cal T}_{n,\mathcal{V}}(Q_{U}) \\
:=&
\{ \hbox{probability transition matrix } W \hbox{ from } \mathcal{U} \hbox{ to } \mathcal{V} \\
& \hspace{3ex} |W \times Q_{U} \hbox{ is a type of length } n
\hbox{ on a set } \mathcal{V}\times \mathcal{U} \} .
\end{align*}
The cardinality 
$|{\cal T}_{n,\mathcal{V}}(Q_{U}) |$ is upper bounded as \cite{csiszarbook}
\begin{align}
|{\cal T}_{n,\mathcal{V}}(Q_{U}) |
\le
(n+1)^{|\mathcal{V}\times \mathcal{U}|}. \Label{12-24-1}
\end{align}
In particular, given a type $Q_{VU}$ of length $n$ 
on the set $\mathcal{V}\times \mathcal{U}$,
we define the conditional type $Q_{V|U}$ such that $Q_{VU}= Q_{V|U}\times Q_U$.
We also define the set $T_n(Q_{V|U})_{U^n=u^n}$ as
\begin{align*}
T_n(Q_{V|U})_{U^n=u^n} :=&
\{v^n \in \mathcal{V}^n |
\hbox{the type of }(v^n,u^n)
\hbox{ is }Q_{VU} \} .
\end{align*}

We denote the uniform distribution $P_{\mix, T_n(Q_{U})}$ on 
$T_n(Q_{U})$ by $\Upsilon_n(Q_{U})$.
Then, for a given type $Q_{VU}$ of length $n$ on a set 
$\mathcal{V}\times \mathcal{U}$,
$\Upsilon_n(Q_{VU})$ represents the uniform distribution 
$P_{\mix, T_n(Q_{VU})} $
on $T_n(Q_{VU})$.
Further, for an arbitrary $W \in {\cal T}_{n,\mathcal{V}}(Q_{U})$,
$\Upsilon_n(W \times Q_{U})$ represents the uniform distribution on 
$T_n(W \times Q_{U})$.
Then, we define the probability transition matrix $\Upsilon_n(W)$ from $\mathcal{V}^n $ to $\mathcal{U}^n$
such that $\Upsilon_n(W)\times \Upsilon_n(Q_{U})
=\Upsilon_n(W \times Q_{U})$.

When $P_{V^n U^n}$ is a distribution over $\mathcal{V}^n\times \mathcal{U}^n$ and invariant under the permutation of the indices,
the distribution $P_{V^n U^n}$ can be written as
\begin{align}
P_{V^n U^n}= \sum_{Q_{VU}} \lambda_{P_{V^n U^n}} (Q_{VU})\Upsilon_n(Q_{VU})
\end{align}
with non-negative constants $\lambda (Q_{VU})$.
In particular, 
the independent and identical distribution 
$P_{V}^n$ of $P_V$
can be written as
\begin{align}
P_{V}^n= \sum_{Q_{V}} \lambda_{P_{V}^n} (Q_{V})\Upsilon_n(Q_{V})
\end{align}
with
\begin{align}
\lambda_{P_{V}^n} (Q_{V}) = P_{V}^n( T_n(Q_V))
\le e^{-nD(Q_V\|P_V)}.\Label{12-20-3}
\end{align}
When the marginal distribution over $\mathcal{U}^n$ 
of $P_{V^n U^n}$ 
can be written as $P_{\mix, T_n(Q_{U})}=\Upsilon_n(Q_{U})$ with a type $Q_U$ on the set $\mathcal{U}$,
we have
\begin{align}
P_{V^n U^n}
&= \sum_{Q_{V|U} \in {\cal T}_{n,\mathcal{V}}(Q_{U}) } 
\lambda_{P_{V^n U^n}} (Q_{V|U}\times Q_U)\Upsilon_n(Q_{V|U}\times Q_U) \nonumber \\
&= \sum_{Q_{V|U} \in {\cal T}_{n,\mathcal{V}}(Q_{U}) } 
\lambda_{P_{V^n U^n}} (Q_{V|U}\times Q_U) (\Upsilon_n(Q_{V|U})\times \Upsilon_n(Q_U)) \nonumber \\
&= \Bigl(\sum_{Q_{V|U} \in {\cal T}_{n,\mathcal{V}}(Q_{U}) } 
\lambda_{P_{V^n U^n}} (Q_{V|U}\times Q_U)\Upsilon_n(Q_{V|U})
\Bigr)\times \Upsilon_n(Q_U) .\Label{12-20-1}
\end{align}
We define the channel $P_{V^n| U^n}$ by 
$P_{V^n U^n}=P_{V^n| U^n}\times \Upsilon_n(Q_{U})$
and 
the real number $\lambda_{P_{V^n| U^n}} (Q_{V|U}):=\lambda_{P_{V^n U^n}} (Q_{V|U}\times Q_U)$
for $Q_{V|U} \in {\cal T}_{n,\mathcal{V}}(Q_{U})$.
Then,
we obtain
\begin{align}
P_{V^n |U^n}
= \sum_{Q_{V|U} \in {\cal T}_{n,\mathcal{V}}(Q_{U}) } 
\lambda_{P_{V^n| U^n}} (Q_{V|U})\Upsilon_n(Q_{V|U}). 
\Label{12-20-2}
\end{align}
Now, we consider the $n$-fold discrete memoryless channel $P_{V|U}^n$.
For a given type $Q_U$ on the set $\mathcal{U}$,
we apply the relation (\ref{12-20-1}) to 
the joint distribution $P_{V|U}^n|_{T_n(Q_{U})}\times \Upsilon_n(Q_U) $.
Then, (\ref{12-20-2}) implies that 
\begin{align}
P_{V|U}^n|_{T_n(Q_{U})}
= \sum_{Q_{V|U} \in {\cal T}_{n,\mathcal{V}}(Q_{U}) } 
\lambda_{P_{V|U}^n} (Q_{V|U}) \Upsilon_n (Q_{V|U}).\Label{12-20-6}
\end{align}
Choosing $u^n \in T_n(Q_{U})$, we have 
\begin{align}
\Upsilon_n (Q_{V|U}')(T_n(Q_{V|U})_{U^n=u^n} |U^n=u^n)
= 
\left\{
\begin{array}{ll}
1 & \hbox{ if } Q_{V|U}'= Q_{V|U}\\
0 & \hbox{ otherwise. } 
\end{array}
\right. \Label{12-20-7}
\end{align}
Combining (\ref{12-20-6}) and (\ref{12-20-7}), we obtain
\begin{align}
& \lambda_{P_{V|U}^n} (Q_{V|U}) \nonumber \\
=&
P_{V|U}^n|_{T_n(Q_{U})}(T_n(Q_{V|U})_{U^n=u^n}  |U^n=u^n) \nonumber \\
=&
\prod_{u \in \mathcal{U}} (P_{V|U=u})^{n Q_U(u)}(T_{n_u}(Q_{V|U=u}) )\nonumber \\
\le &
e^{-\sum_{u \in \mathcal{U}} n Q_U(u) D(Q_{V|U=u}\| P_{V|U=u})} \Label{12-20-5} \\
=&
e^{-n D(Q_{V|U}\| P_{V|U}| Q_U)},\Label{12-20-4}
\end{align}
where (\ref{12-20-5}) follows from (\ref{12-20-3}).

\noindent{\it Step (2): Preparation of notations and properties of conditional types based on a joint type on $\mathcal{U}\times \mathcal{V}$:\quad}

In this step, we prepare several important properties 
based on a type of length $n$ on the set 
$\mathcal{U} \times \mathcal{V} \times \mathcal{Z}$.
Now, we focus on a conditional type $W^Z \in 
{\cal T}_{n,\mathcal{Z}}(Q_{VU}) $, which gives
a type $W^Z \times Q_{VU}$ of length $n$ on the set 
$\mathcal{U} \times \mathcal{V} \times \mathcal{Z}$.
Note that
in order to make a type of length $n$ on the set 
$\mathcal{U} \times \mathcal{V} \times \mathcal{Z}$,
we need to choose $W^Z$ not from 
${\cal T}_{n,\mathcal{Z}}(Q_{V}) $
but from ${\cal T}_{n,\mathcal{Z}}(Q_{VU}) $.
Now, we treat 
the channel $\overline{W}^{Z}$ 
as a channel from $\mathcal{V}\times \mathcal{U} $
to $\mathcal{Z}$
while the output distribution of the channel $\overline{W}^{Z}$ 
does not depend on the choice of $u \in \mathcal{U}$. 
In our code $\varphi_{a,n}$,
the random variable $V^n U^n$ takes values in 
the subset $T_n(Q_{VU})$.
Hence, 
it is sufficient to treat the channel whose input alphabet is the subset $T_n(Q_{VU})$ of $\mathcal{V}^n\times\mathcal{U}^n$.
Based on (\ref{12-20-6}),
we make a convex decomposition
\begin{align}
\overline{W}^{Z,n}|_{T_n(Q_{VU})}
=&
\sum_{{W}^Z \in \mathcal{T}_{n,\mathcal{Z}}(Q_{VU})}
\lambda_{n,T} ({W}^Z) 
\Upsilon_n({W}^Z),
\Label{sc1-12-1-b}
\end{align}
with non-negative constants $\lambda_{n,T} ({W}^Z)$.
Then, due to (\ref{12-20-4}), we have 
\begin{align}
\lambda_{n,T} ({W}^Z)&\le e^{-nD({W}^Z\|\overline{W}^Z| Q_{VU})} .\Label{11-29-2}
\end{align}


For an arbitrary code $\varphi_{a,n}$,
the joint convexity of the conditional relative entropy yields that
\begin{align}
& I(S_{\mathcal{I},n};Z^n|S_{0,n})[\overline{W}^{Z,n},\varphi_{a,n},P_{S_{{\cal T+1},n}}]
\nonumber \\
\le &
\sum_{{W}^Z \in \mathcal{T}_{n,\mathcal{Z}}(Q_{VU})}
\lambda_{n,T} ({W}^Z) 
I(S_{\mathcal{I},n};Z^n|S_{0,n})
[\Upsilon_n(W^Z),\varphi_{a,n},P_{S_{{\cal T+1},n}}]
.\Label{Haya-21}
\end{align}

Next, in order to treat each channel $\Upsilon_n({W}^Z)$, 
we fix a conditional type $W^Z \in 
{\cal T}_{n,\mathcal{Z}}(Q_{VU}) $
and study the properties of the channel $\Upsilon_n({W}^Z)$.
Under the joint type $Q_{ZVU}:=W^Z \times Q_{VU}$,
we define the numbers
\begin{align*}
N(U)&:=|T_n(Q_{U})|, \quad 
N(UZ):=|T_n( (W^Z \circ Q_{V|U} )\times Q_U )|,\\
N(VU)&:=|T_n(Q_{VU})|, \quad
N(VUZ):=|T_n(W^Z \times Q_{VU})|,
\end{align*}
and
\begin{align*}
N(Z|U)&:=N(UZ)/ N(U) , \quad
N(V|UZ):=N(VUZ)/ N(UZ) , \\
N(V|U)&:=N(VU)/N(U), \quad
N(Z|VU):= N(VUZ)/N(VU).
\end{align*}
Then, due to \cite{csiszarbook}, we have
\begin{align}
|{\cal T}_{n,\mathcal{Z}}(Q_{U})|^{-1} e^{nH(Z|U)[W^Z\times Q_{VU}]} 
&\le N(Z|U) \le e^{nH(Z|U)[W^Z\times Q_{VU}]} \Label{12-3-4}\\
|{\cal T}_{n,\mathcal{Z}}(Q_{VU})|^{-1} e^{nH(Z|VU)[W^Z\times Q_{VU}]} 
&\le N(Z|VU) \le e^{nH(Z|VU)[W^Z\times Q_{VU}]} .
\Label{12-3-5}
\end{align}
Then, we obtain the following lemma.

\begin{lemma}
Any conditional type $W^Z \in {\cal T}_{n,\mathcal{Z}}(Q_{VU})$
satisfies 
\begin{align}
&E_0(\rho| \Upsilon_n(W^Z), P_{V^n|U^n, \mix, T_n(Q_{VU})},P_{\mix, T_n(Q_{U})}) 
\nonumber \\
=&\rho \log \frac{N(Z|U)}{ N(Z|VU) } \Label{12-3-1}\\
=&\rho I(V;Z|U)[\Upsilon_n(W^Z)\times P_{\mix, T_n(Q_{VU})}] \Label{12-3-2}\\
\le & n \rho I(V;Z|U)[W^Z\times Q_{VU}]+ \rho \log |{\cal T}_{n,\mathcal{Z}}(Q_{VU})|\Label{12-3-3}
\end{align}
for any $\rho \in (0,1)$.
Here $P_{V^n|U^n, \mix, T_n(Q_{VU})}$ is defined as a special case of Eq.(\ref{12-19-10}).
\end{lemma}

\begin{proof}
Under the joint type $Q_{ZVU}:=W^Z\times Q_{VU}$,
since $\Upsilon_n(W^Z)= P_{Z^n|V^n U^n, \mix, T_n(Q_{ZVU})}$,
we obtain
\begin{align*}
& e^{E_0(\rho| \Upsilon_n(W^Z), P_{V^n|U^n, \mix, T_n(Q_{VU})},P_{\mix, T_n(Q_{U})})} \\
=& e^{E_0(\rho| P_{Z^n|V^n U^n, \mix, T_n(Q_{ZVU})}, P_{V^n|U^n, \mix, T_n(Q_{VU})},P_{\mix, T_n(Q_{U})})} \\
=&
\sum_{u^n \in T_n(Q_{U})}
\frac{1}{N(U)}
\sum_{z^n \in T_n(Q_{Z|U})_{U^n=u^n})}
\Biggl(
\\
& \sum_{v \in T_n(Q_{V|ZU})_{Z^n U^n=(z^n,u^n)})}
P_{V^n|U^n, \mix, T_n(Q_{VU})}(v^n|u^n)
\\
&\hspace{12ex} \cdot \big(
P_{Z^n|V^n U^n, \mix, T_n(Q_{ZVU})} (z^n|v^n,u^n)
\big)^{\frac{1}{1-\rho}}\Biggr)^{1-\rho} \\
=&
\sum_{u^n \in T_n(Q_{U})}
\frac{1}{N(U)}
\sum_{z^n \in T_n(Q_{Z|U})_{U^n=u^n})}
\biggl( \\
&\sum_{v \in T_n(Q_{V|ZU})_{Z^n U^n=(z^n,u^n)})}
\frac{1}{N(V|U)}
(\frac{1}{N(Z|VU)})^{\frac{1}{1-\rho}}\biggr)^{1-\rho} \\
=&
N(U)
\frac{1}{N(U)}
N(Z|U)
( N(V|UZ)
\frac{1}{N(V|U)}
(\frac{1}{N(Z|VU)})^{\frac{1}{1-\rho}})^{1-\rho} \\
=&
\frac{N(ZU)^{\rho}  N(VU)^{\rho}}{ N(VUZ)^{\rho} N(U)^{\rho} } 
= \frac{N(Z|U)^{\rho}}{ N(Z|VU)^{\rho} } ,
\end{align*}
which implies (\ref{12-3-1}).
Since
\begin{align*}
& \log N(Z|U)- \log N(Z|VU) \\
=& H(Z|U)[\Upsilon_n(W^Z)\times P_{\mix, T_n(Q_{VU})}] \\
&-H(Z|VU)[\Upsilon_n(W^Z)\times P_{\mix, T_n(Q_{VU})}] \\
= &I(V;Z|U)[\Upsilon_n(W^Z)\times P_{\mix, T_n(Q_{VU})}] ,
\end{align*}
we obtain (\ref{12-3-2}).
Combining (\ref{12-3-4}) and (\ref{12-3-5}),
we obtain (\ref{12-3-3}).
\end{proof}

\noindent{\it Step (3): Preparation of notations and properties concerning conditional types based on a type on $\mathcal{V}$:\quad}

In this step, we focus only on a convex decomposition different from (\ref{sc1-12-1-b}).
For a given type $Q_{V}$ of length $n$ on a set $\mathcal{V}$,
we focus on the set
\begin{align*}
{\cal W}_{n,\mathcal{Z}}(Q_{V}) :=&
\{ \Upsilon_n(W^Z) |W^Z \in {\cal T}_{n,\mathcal{Z}}(Q_{V})\}.
\end{align*}
In our code $\varphi_{a,n}$,
the random variable $V^n$ takes values in 
the subset $T_n(Q_V)$.
Hence, 
if we focus on the set $\mathcal{V}^n$ as inputs, 
it is sufficient to treat the channel whose input alphabet is the subset $T_n(Q_V)$ of ${\cal V}^n$.
Then, 
due to (\ref{12-20-6}), we have another type of convex combination:
\begin{align}
\overline{W}^{Z,n}|_{T_n(Q_V)}
=\sum_{\Theta_n\in {\cal W}_{n,\mathcal{Z}}(Q_V)}
\lambda_{n,W} (\Theta_n) \Theta_n,
\Label{1-12-1}
\end{align}
where
$\lambda_{n,W} (\Theta_n)$ is a non-negative constant.
Then, for an arbitrary code $\varphi_{a,n}$,
the joint convexity of the conditional relative entropy yields that
\begin{align}
& I(S_{\mathcal{I},n};Z^n|S_{0,n})[\overline{W}^{Z,n},\varphi_{a,n},P_{S_{{\cal T+1},n}}]
\nonumber \\
\le &
\sum_{\Theta_n\in {\cal W}_{n,\mathcal{Z}}(Q_V)}
\lambda_{n,W} (\Theta_n) 
I(S_{\mathcal{I},n};Z^n|S_{0,n})[\Theta_n,\varphi_{a,n},P_{S_{{\cal T+1},n}}]
.\Label{Haya-21-b}
\end{align}

Next, we introduce the quantity
\begin{align}
&\varepsilon_{n,\rho,\mathcal{I}}(W^{Z^n}, Q_{V^n,U^n})\nonumber \\
:=&
\exp \biggl(n \rho (R_{\mathrm{c}}-R_{0}) 
-\rho H_{1+\rho}(S_{\mathcal{I}^{c,*},n}|S_{\mathcal{I},n},S_{0,n})
\nonumber \\
&\hspace{20ex}+E_0 (\rho|W^{Z^n},Q_{V^n|U^n},Q_{U^n}) \biggr)
\end{align}
for any channel $W^{Z^n}$ from ${\cal V}^n$ to ${\cal Z}^n$
and any distribution $Q_{V^n U^n}$ on ${\cal V}^n \times {\cal U}^n$.

Then, we have the following lemma.
\begin{lemma}\Label{l-12-26-2}
Any joint type $Q_{VU}$ of length $n$ on a set $\mathcal{V}\times \mathcal{U}$ 
and any channel $\Theta_n\in {\cal W}_{n,\mathcal{Z}}(Q_V)$
satisfy
\begin{align}
& \exp (E_0 (\rho| \overline{W}^{Z,n},
P_{V^n|U^n, \mix, T_n(Q_{VU})},P_{\mix, T_n(Q_{U})})) \nonumber \\
\le &
(n+1)^{|\mathcal{U}|^2|\mathcal{V}|}
 \exp (E_0 (\rho| \overline{W}^{Z,n},Q_{V|U}^n,Q_{U}^n)) ,
\Label{Haya-30} \\
& \lambda_{n,W} (\Theta_n) 
\varepsilon_{n,\rho}(\Theta_n, P_{\mix, T_n(Q_{VU})}) \nonumber \\
\le & 
(n+1)^{|\mathcal{U}|^2|\mathcal{V}|}
\varepsilon_{n,\rho,\mathcal{I}}(\overline{W}^{Z,n},Q_{V,U}) .
\Label{Haya-22-c} 
\end{align}
We have
\begin{align}
& \limsup_{n \to \infty}
\frac{1}{n\rho_n}
\log \varepsilon_{n,\rho_n,\mathcal{I}}(\overline{W}^{Z,n},Q_{V,U}^n)\nonumber \\
\le &
I(V;Z|U)[\overline{W}^Z \times Q_{VU}]
-\underline{H}_{\log}(\mathcal{I}^{c,*}) +R_{\mathrm{c}}-R_0
=E_-^{\mathcal{I}}.
\Label{11-25-5}
\end{align}
with $\rho_n=\frac{\delta \log n}{n}$ for any $\delta>0$.
Further, 
when $S_{\mathcal{I}^{c,*},n}$ is the uniform random number and independent of 
$S_{\mathcal{I},n}$ and $S_{0,n}$, 
we have
\begin{align}
\varepsilon_{n,\rho,\mathcal{I}}(\overline{W}^{Z,n},Q_{V,U}^n)
=\varepsilon_{1,\rho,\mathcal{I}}(\overline{W}^{Z},Q_{V,U})^n
\Label{12-22-1}
\end{align}
and
\begin{align}
\lim_{\rho \to 0}
\frac{[\log \varepsilon_{1,\rho,\mathcal{I}}(\overline{W}^{Z},Q_{V,U})]_+ }{\rho} 
&=
I(V;Z|U) -R_{\mathrm{p}}+\sum_{i \in \mathcal{I}} R_i \Label{11-25-5b}.
\end{align}
The convergence in (\ref{11-25-5b}) is uniform.
\end{lemma}

\begin{proof}
First, we show (\ref{Haya-30}).
For arbitrary $u\in \mathcal{U}$ and $v\in \mathcal{V}$, 
the distribution 
$P_{\mix, T_n(Q_{VU})}$
satisfies 
\begin{equation}
P_{V^n|U^n, \mix, T_n(Q_{VU})}(v|u)
\leq (n+1)^{|\mathcal{U}\times \mathcal{V}|} 
Q^n_{V|U}(v|u) \Label{eq101}
\end{equation}
by \cite[Lemma 2.5, Chapter 1]{csiszarbook}, and
\begin{equation}
P_{\mix, T_n(Q_{U})}(u) \leq (n+1)^{|\mathcal{U}|} Q^n_U(u), \Label{eq102}
\end{equation}
by \cite[Lemma 2.3, Chapter 1]{csiszarbook}.
Then, due to the relation (\ref{eq101}), and (\ref{eq102}), 
Lemma \ref{3-24-4L}
with $C_1= (n+1)^{|\mathcal{U}|^2|\mathcal{V}|}$
yields the relation (\ref{Haya-30}).

Next, we show (\ref{Haya-22-c}).
We can also show that 
\begin{align}
& \lambda_{n,W} (\Theta_n) 
e^{E_0(\rho|\Theta_n, P_{V^n|U^n, \mix, T_n(Q_{VU})}, P_{\mix, T_n(Q_{U})} )} \nonumber\\
=&\sum_{u}P_{\mix, T_n(Q_{U})}(u) \sum_{z}
\Biggl(\sum_{v}P_{V^n|U^n, \mix, T_n(Q_{VU})}(v|u)
\nonumber \\
& \hspace{26ex} \cdot \biggl(
\lambda_{n,W} (\Theta_n) 
\Theta_n(z|v)
\biggr)^{\frac{1}{1-\rho}}\Biggr)^{1-\rho} \nonumber\\
\le &
\sum_{u}P_{\mix, T_n(Q_{U})}(u) \sum_{z}
\Biggl(\sum_{v}P_{V^n|U^n, \mix, T_n(Q_{VU})}(v|u) \nonumber\\
& \hspace{18ex} \cdot \biggl(\sum_{\Theta_n'\in {\cal W}_{n,\mathcal{Z}}(Q_V)}
\lambda_{n,W} (\Theta_n') 
\Theta_n'(z|v) \biggr)^{\frac{1}{1-\rho}} \Biggr)^{1-\rho} \nonumber\\
=& e^{E_0(\rho|\overline{W}^{Z,n},
P_{V^n|U^n, \mix, T_n(Q_{VU})},P_{\mix, T_n(Q_{U})})}.
\Label{Haya-22}
\end{align}
Combining (\ref{Haya-30}) and (\ref{Haya-22}), 
we obtain
\begin{align}
& (n+1)^{|\mathcal{U}|^2|\mathcal{V}|}
e^{E_0(\rho|\overline{W}^{Z,n},Q_{V|U}^n,Q_{U}^n )}
\nonumber \\
\ge &
\lambda_{n,W} (\Theta_n) 
e^{E_0(\rho|\Theta_n, P_{V^n|U^n, \mix, T_n(Q_{VU})}, P_{\mix, T_n(Q_{U})} )}.
\Label{Haya-22-b}
\end{align}
Due to the definition of $\varepsilon_{n,\rho}(W^{Z^n}, Q_{V^n,U^n})$,
the relation (\ref{Haya-22-b}) is equivalent with the relation (\ref{Haya-22-c}).

By using \eqref{3-24-1eq}, the relation (\ref{11-25-5}) can be shown as follows.
\begin{align}
& \limsup_{n \to \infty}
\frac{1}{n\rho_n}
\log \varepsilon_{n,\rho_n,\mathcal{I}}
(\overline{W}^{Z,n},Q_{V,U}^n)\nonumber \\
= 
& \limsup_{n \to \infty}
\biggl[
(R_{\mathrm{c}}-R_{0}) 
-\frac{1}{n} H_{1+\frac{\delta \log n}{n}}(S_{\mathcal{I}^{c,*},n}|S_{\mathcal{I},n},S_{0,n})
\nonumber\\
&\hspace{25ex} +\frac{1}{\rho_n}E_0 (\rho_n|\overline{W}^{Z},Q_{V|U},Q_{U})  \biggr] \nonumber\\
\le &
R_{\mathrm{c}}-R_{0}
-\underline{H}_{\log}(\mathcal{I}^{c,*}) 
+I(V;Z|U) 
=E_-^{\mathcal{I}}.\nonumber
\end{align}
The relations (\ref{12-22-1}) and (\ref{11-25-5b})
are trivial.
\end{proof}

\noindent{\it Step (4): Evaluation of the leaked information when the channel is given by the uniform distribution on a fixed conditional type:\quad}

Recall the fixed code $\varphi_{\mathrm{p},n}$ for BCD given in Theorem \ref{lem-11-25-3-b}.
The message sets of the code $\varphi_{\mathrm{p},n}$ are
$\mathcal{S}_{0,n}\times \mathcal{B}_{1,n}$ and $\mathcal{B}_{2,n}$
with 
$|\mathcal{B}_{1,n}|=e^{n(R_{\mathrm{c}}-R_0)}$ and
$|\mathcal{B}_{2,n}|=e^{n R_{\mathrm{p}}}$.
We attach 
the other random coding $\Lambda_{F,G,n}$ for message $S_{1,n},\ldots,S_{T,n}$ given as Second Step of Code Ensemble \ref{con1} in Subsection \ref{s5-2}
to the code $\varphi_{\mathrm{p},n}$.
That is, the encoder is given by $\Phi_{a,n}=(\varphi_{\mathrm{p},n},\Lambda_{F,G,n})$.
In the following,
Bob's decoder $\Phi_{b,n}$
and
Eve's decoder $\Phi_{e,n}$
are given as the maximum mutual information decoder.
We treat the ensemble of 
codes $\Phi_n:=(\Phi_{a,n},\Phi_{b,n},\Phi_{e,n})$.

First, related to the decomposition (\ref{sc1-12-1-b}), we focus on a fixed arbitrary element $W^Z \in {\cal T}_{n,\mathcal{Z}}(Q_{VU})$,
We recall the discussion in Subsection \ref{s5-4}.
As is mentioned in Remark \ref{s5-5},
the discussion in Section \ref{s5} can be applied the channel $W^Z$, whose 
output distribution depends on the element of $\mathcal{U}$ as well as the element of $\mathcal{V}$.
Then, we apply Lemma \ref{l12-3-2} to the case when 
$P_{Z|V}=W^Z$,
$\mathcal{G}$ is the $n$-th permutation group,
$(\mathcal{U}\times \mathcal{V})_o$ is $T_n(Q_{UV})$, and
$P_{V|U}$ is $\Upsilon_n(W^Z)$.
Note that the $n$-th permutation group acts on $T_n(Q_{UV})$ transitively.
We obtain
\begin{align*}
&e^{\psi ( \rho| P_{Z^n|B_1,B_2,S_0=s_0}, P_{\mix, \mathcal{B}_1,\mathcal{B}_2} )} \nonumber \\
=&e^{\psi ( \rho| \Upsilon_n(W^Z), P_{V^n|U^n, \mix, \im \varphi_{\mathrm{p}}}, P_{U, \mix, \im \varphi_{\mathrm{p}}} ) }\\
\le &
e^{n\rho (R_{\mathrm{c}}-R_0)
+E_0(\rho| \Upsilon_n(W^Z), P_{V^n|U^n, \mix, T_n(Q_{VU})},P_{\mix, T_n(Q_{U})})}.
\end{align*}
Combining Lemma \ref{lem2-1} and the above inequality, we obtain
\begin{align}
& \rE_{\Phi_{a,n}}
\exp (\rho I(S_{\mathcal{I},n};Z^n|S_{0,n})[\Upsilon_n(W^Z),\Phi_{a,n},P_{S_{{\cal T+1},n}}] ) 
\nonumber\\
\leq &
1+ 
e^{n \rho (R_{\mathrm{c}}-R_{0}) -\rho H_{1+\rho}(S_{\mathcal{I}^{c,*},n}|S_{\mathcal{I},n},S_{0,n})}
e^{E_0(\rho| \Upsilon_n(W^Z), P_{V^n|U^n, \mix, T_n(Q_{VU})},P_{\mix, T_n(Q_{U})})} 
\Label{eq100-bc}.
\end{align}
Hence, we obtain the following relations.
In the following derivation, 
the first inequality follows from the convexity of $x \mapsto e^x$.
The third inequality follows from (\ref{12-3-3}).
\begin{align}
&\exp (\rho \rE_{\Phi_{a,n}} 
I(S_{\mathcal{I},n};Z^n|S_{0,n})[\Upsilon_n(W^Z),\Phi_{a,n} ,P_{S_{{\cal T+1},n}}] ) 
\nonumber \\
\leq &
\rE_{\Phi_{a,n}}
\exp (\rho I(S_{\mathcal{I},n};Z^n|S_{0,n})[\Upsilon_n(W^Z),\Phi_{a,n},P_{S_{{\cal T+1},n}}] ) 
\nonumber\\
\leq &
1+ 
e^{n \rho (R_{\mathrm{c}}-R_{0}) -\rho H_{1+\rho}(S_{\mathcal{I}^{c,*},n}|S_{\mathcal{I},n},S_{0,n})}
e^{E_0(\rho| \Upsilon_n(W^Z), P_{V^n|U^n, \mix, T_n(Q_{VU})},P_{\mix, T_n(Q_{U})})} \nonumber \\
\leq &
1+ 
|{\cal T}_{n,\mathcal{Z}}(Q_{VU})|^{\rho}
e^{n \rho (R_{\mathrm{c}}-R_{0})- \rho H_{1+\rho}(S_{\mathcal{I}^{c,*},n}|S_{\mathcal{I},n},S_{0,n})}
e^{n \rho I(V;Z|U)[W^Z\times Q_{VU}]}   
\nonumber 
\end{align}
for any $\rho \in (0,1)$.
Taking the limit $\rho \to 1-0$, we have
\begin{align}
&\exp ( \rE_{\Phi_{a,n}} 
I(S_{\mathcal{I},n};Z^n|S_{0,n})[\Upsilon_n(W^Z),\Phi_{a,n} ,P_{S_{{\cal T+1},n}}] ) 
\nonumber \\
\leq &
1+ 
|{\cal T}_{n,\mathcal{Z}}(Q_{VU})|
e^{n (R_{\mathrm{c}}-R_{0})- H_{2}(S_{\mathcal{I}^{c,*},n}|S_{\mathcal{I},n},S_{0,n})}
e^{n I(V;Z|U)[W^Z\times Q_{VU}]}  . 
\Label{eq100-b}
\end{align}
Since $\log (1+x) \le x$,
taking the logarithm in (\ref{eq100-b}), we have
\begin{align}
& 
\rE_{\Phi_{a,n}}
I(S_{\mathcal{I},n};Z^n|S_{0,n})[\Upsilon_n(W^Z),\Phi_{a,n},P_{S_{{\cal T+1},n}}]
\nonumber \\
\leq &
 \log (1+ 
|{\cal T}_{n,\mathcal{Z}}(Q_{VU})|
e^{n (R_{\mathrm{c}}-R_{0}) -H_{2}(S_{\mathcal{I}^{c,*},n}|S_{\mathcal{I},n},S_{0,n})}
e^{n I(V;Z|U)[W^Z\times Q_{VU}]}   ) \nonumber \\
\le &
|{\cal T}_{n,\mathcal{Z}}(Q_{VU})|
e^{n (R_{\mathrm{c}}-R_{0}) - H_{2}(S_{\mathcal{I}^{c,*},n}|S_{\mathcal{I},n},S_{0,n})}
e^{n I(V;Z|U)[W^Z\times Q_{VU}]}   
\nonumber.
\end{align}
Since $\log | \mathcal{Z}^n|=n \log |\mathcal{Z}| \le  |{\cal T}_{n,\mathcal{Z}}(Q_{VU})|$,
we have
\begin{align}
& 
\rE_{\Phi_{a,n}}
I(S_{\mathcal{I},n};Z^n|S_{0,n})[\Upsilon_n(W^Z),\Phi_{a,n},P_{S_{{\cal T+1},n}}]
\le |{\cal T}_{n,\mathcal{Z}}(Q_{VU})|.
\end{align}
Hence, 
\begin{align}
& 
\rE_{\Phi_{a,n}}
I(S_{\mathcal{I},n};Z^n|S_{0,n})[\Upsilon_n(W^Z),\Phi_{a,n},P_{S_{{\cal T+1},n}}]
\nonumber \\
\leq &
|{\cal T}_{n,\mathcal{Z}}(Q_{VU})|
e^{- [
H_{2}(S_{\mathcal{I}^{c,*},n}|S_{\mathcal{I},n},S_{0,n})
- n (R_{\mathrm{c}}-R_{0}+I(V;Z|U)[W^Z\times Q_{VU}]) ]_+}   
\Label{eq100-1}.
\end{align}

Next, related to the decomposition (\ref{1-12-1}), we focus on a fixed arbitrary $\Theta_n\in {\cal W}_{n,\mathcal{Z}}(Q_V)$.
Similar to (\ref{eq100-bc}),
Lemmas \ref{lem2-1} and \ref{l12-3-2} 
yield that
\begin{align}
&
\rE_{\Phi_{a,n}}
\exp (\rho I(S_{\mathcal{I},n};Z^n|S_{0,n})[\Theta_n,\Phi_{a,n},P_{S_{{\cal T+1},n}}] ) 
\nonumber\\
\leq &
1+ 
e^{n \rho (R_{\mathrm{c}}-R_{0}) -\rho H_{1+\rho}(S_{\mathcal{I}^{c,*},n}|S_{\mathcal{I},n},S_{0,n})}
e^{E_0(\rho| \Theta_n, P_{V^n|U^n, \mix, T_n(Q_{VU})},P_{\mix, T_n(Q_{U})})} \nonumber \\
= &
1+ 
\varepsilon_{n,\rho,\mathcal{I}}(\Theta_n, P_{\mix, T_n(Q_{VU})})
\Label{eq100}.
\end{align}

Observe that we have shown that
the averages over $\Phi_{a,n}$ of 
$\exp(
\rho I(S_{\mathcal{I},n};Z^n|S_{0,n})[\Upsilon_n(W^Z),\Phi_{a,n},P_{S_{{\cal T+1},n}}] )$
and
$I(S_{\mathcal{I},n};Z^n|S_{0,n})[\Theta_n,\Phi_{a,n},P_{S_{{\cal T+1},n}}]$
are smaller than 
(\ref{eq100-1}) and (\ref{eq100}) , respectively.

Choosing $p_1(n):= 2^{T}
(|{\cal T}_{n,\mathcal{Z}}(Q_{VU}) |+|{\cal W}_{n,\mathcal{Z}}(Q_V)|)
+1$,
thanks to the Markov inequality in the same as (\ref{12-28-3}) and (\ref{12-28-4}),
given a fixed $\rho\in (0,1)$,
we can see that there exists at least 
one code $\varphi_{n}$ such that
the relations
\begin{align}
& I(S_{\mathcal{I},n};Z^n|S_{0,n})[\Upsilon_n(W^Z),\varphi_{a,n},P_{S_{{\cal T+1},n}}]
\nonumber \\
\le & 
p_1(n)
\rE_{\Phi_{a,n}}
I(S_{\mathcal{I},n};Z^n|S_{0,n})[\Upsilon_n(W^Z),\Phi_{a,n},P_{S_{{\cal T+1},n}}] 
\nonumber \\
\le & 
p_1(n)
|{\cal T}_{n,\mathcal{Z}}(Q_{VU})|
e^{n (R_{\mathrm{c}}-R_{0}) - H_{2}(S_{\mathcal{I}^{c,*},n}|S_{\mathcal{I},n},S_{0,n})}
e^{nI(V;Z|U)[W^Z\times Q_{VU}]}   
 \Label{Haya-13} \\
& \exp (\rho I(S_{\mathcal{I},n};Z^n|S_{0,n})[\Theta_n,\varphi_{a,n},P_{S_{{\cal T+1},n}}] )
\nonumber \\
\le & p_1(n) 
\rE_{\Phi_{a,n}}
\exp (\rho I(S_{\mathcal{I},n};Z^n|S_{0,n})[\Theta_n, \Phi_{a,n},P_{S_{{\cal T+1},n}}] ) 
\nonumber\\
\le & p_1(n) (1+ \varepsilon_{n,\rho,\mathcal{I}}(\Theta_n, P_{\mix, T_n(Q_{VU})}) ) \Label{Haya-4}.
\end{align}
hold for any $W^Z \in {\cal T}_{n,\mathcal{Z}}(Q_{VU})$ and $\Theta_n\in {\cal W}_{n,\mathcal{Z}}(Q_V)$.

\noindent{\it Step (5): Evaluation of the leaked information when the channel is given by discrete memoryless channel:\quad}

Using (\ref{Haya-13}),
we obtain
\begin{align}
&
I(S_{\mathcal{I},n};Z^n|S_{0,n})[\overline{W}^{Z,n},\varphi_{a,n},P_{S_{{\cal T+1},n}}]
\nonumber \\
\le &
\sum_{W^Z \in {\cal T}_{n,\mathcal{Z}}(Q_{VU})}
\lambda_{n,T} (W^Z) 
I(S_{\mathcal{I},n};Z^n|S_{0,n})[\Upsilon_n(W^Z), \varphi_{a,n},P_{S_{{\cal T+1},n}}]
\Label{1-10-1}\\
\le &
\sum_{W^Z \in {\cal T}_{n,\mathcal{Z}}(Q_{VU})}
\Bigl[
\lambda_{n,T} (W^Z) 
 p_1(n)
|{\cal T}_{n,\mathcal{Z}}(Q_{VU})|
\nonumber \\
& \hspace{8ex}\cdot e^{
- [H_{2}(S_{\mathcal{I}^{c,*},n}|S_{\mathcal{I},n},S_{0,n}) 
-n (R_{\mathrm{c}}-R_{0}+ I(V;Z|U)[W^Z\times Q_{VU}]) ]_+}
\Bigr]
\Label{1-10-2} \\
\le &
\sum_{W^Z \in {\cal T}_{n,\mathcal{Z}}(Q_{VU})}
\Bigl[
p_1(n)
|{\cal T}_{n,\mathcal{Z}}(Q_{VU})|
\nonumber \\
& \hspace{2ex}\cdot e^{-n D(W^Z\|\overline{W}^Z|Q_{VU}) 
- [H_{2}(S_{\mathcal{I}^{c,*},n}|S_{\mathcal{I},n},S_{0,n}) 
-n (R_{\mathrm{c}}-R_{0}+ I(V;Z|U)[W^Z\times Q_{VU}]) ]_+
}   
\Bigr]
\Label{1-10-3} \\
\le &
\sum_{W^Z \in {\cal T}_{n,\mathcal{Z}}(Q_{VU})}
p_1(n)
|{\cal T}_{n,\mathcal{Z}}(Q_{VU})|
e^{- K_n(\overline{W}^Z,Q_{VU},R_{\mathrm{c}},R_{0}|S)}
\Label{1-10-3b} \\
= &
p_1(n)
|{\cal T}_{n,\mathcal{Z}}(Q_{VU})|^2
e^{- K_n(\overline{W}^Z,Q_{VU},R_{\mathrm{c}},R_{0}|S)}
\Label{Haya-23},
\end{align}
where $K_n(\overline{W}^Z,Q_{VU},R_{\mathrm{c}},R_{0}|S)
$ is defined as 
\begin{align*}
&K_n(\overline{W}^Z,Q_{VU},R_{\mathrm{c}},R_{0}|S)
\nonumber \\
:=&
\min_{W^Z} \Biggl[
n D(W^Z\|\overline{W}^Z|Q_{VU}) 
+\Bigl[H_{2}(S_{\mathcal{I}^{c,*},n}|S_{\mathcal{I},n},S_{0,n}) 
\nonumber \\
&\hspace{8ex} -n (R_{\mathrm{c}}-R_{0}+ I(V;Z|U)[W^Z\times Q_{VU}]) 
\Bigr]_+
\Biggr],
\end{align*}
and
(\ref{1-10-1}),
(\ref{1-10-2}),
and
(\ref{1-10-3}) 
follow from
(\ref{Haya-21}),
(\ref{Haya-13}),
and 
\eqref{11-29-2}, respectively.

Hence,
\begin{align}
& 
\liminf_{n \to \infty} 
\frac{-1}{n}
\log 
I(S_{\mathcal{I},n};Z^n|S_{0,n})[\overline{W}^{Z,n},\varphi_{a,n},P_{S_{{\cal T+1},n}}]
\nonumber \\
\ge &
\liminf_{n \to \infty} 
\frac{1}{n}
\min_{W^Z}
\biggl[
n D(W^Z\|\overline{W}^Z|Q_{VU}) 
+
\Bigl[H_{2}(S_{\mathcal{I}^{c,*},n}|S_{\mathcal{I},n},S_{0,n}) 
\nonumber \\
&\hspace{18ex}
-n (R_{\mathrm{c}}-R_{0}+ I(V;Z|U)[W^Z\times Q_{VU}]) 
\Bigr]_+ \biggr]
\nonumber \\
=&
\min_{W^Z}
\biggl[
D(W^Z\|\overline{W}^Z|Q_{VU}) 
\nonumber \\
&\hspace{9ex}+
\Bigl[
\underline{H}_2(\mathcal{I}^{c,*})
- R_{\mathrm{c}}+R_{0} - I(V;Z|U)[W^Z\times Q_{VU}]) 
\Bigr]_+ \biggr]\nonumber \\
= &
E_+^{\mathcal{I}}
\Label{Haya-03-b}
\end{align}

Next, defining 
\begin{align}
p_2(n)
:=p_1(n)(n+1)^{|\mathcal{U}|^2|\mathcal{V}|}
|{\cal W}_{n,\mathcal{Z}}(Q_V)|,\Label{3-24-19eq}
\end{align}
we obtain the following inequalities, in which,
the first, second, and third inequalities follow from
the convexity of function $x \mapsto \exp(x)$ and
(\ref{Haya-21-b}), (\ref{Haya-4}), and (\ref{Haya-22-c}), 
respectively.
The final equation follows from \eqref{3-24-19eq}.
\begin{align}
& \exp (\rho 
I(S_{\mathcal{I},n};Z^n|S_{0,n})[\overline{W}^{Z,n},\varphi_{a,n},P_{S_{{\cal T+1},n}}]
) \nonumber \\
\le &
\sum_{\Theta_n\in {\cal W}_{n,\mathcal{Z}}(Q_V)}
\lambda_{n,W} (\Theta_n) 
 \exp (\rho 
I(S_{\mathcal{I},n};Z^n|S_{0,n})[\overline{W}_n,\varphi_{a,n},P_{S_{{\cal T+1},n}}]
) 
\nonumber \\
\le &
\sum_{\Theta_n\in {\cal W}_{n,\mathcal{Z}}(Q_V)}
\lambda_{n,W} (\Theta_n) 
p_1(n)(1+  \varepsilon_{n,\rho,\mathcal{I}}(\Theta_n,P_{\mix, T_n(Q_{VU})}))
\nonumber \\
\le &
\sum_{\Theta_n\in {\cal W}_{n,\mathcal{Z}}(Q_V)}
p_1(n)
(n+1)^{|\mathcal{U}|^2|\mathcal{V}|}
(1+  \varepsilon_{n,\rho,\mathcal{I}}
(\overline{W}^{Z,n},Q_{V,U})) \nonumber \\
= &
p_1(n)|{\cal W}_{n,\mathcal{Z}}(Q_V)|
(n+1)^{|\mathcal{U}|^2|\mathcal{V}|}
(1+  \varepsilon_{n,\rho,\mathcal{I}}(\overline{W}^{Z,n},Q_{V,U})) \nonumber\\
= &
p_2(n)(1+  
\varepsilon_{n,\rho,\mathcal{I}}(\overline{W}^{Z,n},Q_{V,U})).
\Label{1-13-6}
\end{align}

Taking the logarithm,
we have
\begin{align}
& 
I(S_{\mathcal{I},n};Z^n|S_{0,n})[\overline{W}^{Z,n},\varphi_{a,n},P_{S_{{\cal T+1},n}}]
\nonumber \\
\le &
\frac{
\log p_2(n)(1+  
\varepsilon_{n,\rho,\mathcal{I}}(\overline{W}^{Z,n},Q_{V,U}))
}{\rho}
\nonumber \\
\le &
\frac{\log (2 p_2(n))}{\rho}
+
\frac{[\log \varepsilon_{n,\rho,\mathcal{I}}(\overline{W}^{Z,n},Q_{V,U})]_+ }{\rho} 
\Label{Haya-24}.
\end{align}
Now, we have
\begin{align}
\lim_{n \to \infty}  \frac{\log (2 p_2(n))}{ n \cdot \frac{\delta\log n}{n}} 
=
\lim_{n \to \infty}  \frac{\log (2 p_2(n))}{\delta \log n}
=
\frac{\deg (p_2)}{\delta}
 \Label{12-26-7},
\end{align}
where $\deg (p_2)$ is the degree of the polynomial $p_2$.
Due to (\ref{11-25-5}) in Lemma \ref{l-12-26-2}, (\ref{Haya-24}), and (\ref{12-26-7}), 
choosing $\rho_n=\frac{\delta \log n}{n}$,
we obtain
\begin{align*}
& 
\limsup_{n\to \infty}
\frac{1}{n}I(S_{\mathcal{I},n};Z^n|S_{0,n})[\overline{W}^{Z,n},\varphi_{a,n},P_{S_{{\cal T+1},n}}]
\le 
\frac{\deg (p_2)}{\delta}+E_-^{\mathcal{I}}.
\end{align*}
Since $\delta>0$ is arbitrary, we have
\begin{align}
& 
\limsup_{n\to \infty}
\frac{1}{n}I(S_{\mathcal{I},n};Z^n|S_{0,n})[\overline{W}^{Z,n},\varphi_{a,n},P_{S_{{\cal T+1},n}}]
\le 
E_-^{\mathcal{I}}
\Label{Haya-02}.
\end{align}
Therefore, using (\ref{Haya-03-b}) and (\ref{Haya-02}),
we can see that $(E^{b}$, $E^{e}$, $E_+^{\mathcal{I}}$, $E_-^{\mathcal{I}})$
is a universally attainable quadruple of exponents in the sense of Definition \ref{def:univexp}.
\end{proof}

\begin{remark}
One might consider that if we apply the random coding of Theorem \ref{lem1}
to the uniform distribution $P_{\mix, T_n(Q_{VU})}$,
we obtain a better exponent.
However, this method yields the same exponent
because
$\psi(\rho| \Upsilon_n(W^Z), P_{V^n|U^n, \mix, T_n(Q_{VU})},P_{\mix, T_n(Q_{U})})$
is the same as
$E_0(\rho| \Upsilon_n(W^Z), P_{V^n|U^n, \mix, T_n(Q_{VU})},P_{\mix, T_n(Q_{U})})$,
which is shown as
\begin{align*}
& e^{\psi(\rho| \Upsilon_n(W^Z), P_{V^n|U^n, \mix, T_n(Q_{VU})},P_{\mix, T_n(Q_{U})})} \\
=&
\sum_{u \in T_n(Q_{U})}
\frac{1}{N(U)}
\sum_{v \in T_n(Q_{V|U=u})}
\\
&\hspace{10ex}
\Biggl[
\frac{1}{N(V|U)}
\sum_{z \in T_n(Q_{Z|VU=(u,v)})}
(\frac{1}{N(Z|VU)})^{1+\rho}
(\frac{1}{N(Z|U)})^{-\rho} 
\Biggr]
\\
=&
 \frac{N(Z|U)^{\rho}}{ N(Z|VU)^{\rho} } .
\end{align*}
\end{remark}

\section{Source-Channel Universal Coding for BCC}
\Label{s10}
Now, we introduce the concept of 
``source-channel universal code for BCC''
for the $n$-fold discrete memoryless extension of a discrete channel.
In a realistic setting,
we do not have statistical knowledge 
of the sources and the channel, precisely. 
In order to treat such a case, 
we have to make a code whose performance is 
guaranteed independently of the statistical properties of the sources and the channel.
Such a kind of universality is called source-channel universality,
and studied for the case of BCD \cite{korner80}.
For the case of wire-tap channel,
the source universality is divided into two parts.
One is the source universality for decoding error probability
and the other is that for the leaked information.
The paper \cite{Vardy-source-univ} studied the latter part.
Although 
the transmission rates are characterized by the pair $(R_0,R_1)$,
in order to make a code achieving the capacity region of BCC,
we employ other two parameters $R_{\mathrm{c}}$ and $R_{\mathrm{p}}$
that satisfy 
$R_0 \le R_{\mathrm{c}}$ and
$R_0+R_1\le R_{\mathrm{c}}+R_{\mathrm{p}}$.
Hence, in the following definition of
a universally attainable quadruple of exponents and 
leaked information rate,
we focus on the set 
$\mathbf{R}^4_{\BCC}:=
\{(R_{\mathrm{p}},R_{\mathrm{c}},R_0,R_1)\in (\mathbf{R}^+)^{4}
|R_0 \le R_{\mathrm{c}},~R_0+R_1\le R_{\mathrm{c}}+R_{\mathrm{p}}\}$.

\begin{definition}\Label{def:scunivexp-b}
A set of functions $({E}^{b}$, ${E}^{e}$,
${E}_+$, ${E}_-)$ 
from $
\mathbf{R}^4_{\BCC}
\times \mathcal{W}(\mathcal{X}$, $\mathcal{Y} \times \mathcal{Z})$
to $\mathbf{R}_{\ge 0}^{4}$ is said to be
a universally attainable quadruple 
of exponents and leaked information rate
for the family of channels
$\mathcal{W}(\mathcal{X}$, $\mathcal{Y}\times \mathcal{Z})$
and for sources
if 
for $\epsilon>0$ and $(R_{\mathrm{p}},R_{\mathrm{c}},R_0,R_1)\in \mathbf{R}^4_{\BCC}$,
there exist 
a sufficiently large integer $N$
and
a sequence of codes $\Phi_n$ of length $n$ satisfying the following conditions. 
(1)
The confidential message set  $\mathcal{S}_{n}$ 
of the code $\Phi_n$
has cardinality $e^{n R_1}$ 
and the common message set $\mathcal{E}_{n}$ of the code $\Phi_n$
has cardinality $e^{n R_0}$. 
(2)
The inequalities
\begin{align}
P_b[W^n,\Phi_n,P_{S_n,E_n}] 
 \leq & \exp(-n
[ {E}^{b}(R_{\mathrm{p}},R_{\mathrm{c}},R_0,R_1,W)-\epsilon]),\Label{Haya-51-b}\\
P_e[W^n,\Phi_n,P_{S_n,E_n}] 
\leq & \exp(-n
[ {E}^{e}(R_{\mathrm{p}},R_{\mathrm{c}},R_0,R_1,W)-\epsilon]),\Label{Haya-52-b}
\end{align}
and
\begin{align}
& I(S_{n}; Z^n|E_n)[W^n,\Phi_n,P_{S_n,E_n}] \nonumber \\
\leq & 
\max \Biggl[
\exp(-n
[ {E}_+^{l}(R_{\mathrm{p}},R_{\mathrm{c}},R_0,R_1,W)-\epsilon]),\nonumber \\
& \hspace{15ex} 
n [{E}_-^{l}(R_{\mathrm{p}},R_{\mathrm{c}},R_0,R_1,W)+\epsilon ]
\Biggr]
\Label{eq:logexpb}
\end{align}
hold for any sequence of joint distributions $P_{S_n,E_n}$
for the confidential message $S_{n}$ on $\mathcal{S}_{n}$ 
and the common message $E_{n}$ on $\mathcal{E}_{n}$,
and 
the $n$-th memoryless extension $W^n$ of any channel $W \in \mathcal{W}(\mathcal{X}$, $\mathcal{Y}\times \mathcal{Z})$
and $n \ge N$.
\end{definition}

Then, 
given a distribution $Q_{VU}$ on $\mathcal{U}\times \mathcal{V}$
and a channel (probability transition matrix) 
$\Xi: \mathcal{V}\rightarrow \mathcal{X}$,
we present 
a universally attainable quadruple of exponents and leaked information rate
as follows.
Given rates $(R_{\mathrm{p}}, R_{\mathrm{c}},R_0,R_1) 
\in (\mathbf{R}^+)^{4}$
and a broadcast $W \in \mathcal{W}(\mathcal{X}$, $\mathcal{Y}\times\mathcal{Z})$,
the quadruple 
$E^{b}$, $E^{e}$, $E_+^l$ and $E_-^l$
are given as
\begin{align}
E^{b} =& E^{b}(R_{\mathrm{p}},R_{\mathrm{c}},R_{0}, R_{1},W)
:= 
\tilde{E}^{b} (R_{\mathrm{p}},R_{\mathrm{c}}, (W\circ \Xi) \times Q_{VU}),
\Label{eq:scunivs1}\\
E^{e} =& E^{e}(R_{\mathrm{p}},R_{\mathrm{c}},R_{0}, R_{1},W)
:= 
\tilde{E}^{e} ( R_{\mathrm{c}}, (W\circ \Xi) \circ  Q_{VU}), 
\Label{eq:scunivs2}\\
E_+^{l}=& E_+^{l}(R_{\mathrm{p}},R_{\mathrm{c}},R_{0}, R_{1},W)
:= \tilde{E}^{l}(R_{\mathrm{p}}-R_1,(W\circ \Xi) \times Q_{VU} ), 
\Label{eq:scunivs3} \\
E_-^{l}=& E_-^{l}(R_{\mathrm{p}},R_{\mathrm{c}},R_{0}, R_{1},W)
:= I(V;Z|U) -R_{\mathrm{p}}+R_1 
\Label{eq:scunivs4}.
\end{align}

\begin{theorem}[Extension of {\cite[Theorem 1, part (a)]{korner80}}]
\Label{univ:scthm}
Eqs.\ (\ref{eq:scunivs1})--(\ref{eq:scunivs4}) are 
source-channel universally attainable rates of exponents and information leakage rate
in the sense of Definition \ref{def:scunivexp-b}.
\end{theorem}

Therefore, our source-channel universal code attaining 
Eqs.\ (\ref{eq:scunivs1})--(\ref{eq:scunivs4}) depends on 
$R_{\mathrm{p}}$, $R_{\mathrm{c}}$,
the distribution $Q_{VU}$ on $\mathcal{U}\times \mathcal{V}$,
and the channel $\Xi: \mathcal{V}\rightarrow \mathcal{X}$.

We prove Theorem \ref{univ:scthm} 
by expurgating the messages in the code given in Theorem \ref{univ:thm}. 
The outline of the proof is as follows:
First, in Step (1),
similar to Theorem \ref{univ:thm},
we evaluate the leaked information when the channel is given by the conditional types and the source obeys the uniform distribution. 
Then, for a given code in Step (1),
we expurgate the common message $E_n$ in Step (2) and 
the secret message $S_n$ in Step (3).
We evaluate the leaked information of the expurgated code for
an arbitrary source distribution and an arbitrary conditional type in Step (4).
Based on this evaluation, we evaluate 
the leaked information of the expurgated code for
an arbitrary source distribution and an arbitrary discrete memoryless channel in Step (5).

In the following proof, 
we assume that
the secret message $S_n$ and the common message $E_n$ 
obey the uniform distributions on
$\mathcal{S}_n$ and $\mathcal{E}_n$.
However,
expurgations $S_n'$ and $E_n'$
of the secret message $S_n$ and the common message $E_n$ 
are allowed to obey arbitrary distributions.

\noindent{\it Step (1): Evaluation of the leaked information when the channel is given as the uniform distribution on a fixed conditional type:\quad}

Recall the fixed code $\varphi_{\mathrm{p},n}$ for BCD given in Theorem \ref{lem-11-25-3-b}.
The code $\varphi_{\mathrm{p},n}$ has 
the private message set $\mathcal{S}_{0,n}\times \mathcal{B}_{1,n}$ and 
the common message set $\mathcal{B}_{2,n}$.
We attach 
the random coding $\Lambda_{F,G,n}$ for message $S_{1,n},\ldots,S_{T,n}$ given as Second Step of Code Ensemble \ref{con1} in Subsection \ref{s5-2}
to the code $\varphi_{\mathrm{p},n}$
when $T=2$, 
$S_{1,n}=S_n$, $S_{0,n}=E_n$, and 
$S_{2,n}$ is the random number subject to the uniform distribution,
which is used as the dummy for making $S_n$ secret for Eve.
The uniformity of the distribution guarantees that
\begin{align}
H_{1+\rho}(S_{2,n}|S_{1,n},S_{0,n})
= n (R_{\mathrm{c}}+ R_{\mathrm{p}}- R_1 - R_2)
\Label{12-26-10}
\end{align}
for any $\rho \in (0,1]$.
Then, the encoder is given by $\Phi_{a,n}=(\varphi_{\mathrm{p},n},\Lambda_{F,G,n})$.
In the following,
Bob's decoder $\Phi_{b,n}$
and
Eve's decoder $\Phi_{e,n}$
are given as the maximum mutual information decoder.
We treat the ensemble of 
codes $\Phi_n:=(\Phi_{a,n},\Phi_{b,n},\Phi_{e,n})$.

For an arbitrary $\Theta_n\in {\cal W}_{n,\mathcal{Z}}(Q_V)$ and
an arbitrary $\rho \in (0,1)$,
the combination of Lemmas \ref{lem2-1} and \ref{l12-3-2} yields that
\begin{align}
& \rE_{\Phi_{a,n}}
\sum_{e} P_{E_n}(e) \sum_s P_{S_n|E_n}(s|e) \nonumber \\
& \quad \cdot \exp (\rho D(P_{Z^n|S_n=s,E_n=e,\Phi_{a,n}}
\| P_{Z^n|E_n=e,\Phi_{a,n}})[\Theta_n] )
\nonumber\\
\leq &
1+ 
e^{n \rho (R_1 -R_{\mathrm{p}})}
e^{E_0(\Theta_n, P_{V^n|U^n, \mix, T_n(Q_{VU})},P_{\mix, T_n(Q_{U})})} \nonumber \\
= &
1+ 
\varepsilon_{n,\rho,\{1\}}(\Theta_n, P_{\mix, T_n(Q_{VU})}),
\Label{1-13-4-b} 
\end{align}
where
$D(P_{Z^n|S_n=s,E_n=e,\varphi_{a,n}}
\| P_{Z^n|E_n=e, \varphi_{a,n}})[\Theta_n] $
denotes the relative entropy
$D(P_{Z^n|S_n=s,E_n=e, \varphi_{a,n}}\| P_{Z^n|E_n=e, \varphi_{a,n}})$
when the channel is $\Theta_n \in {\cal W}_{n,\mathcal{Z}}(Q_V)$.

The relations 
(\ref{12-26-10}) and (\ref{eq100-1}) with $T=2$
yield
\begin{align}
& \rE_{\Phi_{a,n}}
I(S_{\mathcal{I},n};Z^n|S_{0,n})[\Upsilon_n(W^Z),\Phi_{a,n},P_{S_{{\cal T},n}}]
\nonumber \\
\le & 
|{\cal T}_{n,\mathcal{Z}}(Q_{VU})|
e^{-n [ R_{\mathrm{p}}-R_1-I(V;Z|U)[W^Z\times Q_{VU}] ]_+}   .
\Label{12-26-12}
\end{align}
Thanks to the Markov inequality in the same way as (\ref{12-28-3}) and (\ref{12-28-4}),
given a fixed $\rho\in (0,1)$,
due to (\ref{1-13-4-b}) and 
(\ref{12-26-12}),
we can see that there exists at least 
one code $\varphi_{a,n}$ such that
the relations
\begin{align}
& I(S_{\mathcal{I},n};Z^n|S_{0,n})[\Upsilon_n(W^Z),\varphi_{a,n},P_{S_{{\cal T},n}}]
\nonumber \\
\le & 
p_1(n)
|{\cal T}_{n,\mathcal{Z}}(Q_{VU})|
e^{-n [ R_{\mathrm{p}}-R_1-I(V;Z|U)[W^Z\times Q_{VU}] ]_+}   ,
 \Label{Haya-13a} \\
&
\sum_{e} P_{E_n}(e) \sum_s P_{S_n|E_n}(s|e) \nonumber \\
& \quad \cdot \exp (\rho D(P_{Z^n|S_n=s,E_n=e,\varphi_{a,n}}
\| P_{Z^n|E_n=e,\varphi_{a,n}})[\Theta_n] )
\nonumber \\
\le & p_1(n) (1+ \varepsilon_{n,\rho,\{1\}}(\Theta_n, P_{\mix, T_n(Q_{VU})}) ) \Label{Haya-4b}
\end{align}
hold for any $W^Z \in {\cal T}_{n,\mathcal{Z}}(Q_{VU})$
and $\Theta_n\in {\cal W}_{n,\mathcal{Z}}(Q_V)$.

\noindent{\it Step (2): Expurgation for common message $E_n$:\quad}

We choose $p_3(n):= 2 p_1(n)$.
When $e$ is randomly chosen from $\mathcal{E}_n$ subject to the uniform distribution,
the element $e$ satisfies all of the following conditions at least with probability of $1-p_1(n)/p_3(n)=\frac{1}{2}$.
The relations
\begin{align}
& \sum_s P_{S_n|E_n}(s|e)
\exp (\rho D(P_{Z^n|S_n=s,E_n=e,\varphi_{a,n}}
\| P_{Z^n|E_n=e,\varphi_{a,n}})[\Theta_n])
\nonumber\\
\le &
p_1(n)p_3(n)(1+ 
\varepsilon_{n,\rho,\{1\}}(\Theta_n, P_{\mix, T_n(Q_{VU})})) ,
\nonumber\\
& \sum_s P_{S_n|E_n}(s|e)
D(P_{Z^n|S_n=s,E_n=e,\varphi_{a,n}}
\| P_{Z^n|E_n=e,\varphi_{a,n}})[\Upsilon_n(W^Z)] \nonumber\\
=&
I(S_{n};Z^n)[\Upsilon_n(W^Z),
\varphi_{a,n},P_{\mix,\mathcal{S}_n|E_{n}=e}]
\nonumber \\
\le &
p_1(n)p_3(n)
|{\cal T}_{n,\mathcal{Z}}(Q_{VU})|
e^{-n [ R_{\mathrm{p}}-R_1-I(V;Z|U)[W^Z\times Q_{VU}] ]_+}   
\Label{1-13-21}
\end{align}
hold for any elements 
$W^Z \in \mathcal{T}_{n,\mathcal{Z}}(Q_{VU})$
and $\Theta_n \in {\cal W}_{n,\mathcal{Z}}(Q_V)$,
and $n \ge N$.
Thus, there exist $|\mathcal{E}_n|/2$
elements $e \in \mathcal{E}_n$ satisfies the above conditions.
So, we denote the set of such elements by $\mathcal{E}_n'$.

\noindent{\it Step (3): Expurgation for secret message $S_n$:\quad}

Then, when $s$ is randomly chosen from $\mathcal{S}_n$ subject to the uniform distribution,
the element $s$ satisfies all of 
the following conditions
at least with probability of $1-p_1(n)/p_3(n)\ge \frac{1}{2}$:
The relations
\begin{align}
& \exp (\rho D(P_{Z^n|S_n=s,E_n=e',\varphi_{a,n}}
\| P_{Z^n|E_n=e',\varphi_{a,n}})[\Theta_n])
\nonumber\\
\le &
p_1(n)p_3(n)^2(1+ 
\varepsilon_{n,\rho,\{1\}}(\Theta_n, P_{\mix, T_n(Q_{VU})}), 
\Label{2-26-2}
\\
& D(P_{Z^n|S_n=s,E_n=e',\varphi_{a,n}}
\| P_{Z^n|E_n=e',\varphi_{a,n}})[\Upsilon_n(W^Z)]
\nonumber\\
\le &
p_1(n)p_3(n)^2
|{\cal T}_{n,\mathcal{Z}}(Q_{VU})|
e^{-n [ R_{\mathrm{p}}-R_1-I(V;Z|U)[W^Z\times Q_{VU}] ]_+}   
\Label{2-26-3}
\end{align}
hold for any elements 
$e' \in \mathcal{E}_n'$,
$W^Z \in \mathcal{T}_{n,\mathcal{Z}}(Q_{VU})$,
$\Theta_n \in {\cal W}_{n,\mathcal{Z}}(Q_V)$,
and $n \ge N$.
Thus, there exist $|\mathcal{S}_n|/2$
elements $s \in \mathcal{S}_n$ satisfies the above conditions.
So, we denote the set of such elements by $\mathcal{S}_n'$.

\noindent{\it Step (4): Universal code that works for all sources when the channel is given as
the uniform distribution on a fixed conditional type:\quad}

In the following discussion, 
$P_{S_n',E_n'}$ is an arbitrary joint distribution 
of the random variables $S_n'$ and $E_n'$ 
on $\mathcal{S}_n'\times \mathcal{E}_n'$.
For a given $e \in \mathcal{E}_n'$,
we consider two kinds of marginal distributions of $Z^n$
as follows.
\begin{align*}
P_{Z^n|E_n'=e,\varphi_{a,n}}
&=
\sum_{s\in \mathcal{S}_n}P_{S_n}(s)
P_{Z^n|S_n=s, E_n'=e,\varphi_{a,n}} \\
P_{Z^n|E_n'=e,\varphi_{a,n}}'
&:=
\sum_{s'\in \mathcal{S}_n}P_{S_n'|E_n'}(s'|e)
P_{Z^n|S_n=s, E_n'=e,\varphi_{a,n}} .
\end{align*}
The former marginal distribution is discussed in Steps (1), (2), and (3).
Hence, using (\ref{2-26-1}) and (\ref{2-26-3}), 
we obtain
\begin{align}
& I(S_n';Z^n|E_n')[\Upsilon_n(W^Z),\varphi_{a,n},P_{S_n',E_n'}]
\nonumber \\
= &
\sum_{e \in \mathcal{E}_n'}P_{E_n'}(e)
D(P_{Z^n,S_n'|E_n'=e,\varphi_{a,n}}
\| P_{Z^n|E_n'=e,\varphi_{a,n}}' \times P_{S_n'|E_n'=e})
[\Upsilon_n(W^Z)]  \nonumber \\
\le &
\sum_{e \in \mathcal{E}_n'}P_{E_n'}(e)
D(P_{Z^n,S_n'|E_n'=e,\varphi_{a,n}}
\| P_{Z^n|E_n'=e,\varphi_{a,n}} \times P_{S_n'|E_n'=e})
[\Upsilon_n(W^Z)]  \nonumber \\
=&
\sum_{e \in \mathcal{E}_n'}P_{E_n'}(e)
\sum_{s \in \mathcal{S}_n}
\biggl[
P_{S_n'|E_n'}(s|e)
\nonumber \\
& \hspace{15ex} \cdot D(P_{Z^n|S_n'=s,E_n'=e,\varphi_{a,n}}
\| P_{Z^n|E_n'=e,\varphi_{a,n}})[\Upsilon_n(W^Z)]  
\biggr]
\nonumber \\
\le &
p_1(n)p_3(n)^2
|{\cal T}_{n,\mathcal{Z}}(Q_{VU})|
e^{-n [ R_{\mathrm{p}}-R_1-I(V;Z|U)[W^Z\times Q_{VU}] ]_+}   ,
\Label{1-13-13}
\end{align}
for any elements 
$W^Z \in \mathcal{T}_{n,\mathcal{Z}}(Q_{VU})$,
$\Theta_n \in {\cal W}_{n,\mathcal{Z}}(Q_V)$,
and $n \ge N$.
Similarly, 
using the convexity of $x \mapsto e^x$,
(\ref{2-26-1}), (\ref{2-26-2}), and (\ref{2-26-3}),
we obtain
\begin{align}
& e^{\rho I(S_n';Z^n|E_n')[\Theta_n,\varphi_{a,n},P_{S_n',E_n'}]}
\nonumber \\
\le &
\sum_{e \in \mathcal{E}_n'}P_{E_n'}(e)
e^{\rho D(P_{Z^n,S_n'|E_n'=e,\varphi_{a,n}}
\| P_{Z^n|E_n'=e,\varphi_{a,n}}' \times P_{S_n'|E_n'=e})
[\Theta_n]}
 \nonumber \\
\le &
\sum_{e \in \mathcal{E}_n'}P_{E_n'}(e)
e^{\rho D(P_{Z^n,S_n'|E_n'=e,\varphi_{a,n}}
\| P_{Z^n|E_n'=e,\varphi_{a,n}} \times P_{S_n'|E_n'=e})
[\Theta_n]}
 \nonumber \\
\le &
\sum_{e \in \mathcal{E}_n'}P_{E_n'}(e)
\sum_{s \in \mathcal{S}_n}P_{S_n'|E_n'}(s|e)
e^{\rho D(P_{Z^n|S_n'=s,E_n'=e,\varphi_{a,n}}
\| P_{Z^n|E_n'=e,\varphi_{a,n}})[\Theta_n] }\nonumber \\
\le & p_1(n)p_3(n)^2(1+ 
\varepsilon_{n,\rho,\{1\}}(\Theta_n, P_{\mix, T_n(Q_{VU})})) 
\Label{1-13-14}
\end{align}
for any elements 
$W^Z \in \mathcal{T}_{n,\mathcal{Z}}(Q_{VU})$,
$\Theta_n \in {\cal W}_{n,\mathcal{Z}}(Q_V)$,
and $n \ge N$.

\noindent{\it Step (5): Evaluation of leaked information for all sources and all discrete memoryless channels:\quad}

Similar to (\ref{Haya-23}) and (\ref{1-13-6}),
defining 
$p_4(n):= p_1(n)p_3(n)^2|{\cal T}_{n,\mathcal{Z}}(Q_{VU})|^2$
and
$p_5(n):= p_2(n)p_3(n)^2$
and using (\ref{1-13-13}) and (\ref{1-13-14}),
we obtain
\begin{align}
I(S_n';Z^n|E_n')[\overline{W}^{Z,n},\varphi_{a,n},P_{S_n',E_n'}]
\le &
p_4(n)
e^{-n E_+^{l}(R_{\mathrm{p}}, R_{\mathrm{c}},R_0,R_1,W)}
\Label{1-13-17},
\end{align}
and
\begin{align}
& \exp (\rho 
I(S_n';Z^n|E_n')[\overline{W}^{Z,n},\varphi_{a,n},P_{S_n',E_n'}]
) \nonumber \\
\le &
p_5(n)(1+  
\varepsilon_{n,\rho,\{1\}}(\overline{W}^{Z,n},Q_{V,U}^n))\nonumber \\
= &
p_5(n)(1+  
\varepsilon_{1,\rho,\{1\}}(\overline{W}^{Z},Q_{V,U})^n)
\Label{1-13-7}
\end{align}
for any sequence of joint distributions 
$P_{S_n',E_n'}$ and $n \ge N$.

Using (\ref{1-13-17}),
for an arbitrary $\epsilon>0$,
we can choose an integer $N_1$ such that
\begin{align}
& \log 
I(S_n';Z^n|E_n')[\overline{W}^{Z,n},\varphi_{a,n},P_{S_n',E_n'}]
\nonumber \\
\le &
-n ( E_+^l (R_{\mathrm{p}},R_{\mathrm{c}}, R_{0},R_1, W) - \epsilon)
\Label{scHaya-03}
\end{align}
for $n \ge N_1$.
Due to (\ref{1-13-7}),
we obtain
\begin{align}
&\frac{1}{n}I(S_n';Z^n|E_n')[\overline{W}^{Z,n},\varphi_{a,n},P_{S_n',E_n'}]
\nonumber \\
\le &
\frac{\log p_5(n)+
\log (1+  
\varepsilon_{1,\rho,\{1\}}(\overline{W}^{Z},Q_{V,U})^n) }{n\rho} \nonumber \\
\le &
\frac{\log p_5(n)+
\log 2+
\log \varepsilon_{1,\rho,\{1\}}(\overline{W}^{Z},Q_{V,U})^n) }{n\rho}\nonumber\\
\le &
\frac{\log 2 p_5(n)}{n \rho}+
\frac{\log \varepsilon_{1,\rho,\{1\}}(\overline{W}^{Z},Q_{V,U})) }{\rho}
\Label{1-13-7b}.
\end{align}
When $\rho=\frac{1}{\sqrt{n}}$, 
as is mentioned in Lemma \ref{l-12-26-2},
the RHS of (\ref{1-13-7b}) converges 
$E_-^l(R_{\mathrm{p}},R_{\mathrm{c}}, R_{0},R_1, W)$ uniformly.
Hence,
for an arbitrary $\epsilon>0$,
we can choose an integer $N_2$ such that
\begin{align}
& 
I(S_n';Z^n|E_n')[\overline{W}^{Z,n},\varphi_{a,n},P_{S_n',E_n'}]
\nonumber \\
\le &
n (E_-^l(R_{\mathrm{p}},R_{\mathrm{c}}, R_{0},R_1, W)+ \epsilon ) \Label{scHaya-02}
\end{align}
for $n \ge N_2$.

Therefore, 
since the original code $\varphi_{\mathrm{p},n}$
satisfies (\ref{Haya-51-d}) and (\ref{Haya-52-d}),
using 
(\ref{scHaya-03}) and (\ref{scHaya-02}),
we can see that $(E^{b}$, $E^{e}$, $E_+^l$,
$E_-^l)$
is a universally attainable quadruple of exponents
in the sense of Definition \ref{def:scunivexp-b}.
\qed

\begin{remark}\Label{R1-2-1}
In this section, we treat the leaked information 
asymptotically as (\ref{eq:logexpb}).
However, in Section \ref{s9}, we have treated it 
non-asymptotically
as (\ref{eq:logexp}) and (\ref{eq:logexp2}).
The difference is caused by the condition for the sequence of joint distributions
$P_{\mathcal{S}_{\mathcal{T},n}}$.
In Section \ref{s9}, we do not assume the uniformity.
However, in this section, we can use uniform distribution of $S_{2,n}$.
Hence, we can calculate the relative R\'{e}nyi entropy 
as (\ref{12-26-10}) non-asymptotically.
\end{remark}

\begin{remark}
Here, we remark the relation with 
the discussion for secure multiplex coding in 
\cite[Section IV-D]{yamamoto05}.
The preceding paper \cite{yamamoto05} showed the existence of 
the code $\varphi_n$ satisfying that
\begin{align}
\max_{s} D(P_{Z^n|S_i=s_i,\varphi_n} \| P_{Z^n,\varphi_n}) \to 0
\end{align}
when there is no common message $E_n$
and the random variables $S_1, \ldots, S_T$ obey the uniform distribution.
However, to show the source universality for leaked information
in secure multiplex coding 
we need to evaluate the above value 
when the random variables $S_1, \ldots, S_T$ do not necessarily 
obey the uniform distribution.
In this section, we show the source universality for leaked information for $S_1$
by assuming the uniformity of the other random variable $S_2$.
Although this method brings us the source universality for BCC,
it cannot derive the source universality for secure multiplex coding. 
\end{remark}

\section{Comparison of Exponents of Leaked Information}\Label{s12}
In this section, 
we compare the exponent of leaked information
given in Sections \ref{s9} and \ref{s10} and the exponents of leaked information given in Subsection \ref{s7-3}
when the source distribution $P_{S_{{\cal T},n}}$ is uniform.
First, in Subsection \ref{s12-2}, we compare the exponent given in Sections \ref{s9} and \ref{s10} with the above mentioned exponent.
Then, we clarify that the exponent in Sections \ref{s9} and \ref{s10}
is greater than one of exponents in Subsection \ref{s7-3},
which is the same as that in \cite{hayashimatsumoto2011allerton}.
Next, 
in Subsection \ref{s12-2b},
we give equality conditions between two exponents.
In the remaining subsections, 
we give proofs of Lemmas used in Subsections \ref{s12-2} and \ref{s12-2b}.

\subsection{Comparison between Two Exponents
$\tilde{E}^{l}(R, \overline{W}^Z\times Q_{VU})$ and $\tilde{E}^{E_0}(R, \overline{W}^Z\times Q_{VU})$}\Label{s12-2}
First, we characterize the exponent 
$\tilde{E}^{E_0}(R, \overline{W}^Z\times Q_{VU})
=\sup_{\rho \in (0,1)}
\rho R - E_0(\rho| \overline{W}^Z ,Q_{V|U},Q_U)$,
which describes the exponent of leaked information
when $R$ is $R_{\mathrm{p}} -\sum_{i\in \mathcal{I}}R_i$
and the source distribution $P_{S_{{\cal T},n}}$ is uniform,
as is shown in Subsection \ref{s7-3}.
The exponent can be attained by the code constructed in the second construction (Subsection \ref{s5-2}).
Since $E_0(\rho| \overline{W}^Z ,Q_{V|U},Q_U)$ is convex with respect to $\rho$ \cite{gallager68},
$F_\rho(Q_{V|U},Q_U ):=\frac{d}{d\rho} E_0(\rho| \overline{W}^Z ,Q_{V|U},Q_U)$
is monotonically increasing with respect to $\rho$.
As limits, 
we define 
\begin{align}
F_1(Q_{V|U},Q_U )
&:=\lim_{\rho\to 1-0}
F_\rho(Q_{V|U},Q_U ) \\
E_0(1| \overline{W}^Z ,Q_{V|U},Q_U )
&:= 
\lim_{\rho \to 1-0}E_0(\rho| \overline{W}^Z ,Q_{V|U},Q_U ) .
\end{align}
In particular, when 
$Q_{VU}$ equal $Q_V \times Q_U$,
$\tilde{E}^{l}(R, \overline{W}^Z\times Q_{VU})$,
$\tilde{E}^{E_0}(R, \overline{W}^Z\times Q_{VU})$,
and the above values depend only on $Q_V$. 
Then, 
$\tilde{E}^{l}(R, \overline{W}^Z\times Q_{VU})$,
$\tilde{E}^{E_0}(R, \overline{W}^Z\times Q_{VU})$,
$E_0(1| \overline{W}^Z ,Q_{V|U},Q_U )$,
$F_1(Q_{V|U},Q_U )$, 
and 
$F_\rho(Q_{V|U},Q_U )$ 
are simplified to 
$\tilde{E}^{l}(R, \overline{W}^Z\times Q_{V})$,
$\tilde{E}^{E_0}(R, \overline{W}^Z\times Q_{V})$,
$E_0(1| \overline{W}^Z ,Q_{V})$, $F_1(Q_{V})$,
and $F_\rho(Q_{V})$.
Then, we obtain the following lemma.
\begin{lemma}\Label{l-12-20-2}
(1) Case of $R < F_1(Q_{V|U},Q_U )$.
There uniquely exists $\rho \in (0,1)$ such that $R=F_\rho(Q_{V|U},Q_U )$.
Then, the exponent $\tilde{E}^{E_0}(R, \overline{W}^Z\times Q_{VU})$
can be characterized as
\begin{align}
\tilde{E}^{E_0}(R, \overline{W}^Z\times Q_{VU}) =
\rho_0 R - 
E_0(\rho_0| \overline{W}^Z ,Q_{V|U},Q_U ) .\Label{12-20-10}
\end{align}
(2) Case of $R \ge F_1(Q_{V|U},Q_U )$.
The exponent $\tilde{E}^{E_0}(R, \overline{W}^Z\times Q_{VU})$
can be characterized as
\begin{align}
\tilde{E}^{E_0}(R, \overline{W}^Z\times Q_{VU}) 
&=
R - E_0(1| \overline{W}^Z ,Q_{V|U},Q_U ) \Label{12-20-11} .
\end{align}
\end{lemma}

The quantities appearing in Lemma \ref{l-12-20-2}
can be characterized by Lemma \ref{L-2-23-9}, which is 
displayed in the wide space in the next page.

\begin{figure*}[!t]
\begin{lemma}\Label{L-2-23-9}
The quantities 
$F_\rho(Q_{V|U},Q_U)$,
$F_1(Q_{V|U},Q_U)$,
and $E_0(1| \overline{W}^Z ,Q_{V|U},Q_U )$
are calculated as
\begin{align}
F_\rho(Q_{V|U},Q_U)
=&
\frac{
\sum_u Q_U(u) \sum_z 
(\sum_v \frac{1}{1-\rho } (\log \overline{W}^Z(z|v) )
Q_{V|U}(v|u) \overline{W}^Z(z|v)^{\frac{1}{1-\rho}})
( \sum_v Q_{V|U}(v|u) \overline{W}^Z(z|v)^{\frac{1}{1-\rho}} )^{-\rho}
}{\sum_u Q_U(u)\sum_z (\sum_v Q_{V|U}(v|u) \overline{W}^Z(z|v)^{\frac{1}{1-\rho}} )^{1-\rho}}\nonumber\\
&-\frac{
\sum_u Q_U(u) \sum_z
\log (\sum_v Q_{V|U}(v|u) \overline{W}^Z(z|v)^{\frac{1}{1-\rho}})
( \sum_v Q_{V|U}(v|u) \overline{W}^Z(z|v)^{\frac{1}{1-\rho}} )^{1-\rho}
}{\sum_u Q_U(u)\sum_z (\sum_v Q_{V|U}(v|u) \overline{W}^Z(z|v)^{\frac{1}{1-\rho}} )^{1-\rho}}. \Label{2-23-8}
\\
F_1(Q_{V|U},Q_U)
=&
-\frac{
\sum_u Q_U(u) \sum_z
\log (\sum_{v \in \mathcal{V}_z} Q_{V|U}(v|u))
\max_{v'} \overline{W}^Z(z|v') 
}
{\sum_z 
\max_{v'} \overline{W}^Z(z|v') 
}\Label{12-5-10} \\
E_0(1| \overline{W}^Z ,Q_{V|U},Q_U )
= &
\log 
\sum_{u}Q_U(u)
\sum_{z} \max_{v \in \supp(Q_{V|U=u}) }\overline{W}^Z(z|v)
\Label{12-20-12}.
\end{align}
In particular,
$F_\rho(Q_{V})$,
$F_1(Q_{V})$, and 
$E_0(1| \overline{W}^Z ,Q_{V})$
are simplified to
\begin{align}
F_\rho(Q_{V})
=&
\frac{
\sum_z 
(\sum_v \frac{1}{1-\rho } (\log \overline{W}^Z(z|v) )
Q_{V}(v) \overline{W}^Z(z|v)^{\frac{1}{1-\rho}})
( \sum_v Q_{V}(v) \overline{W}^Z(z|v)^{\frac{1}{1-\rho}} )^{-\rho}
}{\sum_z (\sum_v Q_{V}(v) \overline{W}^Z(z|v)^{\frac{1}{1-\rho}} )^{1-\rho}}\nonumber\\
&-\frac{
\sum_z
\log (\sum_v Q_{V}(v) \overline{W}^Z(z|v)^{\frac{1}{1-\rho}})
( \sum_v Q_{V}(v) \overline{W}^Z(z|v)^{\frac{1}{1-\rho}} )^{1-\rho}
}{\sum_z (\sum_v Q_{V}(v) \overline{W}^Z(z|v)^{\frac{1}{1-\rho}} )^{1-\rho}}. 
\nonumber\\
=&
\frac{
\sum_{z,v} 
(\frac{1}{1-\rho } (\log \overline{W}^Z(z|v) ) -\log (\sum_{v''} Q_{V}(v'') \overline{W}^Z(z|v'')^{\frac{1}{1-\rho}}))
( Q_{V}(v) \overline{W}^Z(z|v)^{\frac{1}{1-\rho}} )
( \sum_{v'} Q_{V}(v') \overline{W}^Z(z|v')^{\frac{1}{1-\rho}} )^{-\rho}
}{\sum_z (\sum_v Q_{V}(v) \overline{W}^Z(z|v)^{\frac{1}{1-\rho}} )^{1-\rho}}
\Label{2-23-9b}
\\
F_1(Q_{V})
=&
-\frac{
\sum_z
\log (\sum_{v \in \mathcal{V}_z} Q_{V}(v))
\max_{v'} \overline{W}^Z(z|v') 
}
{\sum_u Q_U(u)\sum_z 
\max_{v'} \overline{W}^Z(z|v') 
}\Label{12-5-10b} \\
E_0(1| \overline{W}^Z ,Q_{V})
= &
\log 
\sum_{z} \max_{v \in \supp(Q_{V}) }\overline{W}^Z(z|v)
\Label{12-20-12b}.
\end{align}
Further, the map $Q_V \mapsto F_1(Q_{V})$ is concave.
\end{lemma}
\end{figure*}

The proof of Lemma \ref{L-2-23-9} will be given in Subsection \ref{s12-d}.
For a detail analysis for the exponent $\tilde{E}^{E_0}(R, \overline{W}^Z\times Q_{VU})$,
we define
\begin{align}
F_\rho &:= \frac{d}{d\rho} E_{0,\max}(\rho|\overline{W}^Z), \quad 
F_1 :=\lim_{\rho \to 1-0} F_\rho, \\
\mathcal{K} &:= 
\{(z,v)\in \mathcal{Z}\times \mathcal{V} |
\overline{W}^{Z}(z|v)= 
\max_{v'}\overline{W}^Z(z|v')\} \nonumber \\
\mathcal{Z}_v &:= \{z \in \mathcal{Z}| (z,v)\in \mathcal{K} \} , \quad
\mathcal{V}_z:= \{v \in \mathcal{V}| (z,v)\in \mathcal{K} \}. \Label{6-4-1}
\end{align}

Due to the compactness of the set $\mathcal{P}(\mathcal{U})$, 
we have
\begin{align*}
\lim_{\rho \to 1-0} \max_{Q_{V}'}E_0(1| \overline{W}^Z ,Q_{V}') 
=\max_{Q_{V}'} \lim_{\rho \to 1-0} E_0(1| \overline{W}^Z ,Q_{V}') .
\end{align*}
Hence, we obtain the following lemma for characterization of the quantity 
$E_{0,\max}(1|\overline{W}^Z)$ defined in \eqref{eq10001}. 
\begin{lemma}
We have 
\begin{align}
E_{0,\max}(1|\overline{W}^Z)
=\log \sum_{z} \max_{v}\overline{W}^Z(z|v)
= \lim_{\rho \to 1-0} 
E_{0,\max}(\rho|\overline{W}^Z)
\Label{12-20-13}.
\end{align}
\end{lemma}

Then, we have the following characterization for a special case of Case (2) of Lemma \ref{l-12-20-2}.

\begin{lemma}\Label{L-2-23-10}
Assume that
$\cup_{v \in \supp(Q_u)} \mathcal{Z}_v = \mathcal{Z}$
for any $u \in \supp (Q_U)$.
When $R \ge F_1(Q_{V|U},Q_U )$,
we have 
\begin{align}
E_{0,\max}(1|\overline{W}^Z)
=E_0(1| \overline{W}^Z ,Q_{V|U},Q_U ) 
\Label{12-4-20}
\end{align}
and 
\begin{align}
\tilde{E}^{E_0}(R, \overline{W}^Z\times Q_{VU}) =
R-E_{0,\max}(1|\overline{W}^Z)
\Label{12-4-20-B}.
\end{align}
\end{lemma}
The proof of Lemma \ref{L-2-23-10} will be given in Subsection \ref{s12-e}.

For comparison between two exponential decreasing rates
$\tilde{E}^{E_0}(R, \overline{W}^Z\times Q_{VU})$
and $\tilde{E}^l (R, \overline{W}^Z\times Q_{VU})$, 
we prepare the following lemma.

\begin{lemma}\Label{l12-4-3}
Any channel 
$\overline{W}^Z \in \mathcal{W}(\mathcal{V},\mathcal{Z})$ satisfies
\begin{align}
&\min_{{W}^Z \in \mathcal{W}(\mathcal{U}\times \mathcal{V},\mathcal{Z})}
D({W}^Z\| \overline{W}^Z| Q_{VU})
-
\rho I(V;Z|U)[{W}^Z\times Q_{VU}]  \nonumber \\
\ge & -{E_0}(\rho | \overline{W}^Z, Q_{V|U}, Q_U)
\Label{12-4-1}
\end{align}
for any $\rho \in (0,1)$.
\end{lemma}
The proof of Lemma \ref{l12-4-3} will be given in Subsection \ref{s12-3}.
Since the inequalities
\begin{align*}
& \tilde{E}^l (R, \overline{W}^Z\times Q_{VU}) \nonumber \\
= & 
\min_{{W}^Z \in \mathcal{W}(\mathcal{U}\times \mathcal{V},\mathcal{Z})}
\! D({W}^Z\| \overline{W}^Z| Q_{VU})
\! +\! 
[R-I(V;Z|U)[{W}^Z\times Q_{VU}] ]_+ \nonumber \\
\ge & 
\min_{{W}^Z \in \mathcal{W}(\mathcal{U}\times \mathcal{V},\mathcal{Z})}
\! D({W}^Z\| \overline{W}^Z| Q_{VU})
\! +\! 
\rho [R-I(V;Z|U)[{W}^Z\times Q_{VU}] ]_+ \nonumber \\
\ge & 
\min_{{W}^Z \in \mathcal{W}(\mathcal{U}\times \mathcal{V},\mathcal{Z})}
\! D({W}^Z\| \overline{W}^Z| Q_{VU})
\! +\! 
\rho (R-I(V;Z|U)[{W}^Z\times Q_{VU}] ) 
\end{align*}
hold for any $\rho \in (0,1)$,
we obtain the following theorem, which is \eqref{3-22-1}. 
\begin{theorem}\Label{t-12-20-2}
\begin{align}
& \tilde{E}^l (R, \overline{W}^Z\times Q_{VU}) \nonumber \\
\ge & 
\sup_{\rho \in (0,1)}
\rho R - E_0(\rho| \overline{W}^Z ,Q_{V|U},Q_U ) 
= \tilde{E}^{E_0}(R, \overline{W}^Z\times Q_{VU}) 
\Label{12-4-17}.
\end{align}
\end{theorem}

\subsection{Equality Conditions of (\ref{12-4-17})}\Label{s12-2b}
In this subsection, we derive equality conditions of (\ref{12-4-17}).
For this purpose, we prepare two lemmas.
\begin{lemma}\Label{L2-22-1}
For a fixed $\rho \in (0,1)$, the following three conditions 
for a distribution $Q_V$ are equivalent.
\begin{itemize}
\item[(i)] 
The following value does not depend on $v \in \mathcal{V}$. 
\begin{align*}
\sum_z \overline{W}^Z(z|v)^{\frac{1}{1-\rho}}
(\sum_{v'} Q_{V}(v') 
\overline{W}^Z(z|v')^{\frac{1}{1-\rho}})^{-\rho}
\end{align*}
\item[(ii)] 
The following relation holds.
\begin{align}
E_0(\rho| \overline{W}^Z ,Q_{V})= 
E_{0,\max}(\rho|\overline{W}^Z)=\max_{Q_{V}'} E_0(\rho| \overline{W}^Z ,Q_{V}').
\Label{2-23-3}
\end{align}
\item[(iii)]
The following relations hold for any $v \in \mathcal{V}$.
\begin{align*}
&\sum_z \overline{W}^Z(z|v)^{\frac{1}{1-\rho}}
(\sum_{v'} Q_{V}(v') 
\overline{W}^Z(z|v')^{\frac{1}{1-\rho}})^{-\rho}
\\
=&
\max_{Q_{V}'}
\sum_{z}(\sum_{v'} Q_{V}'(v') 
\overline{W}^Z(z|v')^{\frac{1}{1-\rho}})^{1-\rho}
\\ =& \max_{Q_{V}'} e^{E_0(\rho| \overline{W}^Z ,Q_{V}')}
= e^{E_{0,\max}(\rho|\overline{W}^Z)}.
\end{align*}
\end{itemize}
\end{lemma}
The proof of Lemma \ref{L2-22-1} will be given in Subsection \ref{s12-3-5}.

\begin{lemma}\Label{L2-21-1}
The following three conditions for a distribution $Q_V$ are equivalent.
\begin{itemize}
\item[(i)] 
The following value does not depend on $v \in \mathcal{V}$.
\begin{align*}
\sum_{z \in \mathcal{Z}_v}
\frac{
\max_{v'\in \mathcal{V}}\overline{W}^Z(z|v')}
{\sum_{v''\in \mathcal{V}_z }  Q_{V}(v'') 
}
=
\sum_{z \in \mathcal{Z}_v}
\frac{\overline{W}^Z(z|v)}
{\sum_{v''\in \mathcal{V}_z }  Q_{V}(v'') 
}.
\end{align*}
\item[(ii)] 
The following relation holds.
\begin{align*}
F_1(Q_V) =\min_{Q_V'}F_1(Q_V') .
\end{align*}
\item[(iii)]
The following relations hold for any $v \in \mathcal{V}$.
\begin{align}
\sum_{z \in \mathcal{Z}_v}
\frac{
\max_{v'\in \mathcal{V}}\overline{W}^Z(z|v')}
{\sum_{v''\in \mathcal{V}_z }  Q_{V}(v'') 
}
=&
\sum_{z \in \mathcal{Z}_v}
\frac{\overline{W}^Z(z|v)}
{\sum_{v''\in \mathcal{V}_z }  Q_{V}(v'') 
}
\nonumber \\
=&\sum_{z} \max_{v'}\overline{W}^Z(z|v')\Label{2-19-1}.
\end{align}
\end{itemize}
\end{lemma}
The proof of Lemma \ref{L2-22-1} will be given in Subsection \ref{s12-7}.

Then, we introduce two conditions for a distribution $Q_V$.
\begin{condition}\Label{c-2-22-1}
Given a fixed $\rho \in (0,1)$,
the distribution $Q_V$ satisfies the condition given in Lemma \ref{L2-22-1}
\end{condition}

\begin{condition}\Label{c-2-21-2}
The distribution $Q_V$ satisfies the condition given in Lemma \ref{L2-21-1}
\end{condition}

Since Condition \ref{c-2-22-1} depends on $\rho$,
we describe it by ``Condition \ref{c-2-22-1} with $\rho$''
when we need to clarify the dependence on $\rho$.

\begin{lemma}\Label{L2-24-1}
When distribution $Q_V$ and $Q_V'$ satisfy Condition \ref{c-2-22-1} with $\rho$,
the relation 
$\sum_{v}Q_{V}(v)\overline{W}^Z(z|v)^{\frac{1}{1-\rho}}
=
\sum_{v}Q_{V}'(v)\overline{W}^Z(z|v)^{\frac{1}{1-\rho}}$
holds for any $z \in \mathcal{Z}$.
That is the value 
$\sum_{v}Q_{V}(v)\overline{W}^Z(z|v)^{\frac{1}{1-\rho}}$
does not depend on the choice of $Q_V$
as long as the distribution $Q_V$ satisfies Condition \ref{c-2-22-1} with $\rho$.
\end{lemma}
The proof of Lemma \ref{L2-24-1} will be given in Subsection \ref{s12-3-5}.

\begin{lemma}\Label{L2-24-2}
When distribution $Q_V$ and $Q_V'$ satisfy Condition \ref{c-2-21-2} with $\rho$,
the relation 
$\sum_{v''\in \mathcal{V}_z } Q_{V}(v'')=\sum_{v''\in \mathcal{V}_z } Q_{V}'(v'')$
holds for any $z \in \mathcal{Z}$.
That is the value 
$\sum_{v}Q_{V}(v)\overline{W}^Z(z|v)^{\frac{1}{1-\rho}}$
does not depend on the choice of $Q_V$
as long as the distribution $Q_V$ satisfies Condition \ref{c-2-21-2}.
\end{lemma}
The proof of Lemma \ref{L2-24-2} will be given in Subsection \ref{s12-7}.
Hence, we can define 
the transition matrices $W^{Z,\rho}$ and $W^{Z,1}$ from $\mathcal{V}$ to $\mathcal{Z}$ by
\begin{align*}
W^{Z,\rho}(z|v)
:=&
\frac
{
\overline{W}^Z(z|v)^{\frac{1}{1-\rho}}
(\sum_{v}Q_{V,\rho}(v)\overline{W}^Z(z|v)^{\frac{1}{1-\rho}})^{-\rho}
}
{\sum_z
\overline{W}^Z(z|v)^{\frac{1}{1-\rho}}
(\sum_{v}Q_{V,\rho}(v)\overline{W}^Z(z|v)^{\frac{1}{1-\rho}})^{-\rho}
},\nonumber \\\
W^{Z,1}(z|v)
:=&
\left\{
\begin{array}{ll}
\frac{\overline{W}^Z(z|v)}
{\sum_{v''\in \mathcal{V}_z }  Q_{V,1}(v'') \sum_{z'} \max_{v'}\overline{W}^Z(z'|v')}
& z \in \mathcal{Z}_v \\
0 & z \in \mathcal{Z}_v^c,
\end{array}
\right.
\end{align*}
where 
the distributions $Q_{V,\rho}$ and $Q_{V,1}$
satisfy 
Condition \ref{c-2-22-1} with $\rho$ and Condition \ref{c-2-21-2}, respectively.
These definitions do not depend on the choices of $Q_{V,\rho}$ and $Q_{V,1}$.

\begin{lemma}\Label{L2-22-4}
When $Q_{V,\rho}$ satisfies Condition \ref{c-2-22-1} with $\rho$,
we have
\begin{align}
&F_\rho=F_\rho(Q_{V,\rho}) 
=I(V;Z)[W^{Z,\rho} \times Q_{V,\rho}] \Label{2-23-2} \\
&D(W^{Z,\rho}\|\overline{W}^Z|Q_{V,\rho})
=\rho F_\rho-E_{0,\max}(\rho|\overline{W}^Z).
 \Label{2-23-1}
\end{align}
\end{lemma}
The proof of Lemma \ref{L2-22-4} will be given in Subsection \ref{s12-3-5}.
\begin{lemma}\Label{L2-22-5}
When $Q_{V,1}$ satisfies Condition \ref{c-2-21-2},
we have
\begin{align}
&F_1
=F_1(Q_{V,1}) 
=I(V;Z)[W^{Z,1} \times Q_{V,1}] \Label{2-23-4}\\
&D(W^{Z,1}\|\overline{W}^Z|Q_{V,1})
=F_1-E_{0,\max}(1|\overline{W}^Z).
\Label{2-23-5}
\end{align}
\end{lemma}
The proof of Lemma \ref{L2-22-5} will be given in Subsection \ref{s12-7}.

\begin{lemma}\Label{L2-21-3}
For any $\rho \in (0,1)$, we choose the distribution $Q_{V,\rho}$ 
satisfying Condition \ref{c-2-22-1} with $\rho$.
We choose a sequence $\rho_n$ such that $\rho_n \to 0$ as $n \to \infty$ and 
the limit distribution $\lim_{n \to \infty} Q_{V,\rho_n}$ exists.
(Since the set of distributions over $\mathcal{V}$ is compact, such a sequence $\rho_n$ exists.)
Then, 
the limit distribution $\lim_{n \to \infty} Q_{V,\rho_n}$
satisfies Condition \ref{c-2-21-2}.
\end{lemma}
The proof of Lemma \ref{L2-21-3} will be given in Subsection \ref{s12-h}.


Then, using the above lemmas,
we can characterize equality conditions of (\ref{12-4-17})
for the case $Q_{UV}=Q_{U}\times Q_{V}$
in the following way.
\begin{theorem}\Label{t-12-20-1b}
(1) Case of $R < F_1$.
We choose $\rho \in (0,1)$ such that $R=F_{\rho}$.
When $Q_{V,\rho}$ satisfies Condition \ref{c-2-22-1} with $\rho$, 
the relations
\begin{align}
&\min_{Q_V}\tilde{E}^l (R, \overline{W}^Z\times Q_{V})
=\min_{Q_V}\tilde{E}^{E_0} (R, \overline{W}^Z\times Q_{V})
\nonumber \\
=&\tilde{E}^l (R, \overline{W}^Z\times Q_{V,\rho}) 
=\tilde{E}^{E_0} (R, \overline{W}^Z\times Q_{V,\rho}) 
= \rho R-E_{0,\max}(\rho|\overline{W}^Z) \Label{12-5-21}
\end{align}
hold, which implies the equality in (\ref{12-4-17}).

(2) Case of $R \ge F_1$.
When 
$Q_{V,1}$ satisfies Condition \ref{c-2-21-2},
the relations
\begin{align}
&\min_{Q_V}\tilde{E}^l (R, \overline{W}^Z\times Q_{V}) 
=\min_{Q_V}\tilde{E}^{E_0} (R, \overline{W}^Z\times Q_{V})
\nonumber \\
=&\tilde{E}^l (R, \overline{W}^Z\times Q_{V,1}) 
=\tilde{E}^{E_0} (R, \overline{W}^Z\times Q_{V,1}) 
= R-E_{0,\max}(1 |\overline{W}^Z)
\Label{12-5-21b}
\end{align}
hold, which implies the equality in (\ref{12-4-17}).
\end{theorem}

Combining the discussions in both cases in Theorem \ref{t-12-20-1b}, we obtain
\begin{align}
\min_{Q_V}\tilde{E}^l (R, \overline{W}^Z\times Q_{V}) 
=& \min_{Q_V}\tilde{E}^{E_0} (R, \overline{W}^Z\times Q_{V})
\nonumber \\
=&\max_{\rho \in [0,1]} \rho R- E_{0,\max}(\rho|\overline{W}^Z),
\end{align}
which is \eqref{3-22-2}. 

\begin{proofof}{Theorem \ref{t-12-20-1b}}
First, we show (\ref{12-5-21}).
Since $I(V;Z)[W^{Z,\rho} \times Q_{V,\rho}]=F_\rho=R$ follows from (\ref{2-23-2}),
we have
\begin{align}
& \tilde{E}^l (R, \overline{W}^Z\times Q_{V,\rho}) \nonumber\\
\stackrel{(a)}{\le} &
D(W^{Z,\rho} \| \overline{W}^Z| Q_{V,\rho})  +
[R - I(V;Z)[W^{Z,\rho} \times Q_{V,\rho}] ]_+ \nonumber\\
\stackrel{(b)}{=} & 
\rho F_\rho - E_{0,\max}(\rho|\overline{W}^Z)
\stackrel{(c)}{=} 
\rho R- E_0(\rho|\overline{W}^Z,Q_{V,\rho}) 
\nonumber\\
\stackrel{(d)}{=} &
\tilde{E}^{E_0}(R, \overline{W}^Z\times Q_{V,\rho}) ,
\Label{3-23-1eq}
\end{align}
where 
$(a)$,
$(b)$,
$(c)$, and
$(d)$ follow from
the Definition \eqref{1-31-1-k} of $\tilde{E}^l (R, \overline{W}^Z\times Q_{V,\rho})$,
(\ref{2-23-1}),
(\ref{2-23-3}),
and Item (1) of Lemma \ref{l-12-20-2}, respectively.

Any distribution $Q_V$ satisfies
\begin{align*}
& \rho R- E_{0,\max}(\rho|\overline{W}^Z)
\le \rho R- E_0(\rho|\overline{W}^Z,Q_{V})
\le \tilde{E}^{E_0}(R, \overline{W}^Z\times Q_{V}) ,
\end{align*}
which implies
\begin{align}
& \rho R- E_{0,\max}(\rho|\overline{W}^Z)
\le \min_{Q_V} \tilde{E}^{E_0}(R, \overline{W}^Z\times Q_{V}) .
\Label{3-23-2eq}
\end{align}
Combining the above relations and 
we obtain
\begin{align}
& \tilde{E}^l (R, \overline{W}^Z\times Q_{V,\rho}) 
\stackrel{(a)}{\le} 
 \rho R- E_{0,\max}(\rho|\overline{W}^Z)
\nonumber \\
\stackrel{(b)}{\le}  & 
\min_{Q_V} \tilde{E}^{E_0}(R, \overline{W}^Z\times Q_{V}) 
\stackrel{(c)}{\le} 
\min_{Q_V} \tilde{E}^{l}(R, \overline{W}^Z\times Q_{V}) ,
\Label{3-23-3eq}
\end{align}
where 
$(a)$, $(b)$, and $(c)$ follow from
\eqref{3-23-1eq},
\eqref{3-23-2eq}, and
Theorem \ref{t-12-20-2}, respectively.
Hence, the combination of \eqref{3-23-3eq} and $(d)$ of \eqref{3-23-1eq}
leads (\ref{12-5-21}).

Next, we show (\ref{12-5-21b}).
The relations (\ref{2-23-4}) and (\ref{2-23-5}) imply 
\begin{align*}
& \tilde{E}^l (R, \overline{W}^Z\times Q_{V,1}) 
\nonumber\\
\le &
D(W^{Z,1} \| \overline{W}^Z| Q_{V,1})  +
[R - I(V;Z)[W^{Z,1} \times Q_{V,1}] ]_+ \nonumber\\
=&  F_1 - E_{0,\max}(1|\overline{W}^Z)
 +[R - F_1 ]_+ 
\nonumber\\
=&  F_1 - E_{0,\max}(1|\overline{W}^Z) +R - F_1  
=  R- E_{0,\max}(1|\overline{W}^Z)
\nonumber\\
=&  R- E_0(1|\overline{W}^Z,Q_{V,1}) 
=\tilde{E}^{E_0}(R, \overline{W}^Z\times Q_{V,1}) .
\end{align*}
Any distribution $Q_V$ satisfies
\begin{align*}
& R- E_{0,\max}(1|\overline{W}^Z)
\le R- E_0(1|\overline{W}^Z,Q_{V})
\le \tilde{E}^{E_0}(R, \overline{W}^Z\times Q_{V}) ,
\end{align*}
which implies
\begin{align*}
& R-  E_{0,\max}(1|\overline{W}^Z)
\le \min_{Q_V} \tilde{E}^{E_0}(R, \overline{W}^Z\times Q_{V,\rho}) .
\end{align*}
Combining the above relations and Lemma \ref{t-12-20-2}, we obtain
\begin{align*}
& \tilde{E}^l (R, \overline{W}^Z\times Q_{V,\rho}) 
\le R- E_{0,\max}(1|\overline{W}^Z)
=\tilde{E}^{E_0}(R, \overline{W}^Z\times Q_{V,\rho}) \\
\le & \min_{Q_V} \tilde{E}^{E_0}(R, \overline{W}^Z\times Q_{V}) 
\le \min_{Q_V} \tilde{E}^{l}(R, \overline{W}^Z\times Q_{V}) ,
\end{align*}
which implies (\ref{12-5-21b}).
\end{proofof}

For the general case, we prepare the generalizations of Lemmas \ref{L2-22-4} and \ref{L2-22-5}.
The following lemmas follow from Lemmas \ref{L2-22-4} and \ref{L2-22-5}.
\begin{lemma}\Label{L2-23-4}
When 
$Q_{V|U=u}$ satisfies Condition \ref{c-2-22-1} with $\rho$, 
for any $u \in \supp(Q_U)$,
\begin{align*}
&F_\rho=F_\rho(Q_{V|U}, Q_U) 
=I(V;Z|U)[W^{Z,\rho} \times Q_{VU}] \\
&D(W^{Z,\rho}\|\overline{W}^Z|Q_{VU})
=F_\rho-E_{0,\max}(\rho|\overline{W}^Z).
\end{align*}
\end{lemma}
\begin{lemma}\Label{L2-23-5}
When 
$Q_{V|U=u}$ satisfies Condition \ref{c-2-21-2} for any $u \in \supp(Q_U)$,
\begin{align*}
&F_1
=F_1(Q_{V|U}, Q_U)
=I(V;Z|U)[W^{Z,1} \times Q_{VU}] \\
&D(W^{Z,1}\|\overline{W}^Z|Q_{VU})
=F_1-E_{0,\max}(1|\overline{W}^Z).
\end{align*}
\end{lemma}


Then, we can characterize equality conditions for (\ref{12-4-17}) in the general case.
That is, similar to Theorem \ref{t-12-20-1b},
using Lemmas \ref{L2-23-4} and \ref{L2-23-5},
we can show the following theorem.
\begin{theorem}\Label{t-12-20-1}
(1) Case of $R < F_1$.
We choose $\rho \in (0,1)$ such that $R=F_{\rho}$.
When 
$Q_{V|U=u}$ satisfies Condition \ref{c-2-22-1} with $\rho$ 
for any $u \in \supp(Q_U)$,
the relations
\begin{align}
&\min_{Q_{VU}'}\tilde{E}^l (R, \overline{W}^Z\times Q_{VU}')
=\min_{Q_V'}\tilde{E}^l (R, \overline{W}^Z\times Q_{V}')\nonumber \\
=&\min_{Q_{VU}'}\tilde{E}^{E_0} (R, \overline{W}^Z\times Q_{VU}')
=\min_{Q_V'}\tilde{E}^{E_0} (R, \overline{W}^Z\times Q_{V}')\nonumber \\
=&\tilde{E}^l (R, \overline{W}^Z\times Q_{VU}) 
=\tilde{E}^{E_0} (R, \overline{W}^Z\times Q_{VU}) 
= \rho R-
E_{0,\max}(\rho|\overline{W}^Z)
\Label{12-5-21f}
\end{align}
hold, which implies the equality in (\ref{12-4-17}).

(2) Case of $R \ge F_1$.
When $Q_{V|U=u}$ satisfies Condition \ref{c-2-21-2} for any $u \in \supp(Q_U)$,
the relations
\begin{align}
&\min_{Q_{VU}'}\tilde{E}^l (R, \overline{W}^Z\times Q_{VU}')
=\min_{Q_V'}\tilde{E}^l (R, \overline{W}^Z\times Q_{V}') \nonumber \\
=&\min_{Q_{VU}'}\tilde{E}^{E_0} (R, \overline{W}^Z\times Q_{VU}')
=\min_{Q_V'}\tilde{E}^{E_0} (R, \overline{W}^Z\times Q_{V}')\nonumber \\
=&\tilde{E}^l (R, \overline{W}^Z\times Q_{VU}) 
=\tilde{E}^{E_0} (R, \overline{W}^Z\times Q_{VU}) 
= R-
E_{0,\max}(1 |\overline{W}^Z)
\Label{12-5-21bf}
\end{align}
hold, which implies the equality in (\ref{12-4-17}).
\end{theorem}
Then, we obtain the following two corollaries.
\begin{corollary}
When the channel $W^Z$ is regular
and $Q_{V}$ is the uniform distribution, 
the equality in (\ref{12-4-17}) holds.
\end{corollary}

\begin{proof}
When the channel $W^Z$ is regular,
the uniform distribution over $\mathcal{V}$ satisfies Condition \ref{c-2-22-1} with $\rho$. 
Hence, when $Q_{V}$ is the uniform distribution, 
the equality in (\ref{12-4-17}) holds.
\end{proof}

\begin{corollary}\Label{c-12-24-1}
When 
$R=F_{\rho}$
and $Q_{V|U=u}$ satisfies Condition \ref{c-2-21-2} for any $u \in \supp(Q_U)$,
we have
\begin{align}
\tilde{E}^l (R, \overline{W}^Z\times Q_{VU}) 
= & 
 \tilde{E}^{E_0}(R, \overline{W}^Z\times Q_{VU}) \nonumber \\
\le & 
\tilde{E}^{\psi} (R, \overline{W}^Z\times Q_{VU}).
\nonumber
\end{align}
\end{corollary}

In the above case of Corollary \ref{c-12-24-1}, 
the exponent $\tilde{E}^l (R, \overline{W}^Z\times Q_{VU})$
 cannot improve the exponent
$\tilde{E}^{\psi} (R, \overline{W}^Z\times Q_{VU})$,
which is the exponent 
of the code constructed in the first construction (Subsection \ref{s5-3})
and is given in Subsection \ref{s7-3}.
However, the relation between
$\tilde{E}^l (R, \overline{W}^Z\times Q_{VU})$
and
$\tilde{E}^{\psi} (R, \overline{W}^Z\times Q_{VU})$
remains unknown up to now.

\subsection{Examples}\Label{s12-c}
In this subsection, 
we numerically compare
\begin{align*}
&\tilde{E}^l (R, \overline{W}^Z\times Q_{V}) \\
=&
\min_{{W}^Z \in \mathcal{W}(\mathcal{V},\mathcal{Z})}
D({W}^Z\| \overline{W}^Z| Q_{V})
+
[R-I(V;Z)[W^Z\times Q_{V}] ]_+ 
\end{align*}
and
\begin{align*}
\tilde{E}^{E_0} (R, \overline{W}^Z\times Q_{V})
= &
\max_{0 \le \rho \le 1}
\rho R - E_0(\rho| \overline{W}^Z ,Q_{V}) \\
\tilde{E}^{\psi} (R, \overline{W}^Z\times Q_{V})
&=
\max_{0 \le \rho \le 1} \rho R - \psi(\rho|\overline{W}^Z,Q_{V})
\end{align*}
in the following two examples.

\begin{example}\Label{e1}
In this example,
we address the channel given by a $2\times 2$ general transition matrix. 
Consider the case when 
$\mathcal{Z}=\mathcal{V}=\{1,2\}$.
Define the transition matrix $\overline{W}^{Z}$
by
\begin{align}
\overline{W}^{Z}:=
\left(
\begin{array}{cc}
1-p& q \\
p  & 1-q
\end{array}
\right)
\end{align}
with $p > q \in (0,1/2)$.
When 
$Q_V(1)=1/2$ and
$Q_V(2)=1/2$,
we have
\begin{align}
& E_0(\rho|\overline{W}^{Z},Q_V) 
\nonumber \\
=& \log
(
(\frac{1}{2}(1-p)^{\frac{1}{1-\rho}}
+
\frac{1}{2}q^{\frac{1}{1-\rho}})^{1-\rho}
+
(\frac{1}{2}p^{\frac{1}{1-\rho}}
+
\frac{1}{2}(1-q)^{\frac{1}{1-\rho}})^{1-\rho}
),\\
&\psi(\rho|\overline{W}^{Z},Q_V)
\nonumber \\
=&\log
(
\frac{1}{2}(1-p)^{1+\rho} (\frac{1-p+q}{2})^{-\rho}
+
\frac{1}{2}p^{1+\rho} (\frac{1-q+p}{2})^{-\rho} \nonumber \\
&+
\frac{1}{2}q^{1+\rho} (\frac{1-p+q}{2})^{-\rho}
+
\frac{1}{2}(1-q)^{1+\rho} (\frac{1-q+p}{2})^{-\rho}
).
\end{align}
Fig. \ref{f1} suggests that 
$\tilde{E}^{\psi} (R, \overline{W}^Z\times Q_{V})$
is larger than 
$\tilde{E}^l (R, \overline{W}^Z\times Q_{V})$.
In Fig. \ref{f2},
we numerically calculate 
$\argmax_{0 \le \rho \le 1} \rho R - E_0 (\rho| \overline{W}^Z, Q_{V})$
and
$\argmax_{0 \le \rho \le 1} \rho R - \psi (\rho| \overline{W}^Z, Q_{V})$
which realize
$\tilde{E}^{E_0} (R, \overline{W}^Z\times Q_{V})$
and
$\tilde{E}^{\psi}(R, \overline{W}^Z\times Q_{V})$, respectively.
\end{example}

\begin{figure}[htbp]
\begin{center}
\scalebox{0.7}{\includegraphics[scale=2.5]{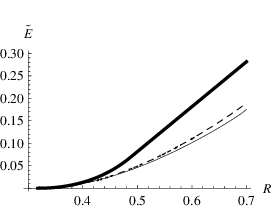}}
\end{center}
\caption{
Lower bounds of exponent in Example \ref{e1} with $p=0.01$ and $q=0.3$. 
In this case, 
$I(V;Z)[\overline{W}^Z\times Q_{V}]=0.317054$.
Thick line, Dashed line, and Normal line
plot 
$\tilde{E}^{\psi} (R, \overline{W}^Z\times Q_{V})$,
$\tilde{E}^l (R, \overline{W}^Z\times Q_{V})$,
and
$\tilde{E}^{E_0} (R, \overline{W}^Z\times Q_{V})$
as functions of $R$ from $R= 0.317054$ to $R= \log 2=0.693147$
with the origin (0.3,0).}
\Label{f1}
\end{figure}%

\begin{figure}[htbp]
\begin{center}
\scalebox{0.7}{\includegraphics[scale=2.5]{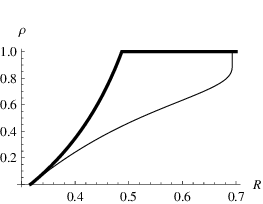}}
\end{center}
\caption{
Relation between $R$ and $\rho$ realizing the optimal value. 
in Example \ref{e1} with $p=0.01$ and $q=0.3$. 
Thick line expresses
$\argmax_{0 \le \rho \le 1} \rho R - \psi (\rho| \overline{W}^Z, Q_{V})$,
which realizes $\tilde{E}^{\psi} (R, \overline{W}^Z\times Q_{V})$.
Normal line expresses  
$\argmax_{0 \le \rho \le 1} \rho R - E_0 (\rho| \overline{W}^Z, Q_{V})$,
which realizes $\tilde{E}^{E_0} (R, \overline{W}^Z\times Q_{V})$.
There is no graph corresponding to $\tilde{E}^l (R, \overline{W}^Z\times Q_{V})$
because 
$\tilde{E}^l (R, \overline{W}^Z\times Q_{V})$ is not given as maximization with respect to
$\rho$.
The origin is (0.3,0).}
\Label{f2}
\end{figure}%

\begin{example}\Label{e2}
In this example, 
we consider the case when
states satisfying Conditions \ref{c-2-22-1} and \ref{c-2-21-2} are not unique.
Consider the case when 
$\mathcal{Z}=\mathcal{V}=\{1,2,3,4\}$.
Define the transition matrix $\overline{W}^{Z}$
by
\begin{align}
\overline{W}^{Z}:=
\left(
\begin{array}{cccc}
\frac{1}{2}-p& p& \frac{1}{2}-p& p\\
p& \frac{1}{2}-p& p& \frac{1}{2}-p\\
\frac{1}{2}-p& p& p& \frac{1}{2}-p\\
p& \frac{1}{2}-p& \frac{1}{2}-p& p
\end{array}
\right)
\end{align}
with $p \in (0,1/4)$.
When 
$Q_V(1)=q$,
$Q_V(2)=q$,
$Q_V(3)=\frac{1}{2}-q$,
and 
$Q_V(4)=\frac{1}{2}-q$,
we have
\begin{align}
& \sum_z \overline{W}^Z(z|v)^{\frac{1}{1-\rho}}
(\sum_{v'} Q_{V}(v') 
\overline{W}^Z(z|v')^{\frac{1}{1-\rho}})^{-\rho}
\nonumber \\
=&
4 
(
\frac{1}{2}
(\frac{1}{2}-p)^{\frac{1}{1-\rho}}
+
\frac{1}{2}
p^{\frac{1}{1-\rho}}
)^{1-\rho} 
=
2^{1+\rho}
(
(\frac{1}{2}-p)^{\frac{1}{1-\rho}}
+
p^{\frac{1}{1-\rho}}
)^{1-\rho}.
\end{align}
for all $v \in \mathcal{V}$, which implies Condition \ref{c-2-22-1}.
Hence, 
\begin{align}
& E_{0,\max}(\rho |\overline{W}^Z) 
=
E_0(\rho|\overline{W}^{Z},Q_V)
\nonumber \\
=&
(1+\rho)\log 2
+(1-\rho)\log(
(\frac{1}{2}-p)^{\frac{1}{1-\rho}}
+
p^{\frac{1}{1-\rho}}
) ,\\
& F_\rho = F_\rho(Q_V)
\nonumber \\
=&
\log 2
-\log(
(\frac{1}{2}-p)^{\frac{1}{1-\rho}}
+
p^{\frac{1}{1-\rho}}
) 
\nonumber \\
&+
\frac{1}{1-\rho}
\frac
{
(\frac{1}{2}-p)^{\frac{1}{1-\rho}}
\log(\frac{1}{2}-p) 
+
p^{\frac{1}{1-\rho}}
\log p}
{
(\frac{1}{2}-p)^{\frac{1}{1-\rho}}
+
p^{\frac{1}{1-\rho}}} ,\\
&\psi(\rho|\overline{W}^{Z},Q_V) =
(2\rho+1) \log 2
+\log ((\frac{1}{2}-p)^{1+\rho}+p^{1+\rho} ).
\end{align}

Next, we check Condition \ref{c-2-21-2}.
For this purpose, we check Condition (i) in Lemma \ref{L2-21-1}
by treating $\mathcal{V}_z$ given in \eqref{6-4-1}.
Since
$\mathcal{V}_1=\{1,3 \}$,
$\mathcal{V}_2=\{2,4 \}$,
$\mathcal{V}_3=\{1,4 \}$,
and
$\mathcal{V}_4=\{2,3 \}$,
in the above choice of $Q_V$,
we have
$\sum_{v''\in \mathcal{V}_z }  Q_{V}(v'') 
=\frac{1}{2}$,
which implies
\begin{align}
\sum_{z \in \mathcal{Z}_v}
\frac{
\max_{v'\in \mathcal{V}}\overline{W}^Z(z|v')}
{\sum_{v''\in \mathcal{V}_z }  Q_{V}(v'') 
}
=
2 \frac{\frac{1}{2}-p}{\frac{1}{2}}
=4(\frac{1}{2}-p)
\end{align}
for all $v \in \mathcal{V}$.
Thus, Condition \ref{c-2-21-2} holds.
Hence,
\begin{align}
E_{0,\max}(1|\overline{W}^Z)
&=
\log 4(\frac{1}{2}-p) \\
F_1 &=
\log 2.
\end{align}
Further, Theorem \ref{t-12-20-1} guarantees that
$\tilde{E}^{E_0} (R, \overline{W}^Z\times Q_{V})
=\tilde{E}^l (R, \overline{W}^Z\times Q_{V})$.
So, we numerically compare only
$\tilde{E}^{\psi} (R, \overline{W}^Z\times Q_{V})$
and
$\tilde{E}^{E_0} (R, \overline{W}^Z\times Q_{V})$
in Fig. \ref{f3}.
Since
$\tilde{E}^{E_0} (R, \overline{W}^Z\times Q_{V})$ attains the
minimum value due to Theorem \ref{t-12-20-1},
$\tilde{E}^{E_0} (R, \overline{W}^Z\times Q_{V})$ does not depend on $q$.
Further,
$\tilde{E}^{\psi}(R, \overline{W}^Z\times Q_{V})$
also does not depend on $q$
due to the form of $\tilde{E}^{\psi}(R, \overline{W}^Z\times Q_{V})$.
Similar to Fig. \ref{f2}, 
Fig. \ref{f4} suggests that
the parameter $\rho$ realizing 
$\tilde{E}^{E_0} (R, \overline{W}^Z\times Q_{V})$
has a behavior different from 
the parameter $\rho$ realizing 
$\tilde{E}^{\psi}(R, \overline{W}^Z\times Q_{V})$.
\end{example}

\begin{figure}[htbp]
\begin{center}
\scalebox{0.7}{\includegraphics[scale=2.5]{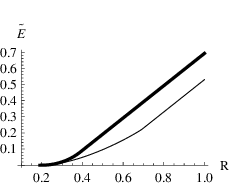}}
\end{center}
\caption{Lower bounds of exponent in Example \ref{e2} with $p=0.1$. 
In this case, 
$I(V;Z)[\overline{W}^Z\times Q_{V}]=0.192745$.
Thick line 
and
Normal line
express 
$\tilde{E}^{\psi} (R, \overline{W}^Z\times Q_{V})$
and $\tilde{E}^{E_0} (R, \overline{W}^Z\times Q_{V})=\tilde{E}^l (R, \overline{W}^Z\times Q_{V})$
as functions of $R$ from $R=0.192745$ to $R=1.0$
with the origin (0.1,0).
Thick line is straight when $R \ge 0.4$ because $\argmax_{0 \le \rho \le 1} \rho R - \psi (\rho| \overline{W}^Z, Q_{V})$ is 1 when $R \ge 0.4$,
as in Fig \ref{f4}.
Normal line is straight when $R \ge 0.7$ because $\argmax_{0 \le \rho \le 1} \rho R - E_0 (\rho| \overline{W}^Z, Q_{V})$ is 1 when $R \ge 0.7$,
as in Fig \ref{f4}.}
\Label{f3}
\end{figure}%

\begin{figure}[htbp]
\begin{center}
\scalebox{0.7}{\includegraphics[scale=2.5]{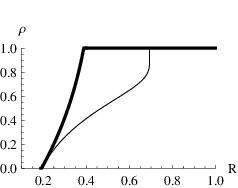}}
\end{center}
\caption{
Relation between $R$ and $\rho$ realizing the optimal value in Example \ref{e2} with $p=0.1$.
Normal line expresses  
$\argmax_{0 \le \rho \le 1} \rho R - E_0 (\rho| \overline{W}^Z, Q_{V})$,
which realizes $\tilde{E}^{E_0} (R, \overline{W}^Z\times Q_{V})$.
Thick line expresses
$\argmax_{0 \le \rho \le 1} \rho R - \psi (\rho| \overline{W}^Z, Q_{V})$,
which realizes $\tilde{E}^{\psi} (R, \overline{W}^Z\times Q_{V})$.
There is no graph corresponding to $\tilde{E}^l (R, \overline{W}^Z\times Q_{V})$
because 
$\tilde{E}^l (R, \overline{W}^Z\times Q_{V})$ is not given as maximization with respect to
$\rho$.
The origin is (0.1,0).}
\Label{f4}
\end{figure}%

\subsection{Proof of Lemma \ref{L-2-23-9}}\Label{s12-d}
\begin{proof}
We can show (\ref{2-23-8}) and (\ref{12-20-12}) by direct calculations.
Now, we show (\ref{12-20-12}).
In general,
when $b_i >0$ and $a_1=a_2=\ldots=a_l > a_i>0$ for $i=l+1,\ldots, k$,
the relation 
\begin{align}
&\lim_{\rho \to 1-0}(\sum_{i=1}^k b_i a_{i}^{\frac{1}{1-\rho}})^{1-\rho}
\nonumber \\
=&
\lim_{\rho \to 1-0}( (\sum_{i=1}^l b_i) a_{1}^{\frac{1}{1-\rho}})^{1-\rho}
(1+ \sum_{i=l+1}^k \frac{b_i}{\sum_{i=1}^l b_i} \frac{a_{i}}{a_1}^{\frac{1}{1-\rho}})^{1-\rho} \nonumber \\
=&
\lim_{\rho \to 1-0}( (\sum_{i=1}^l b_i) a_{1}^{\frac{1}{1-\rho}})^{1-\rho}
=
a_1 \Label{2-13-1}
\end{align}
holds. 
That is,
the difference 
$(\sum_{i=1}^k b_i a_{i}^{\frac{1}{1-\rho}})^{1-\rho}
-( (\sum_{i=1}^l b_i) a_{1}^{\frac{1}{1-\rho}})^{1-\rho}$
behaves as $O(\exp(-\frac{a}{1-\rho}))$
with a constant $a$.
Applying the above general discussion, 
we have
\begin{align*}
& 
\lim_{\rho \to 1-0}
\sum_{u}Q_U(u) \sum_z 
\biggl[\sum_{v } Q_{V|U}(v|u) \overline{W}^Z(z|v)^{\frac{1}{1-\rho}}\biggr]^{1-\rho} \\
= &
\lim_{\rho \to 1-0}
\sum_{u}Q_U(u) \sum_z 
\Biggl[
\sum_{v \in \mathcal{V}_z(Q_{V|U=u})} Q_{V|U}(v|u) 
\\
& \hspace{25ex} \cdot \biggl(
\max_{v \in \supp (Q_{V|U=u}) } \overline{W}^Z(z|v)
\biggr)^{\frac{1}{1-\rho}}
\Biggr]^{1-\rho} \\
=&
\lim_{\rho \to 1-0}
\sum_{u}Q_U(u) \sum_z 
\Biggl[
\biggl(
\sum_{v \in \mathcal{V}_z(Q_{V|U=u})} Q_{V|U}(v|u) 
\biggr)^{1-\rho} 
\\
&\hspace{25ex} \cdot 
\biggl(
\max_{v \in \supp (Q_{V|U=u}) } \overline{W}^Z(z|v)
\biggr)\Biggr] \\
= & 
\sum_{u}Q_U(u) \sum_z 
(\max_{v \in \supp (Q_{V|U=u}) } \overline{W}^Z(z|v)) .
\end{align*}
where $\mathcal{V}_z(Q_{V|U=u})
:= \{v \in \supp (Q_{V|U=u})|  \max_{v \in \supp (Q_{V|U=u}) } \overline{W}^Z(z|v)\}$.
Hence, we obtain (\ref{12-20-12}).

Further, since $x \mapsto - \log x$ is concave, 
the map $Q_V \mapsto F_1(Q_{V})$ is concave.
The remaining task is the poof of the equation (\ref{12-5-10}), will be shown in the wide space style in the next page.
\end{proof}

\begin{figure*}[!t]
\begin{proofof}{(\ref{12-5-10})}
We have
\begin{align}
&\frac{d}{d\rho} E_0(\rho| \overline{W}^Z ,Q_{V|U},Q_U) \nonumber\\
=&
\frac{
\sum_u Q_U(u) \sum_z 
(\sum_v \frac{1}{1-\rho } (\log \overline{W}^Z(z|v) )
Q_{V|U}(v|u) \overline{W}^Z(z|v)^{\frac{1}{1-\rho}})
( \sum_v Q_{V|U}(v|u) \overline{W}^Z(z|v)^{\frac{1}{1-\rho}} )^{-\rho}
}{\sum_u Q_U(u)\sum_z (\sum_v Q_{V|U}(v|u) \overline{W}^Z(z|v)^{\frac{1}{1-\rho}} )^{1-\rho}}\nonumber\\
&-\frac{
\sum_u Q_U(u) \sum_z
\log (\sum_v Q_{V|U}(v|u) \overline{W}^Z(z|v)^{\frac{1}{1-\rho}})
( \sum_v Q_{V|U}(v|u) \overline{W}^Z(z|v)^{\frac{1}{1-\rho}} )^{1-\rho}
}{\sum_u Q_U(u)\sum_z (\sum_v Q_{V|U}(v|u) \overline{W}^Z(z|v)^{\frac{1}{1-\rho}} )^{1-\rho}}. \nonumber
\end{align}
When $\rho$ approaches $1$,
$\sum_v Q_{V|U}(v|u) \overline{W}^Z(z|v)^{\frac{1}{1-\rho}}$
approaches
$(\sum_{v\in \mathcal{V}_z} Q_{V|U}(v|u)) 
(\max_{v'}\overline{W}^Z(z|v'))^{\frac{1}{1-\rho}}$.
Hence, 
\begin{align}
&
\lim_{\rho \to 1-0}
\frac{d}{d\rho} E_0(\rho| \overline{W}^Z ,Q_{V|U},Q_U) \nonumber\\
= &
\lim_{\rho \to 1-0}
\Bigl(
\frac{
\sum_u Q_U(u) \sum_z 
(
\frac{1}{1-\rho } 
\log \max_{v'} \overline{W}^Z(z|v') 
(\sum_{v \in \mathcal{V}_z} Q_{V|U}(v|u))^{1-\rho}
\max_{v'} \overline{W}^Z(z|v') )
}
{\sum_u Q_U(u)\sum_z 
(\sum_{v \in \mathcal{V}_z} Q_{V|U}(v|u))^{1-\rho}
\max_{v'} \overline{W}^Z(z|v') 
}
\nonumber\\
&-\frac{
\sum_u Q_U(u) \sum_z
(\frac{1}{1-\rho } \log \max_{v'} \overline{W}^Z(z|v') 
+\log (\sum_{v \in \mathcal{V}_z} Q_{V|U}(v|u)))
(\sum_{v \in \mathcal{V}_z} Q_{V|U}(v|u))^{1-\rho}
\max_{v'} \overline{W}^Z(z|v') 
}
{\sum_u Q_U(u)\sum_z 
(\sum_{v \in \mathcal{V}_z} Q_{V|U}(v|u))^{1-\rho}
\max_{v'} \overline{W}^Z(z|v') 
}
\Bigr)
\nonumber\\
=&
\lim_{\rho \to 1-0}
\frac{
-\sum_u Q_U(u) \sum_z
\log (\sum_{v \in \mathcal{V}_z} Q_{V|U}(v|u))
(\sum_{v \in \mathcal{V}_z} Q_{V|U}(v|u))^{1-\rho}
\max_{v'} \overline{W}^Z(z|v') 
}
{\sum_u Q_U(u)\sum_z 
(\sum_{v \in \mathcal{V}_z} Q_{V|U}(v|u))^{1-\rho}
\max_{v'} \overline{W}^Z(z|v') 
}\nonumber\\
= &
\lim_{\rho \to 1-0}
-\frac{
\sum_u Q_U(u) \sum_z
\log (\sum_{v \in \mathcal{V}_z} Q_{V|U}(v|u))
\max_{v'} \overline{W}^Z(z|v') 
}
{\sum_u Q_U(u)\sum_z 
\max_{v'} \overline{W}^Z(z|v') 
},\Label{12-5-15}
\end{align}
which implies (\ref{12-5-10}).
\end{proofof}
\end{figure*}

\subsection{Proof of Lemma \ref{L-2-23-10}}\Label{s12-e}
\begin{proof}
Due to (\ref{12-20-12}), we have
\begin{align*}
E_{0,\max}(1|\overline{W}^Z)
=&
\max_{Q_{VU}'}\lim_{\rho \to 1-0} 
E_0(\rho| \overline{W}^Z ,Q_{V|U}',Q_U ') \\
=&
\max_{Q_{VU}}
\log 
\sum_{u}Q_U(u)
\sum_{z} \max_{v \in \supp(Q_{V|U=u}) }\overline{W}^Z(z|v) \\
=&
\log \sum_{z} \max_{v}\overline{W}^Z(z|v),
\end{align*}
which implies (\ref{12-20-13}).

Assume that
the support of $Q_{V|U=u}$ contains
$\{ v \in \mathcal{V}|
\min_{z}
\frac{\max_{v'}\overline{W}^Z(z|v')}{\overline{W}^Z(z|v)}=1
\}$ for any $u \in \supp (Q_U)$.
Due to (\ref{12-20-12}), we have
\begin{align}
E_0(1| \overline{W}^Z ,Q_{V|U},Q_U ) 
=\log \sum_{z} \max_{v}\overline{W}^Z(z|v).
\end{align}
Combining (\ref{12-20-13}), we obtain (\ref{12-4-20}).
Hence, as a special case of (\ref{12-20-11}), we obtain (\ref{12-4-20-B}).
\end{proof}

\subsection{Proofs of Lemmas \ref{L2-22-1}, \ref{L2-24-1}, and \ref{L2-22-4}}
\Label{s12-3-5}

\begin{lemma}\Label{L2-24-3}
Let $f$ be a concave $C^1$ function from $\bR^d$ to $\bR$
and $\mathcal{P}(d)$ be the subset $\{(x_1,\ldots,x_d)\in\bR^d| x_i \ge 0, \sum_{i=1}^d x_i=1 \}$.
The following two conditions for $x=(x_1,\ldots,x_d)\in \mathcal{P}(d)$ are equivalent.
\begin{itemize}
\item[(i)]
\begin{align}
f(x)=\max_{x' \in \mathcal{P}(d)} f(x').
\end{align}
\item[(ii)]
The following relation holds for any $i \neq j$.
\begin{align}
\frac{\partial}{\partial x^i} f(x)=\frac{\partial}{\partial x^i} f(x).
\end{align}
\end{itemize}
\end{lemma}
\begin{proofof}{Lemma \ref{L2-24-3}}
We choose variable $y=(y_1, \ldots y_{d-1}) \in \bR^{d-1}$,
and define a function $\tilde{f}(y):=f(y_1, \ldots, y_{d-1}, 1- \sum_{i=1}^{d-1}y_i)$.
Due to the concavity, the condition (i) holds if and only if
$\frac{\partial }{\partial y_i}\tilde{f}(y)=0$ for $i=1, \ldots, d-1$.
This condition is equivalent to the condition (ii)
because 
$\frac{\partial }{\partial y_i}\tilde{f}(y)=
\frac{\partial }{\partial x_i}f(y_1, \ldots, y_{d-1}, 1- \sum_{i=1}^{d-1}y_i)
-\frac{\partial }{\partial x_d}f(y_1, \ldots, y_{d-1}, 1- \sum_{i=1}^{d-1}y_i)$.
\end{proofof}

\begin{proofof}{Lemma \ref{L2-22-1}}
In order to apply Lemma \ref{L2-24-3},
we regard all of probabilities $Q_V(v)$ as independent parameters
by removing the constraint $\sum_{v}Q_V(v)=1$.
The partial derivatives are calculated as
\begin{align*}
&\frac{\partial }{\partial Q_V(v)}
\sum_z
(\sum_{v'} Q_{V}(v') \overline{W}^Z(z|v')^{\frac{1}{1-\rho}})^{1-\rho}\\
=&
\sum_z
(1-\rho)
(\sum_{v'} Q_{V}(v') \overline{W}^Z(z|v')^{\frac{1}{1-\rho}})^{-\rho}
\overline{W}^Z(z|v)^{\frac{1}{1-\rho}}.
\end{align*}
Hence, Lemma \ref{L2-24-3} guarantees the equivalence between (i) and (ii).
Condition (iii) trivially implies Condition (i).

The remaining task is showing Condition (i) implies Condition (iii).
Assume Condition (i).
Since 
$\sum_z \overline{W}^Z(z|v)^{\frac{1}{1-\rho}}
(\sum_{v'} Q_{V}(v') 
\overline{W}^Z(z|v')^{\frac{1}{1-\rho}})^{-\rho}$ does not depend on $v$
and Condition (ii) holds, 
\begin{align*}
& \sum_z \overline{W}^Z(z|v)^{\frac{1}{1-\rho}}
(\sum_{v'} Q_{V}(v') 
\overline{W}^Z(z|v')^{\frac{1}{1-\rho}})^{-\rho}
\\
=&
\sum_{v}Q_V(v)
\sum_z \overline{W}^Z(z|v)^{\frac{1}{1-\rho}}
(\sum_{v'} Q_{V}(v') 
\overline{W}^Z(z|v')^{\frac{1}{1-\rho}})^{-\rho} \\
= &
\sum_z
(\sum_{v} Q_{V}(v) 
\overline{W}^Z(z|v)^{\frac{1}{1-\rho}})^{1-\rho}
=
e^{E_0(\rho| \overline{W}^Z ,Q_{V})} \\
=& \max_{Q_{V}'} 
e^{E_0(\rho| \overline{W}^Z ,Q_{V}')}
= e^{E_{0,\max}(\rho |\overline{W}^Z)}.
\end{align*}
\end{proofof}

\begin{proofof}{Lemma \ref{L2-24-1}}
Assume that
\begin{align}
\sum_{v} Q_{V}(v) 
\overline{W}^Z(z|v)^{\frac{1}{1-\rho}}
\neq
\sum_{v} Q_{V}'(v) 
\overline{W}^Z(z|v)^{\frac{1}{1-\rho}}
\end{align}
for any $z \in \mathcal{Z}$.
Due to the strict concavity of $x \mapsto x^{1-\rho}$,
we have
\begin{align}
& \frac{1}{2}
(\sum_{v} Q_{V}(v) 
\overline{W}^Z(z|v)^{\frac{1}{1-\rho}})^{1-\rho}
+
\frac{1}{2}(\sum_{v} Q_{V}'(v) 
\overline{W}^Z(z|v)^{\frac{1}{1-\rho}})^{1-\rho}
\nonumber \\
<&
(\sum_{v}(\frac{1}{2}Q_{V}(v) +\frac{1}{2}Q_{V}'(v) )
\overline{W}^Z(z|v)^{\frac{1}{1-\rho}})^{1-\rho}.
\end{align}
Hence, 
\begin{align}
&\frac{1}{2}
\sum_z
(\sum_{v} Q_{V}(v) 
\overline{W}^Z(z|v)^{\frac{1}{1-\rho}})^{1-\rho}
+
\frac{1}{2}
\sum_z
(\sum_{v} Q_{V}'(v) 
\overline{W}^Z(z|v)^{\frac{1}{1-\rho}})^{1-\rho} 
\nonumber \\
&<
\sum_z
(\sum_{v} 
(\frac{1}{2} Q_{V}(v)+\frac{1}{2} Q_{V}'(v) )
\overline{W}^Z(z|v)^{\frac{1}{1-\rho}})^{1-\rho} .
\Label{2-24-3}
\end{align}
However,
Lemma \ref{L2-22-1} guarantees that 
\begin{align}
\sum_z
(\sum_{v} Q_{V}(v) 
\overline{W}^Z(z|v)^{\frac{1}{1-\rho}})^{1-\rho}
&=
\sum_z
(\sum_{v} Q_{V}'(v) 
\overline{W}^Z(z|v)^{\frac{1}{1-\rho}})^{1-\rho}
\nonumber \\
&=\max_{Q_{V}'} 
e^{E_0(\rho| \overline{W}^Z ,Q_{V}')}.
\Label{2-24-4}
\end{align}
Since (\ref{2-24-3}) contradicts (\ref{2-24-4}),
we obtain the desired argument.
\end{proofof}

\begin{proofof}{Lemma \ref{L2-22-4}}
As
\begin{align}
W^{Z,\rho} \circ Q_{V,\rho} (z)=
\frac
{
(\sum_{v}Q_{V,\rho}(v)\overline{W}^Z(z|v)^{\frac{1}{1-\rho}})^{1-\rho}
}
{
\sum_z (\sum_{v}Q_{V,\rho}(v)\overline{W}^Z(z|v)^{\frac{1}{1-\rho}})^{1-\rho}
},\nonumber
\end{align}
we can calculate the mutual information $
I(V;Z)[W^{Z,\rho} \times Q_{V,\rho}]$ as 
\begin{align}
&I(V;Z)[W^{Z,\rho} \times Q_{V,\rho}] \nonumber\\
=&\sum_{v,z}
\frac
{Q_{V,\rho}(v)
\overline{W}^Z(z|v)^{\frac{1}{1-\rho}}
(\sum_{v}Q_{V,\rho}(v)\overline{W}^Z(z|v)^{\frac{1}{1-\rho}})^{-\rho}
}
{\sum_z
\overline{W}^Z(z|v)^{\frac{1}{1-\rho}}
(\sum_{v}Q_{V,\rho}(v)\overline{W}^Z(z|v)^{\frac{1}{1-\rho}})^{-\rho}
}
\nonumber\\
&\hspace{4ex} \cdot \Biggl[
\log \Bigl[\overline{W}^Z(z|v)^{\frac{1}{1-\rho}}
\bigl(\sum_{v}Q_{V,\rho}(v)\overline{W}^Z(z|v)^{\frac{1}{1-\rho}}\bigr)^{-\rho}\Bigr]
\nonumber\\
&\hspace{20ex}
-\log \Bigl[\Bigl(\sum_{v}Q_{V,\rho}(v)\overline{W}^Z(z|v)^{\frac{1}{1-\rho}}\Bigr)^{1-\rho}
\Bigr]
 \Biggr]\nonumber\\
=&\sum_{v,z}
\frac
{Q_{V,\rho}(v)
\overline{W}^Z(z|v)^{\frac{1}{1-\rho}}
(\sum_{v}Q_{V,\rho}(v)\overline{W}^Z(z|v)^{\frac{1}{1-\rho}})^{-\rho}
}
{\sum_z
\overline{W}^Z(z|v)^{\frac{1}{1-\rho}}
(\sum_{v}Q_{V,\rho}(v)\overline{W}^Z(z|v)^{\frac{1}{1-\rho}})^{-\rho}
}\nonumber \\
& \hspace{5ex}\cdot
\Biggl[\frac{1}{1-\rho} \log \overline{W}^Z(z|v)
- \log \Bigl[\sum_{v}Q_{V,\rho}(v)\overline{W}^Z(z|v)^{\frac{1}{1-\rho}}\Bigr] \Biggr]\nonumber\\
=&
F_\rho(Q_{V,\rho}),
\end{align}
where the final equation follows from (\ref{2-23-9b}).
We obtain the second equation of (\ref{2-23-2}).

Since the constraint (i) in Lemma \ref{L2-22-1} for $Q_{V,\rho}$ is differentiable with respect to $\rho$,
for a given $\rho_0 \in (0,1)$,
we can choose $Q_{V,\rho}$ such that 
the map $\rho \mapsto Q_{V,\rho}$ is differentiable at least in an enough small neighborhood of $\rho_0$.
Since
\begin{align}
\frac{d}{d \rho}
E_0(\rho_0 |\overline{W}^Z,Q_{V,\rho})|_{\rho=\rho_0} =0,
\end{align}
we have
\begin{align}
& F_{\rho_0}
=
\frac{d}{d \rho}
E_0(\rho |\overline{W}^Z,Q_{V,\rho})|_{\rho=\rho_0} \nonumber \\
=&
\frac{d}{d \rho}
E_0(\rho |\overline{W}^Z,Q_{V,\rho_0})|_{\rho=\rho_0} 
+
\frac{d}{d \rho}
E_0(\rho_0 |\overline{W}^Z,Q_{V,\rho})|_{\rho=\rho_0} \nonumber \\
=&
\frac{d}{d \rho}
E_0(\rho |\overline{W}^Z,Q_{V,\rho_0})|_{\rho=\rho_0} 
=
F_{\rho_0}(Q_{V,\rho_0}).
\Label{1-24-6}
\end{align}
Hence, we obtain the first equation of (\ref{2-23-2}).

The conditional divergence $D(W^Z \| \overline{W}^Z| Q_{V,\rho})$
is calculated to
\begin{align}
& D(W^{V,\rho} \| \overline{W}^Z| Q_{V,\rho}) \nonumber\\
=&\sum_{v,z}
\frac
{Q_{V,\rho}(v)
\overline{W}^Z(z|v)^{\frac{1}{1-\rho}}
(\sum_{v}Q_{V,\rho}(v)\overline{W}^Z(z|v)^{\frac{1}{1-\rho}})^{-\rho}
}
{\sum_z
\overline{W}^Z(z|v)^{\frac{1}{1-\rho}}
(\sum_{v'}Q_{V,\rho}(v')\overline{W}^Z(z|v)^{\frac{1}{1-\rho}})^{-\rho}
}
\nonumber \\
& \hspace{2ex}\cdot
 \Biggl(\log 
\Bigl[\overline{W}^Z(z|v)^{\frac{1}{1-\rho}}
(\sum_{v}Q_{V,\rho}(v)\overline{W}^Z(z|v)^{\frac{1}{1-\rho}})^{-\rho}\Bigr]
-
\log \overline{W}^Z(z|v) \Biggr)\nonumber\\
&- \sum_{v}
Q_{V,\rho}(v)
\log \Biggl[ \sum_z
\overline{W}^Z(z|v)^{\frac{1}{1-\rho}}
\Bigl(\sum_{v'}Q_{V,\rho}(v')\overline{W}^Z(z|v)^{\frac{1}{1-\rho}}\Bigr)^{-\rho}
\Biggr] \nonumber\\
=&
\sum_{v,z}
\frac
{Q_{V,\rho}(v)
\overline{W}^Z(z|v)^{\frac{1}{1-\rho}}
(\sum_{v}Q_{V,\rho}(v)\overline{W}^Z(z|v)^{\frac{1}{1-\rho}})^{-\rho}
}
{\sum_z
\overline{W}^Z(z|v)^{\frac{1}{1-\rho}}
(\sum_{v}Q_{V,\rho}(v)\overline{W}^Z(z|v)^{\frac{1}{1-\rho}})^{-\rho}
}\nonumber \\
&\hspace{5ex} \cdot \Biggl(
\frac{\rho}{1-\rho}\log \overline{W}^Z(z|v) 
-\rho \log \Bigl[ 
\sum_{v}Q_{V,\rho}(v)\overline{W}^Z(z|v)^{\frac{1}{1-\rho}}
\Bigr]
\Biggr)
\nonumber\\
&- \sum_{v}
Q_{V,\rho}(v)
\log \Biggl[\sum_z
\overline{W}^Z(z|v)^{\frac{1}{1-\rho}}
\Bigl(\sum_{v'}Q_{V,\rho}(v')\overline{W}^Z(z|v)^{\frac{1}{1-\rho}}\Bigr)^{-\rho}
\Biggr] \nonumber\\
=& \rho F_\rho(Q_{V,\rho})
- \sum_{v}
Q_{V,\rho}(v)
\log \Biggl[
\sum_z
\Bigl(\sum_{v'}Q_{V,\rho}(v')\overline{W}^Z(z|v)^{\frac{1}{1-\rho}}\Bigr)^{1-\rho}
\Biggr] \nonumber\\
=& \rho F_\rho - E(\rho|\overline{W}^Z,Q_{V,\rho}).
\nonumber
\end{align}
We obtain (\ref{2-23-1}).
\end{proofof}

\subsection{Proofs of Lemmas \ref{L2-21-1}, \ref{L2-24-2}, and \ref{L2-22-5}}
\Label{s12-7}
\begin{proofof}{Lemma \ref{L2-21-1}}
In order to apply Lemma \ref{L2-24-3},
we regard all of probabilities $Q_V(v)$ as independent parameters
by removing the constraint $\sum_{v}Q_V(v)=1$.
The partial derivatives are calculated as
\begin{align*}
&\frac{\partial }{\partial Q_V(v)}
-\frac{\sum_z \log (\sum_{v \in \mathcal{V}_z} Q_{V}(v))
\max_{v'} \overline{W}^Z(z|v') }
{\sum_z \max_{v'} \overline{W}^Z(z|v') }
\\
=&-
\sum_{z \in \mathcal{Z}_v}
\frac{
\max_{v'\in \mathcal{V}}\overline{W}^Z(z|v')}
{\sum_{v''\in \mathcal{V}_z }  Q_{V}(v'') 
}.
\end{align*}
Hence, Lemma \ref{L2-24-3} guarantees the equivalence between (i) and (ii).
Condition (iii) trivially implies Condition (i).

The remaining task is showing Condition (i) implies Condition (iii).
Assume Condition (i).
Since 
$\sum_z \overline{W}^Z(z|v)^{\frac{1}{1-\rho}}
(\sum_{v'} Q_{V}(v') 
\overline{W}^Z(z|v')^{\frac{1}{1-\rho}})^{-\rho}$ does not depend on $v$
and Condition (ii) holds, 
we have
\begin{align*}
&\sum_{z \in \mathcal{Z}_v}
\frac{\overline{W}^Z(z|v)}
{\sum_{v''\in \mathcal{V}_z }  Q_{V}(v'') 
}
=
\sum_{z \in \mathcal{Z}_v}
\frac{
\max_{v'\in \mathcal{V}}\overline{W}^Z(z|v')}
{\sum_{v''\in \mathcal{V}_z }  Q_{V}(v'') 
} \\
=&
\sum_{v}Q_V(v)
\sum_{z \in \mathcal{Z}_v}
\frac{
\max_{v'\in \mathcal{V}}\overline{W}^Z(z|v')}
{\sum_{v''\in \mathcal{V}_z }  Q_{V}(v'') 
} \\
=&
\sum_{(z,v) \in \mathcal{K}}
Q_V(v)
\frac{
\max_{v'\in \mathcal{V}}\overline{W}^Z(z|v')}
{\sum_{v''\in \mathcal{V}_z }  Q_{V}(v'') 
}\\
=&
\sum_{z}
\sum_{v \in \mathcal{V}_z} Q_V(v)
\frac{
\max_{v'\in \mathcal{V}}\overline{W}^Z(z|v')}
{\sum_{v''\in \mathcal{V}_z }  Q_{V}(v'') 
}
=\sum_{z} \max_{v'}\overline{W}^Z(z|v').
\end{align*}
\end{proofof}

\begin{proofof}{Lemma \ref{L2-24-2}}
We focus on the function
$
\{\sum_{v''\in \mathcal{V}_z }  Q_{V}(v'')\}_{z}
\mapsto 
-\frac{\sum_z \log (\sum_{v \in \mathcal{V}_z} Q_{V}(v))
\max_{v'} \overline{W}^Z(z|v') }
{\sum_z \max_{v'} \overline{W}^Z(z|v') }$,
which is strictly concave.
Hence, 
when 
there exists an element $z \in \mathcal{Z}$ 
such that
$\sum_{v''\in \mathcal{V}_z }  Q_{V}(v'')
\neq \sum_{v''\in \mathcal{V}_z }  Q_{V}'(v'')$
for two distributions $Q_V$ and $Q_V'$,
the convex combination $\frac{Q_V+Q_V'}{2}$
gives a strictly greater value for the above function,
which contradicts (ii) of Lemma \ref{L2-21-1}.
Hence, 
$\sum_{v''\in \mathcal{V}_z }  Q_{V}(v'')
= \sum_{v''\in \mathcal{V}_z }  Q_{V}'(v'')$ for all $z \in \mathcal{Z}$.
\end{proofof}

\begin{proofof}{Lemma \ref{L2-22-5}}
Since
\begin{align}
W^{Z,1} \times Q_{V,1} (v,z)
=&
\left\{
\begin{array}{ll}
\frac{Q_{V,1}(v)\overline{W}^Z(z|v)}
{\sum_{v''\in \mathcal{V}_z }  Q_{V,1}(v'') \sum_{z'} \max_{v'}\overline{W}^Z(z'|v')}
& z \in \mathcal{Z}_v \\ 
0& z \in \mathcal{Z}_v ^c,
\end{array}
\right.
\Label{2-13-2}
\end{align}
the mutual information $ I(V;Z)[W^{Z,1} \times Q_{V,1}] $ 
is calculated as
\begin{align}
I(V;Z)[W^{Z,1} \times Q_{V,1}] 
=&
-\frac{
\sum_z
\log (\sum_{v \in \mathcal{V}_z} Q_{V,1}(v))
\max_{v'} \overline{W}^Z(z|v') 
}
{\sum_z \max_{v'} \overline{W}^Z(z|v') } \nonumber \\
=&
F_1(Q_{V,1}),
\end{align}
where the final equation follows from (\ref{12-5-10b}).
Hence, we obtain the second equation in (\ref{2-23-4}).
The first equation in (\ref{2-23-4}) follows from the limit $\rho\to 1-0$ at (\ref{1-24-6}).

When $Q_V$ satisfies Condition \ref{c-2-21-2},
\begin{align}
& D(W^{Z,1}\|\overline{W}^Z|Q_{V} )
\nonumber \\
=&
- 
\sum_{z,v}
W^{Z,1} \times Q_{V,1} (v,z)
\log 
\Bigl[
\sum_{v''\in \mathcal{V}_z }  Q_{V}(v'') \sum_{z'} \max_{v'}\overline{W}^Z(z'|v')\Bigr] \nonumber\\
=&
- \log \Bigl[\sum_{z'} \max_{v'}\overline{W}^Z(z'|v')\Bigr]
\nonumber\\
&\hspace{15ex}
-\sum_{z}
\log 
\Bigl[\sum_{v''\in \mathcal{V}_z }  Q_{V}(v'') \Bigr]
W^{Z,1} \circ Q_{V} (z)
\nonumber\\
=&
- \log \Bigl[\sum_{z'} \max_{v'}\overline{W}^Z(z'|v')\Bigr]
\nonumber\\
&\hspace{15ex}
-\frac{
\sum_z
\log \Bigl[\sum_{v \in \mathcal{V}_z} Q_{V}(v)\Bigr]
\max_{v'} \overline{W}^Z(z|v') 
}
{\sum_z \max_{v'} \overline{W}^Z(z|v') }\nonumber \\
=& F_1- E_{0,\max}(1|\overline{W}^Z),\nonumber
\end{align}
which implies (\ref{2-23-5}).
\end{proofof}

\subsection{Proof of Lemma \ref{L2-21-3}}\Label{s12-h}
\begin{proofof}{Lemma \ref{L2-21-3}}
Due to Condition \ref{c-2-22-1} with $\rho$,
we can choose a constant $C_\rho$ in the following way:
the relation 
\begin{align}
C_\rho=
\sum_z \overline{W}^Z(z|v)^{\frac{1}{1-\rho}}
(\sum_{v'} Q_{V,\rho}(v') 
\overline{W}^Z(z|v')^{\frac{1}{1-\rho}})^{-\rho} 
\end{align}
holds for all $v$.
Due to the general relation as (\ref{2-13-1}),
we have
\begin{align*}
C:=& \lim_{\rho \to 1-0}C_\rho \\
= & 
\lim_{\rho \to 1-0}
\sum_z \overline{W}^Z(z|v)^{\frac{1}{1-\rho}}
(\sum_{v'} Q_{V,\rho}(v') 
\overline{W}^Z(z|v')^{\frac{1}{1-\rho}})^{-\rho} \\
= &
\lim_{\rho \to 1-0}
\sum_{z\in \mathcal{Z}_v}
(\sum_{v''\in \mathcal{V}_z} Q_{V,\rho}(v''))^{-\rho} 
\max_{v'}\overline{W}^Z(z|v')  \\
=&
\sum_{z\in \mathcal{Z}_v}
\frac{\max_{v'}\overline{W}^Z(z|v') }{\sum_{v''\in \mathcal{V}_z} (\lim_{n \to \infty} Q_{V,\rho_n}(v''))}.
\end{align*}
Since $C$ does not depend on $v$,
the distribution $\lim_{n \to \infty} Q_{V,\rho_n}$ satisfies Condition \ref{c-2-21-2}.
\end{proofof}

\subsection{Proof of Lemma \ref{l12-4-3}}\Label{s12-3}
We show the inequality in (\ref{12-4-1}).
First, we obtain the inequality \eqref{12-4-5}, 
which is displayed in the wide space in the next page.

\begin{figure*}[!t]
\begin{align}
& \min_{{W}^Z \in \mathcal{W}(\mathcal{U}\times \mathcal{V},\mathcal{Z})}
D({W}^Z\| \overline{W}^Z| Q_{VU})
-\rho I(V;Z|U)[{W}^Z\times Q_{VU}]  \nonumber \\
= & 
\min_{{W}^Z \in \mathcal{W}(\mathcal{U}\times \mathcal{V},\mathcal{Z})}
\Bigl(
\sum_{u} Q_U(u) (\sum_{v} Q_{V|U}(v|u) 
\sum_{z} W^Z(z|u,v) \log \frac{W^Z(z|u,v)}{\overline{W}^Z(z|v)}
\nonumber \\
&-
\rho 
\min_{\tilde{Q}  \in \mathcal{P}(\mathcal{Z})}
\sum_{v} Q_{V|U}(v|u) 
\sum_{z} W^Z(z|u,v) \log \frac{W^Z(z|u,v)}{\tilde{Q}(z)} ) 
\Bigr)
\nonumber \\
= & 
\min_{{W}^Z \in \mathcal{W}(\mathcal{U}\times \mathcal{V},\mathcal{Z})}
\max_{\tilde{W}^Z \in \mathcal{W}(\mathcal{U},\mathcal{Z})}
\sum_{u} Q_U(u) \sum_{v} Q_{V|U}(v|u) 
(\sum_{z} W^Z(z|u,v) \log \frac{W^Z(z|u,v)}{\overline{W}^Z(z|v)}
-
\rho \sum_{z} W^Z(z|u,v) \log \frac{W^Z(z|u,v)}{\tilde{W}^Z(z|u)} )\nonumber \\
= & 
\min_{{W}^Z \in \mathcal{W}(\mathcal{U}\times \mathcal{V},\mathcal{Z})}
\max_{\tilde{W}^Z \in \mathcal{W}(\mathcal{U},\mathcal{Z})}
\sum_{u} Q_U(u) \sum_{v} Q_{V|U}(v|u) 
\sum_{z} W^Z(z|u,v) 
\log \frac{W^Z(z|u,v)^{1-\rho}\tilde{W}^Z(z|u)^{\rho}}{\overline{W}^Z(z|v)} 
\nonumber \\
= & 
\max_{\tilde{W}^Z \in \mathcal{W}(\mathcal{U},\mathcal{Z})}
\min_{{W}^Z \in \mathcal{W}(\mathcal{U}\times \mathcal{V},\mathcal{Z})}
\sum_{u} Q_U(u) \sum_{v} Q_{V|U}(v|u)
\sum_{z} W^Z(z|u,v) 
 \log \frac{W^Z(z|u,v)^{1-\rho}\tilde{W}^Z(z|u)^{\rho}}{\overline{W}^Z(z|v)} 
\Label{12-5-14} \\
= & 
(1-\rho)
\max_{\tilde{W}^Z \in \mathcal{W}(\mathcal{U},\mathcal{Z})}
\sum_{u} Q_U(u) \sum_{v} Q_{V|U}(v|u) 
\min_{\tilde{P}_Z \in \mathcal{P}(\mathcal{Z})}
\sum_{z} \tilde{P}_Z(z) \log \frac{\tilde{P}_Z(z)}{
\overline{W}^Z(z|v)^{\frac{1}{1-\rho}}\tilde{W}^Z(z|u)^{\frac{-\rho}{1-\rho}}
} \Label{12-4-3}\\
= & 
-(1-\rho)
\min_{\tilde{W}^Z \in \mathcal{W}(\mathcal{U},\mathcal{Z})}
\sum_{u} Q_U(u) \sum_{v} Q_{V|U}(v|u) 
\log \sum_z \overline{W}^Z(z|v)^{\frac{1}{1-\rho}}\tilde{W}^Z(z|u)^{\frac{-\rho}{1-\rho}} \nonumber \\
\ge & 
-(1-\rho)
\min_{\tilde{W}^Z \in \mathcal{W}(\mathcal{U},\mathcal{Z})}
\sum_{u} Q_U(u) 
\log 
\sum_{v} Q_{V|U}(v|u) 
\sum_z \overline{W}^Z(z|v)^{\frac{1}{1-\rho}}\tilde{W}^Z(z|u)^{\frac{-\rho}{1-\rho}} \Label{12-4-4}\\
= & 
-(1-\rho)
\sum_{u} Q_U(u) 
\log 
\min_{\tilde{Q}_Z \in \mathcal{P}(\mathcal{Z})}
\sum_z
(\sum_{v} Q_{V|U}(v|u) 
 \overline{W}^Z(z|v)^{\frac{1}{1-\rho}})
\tilde{Q}_Z(z)^{\frac{-\rho}{1-\rho}} .\Label{12-4-5}
\end{align}
The above derivation can be shown in the following way.
The equality (\ref{12-5-14}) follows from the minimax theorem \cite[Chap. IV Prop. 2.3]{Eke-Tem} because 
the function is concave for $\tilde{W}^Z$ and is convex for ${W}^Z$.
The equality (\ref{12-4-3}) holds
because the minimum is attained with 
$\tilde{P}_Z(z)= \overline{W}^Z(z|v)^{\frac{1}{1-\rho}}\tilde{W}^Z(z|u)^{\frac{-\rho}{1-\rho}}/
\sum_z \overline{W}^Z(z|v)^{\frac{1}{1-\rho}}\tilde{W}^Z(z|u)^{\frac{-\rho}{1-\rho}}$.
The inequality (\ref{12-4-4}) follows from the concavity of $x \mapsto \log x$.
\end{figure*}

Since $\frac{1}{1-\rho}+\frac{-\rho}{1-\rho}=1$,
the reverse H\"{o}lder inequality yields that
\begin{align*}
& \sum_z
(\sum_{v} Q_{V|U}(v|u) 
 \overline{W}^Z(z|v)^{\frac{1}{1-\rho}})
\tilde{Q}_Z(z)^{\frac{-\rho}{1-\rho}} 
\\
\ge &
(\sum_z
(\sum_{v} Q_{V|U}(v|u) 
 \overline{W}^Z(z|v)^{\frac{1}{1-\rho}})^{1-\rho})^{\frac{1}{1-\rho}}
(\sum_z
(\tilde{Q}_Z(z)^{\frac{-\rho}{1-\rho}} )^{-\frac{1-\rho}{\rho}})^\frac{-\rho}{1-\rho}
\\
\ge &
\min_{\tilde{Q}_Z \in \mathcal{P}(\mathcal{Z})}
(\sum_z
(\sum_{v} Q_{V|U}(v|u) 
 \overline{W}^Z(z|v)^{\frac{1}{1-\rho}})^{1-\rho})^{\frac{1}{1-\rho}}
(\sum_z \tilde{Q}_Z(z))^\frac{-\rho}{1-\rho} 
\\
= &
(\sum_z
(\sum_{v} Q_{V|U}(v|u) 
 \overline{W}^Z(z|v)^{\frac{1}{1-\rho}})^{1-\rho})^{\frac{1}{1-\rho}}.
\end{align*}
The equality holds only when 
$(\sum_{v} Q_{V|U}(v|u) 
 \overline{W}^Z(z|v)^{\frac{1}{1-\rho}})^{1-\rho}
= C\tilde{Q}_Z(z) $ with a constant $C$.
Hence, 
\begin{align*}
& \min_{\tilde{Q}_Z \in \mathcal{P}(\mathcal{Z})}
\sum_z
(\sum_{v} Q_{V|U}(v|u) 
 \overline{W}^Z(z|v)^{\frac{1}{1-\rho}})
\tilde{Q}_Z(z)^{\frac{-\rho}{1-\rho}} 
\\
= &
(\sum_z
(\sum_{v} Q_{V|U}(v|u) 
 \overline{W}^Z(z|v)^{\frac{1}{1-\rho}})^{1-\rho})^{\frac{1}{1-\rho}}.
\end{align*}
Thus,
\begin{align}
&-(1-\rho)
\sum_{u} Q_U(u) 
\log \Biggl[
\nonumber \\
&\hspace{10ex}
\min_{\tilde{Q}_Z \in \mathcal{P}(\mathcal{Z})}
\sum_z
\Bigl(\sum_{v} Q_{V|U}(v|u) 
 \overline{W}^Z(z|v)^{\frac{1}{1-\rho}}\Bigr)
\tilde{Q}_Z(z)^{\frac{-\rho}{1-\rho}} 
\Biggr]
\nonumber \\
=&
-(1-\rho)
\sum_{u} Q_U(u) 
\log \Biggl[
\Biggl(\sum_z
\Bigl(\sum_{v} Q_{V|U}(v|u) 
 \overline{W}^Z(z|v)^{\frac{1}{1-\rho}}
\!\Bigr)^{1-\rho}
\!\Biggr)^{\frac{1}{1-\rho}} 
\!\Biggr]\nonumber \\
=&
-
\sum_{u} Q_U(u) 
\log 
(\sum_z
(\sum_{v} Q_{V|U}(v|u) 
 \overline{W}^Z(z|v)^{\frac{1}{1-\rho}})^{1-\rho}) \nonumber \\
\ge &
-
\log 
\sum_{u} Q_U(u) 
(\sum_z
(\sum_{v} Q_{V|U}(v|u) 
 \overline{W}^Z(z|v)^{\frac{1}{1-\rho}})^{1-\rho}) \Label{12-4-6}\\
=& -{E_0}(\rho | \overline{W}^Z, Q_{V|U}, Q_U)\Label{12-4-7},
\end{align}
where (\ref{12-4-6}) follows from the concavity of $x \mapsto \log x$.
The combination of (\ref{12-4-5}) and (\ref{12-4-7}) yields (\ref{12-4-1}).

The equality in (\ref{12-4-4}) holds if and only if
for an arbitrary fixed $u$,
$\sum_z \overline{W}^Z(z|v)^{\frac{1}{1-\rho}}\tilde{W}^Z(z|u)^{\frac{-\rho}{1-\rho}} $
does not depend on $v$ 
with 
$\tilde{W}^Z(z|u)
=(\sum_{v} Q_{V|U}(v|u) \overline{W}^Z(z|v)^{\frac{1}{1-\rho}})^{1-\rho}/
\sum_z (\sum_{v} Q_{V|U}(v|u) 
\overline{W}^Z(z|v)^{\frac{1}{1-\rho}})^{1-\rho}$, i.e.,
the quantity 
$\sum_z \overline{W}^Z(z|v)^{\frac{1}{1-\rho}}
(\sum_{v} Q_{V|U}(v|u) \overline{W}^Z(z|v)^{\frac{1}{1-\rho}})^{-\rho}$
does not depend on $v$ for an arbitrary fixed $u$.
The condition holds 
when $Q_{V|U=u}$ is 
$\argmin_{Q_{V}} 
E_0(\rho| \overline{W}^Z ,Q_{V} )$
because of Lemma \ref{L2-22-1}.
Further, the equality in (\ref{12-4-6}) holds in this case.
Hence, 
when $Q_{V|U=u}$ is 
$\argmin_{Q_{V}} 
E_0(\rho| \overline{W}^Z ,Q_{V} )$,
the equality holds in the inequality (\ref{12-4-1}).

\section{Conclusion}\Label{sec4}
In order to treat the secure multiplex coding with 
dependent and non-uniform multiple messages and common messages,
we have generalized resolvability to the case when input random variable
is subject to a non-uniform distribution.
Two kinds of generalization have been given.
The first one (Theorem \ref{lem-01})
is a simple extension of Han-Verd\'u's
channel resolvability coding \cite{han93}
with the non-uniform inputs.
The second one (Theorem \ref{Lee3})
uses randomly chosen affine mapping
satisfying Condition \ref{C2-b} with the non-uniform inputs.

We have constructed two kinds of codes for the above type of SMC.
Similar to BCC in \cite{csiszar78},
the second construction has two steps.
In the first step, similar to the BCD encoder,
we apply superposition random coding.
In the second step, as is illustrated in Fig. \ref{fig:encoder},
we split the confidential message into the private message
$B_2$ and a part $B_1$ of the common message encoded by the BCD encoder.
Employing the second type of channel resolvability,
we have derived a non-asymptotic formula for 
the average leaked information under this kind of code construction. 
On the other hand, 
in the first construction, 
the confidential message is simply sent as the private message
encoded by the BCD encoder.
Hence, it has only one step.
Employing the first type of channel resolvability,
we have derived a non-asymptotic formula for 
the average leaked information under this kind of code construction. 

For asymptotic treatment for the non-uniform and dependent sources,
we have introduced three kinds of asymptotic conditional uniformity conditions.
Then, we have clarified the relation among three conditions, especially, that two of them are equivalent.
Further, we have shown that these conditions can be satisfied by data compressed by Slepian-Wolf compression,
in the respective senses.
Extending the above formula for the second construction 
to the asymptotic case,
we have derived the capacity region of SMC defined in our general setting,
in which, 
the message is allowed to be dependent and non-uniform
while it has to satisfy the weaker asymptotic conditional uniformity condition.
We have shown the strong security when the the leaked information rate is zero
and the message satisfies the stronger asymptotic conditional uniformity condition.
Using the both formulas,
we have also derived the exponential decreasing rate of leaked information.
While the first formula gives an upper bound in any case,
the second one gives a better upper bound in some specific cases.

We have also given two kinds of practical constructions for SMC
by using ordinary linear codes.
Following our constructions, we can make a code satisfying a required security level.
Further, we have given a universal code for SMC, 
which does not depend on the channel.
Extending this result, we have derived
a source-channel universal code for BCC,
which does not depend on the channel or the source distribution.

\section*{Acknowledgment}
RM would like to thank Prof.\ H.\  Yamamoto to teach him
the secure multiplex coding.
The authors are grateful to Prof. Alexander Vardy for
pointing out the importance for the non-independent case for 
the multiple secret messages.
The authors are grateful to Dr. Shun Watanabe for
informing the references \cite{e1,e2,e3,e4}.
They also would like to express their appreciation to the referee of
this paper for his/her helpful comments.
A part of this research was done during RM's stay
at the Institute of Network Coding, the Chinese University
of Hong Kong, and Department of Mathematical Sciences, Aalborg University.
He greatly appreciates the hospitality by Prof.\ R.\ Yeung and
Prof.\ O.\ Geil.

This research was partially supported by 
the MEXT Grant-in-Aid for Young Scientists (A) No.\ 20686026 and
(B) No.\ 22760267, Grant-in-Aid for Scientific Research (A) No.\ 23246071,
and the ImPACT Program of Council for Science, Technology and
Innovation (Cabinet Office, Government of Japan).
The Center for Quantum Technologies is funded
by the Singapore Ministry of Education and the National Research
Foundation as part of the Research Centres of Excellence programme.

\appendices

\section{Inequality between R\'{e}nyi Entropy and Conditional R\'{e}nyi Entropy}\Label{as4}
In this appendix, we derive a useful inequality between R\'{e}nyi entropy and conditional R\'{e}nyi entropy,
which was used in Subsection \ref{as2}.
For this purpose, we prepare the following lemma.

\begin{lemma}\Label{lem-11-25-1}
Any two distributions $P_{XY}$ and $Q_{XY}$ over $\mathcal{X}\times \mathcal{Y}$
satisfy
\begin{align}
\psi(\rho|P_{X,Y}\|Q_{X,Y})
\ge 
\frac{1}{1-\rho}\psi(\rho(1-\rho)|P_{X,Y}\| Q_{Y|X} \times P_X  )
\end{align}
for $\rho>0$,
where $P_X$ is the marginal distribution of $P_{X,Y}$ on $\mathcal{X}$,
and $Q_{Y|X}$ is the conditional distribution of $Q_{X|Y}$
on $\mathcal{Y}$ conditioned with $\mathcal{X}$. 
\end{lemma}

When $Q_{XY}$ is the uniform distribution, 
 $\frac{1}{\rho}\psi(\rho|P_{X,Y}\|Q_{X,Y})
=\log (|\mathcal{X}| |\mathcal{Y}|)-H_{1+\rho}(X,Y)$
and
$\frac{1}{\rho(1-\rho)}\psi(\rho(1-\rho)|P_{X,Y}\| Q_{Y|X} \times P_X  )
=\log |\mathcal{Y}| -H_{1+\rho(1-\rho)}(Y|X)$,
which implies the following corollary of the above lemma as
an inequality between R\'{e}nyi entropy and conditional R\'{e}nyi entropy.

\begin{corollary}\Label{lem-11-23-2}
For $\rho>0$, 
arbitrary random variables $X$ and $Y$ over $\mathcal{X}$ and $\mathcal{Y}$
satisfy
\begin{align}
\log (|\mathcal{X}| |\mathcal{Y}|)
-H_{1+\rho}(X,Y)
\ge
\log |\mathcal{Y}| -H_{1+\rho(1-\rho)}(Y|X),
\end{align}
which implies
\begin{align}
\log |\mathcal{X}|+
H_{1+\rho(1-\rho)}(Y|X)
\ge
H_{1+\rho}(X,Y).
\end{align}
\end{corollary}

\begin{proofof}{Lemma \ref{lem-11-25-1}}
Applying H\"{o}lder inequality 
$\sum_{x}P_X(x) |A(x)B(x)|
\le (\sum_{x}P_X(x) |A(x)|^{\frac{1}{1-\rho}})^{1-\rho} 
(\sum_{x}P_X(x) |B(x)|^{\frac{1}{\rho}})^{\rho}$,
to the case 
$A(x)= 
P_X(x)^{\rho} Q_X(x)^{-\rho}
(\sum_y 
P_{Y|X}(y|x)^{1+\rho(1-\rho)}Q_{Y|X}(y|x)^{-\rho(1-\rho)})^{\frac{1}{1-\rho}} $
and
$B(x)= P_X(x)^{-\rho} Q_X(x)^{\rho}$,
we obtain the following.
In the following derivation, 
we employ the above H\"{o}lder inequality in (\ref{11-24-2}),
and the Jensen inequality for the convex function $x\mapsto x^{\frac{1}{1-\rho}}$
in (\ref{11-24-1-b}), (\ref{11-24-3-b}), and (\ref{11-24-4}).
\begin{align}
&e^{\frac{1}{1-\rho}\psi(\rho (1-\rho)|P_{X,Y}\| Q_{Y|X} \times P_X  )}\nonumber\\
=&
(\sum_{x}P_X(x)\sum_y 
P_{Y|X}(y|x)^{1+\rho(1-\rho)}Q_{Y|X}(y|x)^{-\rho(1-\rho)})^{\frac{1}{1-\rho}} \nonumber\\
\le &
\sum_{x}P_X(x)
(\sum_y 
P_{Y|X}(y|x)^{1+\rho(1-\rho)}Q_{Y|X}(y|x)^{-\rho(1-\rho)})^{\frac{1}{1-\rho}} \Label{11-24-1-b}\\
= &
\sum_{x}P_X(x)
\Biggl[(P_X(x)^{\rho} Q_X(x)^{-\rho}
\nonumber\\
&\hspace{1ex}\cdot
\sum_y 
\Bigl(P_{Y|X}(y|x)^{1+\rho(1-\rho)}Q_{Y|X}(y|x)^{-\rho(1-\rho)} \Bigr)^{\frac{1}{1-\rho}} 
\!\Bigl(\! P_X(x)^{-\rho} Q_X(x)^{\rho} \Bigr)
\!\Biggr] 
\nonumber\\
\le &
\Biggl[
\sum_{x}P_X(x)
P_X(x)^{\frac{\rho}{1-\rho}} Q_X(x)^{-\frac{\rho}{1-\rho}}
\nonumber\\
& \hspace{3ex}
\cdot 
\Bigl(
\sum_y 
P_{Y|X}(y|x)^{1+\rho(1-\rho)}Q_{Y|X}(y|x)^{-\rho(1-\rho)}
\Bigr)^{\frac{1}{(1-\rho )^2}} 
\Biggr]^{1-\rho}
\nonumber\\
& \hspace{22ex}
\cdot 
\Bigl(\sum_{x}P_X(x) P_X(x)^{-1} Q_X(x)\Bigr)^{\rho} \Label{11-24-2}\\
= &
\Biggl[ 
\sum_{x}P_X(x)
P_X(x)^{\frac{\rho}{1-\rho}} Q_X(x)^{-\frac{\rho}{1-\rho}}
\nonumber\\
& \hspace{1ex}
\cdot 
\biggl(\sum_y 
P_{Y|X}(y|x) \Bigl(P_{Y|X}(y|x)^{\rho(1-\rho)}Q_{Y|X}(y|x)^{-\rho(1-\rho)} \Bigr)
\biggr)^{\frac{1}{(1-\rho)^2}} 
\Biggr]^{1-\rho}
\cdot 
1^{\rho} \nonumber\\
\le &
\sum_{x}P_X(x)
P_X(x)^{\rho} Q_X(x)^{-\rho}
\Biggl[
\nonumber\\
& \hspace{6ex}
\sum_y 
P_{Y|X}(y|x) \biggl(P_{Y|X}(y|x)^{\rho(1-\rho)}Q_{Y|X}(y|x)^{-\rho(1-\rho)} \biggr)
\Biggr]^{\frac{1}{1-\rho}} \Label{11-24-3-b}\\
\le &
\sum_{x}P_X(x)
P_X(x)^{\rho} Q_X(x)^{-\rho}
\Biggl[
\nonumber\\
& \hspace{4ex}
\sum_y 
P_{Y|X}(y|x) \biggl(P_{Y|X}(y|x)^{\rho(1-\rho)}Q_{Y|X}(y|x)^{-\rho(1-\rho)}
\biggr)^{\frac{1}{1-\rho}}
\Biggr] \Label{11-24-4}\\
= &
\sum_{x}P_X(x)
P_X(x)^{\rho} Q_X(x)^{-\rho}
\biggl[
\nonumber\\
& \hspace{14ex}
\sum_y 
P_{Y|X}(y|x) \Bigl(P_{Y|X}(y|x)^{\rho}Q_{Y|X}(y|x)^{-\rho}
\Bigr)
\biggr] \nonumber\\
= &
\sum_{x,y}P_{X,Y}(x,y)^{1+\rho} Q_{X,Y}(x,y)^{-\rho}
=
e^{\psi(\rho|P_{X,Y}\|Q_{X,Y})}.
\end{align}
\end{proofof}

\section{Existence of Code Required in Theorem \ref{th-12-23-1} with $\epsilon=0$}\Label{as5}

In this appendix, 
we show the existence of Slepian-Wolf data compression code satisfying the condition (\ref{12-26-1}) required in Theorem \ref{th-12-23-1} with $\epsilon=0$
in the two-terminal and i.i.d. case.
For this purpose, we assume that
the random variables $(S_1^n,S_2^n)$ are subject to 
the $n$-fold i.i.d. distribution
of a given non-uniform joint distribution of $S_1$ and $S_2$.
For this purpose, we recall the definition of achievable rate pair for Slepian-Wolf compression.
\begin{definition}
A rate pair $(R_1,R_2)$ is called {\it achievable}
when there exists a sequence of encoders $\varphi^n=(\varphi^n_1,\varphi^n_2)$
($\varphi^n_i:\mathcal{S}_i^n \to \{1, \ldots, \lceil e^{nR_i}\rceil\}$)
and decoders $\hat{\varphi}^n$ 
($\hat{\varphi}^n:\{1, \ldots, \lceil e^{nR_1}\rceil\} \times \{1, \ldots, \lceil e^{nR_2}\rceil\} \to
\mathcal{S}_1^n \times \mathcal{S}_2^n$) such that
the decoding error probability $\varepsilon(\varphi^n,\hat{\varphi}^n)$
satisfies 
\begin{align}
\lim_{n \to \infty}\varepsilon(\varphi^n,\hat{\varphi}^n) =0.
\end{align}
\end{definition}

Then, we prepare the following lemma.

\begin{lemma}\Label{12-30-1}
Let $(R_1,R_2)$ be a pair of 
achievable rates for Slepian-Wolf compression satisfying $R_1+R_2=H(S_1,S_2)$.
When the compression rate pair $(R_{1,n},R_{2,n})$ behaves as
$R_{1,n}=R_1+\frac{c_1}{n^t}$
and
$R_{2,n}=R_2+\frac{c_2}{n^t}$
with $0< t <1/2$ and $c_1>,c_2>0$,
there exists a sequence of Slepian-Wolf codes 
$({\varphi}^n,\hat{\varphi}^n)
=(({\varphi}^n_1,{\varphi}^n_2),\hat{\varphi}^n)$ 
for any positive integer $n$
such that
${\varphi}^n_i$ is a map from $\mathcal{S}_i^n$ to $\{1, \ldots, \lceil e^{nR_{i,n}}\rceil\}$
for $i=1,2$
and
the decoding error probability
$\varepsilon ({\varphi}^n,\hat{\varphi}^n)$ 
satisfies
\begin{align}
&\liminf_{n \to \infty}
-{n^{2t-1}}\log \varepsilon ({\varphi}^n,\hat{\varphi}^n)
\nonumber \\
\ge &
\min \Biggl( 
\lambda \frac{c^2_1}{2 V(S_1)} ,
\lambda \frac{c^2_2}{2 V(S_2|S_1)} , \nonumber \\
&\hspace{15ex}
(1-\lambda) \frac{c^2_2}{2 V(S_2)} ,
(1-\lambda) \frac{c^2_1}{2 V(S_1|S_2)} 
\Biggr),\Label{1-27-11}
\end{align}
where
$V(S_2|S_1):=
\sum_{s_1,s_2} P_{S_1,S_2}(s_1,s_2)( \log P_{S_2|S_1}(s_2|s_1) - H(S_2|S_1))^2$
and $\lambda\in [0,1]$ is the real number satisfying that
\begin{align}
(R_1,R_2)= \lambda (H(S_1),H(S_2|S_1))+ (1-\lambda) (H(S_1|S_2), H(S_2)).
\end{align}
Further, when $R_1=H(S_1)$ and $R_2=H(S_2|S_1)$
and the compression rates $(R_{1,n},R_{2,n})$
behaves as
$R_{1,n}=H(S_1)+\frac{c_1}{n^t}$
and
$R_{2,n}=H(S_2|S_1) +\frac{c_2}{n^t}$
with $0<t <1/2$ and $c_1>,c_2>0$,
there exists a sequence of Slepian-Wolf codes $({\varphi}^n,\hat{\varphi}^n)$ 
such that 
the decoding error probability
$\varepsilon ({\varphi}^n,\hat{\varphi}^n)$ 
satisfies
\begin{align}
\liminf_{n \to \infty}
-{n^{2t-1}}\log \varepsilon ({\varphi}^n,\hat{\varphi}^n)
\ge
\min \Biggl( 
\frac{c^2_1}{2 V(S_1)} ,
\frac{c^2_2}{2 V(S_2|S_1)} 
\Biggr).\Label{1-27-10}
\end{align}
\end{lemma}

We will prove Lemma \ref{12-30-1} after preparing several lemmas.
Using Lemma \ref{12-30-1}, 
we make a Slepian-Wolf compression whose compressed data 
satisfies the SACU condition.
Let $(R_1,R_2)$ be a pair of 
achievable rates for Slepian-Wolf compression satisfying $R_1+R_2=H(S_1,S_2)$.
Then, 
let 
$\varphi^n=(\varphi_{1}^n,\varphi_{2}^n) $
and $\hat{\varphi}^n$
be the Slepian-Wolf encoders
and the Slepian-Wolf decoder given in 
Lemma \ref{12-30-1} with the case of $c_1= R_1 c$ and $c_2= R_2 c$.
We choose
the integer $m_n:=
\lfloor \frac{n}{1+\frac{c}{n^t}}\rfloor
=\lfloor\frac{R_1 n}{R_1+R_1\frac{c}{n^t}}\rfloor
=\lfloor\frac{R_2 n}{R_2+R_2\frac{c}{n^t}}\rfloor
=\lfloor\frac{R_1 n}{R_{1,n}}\rfloor
=\lfloor\frac{R_2 n}{R_{2,n}}\rfloor$
for $0<t<\frac{1}{2}$ and $c>0$.
Then, we obtain 
the  Slepian-Wolf encoders
$\varphi_{i}^{m_n}: \mathcal{S}_{i}^{m_n} \to 
\{1,\ldots, \lceil e^{n R_i} \rceil\}$
and the Slepian-Wolf decoder 
$\hat{\varphi}^{m_n}: 
\{1,\ldots, \lceil e^{n R_1} \rceil\}\times 
\{1,\ldots, \lceil e^{n R_2} \rceil\} 
\to \mathcal{S}_{1}^{m_n} \times \mathcal{S}_{2}^{m_n}$.
Using the code, we define the  Slepian-Wolf encoders
$\varphi_{i,u}^n: \mathcal{S}_{i}^{m_n} \to 
\{1,\ldots, \lceil e^{n R_i} \rceil\}$
and the Slepian-Wolf decoder 
$\hat{\varphi}^n_u: 
\{1,\ldots, \lceil e^{n R_1} \rceil\}\times 
\{1,\ldots, \lceil e^{n R_2} \rceil\} 
\to \mathcal{S}_{1}^{m_n} \times \mathcal{S}_{2}^{m_n}$
by 
\begin{align}
\varphi_{i,u}^n(s^{m_n})
&:=\varphi_{i}^{m_n}(s^{m_n}) \\
\hat{\varphi}^n_u(x_1,x_2)
&:=\hat{\varphi}^{m_n}(x_1,x_2) .
\end{align}
Then, due to Lemma \ref{12-30-1},
since 
$m_n (R_1+R_1\frac{c}{n^t})= n R_1$ and
$m_n (R_2+R_2\frac{c}{n^t})= n R_2$,
the code $((\varphi_{1,u}^n,\varphi_{2,u}^n),\hat{\varphi}^n_u)$
satisfies the condition (\ref{12-26-1}) in Theorem \ref{th-12-23-1} with $\epsilon=0$.
Theorem \ref{th-12-23-1} guarantees that the compressed data satisfies the SACU condition.

Now, in order to show Lemma \ref{12-30-1}, we prepare several lemmas.

\begin{lemma}[\cite{e1,e2,e3}]\Label{12-28-10}
For a given compression rate $R_2>0$,
there exists a pair of the encoder $\varphi^n$ and the decoder 
$\hat{\varphi}^n$ of the random variable $S_2^n$ with the side information $S_1^n$
such that the decoding error probability
$\varepsilon ({\varphi}^n,\hat{\varphi}^n)$
satisfies 
\begin{align}
\varepsilon ({\varphi}^n,\hat{\varphi}^n)
\le
e^{- n(\rho R_2 - 
E_0(-\rho|S_2|S_1)
)}
\end{align}
for any $\rho \in (0,1]$, where
\begin{align}
E_0(\rho|S_2|S_1):= \log \sum_{s_1} 
(\sum_{s_2} P_{S_1,S_2}(s_1,s_2)^{\frac{1}{1-\rho}})^{1-\rho}.
\end{align}
Note that when there is no side information, we have
\begin{align}
E_0(-\rho|S_2)=\rho H_{\frac{1}{1+\rho}}(S_2) .
\end{align}
\end{lemma}

\begin{lemma}\Label{l1-27-1}
The quantity $E_0(-\rho|S_2|S_1)$
has the expansion 
\begin{align}
E_0(-\rho|S_2|S_1)= \rho H(S_2|S_1)+ \frac{\rho^2}{2} V(S_2|S_1)
\Label{1-3-1}
\end{align}
with small $\rho$. 
In particular,
the quantity $\rho H_{\frac{1}{1+\rho}}(S_1)$
has the expansion 
\begin{align}
\rho H_{\frac{1}{1+\rho}}(S_1) = \rho H(S_1)+ \frac{\rho^2}{2} V(S_1)
\Label{1-3-1b}
\end{align}
with small $\rho$ and
$V(S_1):=
\sum_{s_1} P_{S_1}(s_1)( \log P_{S_1}(s_1) - H(S_1))^2$.
\end{lemma}

\begin{proof}
Take the Taylor expansion of $e^{E_0(\rho|S_2|S_1)}$ as
\begin{align}
&e^{E_0(-\rho|S_2|S_1)} \nonumber \\
=& 1+\rho H(S_2|S_1)\nonumber \\
&+ \frac{\rho^2}{2}
\sum_{s_1,s_2} P_{S_1,S_2}(s_1,s_2)( \log P_{S_2|S_1}(s_2|s_1))^2
+o(\rho^2).
\end{align}
Taking the logarithm, we obtain (\ref{1-3-1}).
\end{proof}

\begin{lemma}\Label{l12-28-1}
Let $(R_1,R_2)$ belong to the Slepian-Wolf compression region of $(S_1^n,S_2^n)$.
We choose 
the rates
$R_1'$,
$R_2'$,
$R_1''$,
and $R_2''$ and the real number 
$\lambda \in [0,1]$ such that
\begin{align}
(R_1,R_2)= \lambda (R_1',R_2')+(1-\lambda )(R_1'',R_2'').
\end{align}
Then, there exists a pair of the Slepian-Wolf encoder $\varphi^n$ and the decoder 
$\hat{\varphi}^n$ 
such that 
the decoding error probability
$\varepsilon ({\varphi}^n,\hat{\varphi}^n)$
satisfies 
\begin{align}
& \varepsilon ({\varphi}^n,\hat{\varphi}^n) \nonumber \\
\le &
\inf_{\rho \in (0,1]} e^{- \lambda n(\rho R_1' - \rho H_{\frac{1}{1+\rho}}(S_1) )}
+
\inf_{\rho \in (0,1]} e^{- \lambda n(\rho R_2' - E_0(-\rho|S_2|S_1) )}\nonumber \\
& +
\inf_{\rho \in (0,1]} e^{- (1-\lambda) n(\rho R_1'' - E_0(-\rho|S_1|S_2) )}
+
\inf_{\rho \in (0,1]} e^{- (1-\lambda) n(\rho R_2''- \rho H_{\frac{1}{1+\rho}}(S_2) )},\Label{12-28-11}
\end{align}
Also, there exists a pair of the Slepian-Wolf encoder $\varphi^n$ and the decoder 
$\hat{\varphi}^n$ 
such that 
the decoding error probability
$\varepsilon ({\varphi}^n,\hat{\varphi}^n)$
satisfies 
\begin{align}
& \varepsilon ({\varphi}^n,\hat{\varphi}^n) \nonumber \\
\le &
\inf_{\rho \in (0,1]} e^{- n(\rho R_1 - \rho H_{\frac{1}{1+\rho}}(S_1) )}
+
\inf_{\rho \in (0,1]} e^{- n(\rho R_2 - E_0(-\rho|S_2|S_1) )},\Label{12-28-11b}.
\end{align}
\end{lemma}

\begin{proof}
First, we show the existence of a sequence of codes satisfying (\ref{12-28-11b}).
We apply the usual data compression for $S_2^n$,
and the data compression given in Lemma \ref{12-28-10} for $S_1^n$.
The decoder is given by combination of the respective decoders.
Since the decoding error probability is bounded by the sum of 
the decoding error probabilities of $S_1^n$ and $S_2^n$,
we obtain (\ref{12-28-11b}).

Next, we show the existence of a sequence of codes satisfying (\ref{12-28-11}).
We divide $n$ symbols into two parts, $\lambda n$ symbols and $(1-\lambda) n$ symbols.
We apply the construction given in the previous paragraph with the rates $(R_1',R_2')$ to the first part,
and 
apply the same construction with the rates $(R_1'',R_2'')$ to the second part.
Due to Lemma \ref{12-28-10},
the decoding error probability of the first part is less than
$\inf_{\rho \in (0,1]} e^{- \lambda n(\rho R_1' - \rho H_{\frac{1}{1+\rho}}(S_1) )}
+
\inf_{\rho \in (0,1]} e^{- \lambda n(\rho R_2' - E_0(-\rho|S_2|S_1) )}$,
and 
the decoding error probability of the second part is less than
$\inf_{\rho \in (0,1]} e^{- (1-\lambda) n(\rho R_1'' - E_0(-\rho|S_1|S_2) )}
+
\inf_{\rho \in (0,1]} e^{- (1-\lambda) n(\rho R_2''- \rho H_{\frac{1}{1+\rho}}(S_2) )}$.
Then, we obtain (\ref{12-28-11}).
\end{proof}

\begin{proofof}{Lemma \ref{12-30-1}}
First, we consider the case when $R_1=H(S_1)$ and $R_2=H(S_2|S_1)$.
Since $R_{1,n}= H(S_1)+ \frac{c_1}{n^t}$ and $R_{2,n}:= H(S_2|S_1)+ \frac{c_2}{n^t}$,
we can show that
\begin{align}
\lim_{n \to \infty} -{n^{2t-1}}\log 
\inf_{\rho \in (0,1]} e^{-  n(\rho R_{1,n} - \rho H_{\frac{1}{1+\rho}}(S_1) )}
&=
\frac{c^2_1}{2 V(S_1)} \Label{12-28-12f}\\
\lim_{n \to \infty} -{n^{2t-1}}\log 
\inf_{\rho \in (0,1]} e^{- n(\rho R_{2,n} - E_0(-\rho|S_2|S_1) )}
&=
\frac{c^2_2}{2 V(S_2|S_1)} \Label{12-28-13f}.
\end{align}
Since the proof of (\ref{12-28-12f})
is similar to those of (\ref{12-28-13f}), 
we show only (\ref{12-28-12f}).
When $\rho$ is sufficiently small, 
due to Lemma \ref{l1-27-1}, we have
\begin{align}
& \rho R_{1,n} - \rho H_{\frac{1}{1+\rho}}(S_1)
\cong \rho \frac{c_1}{n^t} - \frac{\rho^2}{2} V(S_1)
\nonumber\\
=& -\frac{V(S_1)}{2}(\rho - \frac{c_1}{ V(S_1) n^t} )^2
+\frac{c^2_1}{2 V(S_1) n^{2t}}.
\end{align}
Hence,
$\inf_{\rho \in (0,1]} e^{- n(\rho R_{1,n}' - \rho H_{\frac{1}{1+\rho}}(S_1) )}
\cong e^{- n \frac{c^2_1}{2 V(S_1) n^{2t}}}$,
which implies (\ref{12-28-12f}).
Then, we apply the evaluation (\ref{12-28-11b}) for the decoding error probability
in Lemma \ref{l12-28-1} to the case
when $R_{1}$, $R_{2}$ are $R_{1,n}$, $R_{2,n}$.
Combining the relations (\ref{12-28-12f}) and (\ref{12-28-13f}),
we obtain (\ref{1-27-10}).

Next, we show the general case.
We choose
$R_{1,n}':= H(S_1)+ \frac{c_1}{n^t}$,
$R_{2,n}':= H(S_2|S_1)+ \frac{c_2}{n^t}$,
$R_{1,n}'':= H(S_1|S_2)+ \frac{c_1}{n^t}$, 
$R_{2,n}'':= H(S_2)+ \frac{c_2}{n^t}$.
Then, we obtain
\begin{align}
(R_{1,n},R_{2,n})= \lambda (R_{1,n}',R_{2,n}')
+ (1-\lambda) (R_{1,n}'',R_{2,n}'').
\end{align}
Then, similar to (\ref{12-28-12f}) and (\ref{12-28-13f}), we can show that
\begin{align}
&\lim_{n \to \infty} -{n^{2t-1}}\log 
\inf_{\rho \in (0,1]} e^{- \lambda n(\rho R_{1,n}' - \rho H_{\frac{1}{1+\rho}}(S_1) )}
=
\lambda \frac{c^2_1}{2 V(S_1)} \Label{12-28-12}\\
& \lim_{n \to \infty} -{n^{2t-1}}\log 
\inf_{\rho \in (0,1]} e^{- \lambda n(\rho R_{2,n}' - E_0(-\rho|S_2|S_1) )}
=
\lambda \frac{c^2_2}{2 V(S_2|S_1)} \Label{12-28-13}\\
& \lim_{n \to \infty} -{n^{2t-1}}\log 
\inf_{\rho \in (0,1]} e^{- (1-\lambda) n(\rho R_{1,n}'' - E_0(-\rho|S_1|S_2) )}
=
(1-\lambda) \frac{c^2_2}{2 V(S_2)} \Label{12-28-14}\\
& \lim_{n \to \infty} -{n^{2t-1}}\log 
\inf_{\rho \in (0,1]} e^{- (1-\lambda) n(\rho R_{2,n}''- \rho H_{\frac{1}{1+\rho}}(S_2) )} 
=
(1-\lambda) \frac{c^2_1}{2 V(S_1|S_2)} \Label{12-28-15}.
\end{align}
We apply the evaluation (\ref{12-28-11}) for the decoding error probability
in Lemma \ref{l12-28-1} to the case
when 
$R_{1}'$, 
$R_{2}'$, 
$R_{1}''$, 
$R_{2}''$, 
are
$R_{1,n}'$, 
$R_{2,n}'$, 
$R_{1,n}''$, 
$R_{2,n}''$.
Combining the relations (\ref{12-28-12}), (\ref{12-28-13}), (\ref{12-28-14}) and (\ref{12-28-15}),
we obtain (\ref{1-27-11}).
\end{proofof}

\section{Equivalence between the SWACU Condition and the WACU Condition}\Label{as1}
In Subsection \ref{s6-2-2},
we have introduced three asymptotic conditional uniformity conditions.
The aim of this appendix is to show the equivalence between the SWACU condition and the WACU condition,
which was used in our proof of Theorem \ref{th:smc}.
\begin{lemma}\Label{lem22}
Let $A_n$ be a random variable on the set ${\cal A}_n$ with the cardinality $e^{nR}$
and $B_n$ be another random variable for any positive inter $n$.
Then, the relation
\begin{align}
\lim_{n\to \infty}
\frac{1}{n}H(A_n|B_n) = R \Label{lem22-1}
\end{align}
holds, if and only if
\begin{align}
\lim_{n\to \infty}
\frac{1}{n}H_{1+\alpha/n}(A_n|B_n) = R\Label{lem22-2}
\end{align}
for any $\alpha >0$.
\end{lemma}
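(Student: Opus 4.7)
The ``if'' direction is immediate from the monotonicity~(\ref{eq-1-13}) combined with the support bound: $H_{1+a/n}(A_n|B_n)\le H(A_n|B_n)\le \log|\mathcal{A}_n|=nR$, so if $\tfrac{1}{n}H_{1+a/n}(A_n|B_n)\to R$ then $\tfrac{1}{n}H(A_n|B_n)$ is squeezed to $R$. The same monotonicity also gives $\limsup_n \tfrac{1}{n}H_{1+a/n}(A_n|B_n)\le R$ in the other direction, so all that remains is the matching liminf under the assumption $\tfrac{1}{n}H(A_n|B_n)\to R$.

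I would rewrite everything in terms of $L_n=-\log P^{A_n|B_n}(A_n|B_n)$, $H_n=H(A_n|B_n)=\rE[L_n]$, and the centered variable $T_n=L_n-H_n$ (so $\rE[T_n]=0$ and $T_n\ge -H_n\ge -nR$), giving the identity
$$H_n-H_{1+\rho}(A_n|B_n)\;=\;\tfrac{1}{\rho}\log \rE\bigl[e^{-\rho T_n}\bigr].$$
The engine of the proof is a second-moment estimate on $L_n$ obtained from the cardinality of the support. Because $\sum_{a:\,P(a|b)<e^{-c}}P(a|b)<|\mathcal{A}_n|\,e^{-c}$ for every $c\ge 0$, the tail bound $\mathrm{Pr}(L_n>c)\le e^{nR-c}$ together with the layer-cake formula yields $\rE[L_n^2]\le (nR)^2+O(n)$. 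Writing $H_n=nR-n\delta_n$ with $\delta_n\to 0$, this gives
$$\mathrm{Var}(L_n)=\rE[L_n^2]-H_n^2\le 2n^2R\,\delta_n+O(n)=o(n^2).$$

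To close the loop I would apply the elementary two-sided inequality $e^{-x}\le 1-x+\tfrac{x^2}{2}\max(1,e^{-x})$ with $x=\rho T_n$. Since $|T_n|\le H_n\le nR$ on $\{T_n<0\}$, one has $\max(1,e^{-\rho T_n})\le e^{\rho nR}$ everywhere; taking expectations (using $\rE[T_n]=0$) produces
$$\rE[e^{-\rho T_n}]\le 1+\tfrac{\rho^2}{2}\,\mathrm{Var}(L_n)\,e^{\rho nR}.$$
Setting $\rho=a/n$ keeps $e^{\rho nR}\le e^{aR}$ bounded, while $\rho^2\mathrm{Var}(L_n)=(a^2/n^2)\cdot o(n^2)=o(1)$; using $\log(1+x)\le x$ then gives $\tfrac{1}{n}(H_n-H_{1+a/n}(A_n|B_n))\le \tfrac{a\,e^{aR}}{2n^2}\,\mathrm{Var}(L_n)\to 0$ for each fixed $a>0$, which is the required liminf.

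The main obstacle is the variance control: $L_n$ has an unbounded right tail, so a crude range estimate like Hoeffding's lemma applied to $T_n\in[-nR,\infty)$ would only yield $\rE[e^{-\rho T_n}]\le e^{\Theta(\rho^2 n^2)}$, which with $\rho=a/n$ does not vanish. The counting tail bound is the decisive ingredient that converts the hypothesis $|\mathcal{A}_n|=e^{nR}$ into the $o(n^2)$ variance estimate, which is exactly the strength the Taylor step needs.
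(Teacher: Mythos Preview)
Your argument is correct, and the variance/MGF route you take is genuinely different from the paper's proof. The paper argues via an information-spectrum threshold: it fixes $\epsilon>0$, defines $\delta_n=\Pr\{-\tfrac{1}{n}\log P_{A_n|B_n}(A_n|B_n)\le R-\epsilon\}$, and invokes an auxiliary extremal lemma (Lemma~\ref{lem11}) to show first that $H(A_n|B_n)\le nR-\delta_n(e^{-n\epsilon}-1+n\epsilon)$, forcing $\delta_n\to 0$, and second that $H_{1+a/n}(A_n|B_n)\ge -\tfrac{n}{a}\log\big((1-\delta_n)e^{a(\epsilon-R)}+\delta_n\big)$, which gives the liminf after sending $\epsilon\downarrow 0$. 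So the paper passes through a tail-probability intermediary and a separate optimization lemma, whereas you stay entirely in moment space: the same counting tail $\Pr(L_n>c)\le e^{nR-c}$ is used, but you integrate it to a second-moment bound rather than thresholding it, and then close with a Taylor estimate on the MGF. Your approach is self-contained (no auxiliary lemma) and yields the quantitative rate $\tfrac{1}{n}\big(H_n-H_{1+a/n}(A_n|B_n)\big)\le \tfrac{a\,e^{aR}}{2n^2}\mathrm{Var}(L_n)$; the paper's approach is closer in spirit to standard information-spectrum reasoning and isolates a reusable extremal inequality.
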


Lemma \ref{lem22} will be shown after Lemma \ref{lem11}, which is used in the proof of Lemma \ref{lem22}.
Thanks to Lemma \ref{lem22}, we can replace the WACU condition (\ref{11-9-3})
by the SWACU condition (\ref{11-9-3-d}).
Indeed, in order to apply our results in Section \ref{s5}
to the proof of Theorem \ref{th:smc}, 
we need evaluation 
conditional R\'{e}nyi entropy instead of conditional entropy, 
as is discussed around (\ref{11-27-1}).
Lemma \ref{lem22} provides 
the evaluation of conditional R\'{e}nyi entropy (\ref{lem22-2})
from the evaluation of conditional entropy (\ref{lem22-1}).
Hence, Lemma \ref{lem22} is useful for the application of our results in Section \ref{s5} to the asymptotic setting.

\begin{lemma}\Label{lem11}
Let $A$ be a random variable on the set ${\cal A}$ with the cardinality $M$
and $B$ be another random variable.
For arbitrary $\epsilon_1>0$ and $1 \ge \epsilon_2>0$,
we define the subset of joint distributions for $A$ and $B$ as
\begin{align}
{\cal P}_{\epsilon_1,\epsilon_2,M}^{A|B}
:=
\{ P_{A,B}|
P_{A,B} \{ (a,b)|- \log P_{A|B}(a|b) \le \log M - \epsilon_1 \} \le \epsilon_2
\}.
\end{align}
Then,
\begin{align}
\max_{P_{A,B}\in {\cal P}_{\epsilon_1,\epsilon_2,M}^{A|B}}
H(A|B)
\le & 
\log M - \epsilon_2 (e^{-\epsilon_1}-1+\epsilon_1) 
\Label{eq-a-11} \\
\min_{P_{A,B}\in {\cal P}_{\epsilon_1,\epsilon_2,M}^{A|B}}
H_{1+\rho}(A|B)
\ge &
-\frac{1}{\rho}\log (
(1-\epsilon_2) \frac{e^{\rho \epsilon_1}}{M^\rho}
+\epsilon_2 ).
\Label{eq-a-12}
\end{align}
Here, since the region
${\cal P}_{\epsilon_1,\epsilon_2,M}^{A|B}$
is compact,
the above maximum and the above minimum exist.
\end{lemma}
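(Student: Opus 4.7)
The plan is to prove both inequalities by, for each fixed value $b$ of $B$, partitioning $\mathcal{A}$ into the ``concentrated'' subset $S_b := \{a : P_{A|B}(a|b) \ge e^{\epsilon_1}/M\}$ and its complement, analyzing $P_{A|B}(\cdot|b)$ separately on each piece, and then averaging in $b$. Two simple facts drive everything. First, $a \in S_b$ forces $P_{A|B}(a|b) \ge e^{\epsilon_1}/M$, so summing gives $|S_b| \le p_b M e^{-\epsilon_1}$, where $p_b := \sum_{a\in S_b} P_{A|B}(a|b)$. Second, $\rE_B[p_B]$ is exactly the $P_{A,B}$-probability of the set $\{(a,b) : -\log P_{A|B}(a|b) \le \log M - \epsilon_1\}$ appearing in the definition of $\mathcal{P}_{\epsilon_1,\epsilon_2,M}$.

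For inequality (\ref{eq-a-11}), I would start from the chain-rule bound obtained by conditioning on the indicator $\mathbf{1}_{S_b}(A)$ and applying the uniform-maximum-entropy inequality on each part:
\begin{align*}
H(A|B=b) \le h(p_b) + p_b \log |S_b| + (1-p_b)\log(M-|S_b|),
\end{align*}
where $h$ denotes binary entropy. Substituting the worst-case value $|S_b| = p_b M e^{-\epsilon_1}$ (which maximizes the right-hand side subject to the cardinality bound above) and simplifying yields
\begin{align*}
H(A|B=b) \le \log M - p_b\epsilon_1 + (1-p_b)\log\frac{1 - p_b e^{-\epsilon_1}}{1-p_b}.
\end{align*}
The crucial step is then to invoke $\log t \le t-1$ with $t := (1-p_b e^{-\epsilon_1})/(1-p_b)\ge 1$; this rearranges to $(1-p_b)\log t \le p_b(1-e^{-\epsilon_1})$, giving the clean bound $H(A|B=b) \le \log M - p_b(e^{-\epsilon_1}-1+\epsilon_1)$. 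Averaging over $B$ and using the bound on $\rE_B[p_B]$ coming from membership in $\mathcal{P}_{\epsilon_1,\epsilon_2,M}$ produces (\ref{eq-a-11}).

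For inequality (\ref{eq-a-12}), it is enough to upper-bound $\sum_{a,b} P_B(b) P_{A|B}(a|b)^{1+\rho}$ since $H_{1+\rho}(A|B) = -\rho^{-1}\log$ of this sum. Splitting the inner sum along $S_b$ and $S_b^c$, I would use $P_{A|B}^{\rho}\le 1$ on $S_b$ to get $\sum_{a\in S_b} P_{A|B}^{1+\rho} \le p_b$, and $P_{A|B}^{\rho}\le (e^{\epsilon_1}/M)^{\rho}$ on $S_b^c$ to get $\sum_{a\in S_b^c} P_{A|B}^{1+\rho} \le (e^{\epsilon_1}/M)^{\rho}(1-p_b)$. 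Summing and averaging over $b$, the bound becomes $\rE_B[p_B] + (e^{\epsilon_1}/M)^{\rho}(1-\rE_B[p_B])$, which is monotone increasing in $\rE_B[p_B]$ since $(e^{\epsilon_1}/M)^{\rho}\le 1$; applying $\rE_B[p_B]\le\epsilon_2$ and taking $-\rho^{-1}\log$ of both sides gives the claimed lower bound. The main obstacle is the clean algebraic simplification in (\ref{eq-a-11}) via $\log t \le t-1$, without which one is left with the exact but less tractable bound $\log M - p_b\epsilon_1 + (1-p_b)\log\frac{1-p_b e^{-\epsilon_1}}{1-p_b}$; by comparison (\ref{eq-a-12}) is a straightforward splitting estimate.
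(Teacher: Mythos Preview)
Your approach is essentially the paper's: for (\ref{eq-a-11}) the paper also splits off the concentrated set, shows the entropy bound is maximized when that set has its largest admissible size $\epsilon_2 M e^{-\epsilon_1}$, simplifies via $\log(1+x)\le x$ (your $\log t\le t-1$), and then passes to the conditional case by linearity of the resulting bound in $\epsilon_2$ (your averaging over $b$); for (\ref{eq-a-12}) the paper gives the same two-piece estimate on $\sum P_{A|B}^{1+\rho}$ followed by monotonicity in $\epsilon_2$.

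One point to watch in your last step for (\ref{eq-a-11}): after averaging you have $H(A|B)\le \log M-\rE_B[p_B]\,(e^{-\epsilon_1}-1+\epsilon_1)$, and to reach the stated bound you need $\rE_B[p_B]\ge\epsilon_2$, not $\le\epsilon_2$. With the membership constraint read literally as ``$\le\epsilon_2$'' the inequality (\ref{eq-a-11}) is actually false: take $P_{A|B}$ uniform, so $p_b\equiv 0\le\epsilon_2$ for every $b$, yet $H(A|B)=\log M>\log M-\epsilon_2(e^{-\epsilon_1}-1+\epsilon_1)$. The paper's own proof makes the same implicit reading (it fixes the concentrated probability equal to $\epsilon_2$ throughout), and the lemma is only ever applied with $\epsilon_2$ set to the exact value of that probability (see the proof of Lemma~\ref{lem22}); so the intended constraint is evidently ``$\ge\epsilon_2$'' or equality, and under that reading your averaging step is correct. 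Your derivation of (\ref{eq-a-12}) uses $\rE_B[p_B]\le\epsilon_2$ in the direction that actually helps and is fine as stated.
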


\begin{proofof}{Lemma \ref{lem11}}
For an arbitrary integer $k$, we define the set 
\begin{align*}
{\cal P}_{\epsilon_1,\epsilon_2,M,k}^{A}
& :=
\left\{ P_{A} \left|
\begin{array}{l}
P_{A} \{ a|- \log P_{A}(a) \le \log M - \epsilon_1 \} \le \epsilon_2, \\
|\{ a|- \log P_{A}(a) \le \log M - \epsilon_1 \}|=k
\end{array}
\right. \right\} \\
{\cal P}_{\epsilon_1,\epsilon_2,M}^{A}
& :=
\{ P_{A}|
P_{A} \{ a|- \log P_{A}(a) \le \log M - \epsilon_1 \} \le \epsilon_2 \},
\end{align*}
and define the function
\begin{align*}
f(x):=
\epsilon_2 (\log x- \log \epsilon_2)
+(1-\epsilon_2) (\log (M-x)- \log (1-\epsilon_2) )
\end{align*}
for $\epsilon_2 \in (0,1)$.
The set ${\cal P}_{\epsilon_1,\epsilon_2,M,k}^{A}$
is a non-empty set 
only when 
the integer $k$ belongs to $[0, \epsilon_2 M e^{-\epsilon_1}]$.
Under the above choice of $k$, 
we have
\begin{align*}
\max_{P_{A}\in {\cal P}_{\epsilon_1,\epsilon_2,M,k}^A}
H(A) = f(k) 
\end{align*}
and
\begin{align*}
\max_{P_{A}\in {\cal P}_{\epsilon_1,\epsilon_2,M}^A}
H(A) = 
\max_{k \in [0, \epsilon_2 M e^{-\epsilon_1}]} f(k), 
\end{align*}
where $k$ is restricted to an integer in the maximum.
Taking the derivative, we have
\begin{align*}
f'(x)= \frac{\epsilon_2}{x}- \frac{1-\epsilon_2}{M-x},
\end{align*} 
which is positive when $x < M\epsilon_2$.
Hence,
\begin{align*}
&
\max_{P_{A}\in {\cal P}_{\epsilon_1,\epsilon_2,M}^A}
H(A) \\
\le & f(\epsilon_2 M e^{-\epsilon_1}) \\
= &\epsilon_2 (\log M - \epsilon_1)
+(1-\epsilon_2) (\log M \!+ \!\log (1\!-\!\epsilon_2 e^{-\epsilon_1})-\log (1\!-\!\epsilon_2) ) \\
= & \log M - \epsilon_2 \epsilon_1
+(1-\epsilon_2 ) \log [1+ \frac{\epsilon_2 (1-e^{-\epsilon_1})}{1-\epsilon_2 }] \\
\le & \log M - \epsilon_2 \epsilon_1
+(1-\epsilon_2 ) \frac{\epsilon_2 (1-e^{-\epsilon_1})}{1-\epsilon_2 } \\
= & \log M - \epsilon_2 (e^{-\epsilon_1} -1+\epsilon_1 ).
\end{align*}
Since $\log M - \epsilon_2 (e^{-\epsilon_1} -1+\epsilon_1 )$ is 
an affine function of $\epsilon_2$,
we obtain (\ref{eq-a-11}).

On the other hand,
using the set $\Omega:=
\{ a|- \log P_{A}(a) \le \log M - \epsilon_1 \}$,
we have
\begin{align*}
& \max_{P_{A}\in {\cal P}_{\epsilon_1,\epsilon_2,M}^A}
e^{-\rho H_{1+\rho}(A)} 
=
\sum_{a \in \Omega^c}(P_A(a))^{1+\rho} 
+
\sum_{a \in \Omega}(P_A(a))^{1+\rho} \\
\le &
(1-\epsilon_2) \frac{e^{\rho \epsilon_1}}{M^\rho}
+\epsilon_2^{1+\rho} 
\le 
(1-\epsilon_2) \frac{e^{\rho \epsilon_1}}{M^\rho}
+\epsilon_2 .
\end{align*}
Since $(1-\epsilon_2) \frac{e^{\rho \epsilon_1}}{M^\rho}+\epsilon_2 $ is a linear function of $\epsilon_2$,
we obtain 
\begin{align*}
 \max_{P_{A|B}\in {\cal P}_{\epsilon_1,\epsilon_2,M}^{A|B}}
e^{-\rho H_{1+\rho}(A|B)} 
\le 
(1-\epsilon_2) \frac{e^{\rho \epsilon_1}}{M^\rho}
+\epsilon_2 ,
\end{align*}
which implies (\ref{eq-a-12}).
\end{proofof}

\begin{proofof}{Lemma \ref{lem22}}
Since (\ref{lem22-2}) implies (\ref{lem22-1}),
we only show (\ref{lem22-2}) from (\ref{lem22-1}).
For an arbitrary small number $\epsilon>0 $,
we define the probability
\begin{align*}
\delta_n:=
P_{A^n,B^n} \{ (a,b) |- \frac{1}{n}\log P_{A^n|B^n}(a|b) \le R- \epsilon \}.
\end{align*}
Applying Eq.~(\ref{eq-a-11}) of Lemma \ref{lem11} to the case when 
$\epsilon_1= n \epsilon$ and $\epsilon_2=\delta_n$, 
we obtain
\begin{align*}
H(A_n|B_n) \le n R -  \delta_n (e^{-n \epsilon}-1+n \epsilon)  .
\end{align*}
That is,
\begin{align}
\delta_n \le \frac{R- \frac{1}{n}H(A_n|B_n)}{\frac{e^{-n \epsilon}-1}{n}+ \epsilon}.
\Label{11-22-8}
\end{align}
Thus, $\lim_{n \to \infty} \delta_n =0$.
Hence, Eq.~(\ref{eq-a-12}) of Lemma \ref{lem11} guarantees that
\begin{align}
H_{1+\alpha/n}(A_n|B_n)
\ge 
-\frac{n}{\alpha}\log ( (1-\delta_n)e^{\alpha(\epsilon -R)} +\delta_n ).
\Label{11-22-6}
\end{align}
Thus,
\begin{align*}
\liminf_{n \to \infty} \frac{1}{n}H_{1+\alpha/n}(A_n|B_n)
\ge &
\liminf_{n \to \infty} -\frac{1}{\alpha}\log ( (1-\delta_n)e^{\alpha(\epsilon -R)} +\delta_n ) \\
= &
R-\epsilon.
\end{align*}
Since $\epsilon >0$ is arbitrary,
\begin{align*}
\liminf_{n \to \infty} \frac{1}{n}H_{1+\alpha/n}(A_n|B_n)
\ge R.
\end{align*}
Since the cardinality of $\mathcal{A}_n$ is $e^{nR}$,
we have $\frac{1}{n}H_{1+\alpha/n}(A_n|B_n) \le R$.
Hence, 
\begin{align*}
\lim_{n \to \infty} \frac{1}{n}H_{1+\alpha/n}(A_n|B_n)= R.
\end{align*}
Combining relation (\ref{eq-1-13}), we obtain the desired argument. 
\end{proofof}

\section{Extension to general measurable spaces}\Label{as14}
\subsection{Information quantities}
Our results has been obtained based on discrete sets, i.e., sets with countable elements.
Here, we explain how our results are extended to the case of measurable spaces, which contain continuous sets.
Firstly, we state the assumptions used in Appendix \ref{as14}.
As before, $\mathcal{X}$ is the input alphabet of the channel
and $\mathcal{Z}$ is the output alphabet to Eve.
In general, a channel from $\mathcal{X}$ to $\mathcal{Z}$ is
described as a collection of conditional probability measures
$\mu_{Z|X=x}$ on $\mathcal{Z}$ for all inputs $x \in \mathcal{X}$,
and $\mu_{Z|X=x}$ might not have a probability density for some $x \in \mathcal{X}$.
In this appendix, however, we assume that there exists
a finite measure $\nu_{\mathcal{Z}}$ on $\mathcal{Z}$ such that
for all $x \in \mathcal{X}$, $\mu_{Z|X=x}$ is absolutely continuous with respect 
$\nu_{\mathcal{Z}}$. In the following $P_{Z|X}(\cdot| x)$ denotes the Radon-Nikodym derivative
$d \mu_{Z|X=x} / d \nu_{\mathcal{Z}}$.
We also make the same assumption on the channel from Alice to Bob.

In addition, as before, we consider probability measures $\eta$
on $\mathcal{U} \times \mathcal{V} \times \mathcal{X}$.
We assume that there exist finite mesures 
$\nu_{\mathcal{U}}$ on $\mathcal{U}$,
$\nu_{\mathcal{V}}$ on $\mathcal{V}$ and
$\nu_{\mathcal{X}}$ on $\mathcal{X}$
such that $\eta$ is absolutely continuous with
respect to the product measure $\nu_{\mathcal{U}} \times
\nu_{\mathcal{V}} \times \nu_{\mathcal{X}}$.
Under this assumption we can denote  by $P_{UVX}$
the Radon-Nikodym derivative
$d \eta / d (\nu_{\mathcal{U}} \times
\nu_{\mathcal{V}} \times \nu_{\mathcal{X}})$,
and marginal probability densities $P_U$, etc.\ and
conditional probability densities $P_{V|U}$, etc.\
can be computed from $P_{UVX}$.
In the following, $dv$, $dz$, etc.\  denote
$d \nu_{\mathcal{V}}$, $d \nu_{\mathcal{Z}}$, etc.\ 
assumed above.

Firstly, we give the definition of the information quantities 
in the general measurable case.
Although $E_0(\rho| P_{Z|V},P_V)$ and
$E_0(\rho| P_{Z|V}, P_{V|U},P_U)$ are defined 
for distributions $P_V$ and $P_U$ and conditional distributions
$P_{Z|V}$ and $P_{V|U}$ with discrete sets in 
\eqref{phid},
they can be defined as follows 
even when
${\cal Z}$, ${\cal V}$, and ${\cal U}$ are 
measurable spaces in the sense of \cite[Theorem 32.2]{Billingsley}.
Then, we define 
\begin{align}
&E_0(\rho| P_{Z|V},P_V) \nonumber \\
:=& \log \int_{{\cal Z}} d z
\left(
\int_{{\cal V}} d v
P_{V}(v) (P_{Z|V}(z|v)^{1/(1-\rho)})\right)^{1-\rho},
\Label{phid2} \\
&E_0(\rho| P_{Z|V}, P_{V|U},P_U) \nonumber \\
:=& \log \int_{{\cal U}} d u
\int_{{\cal Z}} d z
\left(
\int_{{\cal V}} d v
 P_{V|U}(v|u) (P_{Z|V}(z|v)^{1/(1-\rho)})\right)^{1-\rho}.\nonumber 
\end{align}
The above definition formally depends on the choices of 
the measures $dz,du,dv$.
But in the next paragraph
we will explain the above values are
independent of the choice of measures $dz,du,dv$.

Now, suppose that we choose other measures $dz',du',dv'$
so that the measures $dz',du',dv'$ and 
the original measures $dz,du,dv$ are absolutely continuous with respect to each other, respectively.
As is shown in the left hand side of \cite[p.7740]{H-tight},
even when these information quantities are defined with the measures $dz',du',dv'$,
these information quantities have the same values as those defined with the original measures $dz,du,dv$.
So, these information quantities do not depend on the choice of the measures $dz,du,dv$ whenever the measures and the original measures are absolutely continuous with respect to each other.

When $Q$ and $P$ are probability density functions
on a measurable space ${\cal Z}$ with respect to
a common finite measure $dz$,
$\psi(\rho|Q\|P)$ is defined as
\begin{align}
\psi(\rho|Q\|P) & :=\log \int_{{\cal Z}} d z
Q(z)^{1+\rho}P(z)^{-\rho} .\nonumber
\end{align}

Further,
$\psi(\rho| P_{Z|V}, P_{V}) $ 
and
$\psi(\rho| P_{Z|V}, P_{V|U},P_U) $ 
are defined as follows.
\begin{align}
& \psi(\rho| P_{Z|V}, P_{V|U},P_U) \nonumber \\
=& 
\log 
\int_{{\cal V}} d v P_{V}(v)
\int_{{\cal Z}} d z
 P_{Z|V}(z|v)^{1+\rho} P_{Z}(z)^{-\rho}, \\
& \psi(\rho| P_{Z|V}, P_{V|U},P_U) \nonumber \\
=& 
\log 
\int_{{\cal U}} d u P_U(u) 
\int_{{\cal V}} d v P_{V|U}(v|u)
\int_{{\cal Z}} d z
 P_{Z|V}(z|v)^{1+\rho} P_{Z|U}(z|u)^{-\rho}.
\end{align}
Similar to the information quantities 
$E_0(\rho| P_{Z|V},P_V)$ and $E_0(\rho| P_{Z|V}, P_{V|U},P_U)$,
we can show that
the information quantities $\psi(\rho| P_{Z|V}, P_{V|U},P_U)$
and $\psi(\rho| P_{Z|V}, P_{V|U},P_U)$ 
do not depend on the choice of the measures $dz,du,dv$ whenever the measures and the original measures are absolutely continuous with respect to each other.

The above quantities can be defined for a channel.
When the input and output systems ${\cal Z}$ and ${\cal V}$ 
are measurable spaces,
a channel $W$ is defined as 
a set of probability density functions $\{W_v\}_{v \in {\cal V}}$ on $Z$.
That is, substituting $W$ into a conditional probability density function $P_{Z|V}$
as $P_{Z|V}(z|v)=W_v(z)$, we define the above information quantities for the channel $W$.
So, when the channels $W^Z$ and $W^Y$ 
satisfy the above conditions,
the code construction and security evaluation given in the next subsection work well. 
Note that the above generalization works well even when ${\cal V}$ is a finite set because a finite set is also a measurable space.

\subsection{Code construction and security evaluation}
Under the above extension, our results can be extended as follows.
Firstly, we focus on Theorem \ref{lem-01}.
Assume that $W$ is a channel from a measurable space ${\cal X}$ to a measurable space ${\cal Y}$
and that $A$ is a discrete random variable on a finite set ${\cal A}$
subject to the distribution $P_A$.
Theorem \ref{lem-01} holds even under this assumption,
whose proof can be done by replacing $\sum_x$ and $\sum_y$
by $\int_{{\cal X}} dx$ and $\int_{{\cal Y}} dy$.
Theorem \ref{Lee3} and Corollary \ref{cor1}
also hold with a slightly different extension.
Assume that $W$ is a channel from a finite-dimensional vector space ${\cal X}$ over $\bF_q$ to a measurable space ${\cal Y}$
and that $A$ is a discrete random variable on a finite-dimensional vector space ${\cal A}$ over $\bF_q$ subject to $P_A$.
Then, 
Theorem \ref{Lee3} and Corollary \ref{cor1}
hold even under this assumption,
whose proof can be done by replacing $\sum_y$
by $\int_{{\cal Y}} dy$.

Now, we consider the extension of Code Ensemble \ref{con0}.
Assume that
${\cal X}={\cal V}$, ${\cal Y}$, ${\cal Z}$, and ${\cal U}$ are measurable,
and that the private and common messages $S_{\mathrm{p}}$ 
and $S_{\mathrm{c}}$ take values in finite sets.
Then, we can apply Code Ensemble \ref{con0} to the above situation.
Hence, Lemma \ref{lem0} holds even under this assumption
because the proof by Kaspi and Merhav \cite[Section II]{kaspi11}
is still valid under this assumption.

Next, we proceed to the extension of Code Ensemble \ref{con2}.
Assume that
${\cal X}$, ${\cal Y}$, ${\cal Z}$, ${\cal V}$, and ${\cal U}$ are measurable, and
that all messages $S_0, S_1, \ldots, S_T$ take values in finite sets.
Then, we can apply Code Ensemble \ref{con2} to the above situation.
Hence, Theorem \ref{lem1} holds even under this assumption
because \eqref{ineq-12} holds under this assumption.

Then, we extend the contents of Section \ref{s5}.
We consider the extension of Code Ensemble \ref{con1}.
Assume that
${\cal X}$, ${\cal Y}$, ${\cal Z}$, ${\cal V}$, and ${\cal U}$ are measurable,
and that
${\cal B}_1$ and ${\cal B}_2$ are finite Abelian groups.
In this case, all messages $S_0, S_1, \ldots, S_T$ take values in finite sets.
Then, we can apply Code Ensemble \ref{con1} to the above situation.
First, notice that Theorem \ref{lem0} still holds in the above situation.
Hence, Lemma \ref{lem2-1} and Theorem \ref{lem2} 
hold even under this assumption,
whose proof can be done by applying the extension of Theorems \ref{lem0} and \ref{Lee3}.
Lemma \ref{l12-3-2} holds with a slightly different extension.
That is, Lemma \ref{l12-3-2} holds 
when the sets ${\cal U}$ and ${\cal V}$ are finite set,
i.e., only the set ${\cal Z}$ is allowed to be a general measurable space.
This is because we need to consider the cardinalities of the subsets in ${\cal U}$ and ${\cal V}$.
Since the contents of Sections V and VI are extended to 
the case of measurable spaces in the above way,
the contents of Sections VIII and IX also can be extended to the case of measurable spaces in the same way.

In Section \ref{s8},
we have proposed several types of practical code constructions.
Code Constructions \ref{con3} and \ref{con4} can be applied to the channel 
$P_{Z|V}$ from a measurable space ${\cal V}$ to a measurable space ${\cal Z}$.
In these constructions, since the code $\varphi_{\rm p}$ is given,
we can restrict the set ${\cal V}$ to the finite subset given as the image of the map $\varphi_{\rm p}$.
Hence, we can apply Lemma \ref{l12-3-2} with the above extension 
in this context.

When the above discussion is applied to the wire-tap channel model,
we obtain an extension of existing results to the case of the asymptotic uniform dummy message.
That is, we consider 
the case with no common messages and $T=2$ when
${S}_1$ corresponds to the message to be secretly sent to Bob,
and ${S}_2$ does to the dummy message making $S_1$ ambiguous to Eve.
For a given rate $R_1$ of secret message
and a given rate $R_2$ of dummy message,
the RHS of (\ref{Haya-51-w}) coincides with the Gallager exponents,
the RHS of (\ref{bound-b}) coincides with the RHS of (59) in
\cite{hay-wire},
and 
the RHS of (\ref{bound-1}) coincides with the exponents of the RHS of (15) in
\cite{hayashi11}.

\subsection{Gaussian case}
Finally, when the channel $P_{YZ|X}$ is a degraded Gaussian channel
as \eqref{11-29-a},
we demonstrate how the strong security can be shown 
for the wire-tap channel, which is given as
the case with no common messages and $T=2$ when
${S}_1$ corresponds to the message $S$ to be secretly sent to Bob,
and ${S}_2$ does to the dummy message $A$ making $S$ ambiguous to Eve.
Assume that ${\cal X}$, ${\cal Y}$, and ${\cal Z}$ are the set of real numbers.
So, we choose the measures $dx$, $dy$, and $dz$ to be the Lebesgue measure.
Then, we assume that the conditional probability density functions corresponding to the channels are 
\begin{align}
P_{Y|X}(y|x):= \frac{1}{\sqrt{2\pi v_1}} e^{-\frac{(y-x)^2}{2 v_1}} ,\quad
P_{Z|X}(z|x):= \frac{1}{\sqrt{2\pi v_2}} e^{-\frac{(z-x)^2}{2 v_2}} ,\Label{11-29-a}
\end{align}
where $v_2 > v_1$.
Since the channel is degraded, we do not need to introduce 
random variables $U$ and $V$.
Now, we choose 
the probability density function $P_X$ to be $P_X(x)=
\frac{1}{\sqrt{2\pi v_3}} e^{-\frac{x^2}{2 v_3}}$.
Then,
\begin{align}
E_0(\rho| P_{Z|X},P_X)
=& \frac{\rho}{2}\log (1+ \frac{v_3}{(1-\rho)v_2}) ,\\
\psi(\rho|P_{Z|X=x},P_Z) 
=& \frac{(1+\rho)\rho }{2 (v_2+(1+\rho) v_3)}x^2
-\frac{\rho}{2}\log v_2 \nonumber \\
&+ \frac{1+\rho}{2} \log (v_2+v_3)
-\frac{1}{2}\log (v_2+(1+\rho) v_3 ) ,\\
\psi(\rho|P_{Z|X},P_X) 
=&
\frac{1+\rho}{2} \log (v_2+v_3) \nonumber \\
&-\frac{1}{2}\log (v_2+(1-\rho^2)v_3 )
-\frac{\rho}{2}\log v_2 \nonumber \\
=&
\frac{\rho}{2} \log (1+\frac{v_3}{v_2})
-\frac{1}{2} \log (1-\frac{v_3}{v_2+v_3} \rho^2 ).
\end{align}
Hereafter, we denote 
the average leaked information under our code $\Phi$
by $I(S;E)[\Phi]$.
Assume that we use the Gaussian channel $P_{YZ|X}$ $n$ times,
and that the rates of secret message $S$ and dummy message $A$ are 
$R_1$ and $R_2$, respectively.
When the dummy message $A$ has the R\'{e}nyi entropy $H_{1+\rho}(A)$,
Theorem \ref{lem1} guarantees that 
\begin{align}
\rE_\Phi [e^{\rho I(S;E)}] 
\le &
1+ e^{ -\rho H_{1+\rho} +n  (
\frac{\rho}{2} \log (1+\frac{v_3}{v_2})
-\frac{1}{2} \log (1-\frac{v_3}{v_2+v_3} \rho^2 )
)} 
\end{align}
i.e.,
\begin{align}
\rE_\Phi [I(S;E)]
\le &
\frac{1}{\rho}
e^{ -\rho H_{1+\rho} +n  (
\frac{\rho}{2} \log (1+\frac{v_3}{v_2})
-\frac{1}{2} \log (1-\frac{v_3}{v_2+v_3} \rho^2 )
)} 
\Label{11-29-c}
\end{align}
for $\rho \in (0,1]$.
Since there is no common messages,
the cardinality of ${\cal B}_1$ is $1$ in Code Ensemble \ref{con1}.
Theorem \ref{lem2} guarantees that 
\begin{align}
\rE_\Phi [e^{\rho I(S;E)[\Phi]}] 
\le &
1+ e^{ -\rho H_{1+\rho}(A) +n \frac{\rho}{2}\log (1+ \frac{v_3}{(1-\rho)v_2}) },
\end{align}
i.e.,
\begin{align}
\rE_\Phi [I(S;E)]
\le &
\frac{1}{\rho}
e^{ -\rho H_{1+\rho}(A) +n  \frac{\rho}{2}\log (1+ \frac{v_3}{(1-\rho)v_2}) }
\Label{11-29-b}
\end{align}
for $\rho \in (0,1]$.
When the dummy message $A$ is uniform,
\eqref{11-29-b} and \eqref{11-29-c} are simplified as follows
\begin{align}
\rE_\Phi [I(S;E)]
\le &
\frac{1}{\rho}
e^{ -n (\rho R_2 -(
\frac{\rho}{2} \log (1+\frac{v_3}{v_2})
-\frac{1}{2} \log (1-\frac{v_3}{v_2+v_3} \rho^2 )) )} .
\Label{11-29-e} \\
\rE_\Phi [I(S;E)]
\le &
\frac{1}{\rho}
e^{ -n (\rho R_2-  \frac{\rho}{2}\log (1+ \frac{v_3}{(1-\rho)v_2}) )} .
\Label{11-29-f} 
\end{align}
Since $
\lim_{\rho \to 0}\frac{1}{\rho}
(\frac{\rho}{2}\log (1+ \frac{v_3}{(1-\rho)v_2}) )
=\lim_{\rho \to 0}\frac{1}{\rho}
(
\frac{\rho}{2} \log (1+\frac{v_3}{v_2})
-\frac{1}{2} \log (1-\frac{v_3}{v_2+v_3} \rho^2 ))=
\frac{1}{2} \log (1+ \frac{v_3}{v_2})$,
both \eqref{11-29-e} and \eqref{11-29-f} yield
the strong security when $R_2 > \frac{1}{2} \log (1+ \frac{v_3}{v_2})$.



\begin{IEEEbiographynophoto}{Masahito Hayashi}(M'06--SM'13) was born in Japan in 1971.
He received the B.S. degree from the Faculty of Sciences in Kyoto 
University, Japan, in 1994 and the M.S. and Ph.D. degrees in Mathematics from 
Kyoto University, Japan, in 1996 and 1999, respectively.

He worked in Kyoto University as a Research Fellow of the Japan Society of the 
Promotion of Science (JSPS) from 1998 to 2000,
and worked in the Laboratory for Mathematical Neuroscience, 
Brain Science Institute, RIKEN from 2000 to 2003,
and worked in ERATO Quantum Computation and Information Project, 
Japan Science and Technology Agency (JST) as the Research Head from 2000 to 2006.
He also worked in the Superrobust Computation Project Information Science and Technology Strategic Core (21st Century COE by MEXT) Graduate School of Information Science and Technology, The University of Tokyo as Adjunct Associate Professor from 2004 to 2007. 
In 2006, he published the book ``Quantum Information: An Introduction'' from Springer. 
He worked in the Graduate School of Information Sciences, Tohoku University as Associate Professor from 2007 to 2012. 
In 2012, he joined the Graduate School of Mathematics, Nagoya University as Professor. 
He also worked in Centre for Quantum Technologies, National University of Singapore as Visiting Research Associate Professor from 2009 to 2012 
and as Visiting Research Professor from 2012 to now. 
In 2011, he received Information Theory Society Paper Award (2011) for Information-Spectrum Approach to Second-Order Coding Rate in Channel Coding.
In 2016, he received the Japan Academy Medal from the Japan Academy
and the JSPS Prize from Japan Society for the Promotion of Science.

He is on the Editorial Board of {\it International Journal of Quantum Information}
and {\it International Journal On Advances in Security}. 
His research interests include classical and quantum information theory and classical and quantum statistical inference.  
\end{IEEEbiographynophoto}

\begin{IEEEbiographynophoto}{Ryutaroh Matsumoto}(M'00) was born in Nagoya, Japan, on November 29, 1973. 
He received the B.E. degree in computer science, the M.E. degree
in information processing, and the Ph.D. degree in electrical and electronic
engineering, all from Tokyo Institute of Technology, Japan, in 1996, 1998 and
2001, respectively. He was an Assistant Professor from 2001 to 2004, and has
been an Associate Professor since 2004 in the Department of Communications
and Computer Engineering, Tokyo Institute of Technology.
He also served as a Velux visiting professor for the Department
of Mathematical Sciences, Aalborg University, Denmark
during 2011--2012 and 2014.
His research interests
include error-correcting codes, quantum information theory, information
theoretic security, and communication theory. Dr. Matsumoto received the
Young Engineer Award from IEICE and the Ericsson Young Scientist Award
from Ericsson Japan in 2001. He received the Best Paper Awards from IEICE
in 2001, 2008, 2011 and 2014.
\end{IEEEbiographynophoto}

\end{document}